\newcommand{\at}[1]%
            {\ensuremath{\protect\underline{\mathbf{#1}}}} 
\newcommand{\op}[1]{\ensuremath{\operatorname{#1}}}        
\newcommand{\h}[1][]                                       
 {\ifthenelse{\boolean{mmode}}%
  {$\mathrm{h}$}%
  {h\nobreakdash#1\hspace{0pt}}}
\newcommand{\id}{\mathrm{id}}        
\newcommand{\comp}{\circ}          
\newcommand{\adcomp}%
  {\overset{\operatorname{ad}}{\comp}} 
\newcommand{\funcomp}%
  {\overset{\operatorname{fn}}{\comp}}
\newcommand{\sccat}
{\mathbin{\kern-1pt\raisebox{6pt}{.}\kern-5pt
\downarrow\kern-5pt\raisebox{6pt}{.}\kern-1pt}}
\newcommand{\parrow}[1]
   {\underset{{\displaystyle \raisebox{5pt}%
   {$\longleftarrow$}}}{\op{#1}}{\,}}
\newcommand{\iarrow}[1]
   {\underset{{\displaystyle \raisebox{5pt}%
   {$\longrightarrow$}}}{\op{#1}}{\,}}
\newcommand{\uadj}{\top}           
\newcommand{\rest}%
{\mathnormal{\restriction}}        
\newcommand{\iso}{\cong}           
\DeclareMathOperator{\card}{card}  
\newcommand{\function}[4]{
            \begin{array}{@{\:}c@{\:}c@{\:}l}
                   #1 &\mor& #2 \\
                   #3 &\longmapsto& #4
            \end{array} }
\newcommand{\nfunction}[4]
    {\left\{
     \function{#1}{#2}{#3}{#4}
     \right. }
\newcommand{\bb}[1]{\ensuremath
 {\lvert #1 \rvert}}
\DeclareMathOperator{\Cgr}{Cgr}    
\newcommand{\pr}{\mathrm{pr}}        
\newcommand{\vs}[1]{\mathbin{\downarrow}#1}
\DeclareMathOperator{\G}{G}        
\DeclareMathOperator{\bconcat}
            {\curlywedge}
\newcommand{\concat}
  {\ensuremath{\text
  {\Large $\curlywedge$}}}
\newcommand{\ext}[1]
  {\ensuremath{#1^{\sharp}}}
\newcommand{\ol}{\overline}
\newcommand{\brel}{\ensuremath{\xymatrix{{}\arity@{{*}{-}{*}}[r] & {}}}}
\newcommand{\nseq}[3]{\xymatrix@1@C=16pt{#1 \arity@{>}[r]_-{\scriptscriptstyle{#2}} & #3 }}
\newsavebox{\xymor}  
\newsavebox{\xymon}  
\newsavebox{\xyepi}  
\newsavebox{\xytn}   
\newsavebox{\xyrel}  
\newsavebox{\xycel}  
\newsavebox{\xymdf}  
\newsavebox{\xyumor} 
\newsavebox{\xydmor} 
\newsavebox{\xyomor} 
\newsavebox{\xyemor} 
\newcommand{\xynode}{\makebox[0ex]{}}
\savebox{\xymor}{\ensuremath{%
\xymatrix@1@C=19pt{\xynode \ar@{>}[r] & \xynode }}}
\savebox{\xymon}{\ensuremath{%
\xymatrix@1@C=19pt{\xynode \ar@{{ +}{-}{>}}[r] & \xynode }}}
\savebox{\xyepi}{\ensuremath{%
\xymatrix@1@C=19pt{\xynode \ar@{{}{-}{+>}}[r] & \xynode }}}
\savebox{\xytn}{\ensuremath{%
\xymatrix@1@C=19pt{\xynode \ar[r]|(.44){\object@{.-}} & \xynode
}}}
\savebox{\xyrel}{\ensuremath{%
\xymatrix@1@C=19pt{\xynode \ar@{{}{-}{-o}}[r] & \xynode }}}
\savebox{\xycel}{\ensuremath{%
\xymatrix@1@C=19pt{\xynode \ar@{=>}[r] & \xynode }}}
\savebox{\xymdf}{\ensuremath{%
\xymatrix@1@C=16pt{\xynode \ar@{}[r]|{\dir{~>}} & \xynode}}}
\savebox{\xyumor}{\ensuremath{%
\xymatrix@1@C=19pt{\xynode \ar@{{}{-}^{>}}[r] & \xynode }}}
\savebox{\xydmor}{\ensuremath{%
\xymatrix@1@C=19pt{\xynode \ar@{{}{-}_{>}}[r] & \xynode }}}
\savebox{\xyomor}{\ensuremath{%
\xymatrix@1@C=19pt{\xynode \ar@{{}{-}^{< }}[r] & \xynode }}}
\savebox{\xyemor}{\ensuremath{%
\xymatrix@1@C=19pt{\xynode \ar@{{ >}{-}{>}}[r] & \xynode }}}
\newcommand{\mor}{\usebox{\xymor}}    
\newcommand{\functor}[9]{
 \xymatrix{
    #4 \save[]+<0ex,5ex>*+{#1}="1"  \restore
      \arity[d]_{#6}  \arity@{}[rd]|{\longmapsto}
  & #5 \save[]+<0ex,5ex>*+{#3}="3"  \restore
      \arity[d]^{#7}
  \\
   #8 & #9 \arity "1";"3"^-{#2} } }
\newcommand{\functornd}[9]{
 \xymatrix{
    #4 \save[]+<0ex,5ex>*+{#1}="1"  \restore
      \arity[d]_{#6}  \arity@{}[rd]|{\longmapsto}
  & #5 \save[]+<0ex,5ex>*+{#3}="3"  \restore
  \\
   #8 & #9 \arity[u]_{#7} \arity "1";"3"^-{#2} } }
\newcommand{\functordn}[9]{
 \xymatrix{
    #4 \save[]+<0ex,5ex>*+{#1}="1"  \restore
       \arity@{}[rd]|{\longmapsto}
  & #5 \save[]+<0ex,5ex>*+{#3}="3"  \restore
      \arity[d]^{#7}
  \\
   #8  \arity[u]^{#6}  & #9 \arity "1";"3"^-{#2} } }
\newcommand{\larr}{->}
\newcommand{\rarr}{->}
\newcommand{\xfunctor}[9]{
 \xymatrix{
    #4 \save[]+<0ex,5ex>*+{#1}="1"  \restore
      \ifthenelse{\equal{\larr}{->}}{\arity[d]_{#6}}{}
      \ifthenelse{\equal{\larr}{<-}}{\arity[d];[]^{#6}}{}
      \ifthenelse{\equal{\larr}{-<}}{\arity@{< }[d]_{#6}}{}
      \arity@{}[rd]|{\longmapsto}
  & #5 \save[]+<0ex,5ex>*+{#3}="3"  \restore
      \ifthenelse{\equal{\rarr}{->}}{\arity[d]^{#7}}{}
      \ifthenelse{\equal{\rarr}{<-}}{\arity[d];[]_{#7}}{}
      \ifthenelse{\equal{\rarr}{-<}}{\arity@{< }[d]^{#7}}{}
  \\
   #8 & #9 \arity "1";"3"^-{#2} } }
\theoremstyle{plain}
\newtheorem{theorem}{Theorem}[section]
\newtheorem{proposition}[theorem]{Proposition}
\newtheorem{corollary}[theorem]{Corollary}
\newtheorem{lemma}[theorem]{Lemma}
\theoremstyle{definition}
\newtheorem{definition}[theorem]{Definition}
\newtheorem*{remark}{Remark}
\newtheorem*{assumption}{\bf Assumption}
\newcommand{\arity}{\mathsf{ar}}
\numberwithin{equation}{section}
\begin{document}
\title[Congruence based proofs of the recognizability theorems]{Congruence based proofs of the recognizability theorems for free many-sorted algebras}

\author[Climent]{J. Climent Vidal}
\address{Universitat de Val\`{e}ncia\\
         Departament de L\`{o}gica i Filosofia de la Ci\`{e}ncia\\
         Av. Blasco Ib\'{a}\~{n}ez, 30-$7^{\mathrm{a}}$, 46010 Val\`{e}ncia, Spain}
\email{Juan.B.Climent@uv.es}
\author[Cosme]{E. Cosme Ll\'{o}pez}
\address{Universitat de Val\`{e}ncia\\
         Departament de Matem\`{a}tiques\\
         Dr. Moliner, 50, 46100 Burjassot, Val\`{e}ncia, Spain}
\email{Enric.Cosme@uv.es}

\subjclass[2010]{Primary: 68Q70, 08A68; Secondary: 08A70.} \keywords{}
\date{March 5th, 2018}

\begin{abstract}
We generalize several recognizability theorems for free single-sorted algebras to the field of many-sorted algebras and provide, in a uniform way and without using neither regular tree grammars nor tree automata, purely algebraic proofs of them based on the concept of congruence.
\end{abstract}

\maketitle

\section{Introduction}

The definition of recognizable language set down by Rabin and Scott in~\cite{RS59}, Definition 2, on p.~116, originated in the characterizations of regular languages given by Myhill~\cite{Myh57}, in terms of congruence relations of finite index, and Nerode~\cite{Ner58}, in terms of right invariant equivalence relations of finite index, and was formulated for sets of elements of an arbitrary single-sorted abs\-tract algebra by Mezei and Wright~\cite{MW67}. In this setting, a subset of an algebra is recognizable if there exists an homomorphism from it into a finite one for which the language coincides with the inverse image by the homomorphism of a subset of the finite algebra. Such a recognizability notion, as is well known, is equivalent to providing a congruence of finite index on the algebra for which the language in question is saturated.



This paper is devoted to the study of several recognizability results for free many-sorted algebras. Concretely, we generalize to the field of many-sorted algebras the results presented by G\'ecseg and Steinby in~Section~2.4 of~\cite{GS84}, in the field of single-sorted algebras, on the preservation and, when appropriate, reflection of the recognizability under the action of the operators of substitution, iteration, quotient, inverse image by a tree homomorphism, and direct image by linear tree homomorphism on terms ($\equiv$ trees) of free single-sorted algebras.

The device used by G\'ecseg and Steinby in~\cite{GS84} for proving the aforementioned results concerning recognizability was that of regular tree grammars.
As in the case of formal languages, there exists another natural device intended for the same purpose: Tree automata. We recall that tree automata were defined, among others, by Doner~\cite{Don65, Don70} and Thatcher and Wright~\cite{TW65, TW68} and that the primary goal of the theory of tree automata was to apply it to decision problems of second order logic. On the other hand, regular tree  grammars were defined by Brainerd~\cite{Bra69} and the main result of \cite{Bra69}, Theorem 4.9, on p.~230, is: ``The sets of trees generated by regular systems [$\equiv$ regular tree grammars, \emph{we add}] are exactly those accepted by tree automata.''


As an alternative to the above devices, regular tree grammars and tree automata, in this paper we supply congruence based proofs of the different recognizability theorems for the strictly more general setting of free many-sorted algebras. All results stated in this paper follow the same methodology: Starting from a congruence of finite index saturating input languages, which is ultimately based on the different syntactic congruences determined by the input languages, we provide a finite index refinement for it recognizing the transformed language which is obtained by means of the aforementioned operators.
This proof strategy, which, ultimately, leads to a proof schema, has proven very effective in providing a uniform approach for each of the cases in question, and, in addition, it has served to confirm, once again, the fundamental role played by the notion of congruence cogenerated by an equivalence relation on the underlying many-sorted set of a many-sorted algebra in the area of theoretical computer science.

In this regard it is worthwhile quoting (1) a part of Atiyah's answer in \cite{rs04}, p.~24, to the question, formulated by the interviewers 
about the underlying motivation for providing different proofs with different strategies for the Atiyah-Singer Index Theorem: ``Any good theorem should have several proofs, the more the better. For two reasons: usually, different proofs have different strengths and weaknesses, and they generalise in different directions---they are not just repetitions of each other. [$\ldots$] the more perspectives, the better!'', and (2) what Lang, referring to his own review of the historical development of class field theory, wrote in the preface of \cite{sl96}: ``As I stated in the preface to my \emph{Algebraic Number Theory}, there are several approaches to class field theory. None of them makes any other obsolete, and each gives a different insight from the others.''


We next proceed to summarize the contents of the subsequent sections of this paper.


In Section 2, for the convenience of the reader, we recall those notions and facts, these latter mostly without proofs, on many-sorted sets, many-sorted algebras, and recognizability for many-sorted algebras, that we will need.

In Section 3 we provide congruence based proofs of the recognizability theorems for free many-sorted algebras.  As a matter of fact, in order to deal with the different cases of recognizability, classified according to the type of operator under consideration, we have divided  this section into several subsections.
In Subsection 1, entitled \emph{Basic terms}, we prove that the final sets containing a variable, a constant or an operation symbol applied to a suitable family of variables are recognizable.
In Subsection 2, entitled \emph{Substitutions}, after defining several substitution op\-era\-tors associated to a free many-sorted algebra and investigating the relationships be\-tween them, we state the main result of this subsection: If all input languages for a given substitution are recognizable, then the output language is recognizable as well.
In Subsection 3, entitled \emph{Iterations}, we introduce the notion of iteration of a language with respect to a variable and we prove that, if the input language is recognizable then its iteration with respect to a variable is also recognizable. In Subsection 4, entitled \emph{Quotients}, we introduce the notion of quotient of a language by another, not necessarily recognizable, language with respect to a variable and we prove that if the input language is recognizable then its quotient by another language with respect to a variable is also recognizable.
In Subsection 5, entitled \emph{Tree Homomorphisms}, we define the notion of hyperderivor from a many-sorted sig\-na\-ture labeled with a suitable many-sorted set (the domain of the hyperderivor) to another (the codomain of the hyperderivor) as consisting of two components: One transforming many-sorted operation symbols from the underlying many-sorted sig\-na\-ture of the domain into terms (with variables in the coproduct of the underlying many-sorted set of the codomain and an ``initial segment'', of a fixed standard many-sorted set of variables, which depends on the many-sorted operation symbol) relative to the many-sorted signature of the codomain, and the other associating terms (with variables in the underlying many-sorted set of the codomain) relative to the many-sorted signature of the codomain, to variables from the underlying many-sorted set of the domain. In the particular case that, for each many-sorted operation symbol, it is fulfilled that, for every variable, the number of its oc\-cur\-rences in the term associated to the many-sorted operation symbol is at most one, the hyperderivor is called linear. Then we show that each hyperderivor determines, in a canonical way, a tree homomorphism, which is, in fact, a homomorphism from a free many-sorted algebra obtained from the domain of the hyperderivor to an\-oth\-er many-sorted algebra of the same many-sorted signature, itself derived from a many-sorted algebra obtained from the codomain of the hyperderivor. After this  we prove that the inverse image of a recognizable language under a tree homomorphism is recognizable and that the direct image of a recognizable language under a linear tree homomorphism, which is a tree homomorphism determined by a linear hyperderivor, is also recognizable.
Finally, in Subsection~6, entitled \emph{Derivors and recognizability}, after defining the many-sorted finitary variety of Hall algebras, we define, by means of the homomorphisms between Hall algebras, the category of many-sorted signatures and derivors. Then we construct a category whose objects are ordered pairs consisting of a many-sorted signature and a many-sorted algebra of such a signature, and morphisms between them ordered pairs consisting of a derivor between the underlying many-sorted signatures and a homomorphism from the underlying many-sorted algebra of the source to a derived many-sorted algebra of the sink. Following this, after showing that every derivor, together with some additional data, gives rise to a hyperderivor, we state that the inverse image under a convenient morphism, of the just mentioned category, labeled by a derivor, and the direct image under a morphism, labeled by a linear derivor, of a recognizable language is also recognizable.

Our underlying set theory is $\mathbf{ZFSk}$, Zermelo-Fraenkel-Skolem set theory (also known as $\mathbf{ZFC}$, i.e.,  Zermelo-Fraenkel set theory with the axiom of choice) plus the existence of a Grothendieck universe $\boldsymbol{\mathcal{U}}$, fixed once and for all (see~\cite{sM98}, pp.~21--24). We recall that the elements of $\boldsymbol{\mathcal{U}}$ are called $\boldsymbol{\mathcal{U}}$-small sets and the subsets of $\boldsymbol{\mathcal{U}}$ are called $\boldsymbol{\mathcal{U}}$-large sets or classes. Moreover, from now on $\mathbf{Set}$ stands for the category of sets, i.e., the category whose set of objects is $\boldsymbol{\mathcal{U}}$ and whose set of morphisms is the set of all mappings between $\boldsymbol{\mathcal{U}}$-small sets.

In all that follows we use standard concepts and constructions from category theory, see e.g., \cite{jwG66}, \cite{hs73} and \cite{sM98}, universal algebra, see e.g., \cite{gb15}, \cite{bs81}, \cite{gG08}, and \cite{w92}, and set theory, see e.g., \cite{nB70}. Nevertheless, regarding set theory, we have adopted the following conventions. An \emph{ordinal} $\alpha$ is a transitive set that is well-ordered by $\in$, thus $\alpha = \{\,\beta\mid \beta\in \alpha\,\}$. The first transfinite ordinal $\omega_{0}$ will be denoted by  $\mathbb{N}$, which is the set of all \emph{natural numbers}, and, from what we have just said about the ordinals, for every $n\in \mathbb{N}$, $n = \{0, \ldots,n-1\}$. If $\Phi$ and $\Psi$ are (binary) relations in a set $A$, then we will say that $\Psi$ is a \emph{refinement} of $\Phi$ if $\Psi\subseteq \Phi$.
We will denote by $\mathrm{Fnc}(A,B)$ the set of all functions from $A$ to $B$. We recall that a function from $A$ to $B$ is a subset $F$ of $A\times B$ satisfying the functional condition, i.e., such that, for every $x\in A$, there exists a unique $y\in B$ such that $(x,y)\in F$. A function $F$ from $A$ to $B$ is usually denoted by $(F_{x})_{x\in A}$. We will denote by $\mathrm{Hom}(A,B)$ (and, sometimes, also by $B^{A}$) the set of all mappings from $A$ to $B$. We recall that a mapping from $A$ to $B$ is an ordered triple $f = (A,F,B)$, denoted by $f\colon A\mor B$, in which $F$ is a function from $A$ to $B$. Therefore $\mathrm{Hom}(A,B) = \{A\}\times \mathrm{Fnc}(A,B)\times\{B\}$. We will denote by $\mathrm{Sub}(A)$ the set of all sets $X$ such that $X\subseteq A$ and if $X\in\mathrm{Sub}(A)$, then we will denote by $\complement_{A}X$ or $A-X$ the complement of $X$ in $A$. Moreover, if $f$ is a mapping from $A$ to $B$, then the mapping $f[\cdot]$ from $\mathrm{Sub}(A)$ to $\mathrm{Sub}(B)$, of $f$-\emph{direct image formation}, sends $X$ in $\mathrm{Sub}(A)$ to
$
f[X] = \{y\in B\mid \exists\,x\in X\,(y = f(x))\}
$
in $\mathrm{Sub}(B)$, and the mapping $f^{-1}[\cdot]$ from $\mathrm{Sub}(B)$ to $\mathrm{Sub}(A)$, of $f$-\emph{inverse image formation}, sends $Y$ in $\mathrm{Sub}(B)$ to
$
f^{-1}[Y] = \{x\in A\mid f(x)\in Y\}
$
in $\mathrm{Sub}(A)$. In the sequel, for a mapping $f$ from $A$ to $B$ and a subset $X$ of $A$, we will write $\mathrm{Ker}(f)$ for the kernel of $f$, $\mathrm{Im}(f)$ to mean $f[A]$, and the restriction of $f$ to $X$ will be denoted by $f\!\!\upharpoonright_{X}$.


\section{Preliminaries}

In this section we collect the basic facts, mostly without proofs, about many-sorted sets, many-sorted algebras, and recognizability for arbitrary many-sorted algebras, that we will need.

\begin{assumption}
From now on $S$ stands for a set of sorts in $\boldsymbol{\mathcal{U}}$, fixed once and for all.
\end{assumption}

\begin{definition}
An $S$-\emph{sorted set} is a mapping $A = (A_{s})_{s\in S}$ from $S$ to $\boldsymbol{\mathcal{U}}$. If $A$ and $B$ are $S$-sorted sets, an $S$-\emph{sorted mapping from} $A$ \emph{to} $B$ is an $S$-indexed family $f = (f_{s})_{s\in S}$, where, for every $s$ in $S$, $f_{s}$ is a mapping from $A_{s}$ to  $B_{s}$. Thus, an $S$-sorted mapping from $A$ to $B$ is an element of $\prod_{s\in S}\mathrm{Hom}(A_{s}, B_{s})$. We will denote by $\mathrm{Hom}(A,B)$ the set of all $S$-sorted mappings from $A$ to $B$. $S$-sorted sets and $S$-sorted mappings form a category which we will denote henceforth by $\mathbf{Set}^{S}$. 
\end{definition}

\begin{definition}
Let $I$ be a set in $\boldsymbol{\mathcal{U}}$ and $(A^{i})_{i\in I}$ an $I$-indexed family of $S$-sorted sets. Then the \emph{product} of $(A^{i})_{i\in I}$, denoted by $\prod_{i\in I}A^{i}$, is the $S$-sorted set defined, for every $s\in S$, as $\left(\prod\nolimits_{i\in I}A^{i}\right)_{s} = \prod\nolimits_{i\in I}A^{i}_{s}$, where
$$
\textstyle
\prod_{i\in I}A^{i}_{s} = \left\{(a_{i})_{i\in I}\in\mathrm{Fnc}(I,\bigcup_{i\in I}A^{i}_{s})\mid \forall\,i\in I\,(a_{i}\in A^{i}_{s})\right\}.
$$
For every $i\in I$, the \emph{i-th canonical projection}, $\mathrm{pr}^{i} = (\mathrm{pr}^{i}_{s})_{s\in S}$, is the $S$-sorted mapping from  $\prod_{i\in I}A^{i}$ to $A^{i}$ that, for every $s\in S$, sends $(a_{i})_{i\in I}$ in $\prod_{i\in I}A^{i}_{s}$ to $a_{i}$ in $A^{i}_{s}$. The ordered pair $(\prod_{i\in I}A^{i},(\mathrm{pr}^{i})_{i\in I})$ has the following universal property: For every $S$-sorted set $B$ and every $I$-indexed family of $S$-sorted mappings $(f^{i})_{i\in I}$, where, for every $i\in I$, $f^{i}$ is an $S$-sorted mapping from $B$ to $A^{i}$, there exists a unique $S$-sorted mapping $\left<f^{i}\right>_{i\in I}$ from $B$ to $\prod_{i\in I}A^{i}$ such that, for every $i\in I$, $\mathrm{pr}^{i}\circ \left<f^{i}\right>_{i\in I} = f^{i}$.

The \emph{coproduct} of $(A^{i})_{i\in I}$, denoted by $\coprod_{i\in I}A^{i}$, is the $S$-sorted set defined, for every $s\in S$, as $\left(\coprod\nolimits_{i\in I}A^{i}\right)_{s} = \coprod\nolimits_{i\in I}A^{i}_{s}$, where
$$
\textstyle
\coprod_{i\in I}A^{i}_{s} = \bigcup_{i\in I}(A^{i}_{s}\times\{i\}).
$$
For every $i\in I$, the \emph{i-th canonical injection}, $\mathrm{in}^{i}$, is the $S$-sorted mapping from $A^{i}$ to $\coprod_{i\in I}A^{i}$ that, for every $s\in S$, sends $a$ in $A^{i}_{s}$ to $(a,i)$ in $\coprod_{i\in I}A^{i}_{s}$. The ordered pair $(\coprod_{i\in I}A^{i},(\mathrm{in}^{i})_{i\in I})$ has the following universal property: For every $S$-sorted  set $B$ and every $I$-indexed family of $S$-sorted mappings $(f^{i})_{i\in I}$, where, for every $i\in I$, $f^{i}$ is an $S$-sorted mapping from $A^{i}$ to $B$, there exists a unique $S$-sorted mapping $[f^{i}]_{i\in I}$ from $\coprod_{i\in I}A^{i}$ to $B$ such that, for every $i\in I$, $ [f^{i}]_{i\in I}\circ\mathrm{in}^{i} = f^{i}$.

The remaining set-theoretic operations on $S$-sorted sets: $\times$ (binary product), $\amalg$ (binary coproduct), $\bigcup$ (union), $\cup$ (binary union), $\bigcap$ (intersection), $\cap$ (binary intersection), $-$ (difference), and $\complement_{A}$ (complement of an $S$-sorted set in a fixed $S$-sorted $A$), are defined in a similar way, i.e., componentwise.
\end{definition}

\begin{definition}
We will denote by $1^{S}$ the (standard) final $S$-sorted set of $\mathbf{Set}^{S}$, which is $1^{S} = (1)_{s\in S}$, and by $\varnothing^{S}$ the initial $S$-sorted set, which is $\varnothing^{S} = (\varnothing)_{s\in S}$. We shall abbreviate $1^{S}$ to $1$ and $\varnothing^{S}$ to $\varnothing$ when this is unlikely to cause confusion.
\end{definition}

\begin{definition}
If $A$ and $X$ are $S$-sorted sets, then we will say that $X$ is a \emph{subset} of $A$, denoted by $X\subseteq A$, if, for every $s\in S$, $X_{s}\subseteq A_{s}$. We will denote by $\mathrm{Sub}(A)$ the set of all $S$-sorted sets $X$ such that $X\subseteq A$.
\end{definition}

\begin{remark}
For every $S$-sorted set $A$, the ordered pairs $(B,f)$, where $B$ is an $S$-sorted set and $f$ a monomorphisms from $B$ to $A$, are classified by letting $(B,f) \equiv (C,g)$ if and only if there exists an isomorphism $h$ from $B$ to $C$ such that $f = g\circ h$, and the corresponding equivalence classes are called \emph{subobjects} of $A$. Then $\mathrm{Sub}(A)$ is isomorphic to the set of all subobjects $A$.
\end{remark}

\begin{definition}
Let $\delta$ be the mapping from $S\times \boldsymbol{\mathcal{U}}$ to $\boldsymbol{\mathcal{U}}^{S}$ that sends $(t,X)$ in $S\times \boldsymbol{\mathcal{U}}$ to the $S$-sorted set $\delta^{t,X} = (\delta^{t,X}_{s})_{s\in S}$ defined, for every $s\in S$, as follows: $\delta^{t,X}_{s} = X$, if $s = t$; $\delta^{t,X}_{s} = \varnothing$, otherwise. We will call the value of $\delta$ at $(t,X)$ the \emph{delta of Kronecker associated to} $(t,X)$. If $X = \{x\}$, then, for simplicity of notation, we will write $\delta^{t,x}$ instead of $\delta^{t,\{x\}}$. Moreover, for a sort $t$ in $S$, $\delta^{t,1}$, the delta of Kronecker associated to $(t,1)$, will be denoted by $\delta^{t}$ and called \emph{delta of Kronecker}.

%
\end{definition}

\begin{remark}
For a sort $t\in S$ and a set $X$, the $S$-sorted set $\delta^{t,X}$ is isomorphic to the $S$-sorted set $\coprod_{x\in X}\delta^{t}$, i.e., to the coproduct of the family $(\delta^{t})_{x\in X}$.

For every sort $t\in S$ we have a functor $\delta^{t,\cdot}$ from $\mathbf{Set}$ to $\mathbf{Set}^{S}$. In fact, for every set $X$, $\delta^{t,\cdot}(X) = \delta^{t,X}$, and, for every mapping $f\colon X\mor Y$, $\delta^{t,\cdot}(f) = \delta^{t,f}$, where, for $s\in S$, $\delta^{t,f}_{s} = \mathrm{id}_{\varnothing}$, if $s\neq t$, and $\delta^{t,f}_{t} = f$. Moreover, for every $t\in S$, the object mapping of the functor $\delta^{t,\cdot}$ is injective and $\delta^{t,\cdot}$ is full and faithful. Hence, for every $t\in S$, $\delta^{t,\cdot}$ is a full embedding from $\mathbf{Set}$ to $\mathbf{Set}^{S}$.

The final object $1^{S}$ does not generate ($\equiv$ separate) the category $\mathbf{Set}^{S}$. However, the set $\{\,\delta^{s}\mid s\in S\,\}$, of the deltas of Kronecker, is a generating ($\equiv$ separating) set for the category $\mathbf{Set}^{S}$. Therefore, every $S$-sorted set $A$ can be represented as a coproduct of copowers of deltas of Kronecker, i.e., $A$ is naturally isomorphic to $\coprod_{s\in S}\mathrm{card}(A_{s})\boldsymbol{\cdot}\delta^{s}$, where, for every $s\in S$, $\mathrm{card}(A_{s})\boldsymbol{\cdot}\delta^{s}$ is the copower of the family $(\delta^{s})_{\alpha\in \mathrm{card}(A_{s})}$, i.e., the coproduct of $(\delta^{s})_{\alpha\in \mathrm{card}(A_{s})}$.
To this we add the following facts: (1) $\{\,\delta^{s}\mid s\in S\,\}$ is the set of atoms of the Boolean algebra $\mathbf{Sub}(1^{S})$, of subobjects of $1^{S}$; (2) the Boolean algebras $\mathbf{Sub}(1^{S})$ and $\mathbf{Sub}(S)$ are isomorphic; (3) for every $s\in S$, $\delta^{s}$ is a projective object; and (4) for every $s\in S$, every $S$-sorted mapping from $\delta^{s}$ to another $S$-sorted set is a monomorphism.

In view of the above, it must  be concluded that the deltas of Kronecker are of crucial importance for many-sorted sets and associated fields.
\end{remark}

Before proceeding any further, let us point out that it is no longer unusual to find in the works devoted to investigate both many-sorted algebras and many-sorted algebraic systems the following. (1) That an $S$-sorted set $A$ is defined in such a way that $\mathrm{Hom}(1^{S},A)\neq \varnothing$, or, what is equivalent, requiring that, for every $s\in S$, $A_{s}\neq \varnothing$. This has as an immediate consequence that the corresponding category is not even finite cocomplete. Since cocompleteness (and completeness) are desirable properties for a category, we exclude such a convention in our work (the admission of $\varnothing^{S}$ is crucial in many applications). And (2) that an $S$-sorted set $A$ must be such that, for every $s$, $t\in S$, if $s\neq t$, then $A_{s}\cap A_{t} = \varnothing$. We also exclude such a requirement (the possibility of a common underlying set for the different sorts is very important in many applications). The above conventions are possibly based on the untrue widespread belief that many-sorted equational logic and many-sorted first-order logic are \emph{inessential} variations of equational logic and first-order logic, respectively. One can find a definitive refutation to the just mentioned belief in~\cite{gm85} and \cite{m76}, regarding many-sorted equational logic, and in~\cite{Hook85}, with respect to many-sorted first-order logic.


We next define for an $S$-sorted mapping the associated mappings of direct and inverse image formation, its kernel, its image, as well as its restriction to a subset of its domain.

\begin{definition}
Let $f\colon A\mor B$ be an $S$-sorted mapping. Then the mapping $f[\cdot]\colon\mathrm{Sub}(A)\mor\mathrm{Sub}(B)$, of $f$-\emph{direct image formation}, sends $X \in \mathrm{Sub}(A)$ to $f[X] = (f_{s}[X_{s}])_{s\in S} \in \mathrm{Sub}(B)$, and the mapping $f^{-1}[\cdot]\colon\mathrm{Sub}(B)\mor\mathrm{Sub}(A)$, of $f$-\emph{inverse image formation}, sends $Y \in \mathrm{Sub}(B)$ to $f^{-1}[Y] = (f_{s}^{-1}[Y_{s}] )_{s\in S} \in \mathrm{Sub}(A)$. The \emph{kernel} of $f$, denoted by $\mathrm{Ker}(f)$, is $(\mathrm{Ker}(f_{s}))_{s\in S}$ and the \emph{image} of $f$, denoted by $\mathrm{Im}(f)$, is $f[A]$. Moreover, if $X\subseteq A$, then the \emph{restriction of} $f$ \emph{to} $X$, denoted by $f\!\!\upharpoonright_{X}$, is $f\circ \mathrm{in}_{X,A}$, where $\mathrm{in}_{X,A} = (\mathrm{in}_{X_{s},A_{s}})_{s\in S}$ is the canonical embedding of $X$ into $A$.

\end{definition}

Before stating the following proposition, we recall that, for an $S$-sorted set $A$, the $S$-sorted set $(\mathrm{Sub}(A_{s}))_{s\in S}$, usually denoted by $A^{\wp}$, is the power object of $A$ in the topos $\mathbf{Set}^{S}$. Therefore, one should take great care not to confuse $\mathrm{Sub}(A)$, which is a \emph{set}, i.e., an object of $\mathbf{Set}$---naturally isomorphic to the set $\prod_{s\in S}2^{A_{s}}$---, and $A^{\wp}$, which is an $S$-\emph{sorted set}, i.e., an object of $\mathbf{Set}^{S}$---naturally isomorphic to the $S$-sorted set $(2^{A_{s}})_{s\in S}$. On the other hand, it is clear that there exists a natural isomorphism between $\mathrm{Sub}(A)$ and $\prod A^{\wp}$.


\begin{proposition}
Let $X$ and $Y$ be $S$-sorted sets and $f$ an $S$-sorted mapping from $X$ to $Y^{\wp}$. Then there exists a unique $S$-sorted mapping $f^{\mathfrak{p}}$ from $(\mathrm{Sub}(X_{s}))_{s\in S}$ to $Y^{\wp}$ such that $f^{\mathfrak{p}}$ is completely additive, i.e., for every $s\in S$ and every $\mathcal{L}\subseteq \mathrm{Sub}(X_{s})$, $f^{\mathfrak{p}}_{s}(\bigcup\mathcal{L}) = \bigcup_{L\in \mathcal{L}}f^{\mathfrak{p}}_{s}(L)$, and $f^{\mathfrak{p}}\circ \{\cdot\}_{X} = f$, where $\{\cdot\}_{X}$ is the $S$-sorted mapping from $X$ to $X^{\wp}$ that, for every $s\in S$, sends $x\in X_{s}$ to $\{x\}\in\mathrm{Sub}(X_{s})$.
\end{proposition}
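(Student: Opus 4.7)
The plan is to reduce the problem to a componentwise construction, since both the domain $(\mathrm{Sub}(X_s))_{s\in S}$ and the codomain $Y^{\wp} = (\mathrm{Sub}(Y_s))_{s\in S}$ are $S$-sorted sets and the required property (complete additivity) is stated sortwise. For each $s \in S$ I will define $f^{\mathfrak{p}}_{s}$ by the standard sup-completion formula, namely
\[
f^{\mathfrak{p}}_{s}(L) \;=\; \bigcup_{x\in L} f_{s}(x)\qquad\text{for every }L\in \mathrm{Sub}(X_{s}),
\]
and then set $f^{\mathfrak{p}} = (f^{\mathfrak{p}}_{s})_{s\in S}$. Since for each $x\in X_s$ we have $f_s(x)\in \mathrm{Sub}(Y_s)$, the above union is a well-defined element of $\mathrm{Sub}(Y_s)$, so $f^{\mathfrak{p}}$ is a bona fide $S$-sorted mapping from $(\mathrm{Sub}(X_s))_{s\in S}$ to $Y^{\wp}$.

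Next I will verify the two required properties. For the factorization through the singleton mapping, note that $\{\cdot\}_{X,s}(x)=\{x\}$, and $f^{\mathfrak{p}}_{s}(\{x\}) = \bigcup_{x'\in\{x\}}f_{s}(x') = f_{s}(x)$, so $f^{\mathfrak{p}}\circ \{\cdot\}_{X}=f$. For complete additivity, given $s\in S$ and $\mathcal{L}\subseteq \mathrm{Sub}(X_s)$, a direct double-inclusion argument shows
\[
f^{\mathfrak{p}}_{s}\!\Bigl(\bigcup\mathcal{L}\Bigr) = \bigcup_{x\in \bigcup\mathcal{L}}f_{s}(x) = \bigcup_{L\in\mathcal{L}}\bigcup_{x\in L}f_{s}(x) = \bigcup_{L\in\mathcal{L}}f^{\mathfrak{p}}_{s}(L),
\]
which is just the associativity of unions indexed by a union.

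Uniqueness is where complete additivity does all the work. Suppose $g\colon(\mathrm{Sub}(X_s))_{s\in S}\mor Y^{\wp}$ is completely additive and satisfies $g\circ\{\cdot\}_{X}=f$. For any $s\in S$ and any $L\in \mathrm{Sub}(X_s)$, use $L = \bigcup_{x\in L}\{x\}$ together with complete additivity to compute
\[
g_{s}(L) = g_{s}\!\Bigl(\bigcup_{x\in L}\{x\}\Bigr) = \bigcup_{x\in L}g_{s}(\{x\}) = \bigcup_{x\in L}f_{s}(x) = f^{\mathfrak{p}}_{s}(L),
\]
so $g=f^{\mathfrak{p}}$. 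No step here presents a real obstacle; the only mild subtlety is being careful that the empty-union case ($L=\varnothing$, giving $f^{\mathfrak{p}}_{s}(\varnothing)=\varnothing$) is correctly covered by complete additivity applied to $\mathcal{L}=\varnothing$, which it is. Thus the proposition follows by a plain componentwise reduction to the classical universal property of the powerset as the free join-complete semilattice.
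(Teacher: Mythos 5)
Your proof is correct and takes the same route as the paper: both define $f^{\mathfrak{p}}_{s}(L)=\bigcup_{x\in L}f_{s}(x)$ and derive the two properties and uniqueness from that formula; you merely spell out the verifications that the paper leaves as "clearly". No issues.
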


\begin{proof}
The $S$-sorted mapping $f^{\mathfrak{p}}$ from $X^{\wp}$ to $Y^{\wp}$ that, for every $s\in S$, assigns to $L\subseteq X_{s}$ the set $\bigcup_{x\in L}f_{s}(x)\subseteq Y_{s}$ is completely additive and $f^{\mathfrak{p}}\circ \{\cdot\}_{X} = f$. Moreover, $f^{\mathfrak{p}}$ is clearly the unique $S$-sorted mapping from $X^{\wp}$ to $Y^{\wp}$ satisfying the aforementioned  conditions.
\end{proof}

\begin{corollary}
Let $\mathbf{Set}^{S}_{\wp,\mathrm{ca}}$ be the category whose objects are the $S$-sorted sets $X^{\wp}$, where $X\in\boldsymbol{\mathcal{U}}^{S}$, and in which the set of morphisms from $X^{\wp}$ to $Y^{\wp}$ is the set of the completely additive $S$-sorted mappings from $X^{\wp}$ to $Y^{\wp}$. Then from $\mathbf{Set}^{S}_{\wp,\mathrm{ca}}$ to $\mathbf{Set}^{S}$ we have a canonical inclusion, denoted by $\mathrm{In}_{\mathbf{Set}^{S}_{\wp,\mathrm{ca}},\mathbf{Set}^{S}}$, and, for every $S$-sorted set $X$, the ordered pair $(X^{\wp},\{\cdot\}_{X})$ is a universal morphism from $X$ to $\mathrm{In}_{\mathbf{Set}^{S}_{\wp,\mathrm{ca}},\mathbf{Set}^{S}}$.
\end{corollary}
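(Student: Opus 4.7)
My plan is to separate the claim into two tasks: first, to check that $\mathbf{Set}^{S}_{\wp,\mathrm{ca}}$ really is a subcategory of $\mathbf{Set}^{S}$, so that the canonical inclusion makes sense; and second, to verify the universal property of the pair $(X^{\wp}, \{\cdot\}_{X})$ with respect to this inclusion. The second task is essentially a categorical reinterpretation of the preceding proposition, so the bulk of the substantive content has already been obtained.

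For the first task I would check that the identity on $X^{\wp}$ lies in $\mathbf{Set}^{S}_{\wp,\mathrm{ca}}$ and that $\mathbf{Set}^{S}_{\wp,\mathrm{ca}}$ is closed under composition. Identities are trivially completely additive: for every $s\in S$ and every $\mathcal{L}\subseteq\mathrm{Sub}(X_{s})$, $\mathrm{id}_{X^{\wp}_{s}}(\bigcup\mathcal{L}) = \bigcup\mathcal{L} = \bigcup_{L\in\mathcal{L}}\mathrm{id}_{X^{\wp}_{s}}(L)$. As for composition, if $f\colon X^{\wp}\mor Y^{\wp}$ and $g\colon Y^{\wp}\mor Z^{\wp}$ are completely additive, then for every $s\in S$ and every $\mathcal{L}\subseteq \mathrm{Sub}(X_{s})$, complete additivity of $f_{s}$ reduces $(g_{s}\circ f_{s})(\bigcup\mathcal{L})$ to $g_{s}(\bigcup_{L\in\mathcal{L}} f_{s}(L))$, which in turn equals $\bigcup_{L\in\mathcal{L}} g_{s}(f_{s}(L))$ by the complete additivity of $g_{s}$ applied to $\{\,f_{s}(L)\mid L\in\mathcal{L}\,\}\subseteq\mathrm{Sub}(Y_{s})$. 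Thus $\mathbf{Set}^{S}_{\wp,\mathrm{ca}}$ is a category, and the obvious object-wise and morphism-wise assignment defines a faithful inclusion functor into $\mathbf{Set}^{S}$.

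For the second task I would fix an arbitrary $S$-sorted set $X$ and consider any object $Y^{\wp}$ of $\mathbf{Set}^{S}_{\wp,\mathrm{ca}}$ together with any $S$-sorted mapping $f\colon X\mor Y^{\wp}$. The preceding proposition already supplies a unique completely additive $S$-sorted mapping $f^{\mathfrak{p}}\colon X^{\wp}\mor Y^{\wp}$ such that $f^{\mathfrak{p}}\circ\{\cdot\}_{X} = f$. Since complete additivity is exactly the condition for membership in the hom-set of $\mathbf{Set}^{S}_{\wp,\mathrm{ca}}$ from $X^{\wp}$ to $Y^{\wp}$, the mapping $f^{\mathfrak{p}}$ is the unique morphism in $\mathbf{Set}^{S}_{\wp,\mathrm{ca}}$ whose image under the inclusion factorizes $f$ through $\{\cdot\}_{X}$, which is precisely the defining condition of a universal morphism from $X$ to $\mathrm{In}_{\mathbf{Set}^{S}_{\wp,\mathrm{ca}},\mathbf{Set}^{S}}$. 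There is no genuine obstacle here; the only real work is the routine verification that complete additivity is preserved by identities and composition, while the existence and uniqueness of the factorization have already been taken care of by the proposition.
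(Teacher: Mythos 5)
Your argument is correct and is exactly the intended one: the paper states this as an immediate corollary of the preceding proposition (with no written proof), since the universal property is just a categorical restatement of the existence and uniqueness of $f^{\mathfrak{p}}$. Your additional verification that identities are completely additive and that complete additivity is closed under composition is the routine check the paper leaves implicit, and you carry it out correctly.
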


\begin{definition}
We will denote by $(\cdot)^{\wp}$ the functor from $\mathbf{Set}^{S}$ to $\mathbf{Set}^{S}_{\wp,\mathrm{ca}}$ that sends an $S$-sorted set $X$ to $X^{\wp}$ and an $S$-sorted  mapping $f$ from $X$ to $Y$ to the $S$-sorted mapping $f^{\wp} = (\{\cdot\}_{Y}\circ f)^{\mathfrak{p}}$ from $X^{\wp}$ to $Y^{\wp}$ (the $S$-sorted mappings $\{\cdot\}_{X}$, for $X$ in $\boldsymbol{\mathcal{U}}^{S}$, are the components of the unit of the adjunction: $\mathrm{Hom}_{\mathbf{Set}^{S}_{\wp,\mathrm{ca}}}(X^{\wp},Y^{\wp})\cong
\mathrm{Hom}_{\mathbf{Set}^{S}}(X,Y^{\wp})$). Thus, $(\cdot)^{\wp}$ is a left adjoint of $\mathrm{In}_{\mathbf{Set}^{S}_{\wp,\mathrm{ca}},\mathbf{Set}^{S}}$.
\end{definition}

\begin{remark}
For an $S$-sorted mapping $f$ from $X$ to $Y$ the $S$-sorted mapping $f^{\wp}$ from $X^{\wp}$ to $Y^{\wp}$ sends, for every $s\in S$, $L\subseteq X_{s}$ to $f_{s}[L]\subseteq Y_{s}$, i.e., $f^{\wp} = (f_{s}[\cdot])_{s\in S}$. One should be careful not to confuse $f[\cdot]$, which is a \emph{mapping} from the \emph{set} $\mathrm{Sub}(X)$ to the \emph{set} $\mathrm{Sub}(Y)$, and $f^{\wp}$, which is an $S$-\emph{sorted mapping} from the $S$-\emph{sorted set} $X^{\wp} = (\mathrm{Sub}(X_{s}))_{s\in S}$ to the $S$-\emph{sorted set} $Y^{\wp} = (\mathrm{Sub}(Y_{s}))_{s\in S}$. On the other hand, it is evident that $f[\cdot]$ and $\prod f^{\wp}$ are \emph{essentially} the same mapping.
\end{remark}

\begin{definition}
Let $A$ be an $S$-sorted set. Then the \emph{cardinal} of $A$, denoted by $\mathrm{card}(A)$, is $\mathrm{card}(\coprod A)$, i.e., the cardinal of the set $\coprod A = \bigcup_{s\in S}(A_{s}\times \{s\})$. An $S$-sorted set $A$ is \emph{finite} if $\mathrm{card}(A)<\aleph_{0}$. We will say that an $S$-sorted set $X$ is a \emph{finite} subset of $A$ if $X$ is finite and $X\subseteq A$. We will denote by $\mathrm{Sub}_{\mathrm{f}}(A)$ the set of all $S$-sorted sets $X$ in $\mathrm{Sub}(A)$ which are finite.
\end{definition}

\begin{remark}
For an object $A$ in the topos $\mathbf{Set}^{S}$ the following assertions are equivalent: (1) $A$ is finite, (2) $A$ is a finitary object of $\mathbf{Set}^{S}$, and (3) $A$ is a strongly finitary object of $\mathbf{Set}^{S}$ (for the notions of finitary and strongly finitary object of a category see~\cite{hs73}, Exercise 22E, on p.~155).

In $\mathbf{Set}^{S}$ there is another notion of finiteness: An $S$-sorted set $A$ is called $S$-\emph{finite} or \emph{locally finite}, abbreviated as $\mathrm{lf}$, if and only if, for every $s\in S$, $A_{s}$ is finite.  We will denote by $\mathrm{Sub}_{\mathrm{lf}}(A)$ the set of all $S$-sorted sets $X$ in $\mathrm{Sub}(A)$ which are locally finite. Although, unless $S$ is finite, this notion of finiteness is not categorial in nature, however, it plays a relevant role in the field of many-sorted algebra and in computer science (see below, after the definition of many-sorted algebra and when we deal with the congruences of locally finite index, respectively).
\end{remark}

\begin{definition}
Let $A$ be an $S$-sorted set. Then the \emph{support of} $A$, denoted by $\mathrm{supp}_{S}(A)$, is the set $\{\,s\in S\mid A_{s}\neq \varnothing\,\}$.
\end{definition}

\begin{remark}
An $S$-sorted set $A$ is finite if and only if $\mathrm{supp}_{S}(A)$ is finite and, for every $s\in \mathrm{supp}_{S}(A)$, $A_{s}$ is finite.
\end{remark}

We next recall, after fixing some notation with regard to an equivalence relation $\Phi$ on an $S$-sorted set $A$, the universal property of $(A/\Phi,\mathrm{pr}^{\Phi})$, where $A/\Phi$ is the quotient $S$-sorted set of $A$ by $\Phi$ and $\mathrm{pr}^{\Phi}$ the canonical projection from $A$ to $A/\Phi$; the notion of transversal of $A/\Phi$ in $A$; the notion of $\Phi$-saturation of a subset $X$ of $A$; and those properties of this last notion which will be used afterwards.

\begin{definition}
An $S$-\emph{sorted equivalence relation on} (or, to abbreviate, an $S$-\emph{sorted equivalence on}) an $S$-sorted set $A$ is an $S$-sorted relation $\Phi$ on $A$, i.e., a subset $\Phi = (\Phi_{s})_{s\in S}$ of the cartesian product $A\times A = (A_{s}\times A_{s})_{s\in S}$ such that, for every $s\in S$, $\Phi_{s}$ is an equivalence relation on $A_{s}$. We will denote by $\mathrm{Eqv}(A)$ the set of all $S$-sorted equivalences on $A$ (which is an algebraic closure system on $A\times A$), by $\mathbf{Eqv}(A)$ the algebraic lattice  $(\mathrm{Eqv}(A),\subseteq)$, by $\nabla_{A}$ the greatest element of $\mathbf{Eqv}(A)$, and by $\Delta_{A}$ the least element of $\mathbf{Eqv}(A)$.

For an $S$-sorted equivalence relation $\Phi$ on $A$, the $S$-\emph{sorted quotient set of} $A$ \emph{by} $\Phi$, denoted by  $A/\Phi$, is $(A_{s}/\Phi_{s})_{s\in S} = (\{[x]_{\Phi_{s}}\mid x\in A_{s}\})_{s\in S} (\subseteq A^{\wp})$, where, for every $s\in S$ and every $x\in A_{s}$, $[x]_{\Phi_{s}}$, the \emph{equivalence class of} $x$ \emph{with respect to} $\Phi_{s}$ (or, the $\Phi$-\emph{equivalence class of} $x$) is $\{y\in A_{s}\mid (x,y)\in \Phi_{s}\}$, and $\mathrm{pr}_{\Phi}\colon A\mor A/\Phi$, the \emph{canonical projection from} $A$ \emph{to} $A/\Phi$, is the $S$-sorted mapping $(\mathrm{pr}_{\Phi_{s}})_{s\in S}$, where, for every $s\in S$, $\mathrm{pr}_{\Phi_{s}}$ is the canonical projection from $A_{s}$ to $A_{s}/\Phi_{s}$ (which sends $x$ in $A_{s}$ to $\mathrm{pr}_{\Phi_{s}}(x) = [x]_{\Phi_{s}}$, the $\Phi_{s}$-equivalence class of $x$, in $A_{s}/\Phi_{s}$).

The ordered pair $(A/\Phi,\mathrm{pr}_{\Phi})$ has the following universal property: $\mathrm{Ker}(\mathrm{pr}_{\Phi})$ is $\Phi$ and, for every $S$-sorted set $B$ and every $S$-sorted mapping $f$ from $A$ to $B$, if $\mathrm{Ker}(f)\supseteq \Phi$, then there exists a unique $S$-sorted mapping $h$ from $A/\Phi$ to $B$ such that $h\circ\mathrm{pr}_{\Phi} = f$. In particular, if $\Psi$ is an  $S$-sorted equivalence relation on $A$ such that $\Phi\subseteq \Psi$, then we will denote by $\mathrm{p}_{\Phi,\Psi}$ the unique $S$-sorted mapping from $A/\Phi$ to $A/\Psi$ such that $\mathrm{p}_{\Phi,\Psi}\circ \mathrm{pr}_{\Phi} = \mathrm{pr}_\Psi$.

\end{definition}


\begin{remark}
Let $\mathbf{ClfdSet}^{S}$ be the category whose objects are the \emph{classified $S$-sorted sets}, i.e, the ordered pairs $(A,\Phi)$ where $A$ is an $S$-sorted set and $\Phi$ an $S$-sorted equivalence relation on $A$, and in which the set of morphisms from $(A,\Phi)$ to $(B,\Psi)$ is the set of all $S$-sorted mappings $f$ from $A$ to $B$ such that, for every $s\in S$ and every $(x,y)\in A^{2}_{s}$, if $(x,y)\in \Phi_{s}$, then $(f_{s}(x),f_{s}(y))\in \Psi_{s}$. Let $G$ be the functor from $\mathbf{Set}^{S}$ to $\mathbf{ClfdSet}^{S}$ whose object mapping sends $A$ to $(A,\Delta_{A})$ and whose morphism mapping sends $f\colon A\mor B$ to $f\colon (A,\Delta_{A})\mor (B,\Delta_{B})$. Then, for every classified $S$-sorted set $(A,\Phi)$, there exists a universal mapping from $(A,\Phi)$ to $G$, which is precisely the ordered pair $(A/\Phi,\mathrm{pr}_{\Phi})$ with $\mathrm{pr}_{\Phi}\colon (A,\Phi)\mor (A/\Phi,\Delta_{A/\Phi})$.
\end{remark}

\begin{definition}
Let $A$ be an $S$-sorted set and $\Phi\in \mathrm{Eqv}(A)$. Then a \emph{transversal of} $A/\Phi$ \emph{in} $A$ is a subset $X$ of $A$ such that, for every $s\in S$ and every $a\in A_{s}$, $\mathrm{card}(X_{s}\cap [a]_{\Phi_{s}}) = 1$.
\end{definition}

\begin{remark}
For an $S$-sorted equivalence relation $\Phi$ on $A$, the set of all transversals of $A/\Phi$ in $A$ is isomorphic to the set of all cross-sections of $\mathrm{pr}_{\Phi}$, where an $S$-sorted mappings $f$ from $A/\Phi$ to $A$ is a cross-section of $\mathrm{pr}_{\Phi}$ if $\mathrm{pr}_{\Phi}\circ f = \mathrm{id}_{A/\Phi}$. Moreover, if $\Psi$ is another equivalence relation on $A$, $\Psi$ is a refinement of $\Phi$, i.e., $\Psi\subseteq \Phi$, and $X^{\Phi}$ is a transversal of $A/\Phi$ in $A$, then, for every $s\in S$ and every $a\in A_{s}$, there exists a unique $x\in X^{\Phi}_{s}$ such that $[a]_{\Psi_{s}}\subseteq [x]_{\Phi_{s}}$.
\end{remark}

\begin{definition}
Let $A$ be an $S$-sorted set, $X$ a subset of $A$, and $\Phi\in\mathrm{Eqv}(A)$. Then the $\Phi$-\emph{saturation of} $X$ (or, the \emph{saturation of} $X$ \emph{with respect to} $\Phi$), denoted by $[X]^{\Phi}$, is the $S$-sorted set defined, for every $s\in S$, as follows:
$$
\textstyle[X]^{\Phi}_{s} = \{a\in A_{s}\mid
X_{s}\cap [a]_{\Phi_{s}}\neq\varnothing\}= \bigcup_{x\in X_{s}}[x]_{\Phi_{s}} = [X_{s}]^{\Phi_{s}}.
$$
Let $X$ be a subset of $A$ and $\Phi\in\mathrm{Eqv}(A)$. Then we will say that $X$ is $\Phi$-\emph{saturated} if and only if $X = [X]^{\Phi}$. We will denote by $\Phi\text{-}\mathrm{Sat}(A)$ the subset of $\mathrm{Sub}(A)$ defined as $\Phi\text{-}\mathrm{Sat}(A) = \{X\in \mathrm{Sub}(A)\mid X = [X]^{\Phi}\}$.
\end{definition}

\begin{remark}
Let $A$ be an $S$-sorted set and $\Phi\in\mathrm{Eqv}(A)$. Then, for a subset $X$ of
$A$, we have that the $\Phi$-saturation of $X$ is $(\mathrm{pr}_{\Phi})^{-1}[\mathrm{pr}_{\Phi}[X]]$. Therefore, $X$ is $\Phi$-saturated if and only if $X \supseteq [X]^{\Phi}$. Besides, $X$ is $\Phi$-saturated if and only if there exists a $\mathcal{Y}\subseteq A/\Phi$ such that $X = (\mathrm{pr}^{\Phi})^{-1}[\mathcal{Y}]$.
\end{remark}

\begin{proposition}\label{PropIncSat and PropsIncSat}
Let $A$ be an $S$-sorted set and $\Phi$, $\Psi\in\mathrm{Eqv}(A)$. Then
$$
\Phi\subseteq \Psi\, \text{if and only if }\, \forall X\subseteq A\;([[X]^{\Psi}]^{\Phi}=[X]^{\Psi}).
$$
Moreover, for a sort $s\in S$, we have that
$$
\Phi_{s}\subseteq \Psi_{s}\, \text{if and only if }\, \forall X\subseteq A_{s}\;([[X]^{\Psi_{s}}]^{\Phi_{s}}=[X]^{\Psi_{s}}).
$$
\end{proposition}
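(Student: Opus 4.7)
My plan is to first establish the sort-indexed (``moreover'') equivalence and then lift it componentwise to the $S$-sorted statement, exploiting the fact that $\Psi$-saturation is computed slotwise: $[X]^{\Psi}_{s} = [X_{s}]^{\Psi_{s}}$ for every $s\in S$.

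For the forward direction of the single-sort claim, I fix $s\in S$, assume $\Phi_{s}\subseteq \Psi_{s}$, and note that $[X]^{\Psi_{s}}\subseteq [[X]^{\Psi_{s}}]^{\Phi_{s}}$ comes free from reflexivity of $\Phi_{s}$. For the reverse inclusion I chase an element: given $a\in [[X]^{\Psi_{s}}]^{\Phi_{s}}$, pick $b\in [X]^{\Psi_{s}}$ with $(a,b)\in \Phi_{s}$ and then $x\in X$ with $(b,x)\in \Psi_{s}$; applying $\Phi_{s}\subseteq \Psi_{s}$ and transitivity of $\Psi_{s}$ places $a$ in $[X]^{\Psi_{s}}$.

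For the converse, I would extract the inclusion $\Phi_{s}\subseteq \Psi_{s}$ from the saturation identity by testing it on singletons: given $(a,b)\in \Phi_{s}$, take $X = \{b\}$, so that $[X]^{\Psi_{s}} = [b]_{\Psi_{s}}$; then $a\in [b]_{\Phi_{s}}\subseteq [[X]^{\Psi_{s}}]^{\Phi_{s}} = [b]_{\Psi_{s}}$ forces $(a,b)\in \Psi_{s}$. This singleton test is the only conceptual move of the proof; everything else is just unfolding the definition of saturation.

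To pass to the $S$-sorted statement, I would use that $\Phi\subseteq \Psi$ is equivalent to $\Phi_{s}\subseteq \Psi_{s}$ for every $s\in S$, and that the identity $[[X]^{\Psi}]^{\Phi} = [X]^{\Psi}$ for every $X\subseteq A$ is equivalent, componentwise, to $[[X_{s}]^{\Psi_{s}}]^{\Phi_{s}} = [X_{s}]^{\Psi_{s}}$ for every $s$ and every $X_{s}\subseteq A_{s}$; one direction is immediate, the other is obtained by feeding the global hypothesis the Kronecker delta $\delta^{s,X_{s}}$, which restricts to $X_{s}$ at sort $s$ and is empty elsewhere. The sort-indexed equivalence then closes the loop. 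I do not foresee any genuine obstacle here---the only care needed is to keep the quantifier structure tidy and to remember the singleton/delta instantiation that bridges subset-level saturation identities with point-level relational inclusions.
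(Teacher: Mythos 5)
Your proof is correct and follows essentially the same route as the paper's: an element chase (using transitivity of $\Psi_{s}$) for the forward direction, and an instantiation of the saturation identity at a delta-of-Kronecker test set for the converse. The only differences are cosmetic --- you prove the sortwise version first and lift it componentwise, and you test on the singleton $\{b\}$ where the paper tests on $\delta^{s,[a]_{\Psi_{s}}}$ via contraposition; both witnesses do the same job.
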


\begin{proof}
We restrict ourselves to proving the first assertion. Let us assume that $\Phi\subseteq \Psi$ and let $X$ be a subset of $A$. In order to prove that $[[X]^{\Psi}]^{\Phi}=[X]^{\Psi}$ it suffices to verify that $[[X]^{\Psi}]^{\Phi}\subseteq [X]^{\Psi}$. Let $s$ be an element of $S$. Then, by definition, $a\in [[X]^{\Psi}]^{\Phi}_{s}$ if and only if there exists some $b\in [X]^{\Psi}_{s}$ such that $a\in [b]_{\Phi_{s}}$. Since $\Phi\subseteq \Psi$, we have that $a\in [b]_{\Psi_{s}}$, therefore $a\in [X]^{\Psi}_{s}$.

To prove the converse, let us assume that $\Phi\not\subseteq \Psi$. Then there exists some sort $s\in S$ and elements $a$, $b$ in $A_{s}$ such that $(a,b)\in\Phi_{s}$ and $(a,b)\not\in \Psi_{s}$. Hence $b$ does not belong to $[\delta^{s,[a]_{\Psi_{s}}}]^{\Psi}_s$, whereas it does belong to $[[\delta^{s,[a]_{\Psi_{s}}}]^{\Psi}]^{\Phi}_{s}$. It follows that $[\delta^{s,[a]_{\Psi_{s}}}]^{\Psi}\neq [[\delta^{s,[a]_{\Psi_{s}}}]^{\Psi}]^{\Phi}$.
\end{proof}

\begin{corollary}\label{IncSat and sIncSat}
Let $A$ be an $S$-sorted set and $\Phi$, $\Psi\in\mathrm{Eqv}(A)$. If $\Phi\subseteq \Psi$, then  $\Psi\text{-}\mathrm{Sat}(A)\subseteq\Phi\text{-}\mathrm{Sat}(A)$. Moreover, for $s\in S$ and $L\subseteq A_{s}$, if $\Phi_{s}\subseteq \Psi_{s}$ and $L = [L]^{\Psi_{s}}$, then  $L = [L]^{\Phi_{s}}$.
\end{corollary}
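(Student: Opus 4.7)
The plan is to derive this directly from Proposition~\ref{PropIncSat and PropsIncSat}, which is the substantive statement; the present corollary is essentially a rephrasing in the language of $\Phi$-saturated subsets.

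For the first assertion, I would start with an arbitrary $X\in\Psi\text{-}\mathrm{Sat}(A)$, i.e.\ an $S$-sorted subset $X$ of $A$ with $X=[X]^{\Psi}$. Since $\Phi\subseteq\Psi$, the proposition yields $[[X]^{\Psi}]^{\Phi}=[X]^{\Psi}$. Substituting $X$ for $[X]^{\Psi}$ on both sides, this becomes $[X]^{\Phi}=X$, so $X\in\Phi\text{-}\mathrm{Sat}(A)$. This shows $\Psi\text{-}\mathrm{Sat}(A)\subseteq\Phi\text{-}\mathrm{Sat}(A)$.

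For the second assertion, I would repeat the same reasoning sortwise: fix $s\in S$ and $L\subseteq A_{s}$ with $\Phi_{s}\subseteq\Psi_{s}$ and $L=[L]^{\Psi_{s}}$. By the sortwise part of Proposition~\ref{PropIncSat and PropsIncSat}, we have $[[L]^{\Psi_{s}}]^{\Phi_{s}}=[L]^{\Psi_{s}}$; substituting $L$ for $[L]^{\Psi_{s}}$ gives $[L]^{\Phi_{s}}=L$, as required.

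There is no real obstacle here: the corollary is purely a matter of unwinding definitions once Proposition~\ref{PropIncSat and PropsIncSat} is in hand. The only thing to be a bit careful about is the direction of the implication one is invoking (the ``only if'' direction of the proposition), and the trivial substitution of $X$ for $[X]^{\Psi}$ that is legitimate because $X$ is assumed $\Psi$-saturated to begin with.
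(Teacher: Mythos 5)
Your proposal is correct and follows exactly the route the paper intends: the corollary is an immediate consequence of Proposition~\ref{PropIncSat and PropsIncSat}, obtained by applying the identity $[[X]^{\Psi}]^{\Phi}=[X]^{\Psi}$ to a $\Psi$-saturated $X$ and substituting $X=[X]^{\Psi}$ (and likewise sortwise). The paper leaves this derivation implicit, and your write-up supplies precisely the missing unwinding.
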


\begin{remark}
If, for an $S$-sorted set $A$, we denote by $(\cdot)\text{-}\mathrm{Sat}(A)$ the mapping from $\mathrm{Eqv}(A)$ to $\mathrm{Sub}(\mathrm{Sub}(A))$ which sends $\Phi$ to $\Phi\text{-}\mathrm{Sat}(A)$, then the above corollary means that $(\cdot)\text{-}\mathrm{Sat}(A)$ is an antitone ($\equiv$ order-reversing) mapping from the ordered set $(\mathrm{Eqv}(A),\subseteq)$ to the ordered set $(\mathrm{Sub}(\mathrm{Sub}(A)),\subseteq)$.
\end{remark}

\begin{proposition}\label{NablaSat}
Let $A$ be an $S$-sorted set and $X\subseteq A$. Then $X\in \nabla_{A}\text{-}\mathrm{Sat}(A)$ if and only if, for every $s\in S$, if $s\in \mathrm{supp}_{S}(X)$, then $X_{s} = A_{s}$.
\end{proposition}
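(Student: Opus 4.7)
The plan is to unfold the definition of the $\nabla_A$-saturation sort by sort and check the equality $X = [X]^{\nabla_A}$ componentwise. Since $\nabla_A$ is the largest element of $\mathbf{Eqv}(A)$, for every $s \in S$ we have $(\nabla_A)_s = A_s \times A_s$, so that $[x]_{(\nabla_A)_s} = A_s$ for every $x \in A_s$. Consequently, from the formula
\[
[X]^{\nabla_A}_s \;=\; \bigcup_{x \in X_s}[x]_{(\nabla_A)_s},
\]
one computes that $[X]^{\nabla_A}_s = A_s$ whenever $X_s \neq \varnothing$, and $[X]^{\nabla_A}_s = \varnothing$ whenever $X_s = \varnothing$.

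With this explicit description of $[X]^{\nabla_A}$ at hand, the equivalence reduces to a routine verification. For the forward direction, assume $X$ is $\nabla_A$-saturated, and let $s \in \mathrm{supp}_S(X)$. Then $X_s \neq \varnothing$, so by the computation above $[X]^{\nabla_A}_s = A_s$; combined with $X_s = [X]^{\nabla_A}_s$ this yields $X_s = A_s$. For the converse, suppose that for every $s \in \mathrm{supp}_S(X)$ we have $X_s = A_s$. I would split into two cases: if $s \in \mathrm{supp}_S(X)$, then $X_s = A_s = [X]^{\nabla_A}_s$; and if $s \notin \mathrm{supp}_S(X)$, then both $X_s$ and $[X]^{\nabla_A}_s$ are empty. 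Either way $X_s = [X]^{\nabla_A}_s$, so that $X = [X]^{\nabla_A}$, i.e.\ $X \in \nabla_A\text{-}\mathrm{Sat}(A)$.

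There is no substantive obstacle here: the statement is essentially an observation about the coarsest sorted equivalence, and the proof is a one-line calculation of the equivalence classes of $\nabla_A$ followed by a case split on whether a given sort belongs to the support of $X$. The only care needed is to handle the empty components correctly, which is why invoking the support $\mathrm{supp}_S(X)$ rather than quantifying over all $s \in S$ gives the cleanest statement.
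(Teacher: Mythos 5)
Your proof is correct and rests on the same key observation as the paper's: every $\nabla_{A_s}$-class is all of $A_s$, so $[X]^{\nabla_A}_s$ is $A_s$ or $\varnothing$ according as $X_s$ is nonempty or empty. The paper argues the forward direction by contraposition where you argue it directly, but this is an immaterial difference; your handling of the converse (which the paper leaves as ``straightforward'') is the expected case split.
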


\begin{proof}
Let us suppose that there exists a $t\in S$ such that $X_{t}\neq \varnothing$ and $X_{t}\neq A_{t}$. Then, since $[X]_{t}^{\nabla_{A}} = \bigcup_{x\in X_{t}}[x]_{\nabla_{A_{t}}}$ and $X_{t}\neq \varnothing$, we have that, for some $y\in X_{t}$, $[y]_{\nabla_{A_{t}}} = A_{t}$. But $X_{t}\subset A_{t}$. Hence $[X]_{t}^{\nabla_{A}}\neq X_{t}$. Therefore $X\not\in \nabla_{A}\text{-}\mathrm{Sat}(A)$.

The converse implication is straightforward.
\end{proof}

\begin{remark}
Let $A$ be an $S$-sorted set. Then, from the above proposition, it follows that $\varnothing^{S}$, $A\in \nabla_{A}\text{-}\mathrm{Sat}(A)$. Moreover, for every subset $T$ of $S$, we have that $\bigcup_{t\in T}\delta^{t,A_{t}}\in \nabla_{A}\text{-}\mathrm{Sat}(A)$.
\end{remark}

\begin{proposition}
Let $A$ be an $S$-sorted set, $X\subseteq A$, and $\Phi$, $\Psi\in\mathrm{Eqv}(A)$. Then
$[X]^{\Phi\cap\Psi}\subseteq[X]^{\Phi}\cap[X]^{\Psi}$.
\end{proposition}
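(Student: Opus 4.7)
The plan is to use the monotonicity of saturation in its equivalence argument, combined with the obvious inclusions $\Phi\cap\Psi\subseteq\Phi$ and $\Phi\cap\Psi\subseteq\Psi$ in $\mathrm{Eqv}(A)$.

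First I would record the following monotonicity remark: for arbitrary $S$-sorted equivalences $\Theta_{1}\subseteq\Theta_{2}$ on $A$ and every $Y\subseteq A$, one has $[Y]^{\Theta_{1}}\subseteq[Y]^{\Theta_{2}}$. This is immediate from the formula $[Y]^{\Theta}_{s}=\bigcup_{y\in Y_{s}}[y]_{\Theta_{s}}$ appearing in the definition of saturation, since enlarging the equivalence relation enlarges each equivalence class, and hence their union.

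Next, since $(\Phi\cap\Psi)_{s}=\Phi_{s}\cap\Psi_{s}$ for every $s\in S$, we plainly have $\Phi\cap\Psi\subseteq\Phi$ and $\Phi\cap\Psi\subseteq\Psi$. Applying the monotonicity remark twice yields $[X]^{\Phi\cap\Psi}\subseteq[X]^{\Phi}$ and $[X]^{\Phi\cap\Psi}\subseteq[X]^{\Psi}$, whence the inclusion $[X]^{\Phi\cap\Psi}\subseteq[X]^{\Phi}\cap[X]^{\Psi}$ follows at once.

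Alternatively, a one-shot element chase is equally direct: for $s\in S$ and $a\in[X]^{\Phi\cap\Psi}_{s}$, pick $x\in X_{s}$ with $(a,x)\in\Phi_{s}\cap\Psi_{s}$; the two memberships $(a,x)\in\Phi_{s}$ and $(a,x)\in\Psi_{s}$ then witness $a\in[X]^{\Phi}_{s}$ and $a\in[X]^{\Psi}_{s}$, respectively. I do not foresee any substantive obstacle: the statement is essentially a monotonicity remark. It is worth noting in passing that the reverse inclusion typically fails, since two elements may be jointly related to a common third under $\Phi$ and under $\Psi$ without being related under $\Phi\cap\Psi$, which explains why the proposition is stated as an inclusion rather than an equality.
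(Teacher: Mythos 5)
Your proof is correct, and your ``alternative'' element chase is essentially verbatim the paper's own argument: pick $a\in X_{s}$ with $(a,b)\in\Phi_{s}\cap\Psi_{s}$ and read off both memberships. The monotonicity packaging is a harmless reorganization of the same observation, so there is nothing substantive to distinguish the two.
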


\begin{proof}
Let $s$ be a sort in $S$ and $b\in[X]^{\Phi\cap \Psi}_{s}$. Then, by definition, there exists an $a\in X_{s}$ such that $(a,b)\in (\Phi\cap\Psi)_{s} = \Phi_{s}\cap\Psi_{s}$. Hence, $(a,b)\in \Phi_{s}$ and $(a,b)\in \Psi_{s}$. Therefore $b\in [X]^{\Phi}_{s}$ and $b\in[X]^{\Psi}_{s}$. Consequently, $b\in ([X]^{\Phi}\cap [X]^{\Psi})_{s}$. Thus $[X]^{\Phi\cap\Psi}\subseteq[X]^{\Phi}\cap[X]^{\Psi}$.
\end{proof}

\begin{corollary}
Let $A$ be an $S$-sorted set and $\Phi$, $\Psi\in\mathrm{Eqv}(A)$. Then we have that  $\Phi\text{-}\mathrm{Sat}(A)\cap\Psi\text{-}\mathrm{Sat}(A)\subseteq(\Phi\cap\Psi)\text{-}\mathrm{Sat}(A)$.
\end{corollary}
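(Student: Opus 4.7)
The plan is to invoke the preceding proposition essentially verbatim. Let $X \in \Phi\text{-}\mathrm{Sat}(A) \cap \Psi\text{-}\mathrm{Sat}(A)$; by definition this means $[X]^{\Phi} = X$ and $[X]^{\Psi} = X$. The goal is to show $[X]^{\Phi\cap\Psi} = X$, since this is precisely the condition $X \in (\Phi\cap\Psi)\text{-}\mathrm{Sat}(A)$.

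First I would record the elementary observation that for every equivalence $\Theta \in \mathrm{Eqv}(A)$ and every subset $Y \subseteq A$ one has $Y \subseteq [Y]^{\Theta}$, because each $a \in Y_{s}$ lies in its own class $[a]_{\Theta_{s}}$. Specializing to $\Theta = \Phi\cap\Psi$ and $Y = X$ immediately gives the inclusion $X \subseteq [X]^{\Phi\cap\Psi}$. For the reverse inclusion, apply the preceding proposition, which states $[X]^{\Phi\cap\Psi} \subseteq [X]^{\Phi}\cap [X]^{\Psi}$. Substituting the two saturation equalities yields $[X]^{\Phi\cap\Psi} \subseteq X \cap X = X$. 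Combining both inclusions gives $[X]^{\Phi\cap\Psi} = X$, as desired.

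As a sanity check, the same conclusion follows from Corollary \ref{IncSat and sIncSat} applied to the inclusion $\Phi\cap\Psi \subseteq \Phi$, which yields the stronger statement $\Phi\text{-}\mathrm{Sat}(A) \subseteq (\Phi\cap\Psi)\text{-}\mathrm{Sat}(A)$; symmetrically one obtains $\Psi\text{-}\mathrm{Sat}(A) \subseteq (\Phi\cap\Psi)\text{-}\mathrm{Sat}(A)$. Either of these inclusions alone already contains the intersection $\Phi\text{-}\mathrm{Sat}(A) \cap \Psi\text{-}\mathrm{Sat}(A)$ in $(\Phi\cap\Psi)\text{-}\mathrm{Sat}(A)$. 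I do not expect any substantive obstacle here: the corollary is a one-line consequence of the preceding proposition, and the only point worth remarking is that the reverse inclusion $[X]^{\Phi}\cap[X]^{\Psi} \subseteq [X]^{\Phi\cap\Psi}$ need not hold in general, so the containment in the statement cannot in general be strengthened to equality.
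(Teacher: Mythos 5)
Your proof is correct and follows exactly the route the paper intends: the corollary is stated without proof as an immediate consequence of the preceding proposition, and your argument (combining $X\subseteq[X]^{\Phi\cap\Psi}$ with $[X]^{\Phi\cap\Psi}\subseteq[X]^{\Phi}\cap[X]^{\Psi}=X$) is precisely that deduction. Your side remark is also accurate: the antitonicity of $(\cdot)\text{-}\mathrm{Sat}(A)$ applied to $\Phi\cap\Psi\subseteq\Phi$ already yields the stronger inclusion $\Phi\text{-}\mathrm{Sat}(A)\subseteq(\Phi\cap\Psi)\text{-}\mathrm{Sat}(A)$, of which the stated corollary is a weakening.
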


We next state that the set $\Phi\text{-}\mathrm{Sat}(A)$ is the set of all fixed points of a suitable operator on $A$, i.e., of an endomapping of $\mathrm{Sub}(A)$.

\begin{proposition}\label{SatOperator}
Let $A$ be an $S$-sorted set and $\Phi\in\mathrm{Eqv}(A)$. Then the mapping $[\cdot]^{\Phi}$ from $\mathrm{Sub}(A)$ to $\mathrm{Sub}(A)$ that sends $X$ in $\mathrm{Sub}(A)$ to $[\cdot]^{\Phi}(X) = [X]^{\Phi}$ in $\mathrm{Sub}(A)$
is a completely additive closure operator on $A$. Moreover, for every nonempty set $I$ in $\boldsymbol{\mathcal{U}}$ and every $I$-indexed family $(X^{i})_{i\in I}$ in $\mathrm{Sub}(A)$, $[\bigcap_{i\in I}X^{i}]^{\Phi} \subseteq \bigcap_{i\in I}[X^{i}]^{\Phi}$ (and, obviously, $[A]^{\Phi} = A$), and, for every $X\subseteq A$, if $X = [X]^{\Phi}$, then  $\complement_{A}X = [\complement_{A}X]^{\Phi}$. Besides, $[\cdot]^{\Phi}$ is uniform, i.e., is such that, for every $X$, $Y\subseteq A$, if  $\mathrm{supp}_{S}(X) = \mathrm{supp}_{S}(Y)$, then $\mathrm{supp}_{S}([X]^{\Phi}) = \mathrm{supp}_{S}([Y]^{\Phi})$---hence, in particular, $[\cdot]^{\Phi}$ is a uniform algebraic closure operator on $A$. And $\Phi\text{-}\mathrm{Sat}(A) = \mathrm{Fix}([\cdot]^{\Phi})$, where $\mathrm{Fix}([\cdot]^{\Phi})$ is the set of all fixed point of the operator $[\cdot]^{\Phi}$.
\end{proposition}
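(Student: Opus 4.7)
The plan is to verify each assertion sortwise, since the operator $[\cdot]^{\Phi}$ is defined componentwise by $[X]^{\Phi}_{s}=\bigcup_{x\in X_{s}}[x]_{\Phi_{s}}$, which reduces essentially everything to standard properties of single-sorted equivalence class saturation. From this explicit formula, complete additivity is transparent: a double-union swap yields $[\bigcup_{i\in I}X^{i}]^{\Phi}_{s}=\bigcup_{i\in I}[X^{i}]^{\Phi}_{s}$ for each $s\in S$. Extensivity is the tautology $x\in[x]_{\Phi_{s}}$, and monotonicity follows formally from complete additivity. For idempotence, rather than unfolding the definition I would invoke Proposition~\ref{PropIncSat and PropsIncSat} with $\Psi=\Phi$: the trivial inclusion $\Phi\subseteq\Phi$ delivers $[[X]^{\Phi}]^{\Phi}=[X]^{\Phi}$ at once. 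This establishes that $[\cdot]^{\Phi}$ is a completely additive closure operator on $A$.

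The remaining set-theoretic properties are then short sortwise computations. The intersection inclusion amounts to observing that any $b$ linked by $\Phi_{s}$ to a common element $a\in\bigcap_{i\in I}X^{i}_{s}$ belongs to $[X^{i}]^{\Phi}_{s}$ for every $i\in I$; and $[A]^{\Phi}=A$ is extensivity together with the trivial $[X]^{\Phi}\subseteq A$. For complement preservation on saturated sets, assume $X=[X]^{\Phi}$ and argue by contradiction: if some $a\in[\complement_{A}X]^{\Phi}_{s}$ were to lie in $X_{s}$, an element $b\in A_{s}\setminus X_{s}$ with $(a,b)\in\Phi_{s}$ would force $b\in[a]_{\Phi_{s}}\subseteq[X]^{\Phi}_{s}=X_{s}$, contradicting $b\notin X_{s}$.

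For uniformity, I would actually prove the stronger identity $\mathrm{supp}_{S}([X]^{\Phi})=\mathrm{supp}_{S}(X)$ for every $X\subseteq A$: indeed $[X]^{\Phi}_{s}=\bigcup_{x\in X_{s}}[x]_{\Phi_{s}}$ is nonempty if and only if $X_{s}$ is nonempty. Uniform algebraicity then follows because any completely additive operator is \emph{a fortiori} finitary, and hence algebraic. Finally, the equality $\Phi\text{-}\mathrm{Sat}(A)=\mathrm{Fix}([\cdot]^{\Phi})$ is merely the definition of $\Phi$-saturation rephrased in operator language. Since each claim decomposes cleanly sort-by-sort, there is no real conceptual obstacle in this proof; the only point requiring a touch of care is idempotence, which is most economically handled by citing Proposition~\ref{PropIncSat and PropsIncSat} rather than re-deriving it from scratch.
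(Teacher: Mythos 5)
Your proof is correct. The paper states this proposition without proof (it belongs to the block of preliminaries given ``mostly without proofs''), so there is no authorial argument to compare against; your sortwise verification via the explicit formula $[X]^{\Phi}_{s}=\bigcup_{x\in X_{s}}[x]_{\Phi_{s}}$ is the natural and intended one. Two of your choices are worth endorsing explicitly: deriving idempotence from Proposition~\ref{PropIncSat and PropsIncSat} with $\Psi=\Phi$ is legitimate and avoids duplication, and the identity $\mathrm{supp}_{S}([X]^{\Phi})=\mathrm{supp}_{S}(X)$ is indeed the cleanest route to uniformity. Your remark that complete additivity implies algebraicity is also sound, since $X=\bigcup\{F\mid F\subseteq X,\ F\ \text{finite}\}$ and additivity then gives $[X]^{\Phi}=\bigcup\{[F]^{\Phi}\mid F\subseteq X,\ F\ \text{finite}\}$. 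No gaps.
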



\begin{proposition}\label{CABA Saturades} Let $A$ be an $S$-sorted set and $\Phi\in\mathrm{Eqv}(A)$. Then the ordered pair
$\Phi\text{-}\mathbf{Sat}(A) = (\Phi\text{-}\mathrm{Sat}(A),\subseteq)$ is a complete atomic Boolean algebra.
\end{proposition}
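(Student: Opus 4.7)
The plan is to establish the result by identifying $\Phi\text{-}\mathbf{Sat}(A)$ with $\mathbf{Sub}(A/\Phi)$ via the canonical projection $\mathrm{pr}_{\Phi}$, and then invoking the fact that $\mathrm{Sub}(A/\Phi)$, being naturally isomorphic to $\prod_{s\in S}2^{(A/\Phi)_{s}}$, is a complete atomic Boolean algebra (as a product of power set Boolean algebras). The whole proof will then be a transport of structure.

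First I would define $\varphi\colon \mathrm{Sub}(A/\Phi)\mor \Phi\text{-}\mathrm{Sat}(A)$ by $\varphi(\mathcal{Y}) = \mathrm{pr}_{\Phi}^{-1}[\mathcal{Y}]$. By the remark following the definition of $\Phi$-saturation, a subset $X\incl A$ is $\Phi$-saturated if and only if $X = \mathrm{pr}_{\Phi}^{-1}[\mathcal{Y}]$ for some $\mathcal{Y}\incl A/\Phi$, so $\varphi$ is well-defined and surjective. Since $\mathrm{pr}_{\Phi}$ is surjective, the assignment $X\mapsto \mathrm{pr}_{\Phi}[X]$ is a two-sided inverse (one direction uses $\mathrm{pr}_{\Phi}^{-1}[\mathrm{pr}_{\Phi}[X]] = [X]^{\Phi} = X$ for saturated $X$, the other uses surjectivity of $\mathrm{pr}_{\Phi}$), and the same surjectivity shows that $\varphi$ and its inverse are monotone, hence $\varphi$ is an order-isomorphism.

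Next I would identify the lattice operations in $\Phi\text{-}\mathbf{Sat}(A)$ inherited through $\varphi$ with the usual set-theoretic operations, appealing in each case to Proposition \ref{SatOperator}. Arbitrary joins correspond to $S$-sorted unions, since $[\farg]^{\Phi}$ is completely additive and therefore a union of $\Phi$-saturated sets is $\Phi$-saturated. Arbitrary meets correspond to $S$-sorted intersections, since $[\bigcap_{i\in I}X^{i}]^{\Phi}\incl \bigcap_{i\in I}[X^{i}]^{\Phi} = \bigcap_{i\in I}X^{i}$. Bottom and top are $\varnothing^{S}$ and $A$. The key step, and the mildly nontrivial one, is complementation: the same proposition states that if $X = [X]^{\Phi}$ then $\complement_{A}X = [\complement_{A}X]^{\Phi}$, so the set-theoretic complement in $A$ restricts to an operation on $\Phi\text{-}\mathrm{Sat}(A)$ and transports, through $\varphi$, to the complement in $\mathrm{Sub}(A/\Phi)$.

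Finally, atomicity is obtained by transporting the atoms of $\mathbf{Sub}(A/\Phi)$, namely the $S$-sorted singletons $\delta^{s,\{[a]_{\Phi_{s}}\}}$ for $s\in S$ and $a\in A_{s}$, through $\varphi$; they correspond exactly to the $\Phi$-saturated sets $\delta^{s,[a]_{\Phi_{s}}}$, and every $\Phi$-saturated set is the join of the atoms below it because every subset of $A/\Phi$ is the union of its singletons. The expected main obstacle is bookkeeping rather than conceptual: verifying that $\varphi$ simultaneously respects arbitrary joins, arbitrary meets, and complements so that the transported structure is genuinely a Boolean (and not merely a lattice) algebra. Once this routine check is carried out, the conclusion that $\Phi\text{-}\mathbf{Sat}(A)$ is a complete atomic Boolean algebra is immediate.
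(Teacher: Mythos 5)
Your proof is correct, and it fills in honestly what the paper declines to write out: the paper's own ``proof'' consists only of the remark that the argument is straightforward and that the atoms are the deltas of Kronecker $\delta^{t,[x]_{\Phi_{t}}}$, which is exactly the family of atoms you arrive at. The one organizing idea you add that the paper does not mention is the order isomorphism $\mathcal{Y}\mapsto \mathrm{pr}_{\Phi}^{-1}[\mathcal{Y}]$ between $\mathbf{Sub}(A/\Phi)$ and $\Phi\text{-}\mathbf{Sat}(A)$, and it is a good one: it reduces everything to the known complete atomic Boolean algebra $\mathbf{Sub}(A/\Phi)\cong\prod_{s\in S}2^{(A/\Phi)_{s}}$. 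In fact, once you have established that this map is an order isomorphism, you are already done, since ``complete atomic Boolean algebra'' is a purely order-theoretic property (complements in a bounded distributive lattice are determined by the order); your subsequent identification of joins, meets, and complements with union, intersection, and $\complement_{A}$ via Proposition~\ref{SatOperator} is therefore not logically needed for the conclusion, though it is harmless and it is the information one actually wants in practice. The only point worth double-checking in your write-up is the injectivity/inverse claim: $\mathrm{pr}_{\Phi}[\mathrm{pr}_{\Phi}^{-1}[\mathcal{Y}]]=\mathcal{Y}$ does require surjectivity of $\mathrm{pr}_{\Phi}$, and $\mathrm{pr}_{\Phi}^{-1}[\mathrm{pr}_{\Phi}[X]]=[X]^{\Phi}=X$ requires $X$ saturated; you cite both correctly, so the argument stands.
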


\begin{proof}
The proof is straightforward and we leave it to the reader. We only point out that the atoms of $\Phi\text{-}\mathbf{Sat}(A)$ are precisely the deltas of Kronecker $\delta^{t, [x]_{\Phi_t}}$, for some $t\in S$ and some $x\in A_{t}$, and that, obviously, every $\Phi$-saturated subset $X$ of $A$ is the join ($\equiv$ union) of all atoms smaller than $X$.
\end{proof}


We next define the concept of free monoid on a set and several notions associated to it that will be used afterwards to construct the free algebra on an $S$-sorted set and to define, in Section 3, diverse substitution operators.

\begin{definition}
Let $A$ be a set. The \emph{free monoid on} $A$, denoted by $\mathbf{A}^{\star}$, is $(A^{\star},\curlywedge,\lambda)$, where $A^{\star}$, the set of all \emph{words on} $A$, is $\bigcup_{n\in\mathbb{N}}\mathrm{Hom}(n,A)$, the set of all mappings $w\colon n\mor A$ from some $n\in \mathbb{N}$ to $A$, $\curlywedge$, the \emph{concatenation} of words on $A$, is the binary operation on $A^{\star}$ which sends a pair of words $(w,v)$ on $A$ to the mapping $w\curlywedge v$ from $\bb{w}+\bb{v}$ to $A$, where $\bb{w}$ and $\bb{v}$ are the lengths ($\equiv$ domains) of the mappings $w$ and $v$, respectively, defined as follows:
$$
w\bconcat v
\nfunction
{\bb{w}+\bb{v}}{S}
{i}{
\begin{cases}
  w_{i}, & \text{if $0\leq i < \bb{w}$;}\\
  v_{i-\bb{w}}, & \text{if $\bb{w}\leq i < \bb{w}+\bb{v}$.}
\end{cases}
   }
$$
and $\lambda$, the \emph{empty word on} $A$, is the unique mapping from $\varnothing$ to $A$. A word $w\in A^{\star}$ is usually denoted as a sequence $(a_{i})_{i\in\bb{w}}$, where, for $i\in\bb{w}$, $a_{i}$ is the letter in $A$ satisfying $w(i)=a_{i}$. We will denote by $\eta_{A}$ the mapping from $A$ to $A^{\star}$ that sends $a\in A$ to $(a)\in A^{\star}$, i.e., to the mapping $(a)\colon 1\mor A$ that sends $0$ to $a$. The ordered pair $(\mathbf{A}^{\star},\eta_{A})$ is a universal morphism from $A$ to the forgetful functor from the category $\mathbf{Mon}$, of monoids, to $\mathbf{Set}$.
\end{definition}

\begin{remark}
For a word $w\in A^{\star}$, $\bb{w}$ is the value at $w$ of the unique homomorphism $\bb{\cdot}$ from $\mathbf{A}^{\star}$ to $(\mathbb{N},+,0)$, the additive monoid of the natural numbers, such that $\bb{\cdot}\circ \eta_{A} = \kappa_{1}$, where $\kappa_{1}$ is the mapping from $A$ to $\mathbb{N}$ constantly $1$. Note that, for every $n\in \mathbb{N}$, $\bb{\cdot}$ sends $w\in \mathrm{Hom}(n,A)$ to $n$. Thus, for the family of mappings $(\kappa_{n})_{n\in \mathbb{N}}$, where, for every $n\in \mathbb{N}$, $\kappa_{n}$ is the mapping from $\mathrm{Hom}(n,A)$ to $\mathbb{N}$ constantly $n$, and by applying the universal property of the coproduct, we have that $\bb{\cdot} = [\kappa_{n}]_{n\in \mathbb{N}}$.
\end{remark}

\begin{definition}
Let $w$ and $w'$ be words in $A^{\star}$. We will say that $w'$ is a \emph{subword} of $w$ if there are words $u$ and $v$ such that $w = u\bconcat w'\bconcat v$. A word $w$ may have several subwords equal to $w'$. In that case, the equation $w = u\bconcat w'\bconcat v$ has several solutions $(u,v)$.
If the pairs $(u_{i},v_{i})$ $(i\in n)$ are all solutions of $w = u\bconcat w'\bconcat v$ and if $\bb{u_{0}}< \bb{u_{1}}<\cdots<\bb{u_{n-1}}$, then $(u_{i},v_{i})$ determines the $i$-th occurrence of $w'$ in $w$. The solution $(u,v)$ in which either $u$ or $v$ is $\lambda$ is not excluded.
Let $w$ be a word in $A^{\star}$ and $a\in A$. We will say that \emph{$a$ occurs} in $w$ if there are words $u$, $v$ in $A^{\star}$ such that $w = u\bconcat(a)\bconcat v$. Note that $a$ occurs in $w$ if and only if there exists an $i\in \bb{w}$ such that $w(i) = a$. We will denote by $\bb{w}_{a}$ the natural number $\mathrm{card}(\{i\in \bb{w}\mid w(i) = a\}) = \mathrm{card}(w^{-1}[\{a\}])$, i.e., the number of occurrences of $a$ in $w$. Moreover, we let $(i_{\alpha})_{\alpha\in\bb{w}_{a}}$ stand for the  enumeration in ascending order of the occurrences of $a$ in $w$. Thus $(i_{\alpha})_{\alpha\in\bb{w}_{a}}$ is the order embedding of $(\bb{w}_{a},<)$ into $(\bb{w},<)$ defined  recursively as follows:
\begin{gather*}
i_{0} = \mathrm{min}\{i\in \bb{w}\mid w(i) = a\} \text{ and}, \\
\text{for } \alpha\in \bb{w}_{a}-\{\bb{w}_{a}-1\},\, i_{\alpha+1} = \mathrm{min}\{i\in \bb{w}-\{i_{0},\ldots,i_{\alpha}\}\mid w(i) = a\}.
\end{gather*}
If the pairs $(u_{i_{\alpha}},v_{i_{\alpha}})$ $(\alpha\in \bb{w}_{a})$ are all solutions of $w = u\bconcat (a)\bconcat v$ and if $\bb{u_{i_{0}}}< \bb{u_{i_{1}}}<\cdots<\bb{u_{i_{\bb{w}_{a}-1}}}$, then $(u_{i_{\alpha}},v_{i_{\alpha}})$ determines the $i_{\alpha}$-th occurrence of $(a)$ in $w$ and we will say that \emph{$a$ occurs at the $i_{\alpha}$-th place of $w$}.
\end{definition}

\begin{remark}
Let $a$ be an element of $A$. Then, for $w\in A^{\star}$, $\bb{w}_{a}$, the number of occurrences of $a$ in $w$, is the value at $w$ of the unique homomorphism $\bb{\cdot}_{a}$ from $\mathbf{A}^{\star}$ to $(\mathbb{N},+,0)$ such that $\bb{\cdot}_{a}\circ\eta_{A} = \delta_{a}$, where $\delta_{a}$ is the mapping from $A$ to $\mathbb{N}$ that sends $a\in A$ to $1\in\mathbb{N}$ and $b\in A-\{a\}$ to $0\in \mathbb{N}$. Therefore, since, for every $w\in A^{\star}$, the $A$-indexed family $(\bb{w}_{a})_{a\in A}$ in $\mathbb{N}$ is such that $\bb{w}_{a} = 0$ for all but a finite number of elements $a$ in $A$, i.e., is such that $\mathrm{card}(\{a\in A\mid \bb{w}_{a}\neq 0\})<\aleph_{0}$, we have that $\sum_{a\in A}\bb{w}_{a}\in \mathbb{N}$ and, obviously, for every $w\in A^{\star}$, $\bb{w} = \sum_{a\in A}\bb{w}_{a}$, i.e., $\bb{\cdot} = \sum_{a\in A}\bb{\cdot}_{a}$.
\end{remark}

\begin{definition} Let $w$ be a word in $A^{\star}$. Then we will denote by $\left(\!\begin{smallmatrix}a\\ \cdot\end{smallmatrix}\!\right)_{a\in A}(w)$ the mapping from $\prod_{a\in A}(A^{\star})^{\bb{w}_{a}}$ to $A^{\star}$ that assigns to $((q^{a}_{\alpha})_{\alpha\in \bb{w}_{a}})_{a\in A}$ in $\prod_{a\in A}(A^{\star})^{\bb{w}_{a}}$ the word $\left(\!\begin{smallmatrix}a\\(q^{a}_{\alpha})_{\alpha\in \bb{w}_{a}}\end{smallmatrix}\!\right)_{a\in A}(w)$ in $A^{\star}$ obtained by substituting in $w$, for every $a\in A$ and every $\alpha\in \bb{w}_a$, $q^{a}_{\alpha}$ for the $i_{\alpha}$-th occurrence of $a$ in $w$. We call  $\left(\!\begin{smallmatrix}a\\(q^{a}_{\alpha})_{\alpha\in \bb{w}_{a}}\end{smallmatrix}\!\right)_{a\in A}(w)$ \emph{the substitution of $(q_{\alpha})_{\alpha\in\bb{w}_{a}}$ for $a$ in $w$ for every $a$ in $A$}, and $\left(\!\begin{smallmatrix}a\\ \cdot\end{smallmatrix}\!\right)_{a\in A}(w)$ the \emph{substitution operator for $w$}. Let $c$ be an element of $A$. Then we will denote by $\left(\!\begin{smallmatrix}c\\ \cdot\end{smallmatrix}\!\right)(w)$ the mapping from $(A^{\star})^{\bb{w}_{c}}$ to $A^{\star}$ that assigns to $(q^{c}_{\alpha})_{\alpha\in \bb{w}_{c}}\in (A^{\star})^{\bb{w}_{c}}$ the term $\left(\!\begin{smallmatrix}c\\(q^{c}_{\alpha})_{\alpha\in \bb{w}_{c}}\end{smallmatrix}\!\right)(w)$ in $A^{\star}$ obtained by substituting in $w$, for every $\alpha\in \bb{w}_{c}$, $q^{c}_{\alpha}$ for the $i_{\alpha}$-th occurrence of $c$ in $w$. We call $\left(\!\begin{smallmatrix}c\\(q^{c}_{\alpha})_{\alpha\in \bb{w}_{c}}\end{smallmatrix}\!\right)(w)$ \emph{the substitution of $(q^{c}_{\alpha})_{\alpha\in\bb{w}_{c}}$ for $c$ in $w$}.
\end{definition}

\begin{remark}
Therefore, for every set $A$, we have obtained the family of substitution operators:
$$
\textstyle
(\left(\!\begin{smallmatrix}a\\ \cdot\end{smallmatrix}\!\right)_{a\in A}(w))_{w\in A^{\star}}\in \prod_{w\in A^{\star}}\mathrm{Hom}(\prod_{a\in A}(A^{\star})^{\bb{w}_{a}},A^{\star}).
$$
\end{remark}

Our next aim is to provide those notions from the field of many-sorted universal algebra that will be used afterwards.

\begin{definition}\label{$S$-sorted signature}
An $S$-\emph{sorted signature} is a function $\Sigma$ from $S^{\star}\times S$ to $\boldsymbol{\mathcal{U}}$ which sends a pair $(w,s)\in S^{\star}\times S$ to the set $\Sigma_{w,s}$ of the \emph{formal operations} of \emph{arity} $w$, \emph{sort} (or \emph{coarity}) $s$, and \emph{rank} (or \emph{biarity}) $(w,s)$.
If $\Sigma$ and $\Lambda$ are $S$-sorted signatures, then a \emph{morphism from} $\Sigma$ \emph{to} $\Lambda$ is an $S^{\star}\times S$-indexed family $d = (d_{w,s})_{(w,s)\in S^{\star}\times S}$, where, for every $(w,s)\in S^{\star}\times S$, $d_{w,s}$ is a mapping from $\Sigma_{w,s}$ to $\Lambda_{w,s}$. $S$-sorted signatures and morphisms between $S$-sorted signatures form a category which we will denote henceforth by $\mathbf{Sig}(S)$.

\end{definition}

\begin{remark}
For every set of sorts $S$, the category $\mathbf{Sig}(S)$ is $\mathbf{Set}^{S^{\star}\times S}$.
\end{remark}

\begin{assumption}
From now on $\Sigma$ stands for an $S$-sorted signature, fixed once and for all.
\end{assumption}

%

We shall now give precise definitions of the concepts of many-sorted algebra and of homomorphism between many-sorted algebras.

\begin{definition}
The $S^{\star}\times S$-sorted set of the \emph{finitary operations on} an $S$-sorted set $A$ is $(\mathrm{Hom}(A_{w},A_{s}))_{(w,s)\in S^{\star}\times S}$, where, for every $w\in S^{\star}$, $A_{w} = \prod_{i\in \lvert w\rvert}A_{w_{i}}$, with $\lvert w\rvert$ denoting the length of the word $w$ (if $w = \lambda$, then $A_{\lambda}$ is a final set). A \emph{structure of} $\Sigma$-\emph{algebra on} an $S$-sorted set  $A$ is a family $(F_{w,s})_{(w,s)\in S^{\star}\times S}$, denoted by $F$, where, for $(w,s)\in S^{\star}\times S$, $F_{w,s}$ is a mapping from $\Sigma_{w,s}$ to $\mathrm{Hom}(A_{w},A_{s})$ (if $(w,s) = (\lambda,s)$ and $\sigma\in \Sigma_{\lambda,s}$, then $F_{w,s}(\sigma)$ picks out an element of $A_{s}$). For a pair $(w,s)\in S^{\star}\times S$ and a formal operation $\sigma\in \Sigma_{w,s}$, in order to simplify the notation, the operation $F_{w,s}(\sigma)$ from $A_{w}$ to $A_{s}$ will be written as $F_{\sigma}$. A $\Sigma$-\emph{algebra} is a pair $(A,F)$, abbreviated to $\mathbf{A}$, where $A$ is an $S$-sorted set and $F$ a structure of $\Sigma$-algebra on $A$. A $\Sigma$-\emph{homomorphism} from $\mathbf{A}$ to $\mathbf{B}$, where $\mathbf{B} = (B,G)$, is a triple $(\mathbf{A},f,\mathbf{B})$, abbreviated to $f\colon \mathbf{A}\mor \mathbf{B}$, where $f$ is an $S$-sorted mapping from $A$ to $B$ such that, for every $(w,s)\in S^{\star}\times S$, every  $\sigma\in \Sigma_{w,s}$, and every $(a_{i})_{i\in \lvert w\rvert}\in A_{w}$, we have that
$
f_{s}(F_{\sigma}((a_{i})_{i\in \lvert w\rvert})) = G_{\sigma}(f_{w}((a_{i})_{i\in \lvert w\rvert})),
$
where $f_{w}$ is the mapping $\prod_{i\in \lvert w\rvert}f_{w_{i}}$ from $A_{w}$ to $B_{w}$ that sends $(a_{i})_{i\in \lvert w\rvert}$ in $A_{w}$ to $(f_{w_{i}}(a_{i}))_{i\in \lvert w\rvert}$ in $B_{w}$. We will denote by $\mathbf{Alg}(\Sigma)$ the category of $\Sigma$-algebras and $\Sigma$-homomorphisms (or, to abbreviate, homomorphisms) and by $\mathrm{Alg}(\Sigma)$ the set of objects  of $\mathbf{Alg}(\Sigma)$.

In some cases, to avoid mistakes, we will denote by $F^{\mathbf{A}}$ the structure of $\Sigma$-algebra on $A$, and, for $(w,s)\in S^{\star}\times S$ and $\sigma\in \Sigma_{w,s}$, by $F^{\mathbf{A}}_{\sigma}$ the corresponding operation. Moreover, for $s\in S$ and $\sigma\in\Sigma_{\lambda,s}$, we will, usually, denote by $\sigma^{\mathbf{A}}$ the value of the mapping $F^{\mathbf{A}}_{\sigma}\colon 1\mor A_{s}$ at the unique element in $1$.

We will denote by $\mathbf{1}^{S}$ or, to abbreviate, by $\mathbf{1}$, the (standard) final $\Sigma$-algebra.
\end{definition}

\begin{definition}
Let $\mathbf{A}$ be a $\Sigma$-algebra. Then the \emph{support of} $\mathbf{A}$, denoted by $\mathrm{supp}_{S}(\mathbf{A})$, is $\mathrm{supp}_{S}(A)$, i.e., the support of the underlying $S$-sorted set $A$ of $\mathbf{A}$.
\end{definition}

\begin{remark}
The set $\{\mathrm{supp}_{S}(\mathbf{A})\mid \mathbf{A}\in \mathrm{Alg}(\Sigma)\}$ is a closure system on $S$.
\end{remark}

\begin{definition}
Let $\mathbf{A}$ be a $\Sigma$-algebra. We will say that $\mathbf{A}$ is \emph{finite} if $A$, the underlying $S$-sorted set of $\mathbf{A}$, is finite.
\end{definition}

\begin{remark}
In $\mathbf{Alg}(\Sigma)$, as was the case with $\mathbf{Set}^{S}$, there is another notion of finiteness: A $\Sigma$-algebra $\mathbf{A}$ is called $S$-\emph{finite} or \emph{locally finite}, abbreviated as $\mathrm{lf}$, if and only if the underlying $S$-sorted set of $\mathbf{A}$ is $S$-finite. As was noted above this notion of finiteness plays a relevant role in the field of many-sorted algebra, e.g., to define $S$-finite, also called locally finite, terms and to distinguish, on the one hand, between many-sorted varieties and finitary many-sorted  varieties, and, on the other hand, between many-sorted quasivarieties and finitary many-sorted quasivarieties (see, e.g., \cite{gm85} and \cite{{m76}}).
\end{remark}

We next define the subset many-sorted algebra associated to a many-sorted algebra and prove, in particular, that this is the object mapping of an endofunctor of $\mathbf{Alg}(\Sigma)$. We note that, in Section~3, the subset many-sorted algebra associated to a free many-sorted algebra will appear as the codomain of a substitution operator which will be used to prove recognizability results.

\begin{proposition}\label{subsetalg}
Let $(\cdot)^{\wp}$ be the mapping that sends (1) a $\Sigma$-algebra $\mathbf{A}$ to the $\Sigma$-algebra $\mathbf{A}^{\wp} = (A^{\wp},F^{\wp})$ where $A^{\wp}$ is $(\mathrm{Sub}(A_{s}))_{s\in S}$ and where, for every $(w,s)\in S^{\star}\times S$ and every $\sigma\in \Sigma_{w,s}$, $F_{\sigma}^{\wp}$ is the mapping from $A^{\wp}_{w} = \prod_{i\in \bb{w}}\mathrm{Sub}(A_{w_{i}})$ to $\mathrm{Sub}(A_{s})$ that sends $(L_{i})_{i\in \bb{w}}$ in $A^{\wp}_{w}$ to  $\{F_{\sigma}((x_{i})_{i\in \bb{w}})\mid (x_{i})_{i\in \bb{w}}\in \prod_{i\in \bb{w}}L_{i}\}$ in $\mathrm{Sub}(A_{s})$, and (2) a homomorphism $f$ from $\mathbf{A}$ to $\mathbf{B}$ to the $S$-sorted mapping $f^{\wp} = (f_{s}[\cdot])_{s\in S}$ from $A^{\wp}$ to $B^{\wp}$. Then $(\cdot)^{\wp}$ is an endofunctor of $\mathbf{Alg}(\Sigma)$. The $\Sigma$-algebra $\mathbf{A}^{\wp}$ is called the \emph{subset algebra} associated to $\mathbf{A}$ (this notion, but for single-sorted algebras, is due to Mezei and Wright, cf.~\cite{MW67}, Definition 2.2). Moreover, there exists a pointwise monomorphic natural transformation $\{\cdot\}^{\Sigma}$ from $\mathrm{Id}_{\mathbf{Alg}(\Sigma)}$ to $(\cdot)^{\wp}$, displayed as:
$$
\xymatrix@C=20ex@R=8ex{
\mathbf{Alg}(\Sigma)
\ar@/^20pt/[r]^{\mathrm{Id}_{\mathbf{Alg}(\Sigma)}}="f"
  \ar@/^-20pt/[r]_{(\cdot)^{\wp}}="g" &
\mathbf{Alg}(\Sigma)
\ar @{} "f";"g" |{\dir{=>}}^{\,\{\cdot\}^{\Sigma}}
}
$$
\end{proposition}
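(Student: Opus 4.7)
The plan is to verify in order: (i) that $\mathbf{A}^{\wp}$ is a well-defined $\Sigma$-algebra, (ii) that $f^{\wp}$ is a $\Sigma$-homomorphism when $f$ is, (iii) functoriality of $(\cdot)^{\wp}$, and (iv) that $\{\cdot\}^{\Sigma}$ is a pointwise monomorphic natural transformation. The underlying $S$-sorted part of the construction---namely that $(\cdot)^{\wp}$ on $S$-sorted sets is the object mapping of the functor already introduced, and that, componentwise, $f^{\wp}_{s} = f_{s}[\cdot]$---has already been settled; hence the actual work is confined to showing that the $S$-sorted mappings $F^{\wp}_{\sigma}$ (respectively, $f^{\wp}$ and $\{\cdot\}_{A}$) respect the structure of $\Sigma$-algebra.

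For (i), the key observation is that, for $\sigma\in \Sigma_{w,s}$, the formula defining $F^{\wp}_{\sigma}((L_{i})_{i\in \bb{w}})$ is precisely $F_{\sigma}[\prod_{i\in \bb{w}}L_{i}]$, the $F_{\sigma}$-direct image of the cartesian product of the $L_{i}$; this is by construction a subset of $A_{s}$, so $F^{\wp}_{\sigma}$ is a well-defined operation from $A^{\wp}_{w}$ to $A^{\wp}_{s}$. For (ii), given a homomorphism $f\colon \mathbf{A}\mor \mathbf{B}$, one has to check
\begin{equation*}
f^{\wp}_{s}\bigl(F^{\wp}_{\sigma}((L_{i})_{i\in \bb{w}})\bigr) = G^{\wp}_{\sigma}\bigl(f^{\wp}_{w}((L_{i})_{i\in \bb{w}})\bigr),
\end{equation*}
which unfolds, by the previous observation and the definition of $f^{\wp}$, to the identity
\begin{equation*}
f_{s}\bigl[F_{\sigma}\bigl[\textstyle\prod_{i\in \bb{w}}L_{i}\bigr]\bigr] = G_{\sigma}\bigl[\textstyle\prod_{i\in \bb{w}}f_{w_{i}}[L_{i}]\bigr],
\end{equation*}
and this last equality follows at once by applying $f_{s}\circ F_{\sigma} = G_{\sigma}\circ f_{w}$ elementwise on the image of $\prod_{i\in \bb{w}}L_{i}$. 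Functoriality, item (iii), is then immediate from $\mathrm{id}_{X}[\cdot] = \mathrm{id}_{\mathrm{Sub}(X)}$ and $(g\circ f)[\cdot] = g[\cdot]\circ f[\cdot]$, applied componentwise.

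For (iv), set $\{\cdot\}_{\mathbf{A}} = \{\cdot\}_{A}\colon A\mor A^{\wp}$ (already known to be an $S$-sorted mapping, pointwise injective hence a monomorphism in $\mathbf{Alg}(\Sigma)$ once it is a homomorphism). Homomorphicity reduces to the identity
\begin{equation*}
\bigl\{F_{\sigma}((a_{i})_{i\in \bb{w}})\bigr\} = F^{\wp}_{\sigma}\bigl((\{a_{i}\})_{i\in \bb{w}}\bigr),
\end{equation*}
which is clear from $F^{\wp}_{\sigma}((\{a_{i}\})_{i\in \bb{w}}) = F_{\sigma}[\prod_{i\in \bb{w}}\{a_{i}\}] = \{F_{\sigma}((a_{i})_{i\in \bb{w}})\}$. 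Naturality of $\{\cdot\}^{\Sigma}$ at a homomorphism $f\colon \mathbf{A}\mor \mathbf{B}$ is the equality $f^{\wp}\circ \{\cdot\}_{\mathbf{A}} = \{\cdot\}_{\mathbf{B}}\circ f$, which is exactly the already-noted fact that $f_{s}[\{a\}] = \{f_{s}(a)\}$. I do not expect a genuine obstacle here: once one identifies $F^{\wp}_{\sigma}$ with direct image under $F_{\sigma}$ of the product of the arguments, everything reduces to standard properties of direct image under a mapping, and the principal care required is merely notational---keeping straight the distinction, emphasised in the preceding remark, between $f[\cdot]\colon \mathrm{Sub}(A)\mor\mathrm{Sub}(B)$ and $f^{\wp}\colon A^{\wp}\mor B^{\wp}$.
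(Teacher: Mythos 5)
Your proposal is correct and follows essentially the same route as the paper's proof, which likewise reduces everything to the direct verification that $f^{\wp}$ makes the evident square with $F^{\wp}_{\sigma}$ and $G^{\wp}_{\sigma}$ commute and then asserts that each $\{\cdot\}^{\Sigma}_{\mathbf{A}}$ is an injective homomorphism and that the family is natural. You simply supply more detail than the paper does (in particular the identification of $F^{\wp}_{\sigma}((L_{i})_{i\in\bb{w}})$ with $F_{\sigma}[\prod_{i\in\bb{w}}L_{i}]$, and the tacit fact that $f_{w}[\prod_{i}L_{i}]=\prod_{i}f_{w_{i}}[L_{i}]$ for the finite index set $\bb{w}$), all of which is sound.
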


\begin{proof}
To show that $(\cdot)^{\wp}$ is an endofunctor of $\mathbf{Alg}(\Sigma)$ it suffices to verify that, for every homomorphism $f\colon \mathbf{A}\mor \mathbf{B}$, $f^{\wp}$ is, actually, a homomorphism from $\mathbf{A}^{\wp}$ to $\mathbf{B}^{\wp}$. Let $f$ be a homomorphism from $\mathbf{A}$ to $\mathbf{B}$, $(w,s)\in S^{\star}\times S$, and $\sigma\in \Sigma_{w,s}$. Then the following diagram commutes
$$
\xymatrix@C=60pt{
A^{\wp}_{w} \ar[d]_-{F_{\sigma}^{\wp}}\ar[r]^-{f_{{w}}[\cdot]}&
B^{\wp}_{w} \ar[d]^-{G_{\sigma}^{\wp}}\\
\mathrm{Sub}(A_{s})\ar[r]_-{f_{s}[\cdot]}&\mathrm{Sub}(B_{s})
}
$$
that is, for every sequence $(L_{i})_{i\in \bb{w}}$ in $A^{\wp}_{w}$, we have that
$$
f_{s}[F_{\sigma}^{\wp}((L_{i})_{i\in \bb{w}})] = G_{\sigma}^{\wp}((f_{w_{i}}[L_{i}])_{i\in \bb{w}}).
$$

Now let $\{\cdot\}^{\Sigma}$ be the mapping from $\mathrm{Alg}(\Sigma)$ to $\mathrm{Mor}(\mathbf{Alg}(\Sigma))$ that assigns to a $\Sigma$-algebra $\mathbf{A}$ the $S$-sorted mapping $\{\cdot\}^{\Sigma}_{\mathbf{A}}$ from $A$ to $A^{\wp}$ that, for every $s\in S$, sends $a$ in $A_{s}$ to $\{a\}$ in $A^{\wp}_{s}$. It is easily seen that, for every $\Sigma$-algebra $\mathbf{A}$, $\{\cdot\}^{\Sigma}_{\mathbf{A}}$ is an injective homomorphism from $\mathbf{A}$ to $\mathbf{A}^{\wp}$ and that $\{\cdot\}^{\Sigma} = (\{\cdot\}^{\Sigma}_{\mathbf{A}})_{\mathbf{A}\in \mathrm{Alg}(\Sigma)}$ is, in fact, a natural transformation from $\mathrm{Id}_{\mathbf{Alg}(\Sigma)}$ to $(\cdot)^{\wp}$
\end{proof}

We shall now go on to define the notion of subalgebra of a $\Sigma$-algebra $\mathbf{A}$ and the subalgebra generating operator for $\mathbf{A}$.

\begin{definition}\label{Subalg}
Let $\mathbf{A}$ be a $\Sigma$-algebra and $X\subseteq A$. Given $(w,s)\in S^{\star}\times S$ and $\sigma\in\Sigma_{w,s}$, we will say that $X$ is \emph{closed under the operation} $F_{\sigma}\colon A_{w}\mor A_{s}$ if, for every $(a_{i})_{i\in\bb{w}}\in X_{w}$, $F_{\sigma}((a_{i})_{i\in\bb{w}})\in X_{s}$. We will say that $X$ is a \emph{subalgebra} of $\mathbf{A}$ if $X$ is closed under the operations of $\mathbf{A}$. We will denote by $\mathrm{Sub}(\mathbf{A})$ the set of all subalgebras of $\mathbf{A}$ (which is an algebraic closure system on $A$) and by  $\mathbf{Sub}(\mathbf{A})$ the algebraic lattice $(\mathrm{Sub}(\mathbf{A}),\subseteq)$. We also say, equivalently, that a $\Sigma$-algebra $\mathbf{B}$ is a \emph{subalgebra} of $\mathbf{A}$ if $B\subseteq A$ and the canonical embedding of $B$ into $A$ determines an embedding of $\mathbf{B}$ into $\mathbf{A}$.
\end{definition}

\begin{definition}
Let $\mathbf{A}$ be a $\Sigma$-algebra. Then we will denote by $\mathrm{Sg}_{\mathbf{A}}$ the algebraic closure operator canonically associated to the algebraic closure system $\mathrm{Sub}(\mathbf{A})$ on $A$ and we call it the \emph{subalgebra  generating operator for} $\mathbf{A}$. Moreover, if $X\subseteq A$, then we call $\mathrm{Sg}_{\mathbf{A}}(X)$ the \emph{subalgebra of} $\mathbf{A}$ \emph{generated by} $X$, and if $X$ is such that $\mathrm{Sg}_{\mathbf{A}}(X) = A$, then we will say that $X$ is a \emph{generating subset of} $\mathbf{A}$. Besides, $\mathbf{Sg}_{\mathbf{A}}(X)$ denotes the algebra determined by $\mathrm{Sg}_{\mathbf{A}}(X)$.
\end{definition}

\begin{remark}
Let $\mathbf{A}$ be a $\Sigma$-algebra. Then the algebraic closure operator $\mathrm{Sg}_{\mathbf{A}}$ is uniform, i.e., for every  $X$, $Y\subseteq A$, if $\mathrm{supp}_{S}(X) = \mathrm{supp}_{S}(Y)$, then we have that $\mathrm{supp}_{S}(\mathrm{Sg}_{\mathbf{A}}(X)) = \mathrm{supp}_{S}(\mathrm{Sg}_{\mathbf{A}}(Y))$.
\end{remark}

We next recall the Principle of Proof by Algebraic Induction. This principle will be used in Section~3.

\begin{proposition}[Principle of Proof by Algebraic Induction]\label{PPAI}
Let $\mathbf{A}$ be a $\Sigma$-algebra generated by $X$. Then to prove that a subset $Y$ of $A$ is equal to $A$ it suffices to show: (1) $X\subseteq Y$ (algebraic induction basis) and (2) $Y$ is a subalgebra of $\mathbf{A}$ (algebraic induction step).
\end{proposition}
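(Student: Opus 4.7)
The plan is to exploit directly the definition of $\mathrm{Sg}_{\mathbf{A}}$ as the closure operator canonically associated with the algebraic closure system $\mathrm{Sub}(\mathbf{A})$ on $A$. Since $X$ is a generating subset of $\mathbf{A}$, we have $\mathrm{Sg}_{\mathbf{A}}(X) = A$, so the whole task reduces to showing $A \subseteq Y$, because the reverse inclusion $Y \subseteq A$ is built into the hypothesis $Y \in \mathrm{Sub}(A)$.

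First I would observe that hypothesis (2) says that $Y$ is closed under all operations $F_{\sigma}$ of $\mathbf{A}$, which by the definition of $\mathrm{Sub}(\mathbf{A})$ just means $Y \in \mathrm{Sub}(\mathbf{A})$. Combined with hypothesis (1), $X \subseteq Y$, this exhibits $Y$ as a member of the set $\{Z \in \mathrm{Sub}(\mathbf{A}) \mid X \subseteq Z\}$ whose intersection is, by the canonical construction of $\mathrm{Sg}_{\mathbf{A}}$ from the algebraic closure system $\mathrm{Sub}(\mathbf{A})$, precisely $\mathrm{Sg}_{\mathbf{A}}(X)$. Hence
\[
\mathrm{Sg}_{\mathbf{A}}(X) = \bigcap\{Z \in \mathrm{Sub}(\mathbf{A}) \mid X \subseteq Z\} \subseteq Y.
\]

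Then, using that $X$ generates $\mathbf{A}$, i.e., $\mathrm{Sg}_{\mathbf{A}}(X) = A$, I would conclude $A \subseteq Y$. Combined with $Y \subseteq A$, this yields $Y = A$, as required.

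There is essentially no obstacle: the statement is a direct unpacking of the definitions of generating subset, subalgebra, and the closure operator associated to an algebraic closure system. The only minor point worth being careful about is verifying that the $S$-sorted phrasing does not introduce a subtlety, but since $\subseteq$ between $S$-sorted sets is componentwise and $\mathrm{Sg}_{\mathbf{A}}$ is defined via the algebraic closure system $\mathrm{Sub}(\mathbf{A})$ on the $S$-sorted set $A$, the above argument works sortwise without modification.
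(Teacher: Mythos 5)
Your proof is correct: a subalgebra $Y$ of $\mathbf{A}$ containing $X$ must contain $\mathrm{Sg}_{\mathbf{A}}(X)=\bigcap\{Z\in\mathrm{Sub}(\mathbf{A})\mid X\subseteq Z\}=A$, and the componentwise reading of $\subseteq$ for $S$-sorted sets indeed causes no difficulty. The paper states this proposition without proof, and your argument is exactly the standard one it implicitly relies on, so there is nothing to compare against or to correct.
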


We next state that the forgetful functor $\mathrm{G}_{\Sigma}$ from $\mathbf{Alg}(\Sigma)$ to
$\mathbf{Set}^{S}$ has a left adjoint $\mathbf{T}_{\Sigma}$ which assigns to an $S$-sorted set $X$ the free $\Sigma$-algebra $\mathbf{T}_{\Sigma}(X)$ on $X$. Let us note that in what follows, to construct the algebra of $\Sigma$-rows in $X$, and  the free $\Sigma$-algebra on $X$, since neither the $S$-sorted signature $\Sigma$ nor the $S$-sorted set $X$ are subject to any constraint, coproducts must necessarily be used.

\begin{definition}
Let $X$ be an $S$-sorted set. The \emph{algebra of} $\Sigma$-\emph{rows in} $X$, denoted by $\mathbf{W}_{\Sigma}(X)$, is defined as follows:
\begin{enumerate}
\item The underlying $S$-sorted set of $\mathbf{W}_{\Sigma}(X)$, written as $\mathrm{W}_{\Sigma}(X)$, is precisely the $S$-sorted set $((\coprod\Sigma \amalg \coprod X)^{\star})_{s\in S}$, i.e., the mapping from $S$ to $\boldsymbol{\mathcal{U}}$ constantly $(\coprod\Sigma \amalg \coprod X)^{\star}$, where $(\coprod\Sigma \amalg \coprod X)^{\star}$ is the set of all words on the set $\coprod\Sigma \amalg \coprod X$, i.e., on the set
      $$
      \textstyle
      [(\bigcup_{(w,s)\in S^{\star}\times S}(\Sigma_{w,s}\times\{(w,s)\}))\times\{0\}]\cup
      [(\bigcup_{s\in S}(X_{s}\times\{s\}))\times\{1\}].
      $$

\item For every $(w,s)\in S^{\star}\times S$, and every $\sigma\in\Sigma_{w,s}$, the structural operation $F_{\sigma}$ associated to $\sigma$ is the mapping from $\mathrm{W}_{\Sigma}(X)_{w}$ to ${\mathrm{W}_{\Sigma}(X)}_{s}$ which sends $(P_{i})_{i\in\lvert w \rvert} \in \mathrm{W}_{\Sigma}(X)_{w}$ to $(\sigma)\curlywedge\concat_{i\in\lvert w \rvert}P_{i} \in {\mathrm{W}_{\Sigma}(X)}_{s}$, where, for every $(w,s)\in S^{\star}\times S$, and every $\sigma\in\Sigma_{w,s}$,
      $(\sigma)$ stands for $(((\sigma,(w,s)),0))$, which is the value at $\sigma$ of the canonical mapping from $\Sigma_{w,s}$ to $(\coprod\Sigma \amalg \coprod X)^{\star}$.
\end{enumerate}
\end{definition}

\begin{definition}
The \emph{free} $\Sigma$-\emph{algebra on} an $S$-sorted set $X$, denoted by $\mathbf{T}_{\Sigma}(X)$, is the $\Sigma$-algebra determined by $\mathrm{Sg}_{\mathbf{W}_{\Sigma}(X)}((\{(x)\mid x\in X_{s}\})_{s\in S})$, the subalgebra of $\mathbf{W}_{\Sigma}(X)$ generated by $(\{(x)\mid x\in X_{s}\})_{s\in S}$, where, for every $s\in S$ and every $x\in X_{s}$, $(x)$ stands for $(((x,s),1))$, which is the value at $x$ of the canonical mapping from $X_{s}$ to $(\coprod\Sigma \amalg \coprod X)^{\star}$.
We will denote by $\mathrm{T}_{\Sigma}(X)$ the underlying $S$\nobreakdash-sorted of $\mathbf{T}_{\Sigma}(X)$ and, for $s\in S$, we will call the elements of $\mathrm{T}_{\Sigma}(X)_{s}$ \emph{terms of type} $s$ \emph{with variables in} $X$ or  $(X,s)$-\emph{terms}.
\end{definition}

\begin{remark}
Since $(\{(x)\mid x\in X_{s}\})_{s\in S}$ is a generating subset of $\mathbf{T}_{\Sigma}(X)$, to prove that a subset $\mathcal{T}$ of $\mathrm{T}_{\Sigma}(X)$ is equal to $\mathrm{T}_{\Sigma}(X)$ it suffices, by Proposition~\ref{PPAI}, to show: (1) $(\{(x)\mid x\in X_{s}\})_{s\in S}\subseteq \mathcal{T}$ (algebraic induction basis) and (2) $\mathcal{T}$ is a subalgebra of $\mathbf{T}_{\Sigma}(X)$ (algebraic induction step).
\end{remark}

In the many-sorted case we have, as in the single-sorted case, the following characterization of the elements of $\mathrm{T}_{\Sigma}(X)_{s}$, for $s\in S$.

\begin{proposition}\label{rut}
Let $X$ be an $S$-sorted set. Then, for every $s\in S$ and every $P\in
\mathrm{W}_{\Sigma}(X)_{s}$, we have that $P$ is a term of type $s$ with variables in $X$ if and only if $P = (x)$, for a unique $x\in X_{s}$, or $P = (\sigma)$, for a unique $\sigma\in\Sigma_{\lambda,s}$, or $P = (\sigma)\curlywedge\concat(P_{i})_{i\in\lvert w \rvert}$, for a unique $w\in S^{\star}-\{\lambda\}$, a unique $\sigma\in\Sigma_{w,s}$, and a unique family $(P_{i})_{i\in\lvert w \rvert}\in\mathrm{T}_{\Sigma}(X)_{w}$.
%
%
Moreover, the three possibilities are mutually exclusive.
\end{proposition}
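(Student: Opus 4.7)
The ``if'' direction of the characterization is immediate from the three matching constructions: $(x)$ is a generator of $\mathbf{T}_{\Sigma}(X)$ in sort $s$, $(\sigma)$ for $\sigma\in\Sigma_{\lambda,s}$ is the value of the nullary operation $F_{\sigma}$ on the unique element of the final set $\mathrm{T}_{\Sigma}(X)_{\lambda}$, and $(\sigma)\curlywedge\concat(P_{i})_{i\in\lvert w \rvert}$ for $\sigma\in\Sigma_{w,s}$ with $w\neq\lambda$ is precisely $F_{\sigma}((P_{i})_{i\in\lvert w \rvert})$. For the converse I would invoke the Principle of Proof by Algebraic Induction (Proposition~\ref{PPAI}) applied to the $S$-sorted subset $\mathcal{T}\subseteq\mathrm{T}_{\Sigma}(X)$ whose component $\mathcal{T}_{s}$ gathers those rows admitting at least one of the three factorizations. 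Every generator $(x)$ belongs to $\mathcal{T}_{s}$ by clause~(1), and $\mathcal{T}$ is closed under every $F_{\sigma}$ because $F_{\sigma}((P_{i})_{i\in\lvert w \rvert})=(\sigma)\curlywedge\concat(P_{i})_{i\in\lvert w \rvert}$ satisfies clause~(2) when $w=\lambda$ and clause~(3) when $w\neq\lambda$.

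Mutual exclusivity of the three clauses and the uniqueness of $x$ in clause~(1), of $\sigma$ in clause~(2), and of the pair $(w,\sigma)$ in clause~(3) come directly from the explicit construction of the alphabet $\coprod\Sigma\amalg\coprod X$. Each letter carries a tag distinguishing operation symbols (tag~$0$) from variables (tag~$1$), and within each tagged component it is uniquely identified by a triple $(\sigma,(w,s'))$ with $\sigma\in\Sigma_{w,s'}$ or by a pair $(x,s')$ with $x\in X_{s'}$. Reading the first letter of a non-empty row $P\in\mathrm{W}_{\Sigma}(X)_{s}$ therefore decides unambiguously which clause can apply to $P$ and pins down the leading variable, nullary symbol, or pair $(w,\sigma)$.

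The main obstacle is showing that in clause~(3) the family $(P_{i})_{i\in\lvert w \rvert}$ is also uniquely determined by $P$, i.e., that the factorization of $P$ into $(\sigma)$ followed by the concatenation of the $P_{i}$ admits no alternative parsing. My plan is to derive this from a standard unique-readability parsing lemma: if $P$ and $Q$ are terms of the same sort in $\mathrm{T}_{\Sigma}(X)$ and $R,R'\in(\coprod\Sigma\amalg\coprod X)^{\star}$ satisfy $P\curlywedge R=Q\curlywedge R'$, then $P=Q$ and $R=R'$. The lemma is to be proved by induction on $\lvert P \rvert$, using the existence part already established to classify $Q$ by its first letter. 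When $\lvert P \rvert=1$ the first-letter analysis of the previous paragraph immediately forces $Q=P$, since any longer $Q$ would have to start with a non-nullary operation symbol by the existence part, contradicting the first letter of $P$; when $\lvert P \rvert>1$ one writes $P=(\tau)\curlywedge P_{0}\curlywedge\cdots\curlywedge P_{n-1}$ with $\tau\in\Sigma_{u,s}$ and $\lvert u \rvert=n$, obtains $Q=(\tau)\curlywedge Q_{0}\curlywedge\cdots\curlywedge Q_{n-1}$ with $Q_{i}\in\mathrm{T}_{\Sigma}(X)_{u_{i}}$ via the same first-letter argument, and then applies the inductive hypothesis successively to $P_{0}$ against $Q_{0}\curlywedge\cdots\curlywedge Q_{n-1}\curlywedge R'$, to $P_{1}$ against $Q_{1}\curlywedge\cdots\curlywedge Q_{n-1}\curlywedge R'$, and so on, to extract $P_{i}=Q_{i}$ for each $i$. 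Specializing the parsing lemma to $R=R'=\lambda$ and to two putative clause~(3) decompositions of the same row $P$ yields the required uniqueness of $(P_{i})_{i\in\lvert w \rvert}$ and completes the proof.
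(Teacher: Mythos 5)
Your proof is correct and is essentially the only natural argument here: the paper states Proposition~\ref{rut} without proof (Section~2 collects its facts ``mostly without proofs''), and what you give is the standard unique-readability argument for prefix notation --- existence by algebraic induction, mutual exclusivity and uniqueness of the head symbol from the tags in the alphabet $\coprod\Sigma\amalg\coprod X$, and uniqueness of the argument family via the parsing lemma that no term equation $P\curlywedge R=Q\curlywedge R'$ with $P,Q$ terms of the same sort admits two readings. The one point worth making explicit when you write it up is that in the inductive step of the parsing lemma the first letter $((\tau,(u,s)),0)$ of $Q$ already pins down the arity word $u$ (not just $\tau$), which is what licenses writing $Q=(\tau)\curlywedge Q_{0}\curlywedge\cdots\curlywedge Q_{\lvert u\rvert-1}$ with $Q_{i}$ of sort $u_{i}$ before peeling off the arguments one at a time.
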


From now on, for simplicity of notation, we will write $x$, $\sigma$, and $\sigma(P_{0},\ldots,P_{\lvert w \rvert-1})$ or $\sigma((P_{i})_{i\in \bb{w}})$ instead of $(x)$, $(\sigma)$, and $(\sigma)\curlywedge\concat(P_{i})_{i\in\lvert w \rvert}$, respectively.

From the above proposition it follows, immediately, the universal property of the free $\Sigma$-algebra on an $S$-sorted set $X$, as stated in the subsequent proposition.

\begin{proposition}
For every $S$-sorted set $X$, the pair $(\eta_{X},\mathbf{T}_{\Sigma}(X))$, where $\eta_{X}$, the
\emph{insertion of (the $S$-sorted set of generators)} $X$ \emph{into} $\mathrm{T}_{\Sigma}(X)$, is the co-restric\-tion to
$\mathrm{T}_{\Sigma}(X)$ of the canonical embedding of $X$ into $\mathrm{W}_{\Sigma}(X)$, has the following universal property: for every $\Sigma$-algebra $\mathbf{A}$ and every $S$-sorted mapping $f\colon X\mor A$, there exists a unique homomorphism $f^{\sharp}\colon\mathbf{T}_{\Sigma}(X)\mor\mathbf{A}$ such that $f^{\sharp}\circ \eta_{X} = f$.
\end{proposition}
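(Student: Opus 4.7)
The plan is to construct $f^{\sharp}$ by recursion on the structure of terms (justified by Proposition~\ref{rut}), verify it is a homomorphism and extends $f$ essentially by construction, and then establish uniqueness by means of the Principle of Proof by Algebraic Induction (Proposition~\ref{PPAI}).

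First, for the existence, I would define the $S$-sorted mapping $f^{\sharp} = (f^{\sharp}_{s})_{s\in S}$ from $\mathrm{T}_{\Sigma}(X)$ to $A$ by assigning, to each $P\in \mathrm{T}_{\Sigma}(X)_{s}$, its image $f^{\sharp}_{s}(P)$ according to the three mutually exclusive cases given in Proposition~\ref{rut}: if $P = x$ for a unique $x\in X_{s}$, set $f^{\sharp}_{s}(P) = f_{s}(x)$; if $P = \sigma$ for a unique $\sigma\in\Sigma_{\lambda,s}$, set $f^{\sharp}_{s}(P) = \sigma^{\mathbf{A}}$; and if $P = \sigma(P_{0},\ldots,P_{\lvert w\rvert-1})$ for a unique $w\in S^{\star}-\{\lambda\}$, a unique $\sigma\in \Sigma_{w,s}$, and a unique family $(P_{i})_{i\in \lvert w\rvert}\in \mathrm{T}_{\Sigma}(X)_{w}$, set $f^{\sharp}_{s}(P) = F^{\mathbf{A}}_{\sigma}((f^{\sharp}_{w_{i}}(P_{i}))_{i\in \lvert w\rvert})$. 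The recursion is well-founded on the length of $P$ as a word of $\mathrm{W}_{\Sigma}(X)_{s}$ (each subterm $P_{i}$ is strictly shorter than $P$), and Proposition~\ref{rut} guarantees that the prescription is unambiguous. Once defined, the third clause is literally the statement that $f^{\sharp}$ preserves the operations of positive arity, the second clause handles constants, and the first clause yields $f^{\sharp}\circ \eta_{X} = f$; hence $f^{\sharp}$ is a homomorphism from $\mathbf{T}_{\Sigma}(X)$ to $\mathbf{A}$ extending $f$.

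For the uniqueness, suppose $g\colon \mathbf{T}_{\Sigma}(X)\mor \mathbf{A}$ is a homomorphism with $g\circ \eta_{X} = f$. Consider the $S$-sorted set $Y = (Y_{s})_{s\in S}$ defined by $Y_{s} = \{P\in \mathrm{T}_{\Sigma}(X)_{s} \mid g_{s}(P) = f^{\sharp}_{s}(P)\}$. The algebraic induction basis is immediate, for if $x\in X_{s}$, then $g_{s}(\eta_{X,s}(x)) = f_{s}(x) = f^{\sharp}_{s}(\eta_{X,s}(x))$, so the image of $\eta_{X}$ is contained in $Y$. For the algebraic induction step, given $(w,s)\in S^{\star}\times S$, $\sigma\in \Sigma_{w,s}$, and $(P_{i})_{i\in \lvert w\rvert}\in Y_{w}$, the fact that both $g$ and $f^{\sharp}$ are homomorphisms yields $g_{s}(F^{\mathbf{T}_{\Sigma}(X)}_{\sigma}((P_{i})_{i\in \lvert w\rvert})) = F^{\mathbf{A}}_{\sigma}((g_{w_{i}}(P_{i}))_{i\in \lvert w\rvert}) = F^{\mathbf{A}}_{\sigma}((f^{\sharp}_{w_{i}}(P_{i}))_{i\in \lvert w\rvert}) = f^{\sharp}_{s}(F^{\mathbf{T}_{\Sigma}(X)}_{\sigma}((P_{i})_{i\in \lvert w\rvert}))$, so $Y$ is a subalgebra of $\mathbf{T}_{\Sigma}(X)$. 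Since $\mathbf{T}_{\Sigma}(X)$ is generated by the image of $\eta_{X}$, Proposition~\ref{PPAI} forces $Y = \mathrm{T}_{\Sigma}(X)$, i.e., $g = f^{\sharp}$.

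The main obstacle, if any, is purely notational: one has to justify the recursive definition of $f^{\sharp}$ rigorously, which amounts to invoking the unique decomposition clause of Proposition~\ref{rut} together with well-founded recursion on word length in $\mathrm{W}_{\Sigma}(X)$. Once that bookkeeping is in place, both the verification that $f^{\sharp}$ is a homomorphism and the uniqueness argument reduce to a direct application of the algebraic induction principle.
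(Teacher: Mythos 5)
Your proposal is correct and follows essentially the same route as the paper: the paper defines $f^{\sharp}$ by the identical recursion on the three mutually exclusive cases of Proposition~\ref{rut} and leaves the homomorphism verification and uniqueness implicit. You have merely made explicit what the paper omits, namely the well-foundedness of the recursion and the uniqueness argument via Proposition~\ref{PPAI}, both of which are carried out correctly.
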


\begin{proof}
For every $s\in S$ and every $(X,s)$-term $P$, the $s$-th coordinate $f^{\sharp}_{s}$ of $f^{\sharp}$ is defined recursively as follows: $f^{\sharp}_{s}(x) = f_{s}(x)$, if $P = x$; $f^{\sharp}_{s}(\sigma) = \sigma$, if $P = \sigma$; and, finally,  $f^{\sharp}_{s}(\sigma(P_{0},\ldots,P_{\lvert w \rvert-1})) = F_{\sigma}(f^{\sharp}_{w_{0}}(P_{0}),\ldots,f^{\sharp}_{w_{\lvert w \rvert-1}}(P_{\lvert w \rvert-1}))$, if $P = \sigma(P_{0},\ldots,P_{\lvert w \rvert-1})$.
\end{proof}

The just stated proposition allows us to carry out definitions by algebraic recursion on a free many-sorted algebra as indeed we will be doing throughout  this paper.


\begin{corollary} \label{FladjG}
The functor $\mathbf{T}_{\Sigma}$, which sends an $S$-sorted set $X$ to $\mathbf{T}_{\Sigma}(X)$ and an $S$-sorted mapping $f$ from $X$ to $Y$ to $f^{@} (= (\eta_{Y}\circ f)^{\sharp})$, the unique homomorphism from $\mathbf{T}_{\Sigma}(X)$ to $\mathbf{T}_{\Sigma}(Y)$ such that $f^{@}\circ\eta_{X} = \eta_{Y}\circ f$, is left adjoint for the forgetful functor $\mathrm{G}_{\Sigma}$ from $\mathbf{Alg}(\Sigma)$ to $\mathbf{Set}^{S}$.
\end{corollary}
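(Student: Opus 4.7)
The plan is to deduce the adjunction directly from the universal property established in the immediately preceding proposition, appealing to the standard characterization of adjunctions by universal arrows (as in, e.g., Mac~Lane's treatment in \cite{sM98}).

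First, I would verify that $\mathbf{T}_{\Sigma}$ is indeed a well-defined functor. On objects this is clear. For a morphism $f\colon X \mor Y$ in $\mathbf{Set}^{S}$, the map $f^{@} = (\eta_{Y}\circ f)^{\sharp}$ is the unique homomorphism from $\mathbf{T}_{\Sigma}(X)$ to $\mathbf{T}_{\Sigma}(Y)$ such that $f^{@}\circ \eta_{X} = \eta_{Y}\circ f$, by the preceding proposition. Functoriality---namely $(\mathrm{id}_{X})^{@} = \mathrm{id}_{\mathbf{T}_{\Sigma}(X)}$ and $(g\circ f)^{@} = g^{@}\circ f^{@}$ for composable $S$-sorted mappings---then follows from the uniqueness clause of that universal property: both sides of each equation are homomorphisms extending the same $S$-sorted mapping on generators, so they must coincide.

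Next, I would construct, for every $S$-sorted set $X$ and every $\Sigma$-algebra $\mathbf{A}$, a bijection
\[
\Phi_{X,\mathbf{A}}\colon \mathrm{Hom}_{\mathbf{Alg}(\Sigma)}(\mathbf{T}_{\Sigma}(X),\mathbf{A}) \mor \mathrm{Hom}_{\mathbf{Set}^{S}}(X,\mathrm{G}_{\Sigma}(\mathbf{A})),
\]
defined on $h$ by $\Phi_{X,\mathbf{A}}(h) = \mathrm{G}_{\Sigma}(h)\circ \eta_{X}$, with inverse $f \mapsto f^{\sharp}$. That these are mutually inverse is exactly the existence and uniqueness asserted in the previous proposition: $(\Phi_{X,\mathbf{A}}(h))^{\sharp} = (h\circ\eta_{X})^{\sharp} = h$ by uniqueness, and $\Phi_{X,\mathbf{A}}(f^{\sharp}) = f^{\sharp}\circ\eta_{X} = f$ by existence.

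Finally, I would check naturality of $\Phi$ in both variables. Given $k\colon X'\mor X$ in $\mathbf{Set}^{S}$ and $g\colon \mathbf{A}\mor\mathbf{B}$ in $\mathbf{Alg}(\Sigma)$, one must verify that, for every $h\colon \mathbf{T}_{\Sigma}(X)\mor \mathbf{A}$,
\[
\mathrm{G}_{\Sigma}(g\circ h\circ k^{@})\circ \eta_{X'} \;=\; \mathrm{G}_{\Sigma}(g)\circ (\mathrm{G}_{\Sigma}(h)\circ \eta_{X})\circ k,
\]
which reduces by the defining identity $k^{@}\circ \eta_{X'} = \eta_{X}\circ k$ to a straightforward string of equalities of $S$-sorted mappings. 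I expect no substantive obstacle: once the universal property of the preceding proposition is in hand, the corollary is purely formal, the only mildly delicate point being to invoke uniqueness at the correct step to collapse the two functoriality equations, and to expand $k^{@}$ via $k^{@}\circ\eta_{X'} = \eta_{X}\circ k$ in the naturality diagram.
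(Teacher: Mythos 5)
Your proposal is correct and is exactly the argument the paper intends: the corollary is stated as a formal consequence of the universal property of $(\eta_{X},\mathbf{T}_{\Sigma}(X))$, and the paper supplies no further proof, so your explicit construction of the natural bijection $h\mapsto \mathrm{G}_{\Sigma}(h)\circ\eta_{X}$ with inverse $f\mapsto f^{\sharp}$, together with the functoriality and naturality checks via uniqueness, is precisely the standard derivation being invoked.
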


For every $\Sigma$-algebra it is possible to define a preorder on the coproduct of its underlying $S$-sorted set. Moreover, for the case of free algebras, such a preorder is, in fact, an order and this allows us to define the notion of subterm of a given term.

\begin{definition}
Let $\mathbf{A}=(A,F)$ be a $\Sigma$-algebra. Then $<_{\mathbf{A}}$ denotes the binary relation on $\coprod A$ consisting of the ordered pairs $((a,s),(b,t))\in(\coprod A)^{2}$ for which there exists a $w\in S^{\star}-\{\lambda\}$, a
$\sigma\in\Sigma_{w,t}$, and an $x\in A_{w}$ such that $F_{\sigma}(x) = b$ and, for some $i\in \bb{w}$, $w_{i}=s$ and  $x_{i}=a$. We will denote by $\leq_{\mathbf{A}}$ the reflexive and transitive closure of $<_{\mathbf{A}}$, i.e., the preorder on $\coprod A$  generated by $<_{\mathbf{A}}$.
\end{definition}

\begin{remark}
The preorder $\leq_{\mathbf{A}}$ on $\coprod A$ is defined by letting $((a,s),(b,t))\in \leq_{\mathbf{A}}$ mean that $s = t$ and $a = b$ or there exists an $n\in \mathbb{N}-\{0\}$, a $u\in S^{\star}$, and a family $(c_{i})_{i\in \bb{u}}\in A_{u}$ such that $\bb{u} = n+1$, $u_{0} = s$, $u_{n} = t$, $c_{0} = a$, $c_{n} = b$, and, for every $i\in n$, $((c_{i},u_{i}),(c_{i+1},u_{i+1}))\in <_{\mathbf{A}}$.
\end{remark}

\begin{proposition}
Let $X$ be an $S$-sorted set.  Then $\leq_{\mathbf{T}_{\Sigma}(X)}$ is antisymmetric and does not have strictly descending $\omega_{0}$-chains, i.e., is an Artinian order.
\end{proposition}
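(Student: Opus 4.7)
The plan is to exploit the fact that for free many-sorted algebras every term has a well-defined word length, and that the immediate-subterm relation $<_{\mathbf{T}_{\Sigma}(X)}$ strictly decreases it. Concretely, I would use the length homomorphism $|\farg|\colon (\coprod\Sigma\amalg\coprod X)^{\star}\mor(\mathbb{N},+,0)$ recalled after the definition of the free monoid, and transport it along the canonical embedding of $\mathrm{T}_{\Sigma}(X)$ into $\mathrm{W}_{\Sigma}(X) = ((\coprod\Sigma\amalg\coprod X)^{\star})_{s\in S}$. This yields a rank function $\ell\colon\coprod\mathrm{T}_{\Sigma}(X)\mor\mathbb{N}$ sending $(P,s)$ to $|P|$, with $\ell(P)\geq 1$ for every $(X,s)$-term $P$ (since, by Proposition~\ref{rut}, each term begins with at least one letter from $\coprod\Sigma\amalg\coprod X$).

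First, I would check that $<_{\mathbf{T}_{\Sigma}(X)}$ is rank-decreasing. Suppose $((a,s),(b,t))\in\,<_{\mathbf{T}_{\Sigma}(X)}$; then, by definition and Proposition~\ref{rut}, there exist a unique $w\in S^{\star}-\{\lambda\}$, a unique $\sigma\in\Sigma_{w,t}$ and a unique family $(P_{j})_{j\in\bb{w}}\in\mathrm{T}_{\Sigma}(X)_{w}$ such that $b = (\sigma)\cncat\concat_{j\in\bb{w}}P_{j}$ and $a=P_{i}$ for some $i\in\bb{w}$ with $w_{i}=s$. Since $|\farg|$ is a monoid homomorphism, $\ell(b) = 1+\sum_{j\in\bb{w}}\ell(P_{j})\geq 1+\ell(P_{i})=1+\ell(a)>\ell(a)$. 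By a straightforward induction on the length of the generating chain in the definition of the reflexive and transitive closure, if $((a,s),(b,t))\in\,\leq_{\mathbf{T}_{\Sigma}(X)}$ is witnessed by a chain of length $n\geq 1$, then $\ell(a)<\ell(b)$; otherwise $(a,s)=(b,t)$.

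Antisymmetry and the Artinian property now follow immediately from this rank estimate. For antisymmetry, if $(a,s)\leq_{\mathbf{T}_{\Sigma}(X)}(b,t)$ and $(b,t)\leq_{\mathbf{T}_{\Sigma}(X)}(a,s)$ with $(a,s)\neq(b,t)$, then at least one of the two chains must be nontrivial, producing either $\ell(a)<\ell(b)\leq\ell(a)$ or $\ell(b)<\ell(a)\leq\ell(b)$, each absurd. For the Artinian property, a strictly descending $\omega_{0}$-chain $(a_{0},s_{0})>_{\mathbf{T}_{\Sigma}(X)}(a_{1},s_{1})>_{\mathbf{T}_{\Sigma}(X)}\cdots$ would, by the same estimate applied at every step, yield a strictly descending $\omega_{0}$-chain $\ell(a_{0})>\ell(a_{1})>\cdots$ in $\mathbb{N}$, contradicting the well-foundedness of the natural numbers.

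There is no substantial obstacle: the only point that requires care is appealing to the uniqueness clause of Proposition~\ref{rut} so that the definition of $<_{\mathbf{T}_{\Sigma}(X)}$, which quantifies existentially over $(w,\sigma,(P_{j})_{j\in\bb{w}})$, indeed forces $a$ to appear as one of the proper subterms of $b$; once that is in place, the word-length argument is routine.
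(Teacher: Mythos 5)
Your proof is correct. The paper states this proposition without proof, so there is nothing to compare against; your argument --- transporting the word-length homomorphism $\bb{\cdot}$ along the inclusion $\mathrm{T}_{\Sigma}(X)\subseteq \mathrm{W}_{\Sigma}(X)$ to get a rank function that strictly decreases along $<_{\mathbf{T}_{\Sigma}(X)}$, and then reading off antisymmetry and well-foundedness from $(\mathbb{N},<)$ --- is the natural one and all the steps check out. (One small observation: the uniqueness clause of Proposition~\ref{rut} is not actually needed for the rank estimate, since \emph{any} witness $(w,\sigma,(P_{j})_{j\in\bb{w}})$ with $F_{\sigma}((P_{j})_{j\in\bb{w}})=b$ and $P_{i}=a$ already yields $\bb{b}=1+\sum_{j\in\bb{w}}\bb{P_{j}}>\bb{a}$.)
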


\begin{definition}
Let $X$ be an $S$-sorted set, $t\in S$, and $P\in\mathrm{T}_{\Sigma}(X)_{t}$. Then the $S$-sorted set of all \emph{subterms} of $P$, denoted by $\mathrm{Subt}(P)$, is defined as follows:
$$
\mathrm{Subt}(P) = (\{Q\in \mathrm{T}_{\Sigma}(X)_{s} \mid (Q,s)
\leq_{\mathbf{T}_{\Sigma}(X)} (P,t)\})_{s \in S }.
$$
\end{definition}

\begin{remark}
$\mathrm{Subt}(P)\in \mathrm{Sub}_{\mathrm{f}}(\mathrm{T}_{\Sigma}(X))$. Moreover, $\mathrm{Subt}(P)$ can also be characterized as the smallest subset $\mathcal{L}$ of $\mathrm{T}_{\Sigma}(X)$ which satisfies the following conditions: (1) $P\in \mathcal{L}_{t}$ and (2) for every $(w,s)\in S^{\star}\times S$, every $\xi\in \Sigma_{w,s}$, and every $(Q_{i})_{i\in\bb{w}}\in \mathrm{T}_{\Sigma}(X)_{w}$, if $\xi((Q_{i})_{i\in\bb{w}})\in \mathcal{L}_{s}$, then, for every $i\in\bb{w}$, $Q_{i}\in \mathcal{L}_{w_{i}}$. Note that the second condition is exactly the converse of the defining condition of the concept of subalgebra of a $\Sigma$-algebra.
\end{remark}

Following this we associate to every term for $\Sigma$ of type $(X,s)$ its $S$-sorted set of variables. This will be used, in Section 3, in the proof of diverse propositions.

\begin{definition}
Let $X$ be an $S$-sorted set.  Then $\mathbf{Fin}(X)$ is the $\Sigma$-algebra which has as underlying
$S$-sorted set $(\mathrm{Sub}_{\mathrm{f}}(X))_{s\in S}$, i.e., the $S$-sorted set constantly $\mathrm{Sub}_{\mathrm{f}}(X)$, and, for every $(w,s)\in S^{\star}\times S$ and every $\sigma\in \Sigma_{w,s}$, as operation $F_{\sigma}\colon \mathrm{Sub}_{\mathrm{f}}(X)^{\bb{w}}\mor
\mathrm{Sub}_{\mathrm{f}}(X)$ that one defined as $F_{\sigma}((K^{i})_{i\in\bb{w}})= \bigcup_{i\in\bb{w}}K^{i}$. Let
$\delta^{X}=(\delta^{X}_{s})_{s\in S}$ be the $S$-sorted mapping defined, for every $s\in S$, as
$$
\delta^{X}_{s}
  \nfunction
  {X_{s}} {\mathrm{Sub}_{\mathrm{f}}(X)}
  {x} {\delta^{s,x}}
$$
Then we will denote by $\mathrm{Var}^{X}$ the unique homomorphism $\ext{(\delta^{X})}$ from $\mathbf{T}_{\Sigma}(X)$ to
$\mathbf{Fin}(X)$ such that the following diagram commutes
$$
\xymatrix{ X \ar[r]^-{\eta_{X}}
  \ar[rd]_-{\delta^{X}}
 &
\mathrm{T}_{\Sigma}(X)
  \ar[d]^{\ext{(\delta^{X})} = \mathrm{Var}^{X} = (\mathrm{Var}^{X}_{s})_{s\in S}} \\
 &
{\mathrm{Fin}}(X) = (\mathrm{Sub}_{\mathrm{f}}(X))_{s\in S}}
$$
For a sort $s\in S$ and a term $P\in \mathrm{T}_{\Sigma}(X)_{s}$ we will call $\mathrm{Var}^{X}_{s}(P) (\in \mathrm{Sub}_{\mathrm{f}}(X))$ the $S$-sorted set of \emph{variables} of $P$. Moreover, when this is unlikely to cause confusion, we will write $\mathrm{Var}^{X}(P)$ or, simply, $\mathrm{Var}(P)$ for $\mathrm{Var}^{X}_{s}(P)$.
\end{definition}

Our next goal is to define the concepts of congruence on a $\Sigma$-algebra and of quotient of a $\Sigma$-algebra by a congruence on it. Moreover, we recall the notion of kernel of a homomorphism between $\Sigma$-algebras and the universal property of the quotient of a $\Sigma$-algebra by a congruence on it.

\begin{definition}
Let $\mathbf{A}$ be a $\Sigma$-algebra and $\Phi$ an $S$-sorted equivalence on $A$. We will say that $\Phi$ is an
$S$-\emph{sorted congruence on} (or, to abbreviate, a \emph{congruence on}) $\mathbf{A}$ if, for every $(w,s)\in (S^{\star}-\{\lambda\})\times S$, every $\sigma\in \Sigma_{w,s}$,
and every $(a_{i})_{i\in\bb{w}},(b_{i})_{i\in\bb{w}}\in A_{w}$, if, for every $i\in \lvert w\rvert$, $(a_{i}, b_{i})\in\Phi_{w_{i}}$, then $(F_{\sigma}((a_{i})_{i\in\bb{w}}), F_{\sigma}((b_{i})_{i\in\bb{w}}))\in \Phi_{s}$.
We will denote by $\mathrm{Cgr}(\mathbf{A})$ the set of all $S$-sorted congruences on $\mathbf{A}$ (which is an algebraic closure system on $A\times A$), by $\mathbf{Cgr}(\mathbf{A})$ the algebraic lattice  $(\mathrm{Cgr}(\mathbf{A}),\subseteq)$, by $\nabla_{\mathbf{A}}$ the greatest element of $\mathbf{Cgr}(\mathbf{A})$, and by $\Delta_{\mathbf{A}}$ the least element of $\mathbf{Cgr}(\mathbf{A})$.

For a congruence $\Phi$ on $\mathbf{A}$, the \emph{quotient $\Sigma$-algebra of} $\mathbf{A}$ \emph{by} $\Phi$, denoted by $\mathbf{A}/\Phi$, is the $\Sigma$-algebra $(A/\Phi, F^{\mathbf{A}/\Phi})$, where, for every  $(w,s)\in S^{\star}\times S$ and every $\sigma\in \Sigma_{w,s}$, the operation $F_{\sigma}^{\mathbf{A}/\Phi}$ from $(A/\Phi)_{w}$ to $A_{s}/\Phi_{s}$, also denoted, to simplify, by $F_{\sigma}$, sends $([a_{i}]_{\Phi_{w_{i}}})_{i\in\lvert w\rvert}$ in $(A/\Phi)_{w}$ to $[F_{\sigma}((a_{i})_{i\in \lvert w\rvert})]_{\Phi_{s}}$ in $A_{s}/\Phi_{s}$,  %
and the \emph{canonical projection} from $\mathbf{A}$ to $\mathbf{A}/\Phi$, denoted by $\mathrm{pr}_{\Phi}\colon \mathbf{A}\mor \mathbf{A}/\Phi$, is the homomorphism determined by the projection from $A$ to $A/\Phi$.
The ordered pair $(\mathbf{A}/\Phi,\mathrm{pr}_{\Phi})$ has the following universal property: $\mathrm{Ker}(\mathrm{pr}_{\Phi})$ is $\Phi$ and, for every $\Sigma$-algebra $\mathbf{B}$ and every homomorphism $f$ from $\mathbf{A}$ to $\mathbf{B}$, if $\mathrm{Ker}(f)\supseteq \Phi$, then there exists a unique homomorphism $h$ from $\mathbf{A}/\Phi$ to $\mathbf{B}$ such that $h\circ\mathrm{pr}_{\Phi} = f$. In particular, if $\Psi$ is a congruence on $A$ such that $\Phi\subseteq \Psi$, then we will denote by $\mathrm{p}_{\Phi,\Psi}$ the unique homomorphism from $\mathbf{A}/\Phi$ to $\mathbf{A}/\Psi$ such that $\mathrm{p}_{\Phi,\Psi}\circ \mathrm{pr}_{\Phi} = \mathrm{pr}_\Psi$.
\end{definition}


\begin{remark}
Let $\mathbf{ClfdAlg}(\Sigma)$ be the category whose objects are the \emph{classified $\Sigma$-algebras}, i.e, the ordered pairs $(\mathbf{A},\Phi)$ where $\mathbf{A}$ is a $\Sigma$-algebra and $\Phi$ a congruence on $\mathbf{A}$, and in which the set of morphisms from $(\mathbf{A},\Phi)$ to $(\mathbf{B},\Psi)$ is the set of all homomorphisms $f$ from $\mathbf{A}$ to $\mathbf{B}$ such that, for every $s\in S$ and every $(x,y)\in A^{2}_{s}$, if $(x,y)\in \Phi_{s}$, then $(f_{s}(x),f_{s}(y))\in \Psi_{s}$. Let $G$ be the functor from $\mathbf{Alg}(\Sigma)$ to $\mathbf{ClfdAlg}(\Sigma)$ whose object mapping sends $\mathbf{A}$ to $(\mathbf{A},\Delta_{\mathbf{A}})$ and whose morphism mapping sends $f\colon \mathbf{A}\mor \mathbf{B}$ to $f\colon (\mathbf{A},\Delta_{\mathbf{A}})\mor (\mathbf{B},\Delta_{\mathbf{B}})$. Then, for every classified $\Sigma$-algebra $(\mathbf{A},\Phi)$, there exists a universal mapping from $(\mathbf{A},\Phi)$ to $G$, which is precisely the ordered pair $(\mathbf{A}/\Phi,\mathrm{pr}_{\Phi})$ with $\mathrm{pr}_{\Phi}\colon (\mathbf{A},\Phi)\mor (\mathbf{A}/\Phi,\Delta_{\mathbf{A}/\Phi})$.
\end{remark}

We next define for a $\Sigma$-algebra the concepts of elementary translation and of translation with respect to it, and provide, by using the just mentioned concepts, two characterizations of the congruences on a $\Sigma$-algebra. To this we add that  the concept of translation will allow us to define the concept of congruence cogenerated by an $S$-sorted subset of the underlying $S$-sorted set of a $\Sigma$-algebra, which will be all-important in our congruence based proofs of the recognizability theorems for free many-sorted algebras, in Section 3.

\begin{definition}
Let $\mathbf{A}$ be a $\Sigma$-algebra and $t\in S$. Then we will denote by $\mathrm{Etl}_{t}(\mathbf{A})$ the subset $(\mathrm{Etl}_{t}(\mathbf{A})_{s})_{s\in S}$ of $(\mathrm{Hom}(A_{t},A_{s}))_{s\in S}$ defined, for every $s\in S$, as follows: For every mapping $T\in \mathrm{Hom}(A_{t},A_{s})$, $T\in \mathrm{Etl}_{t}(\mathbf{A})_{s}$ if and only if there is a word $w\in  S^{\star}-\{\lambda\}$, an $i\in \lvert w \rvert$, a $\sigma\in \Sigma_{w,s}$, a family $(a_{j})_{j\in i}\in\prod_{j\in i}A_{w_{j}}$, and a family $(a_{k})_{k\in \lvert w \rvert-(i+1)} \in\prod_{k\in \lvert w \rvert-(i+1)}A_{w_{k}}$ (recall that $i+1 = \{0, 1,\ldots,i\}$ and that $\lvert w \rvert-(i+1) = \{i+1,\ldots,\lvert w \rvert-1\}$) such that $w_{i} = t$ and, for every $x\in A_{t}$, $T(x) =
F_{\sigma}(a_{0},\ldots,a_{i-1},x,a_{i+1},\ldots,a_{\lvert w \rvert-1})$. We call the elements of $\mathrm{Etl}_{t}(\mathbf{A})_{s}$ the $t$-\emph{elementary translations of sort} $s$ \emph{for} $\mathbf{A}$.
\end{definition}

\begin{definition}
Let $\mathbf{A}$ be a $\Sigma$-algebra and $t\in S$. Then we will denote by $\mathrm{Tl}_{t}(\mathbf{A})$ the subset
$(\mathrm{Tl}_{t}(\mathbf{A})_{s})_{s\in S}$ of $(\mathrm{Hom}(A_{t},A_{s}))_{s\in S}$ defined, for every $s\in S$, as follows: For every mapping $T\in \mathrm{Hom}(A_{t},A_{s})$, $T\in \mathrm{Tl}_{t}(\mathbf{A})_{s}$ if and only if there is an $n\in \mathbb{N}-1$, a word $(s_{j})_{j\in n+1}\in S^{n+1}$, and a family $(T_{j})_{j\in n}$ such that $s_{0} = t$, $s_{n} = s$, $T_{0}\in \mathrm{Etl}_{t}(\mathbf{A})_{s_{1}}$, $T_{1}\in \mathrm{Etl}_{s_{1}}(\mathbf{A})_{s_{2}}$, \ldots, $T_{n-1}\in \mathrm{Etl}_{s_{n-1}}(\mathbf{A})_{s}$ and $T = T_{n-1}\circ\cdots\circ T_{0}$. We call the elements of $\mathrm{Tl}_{t}(\mathbf{A})_{s}$ the $t$-\emph{translations of sort} $s$ \emph{for} $\mathbf{A}$. Besides, for every $t\in S$, the mapping  $\mathrm{id}_{A_{t}}$ will be viewed as an element of $\mathrm{Tl}_{t}(\mathbf{A})_{t}$.
\end{definition}

\begin{remark}
The $S\times S$-sorted set $(\mathrm{Tl}_{t}(\mathbf{A})_{s})_{(t,s)\in S\times S}$ determines a category $\mathbf{Tl}(\mathbf{A})$ whose object set is $S$ and in which, for every $(t,s)\in S\times S$, $\mathrm{Hom}_{\mathbf{Tl}(\mathbf{A})}(t,s)$, the hom-set from $t$ to $s$, is $\mathrm{Tl}_{t}(\mathbf{A})_{s}$. Therefore, for every $t\in S$, $\mathrm{End}_{\mathbf{Tl}(\mathbf{A})}(t)$ is equipped with a structure of monoid.
\end{remark}

Given a $\Sigma$-algebra $\mathbf{A}$ and a translation $T\in \mathrm{Tl}_{t}(\mathbf{A})_{s}$ we next define, as\-so\-ci\-at\-ed to the mappings $T[\cdot]\colon \mathrm{Sub}(A_{t})\mor \mathrm{Sub}(A_{s})$ and $T^{-1}[\cdot]\colon \mathrm{Sub}(A_{s})\mor \mathrm{Sub}(A_{t})$, operators: $T[\cdot]$ and $T^{-1}[\cdot]$ from $\mathrm{Sub}(A)$ to $\mathrm{Sub}(A)$, $T[\cdot]$ from $\mathrm{Sub}(A_{t})$ to $\mathrm{Sub}(A)$, and $T^{-1}[\cdot]$ from $\mathrm{Sub}(A_{s})$ to $\mathrm{Sub}(A)$. This will be used below in the characterization of the recognizable languages.

\begin{definition}\label{DAntiTrans}
Let $\mathbf{A}$ be a $\Sigma$-algebra, $L\subseteq A$, $s$, $t\in S$, $M\subseteq A_{t}$, $N\subseteq A_{s}$, and $T\in \mathrm{Tl}_{t}(\mathbf{A})_{s}$. Then
\begin{enumerate}
\item $T[L]$ is the subset of $A$ defined as follows: $T[L]_{s} = T[L_{t}]$ and $T[L]_{u} = \varnothing$, if $u\neq s$. Therefore, $T[L] = \delta^{s,T[L_{t}]}$;
\item $T^{-1}[L]$ is the subset of $A$ defined as follows: $T^{-1}[L]_{t} = T^{-1}[L_{s}]$ and $T^{-1}[L]_{u} = \varnothing$, if $u\neq t$. Therefore, $T^{-1}[L] = \delta^{t,T^{-1}[L_{s}]}$;
\item $T[M]$ is $\delta^{s,T[M]} (= T[\delta^{t,M}])$; and
\item $T^{-1}[N]$ is $\delta^{t,T^{-1}[N]} (= T^{-1}[\delta^{s,N}])$.
\end{enumerate}
\end{definition}

\begin{remark}
Let $\mathbf{A}$ be a $\Sigma$-algebra, $K$, $L\subseteq A$, $s$, $t\in S$, $M\subseteq A_{t}$, $N\subseteq A_{s}$, and $T\in \mathrm{Tl}_{t}(\mathbf{A})_{s}$. Then $T[M]\subseteq \delta^{s,N}$ if and only if $\delta^{t,M}\subseteq T^{-1}[N]$. Moreover, $T[K]\subseteq \delta^{s,L_{s}}$ if and only if $\delta^{t,K_{t}}\subseteq T^{-1}[L]$. Had the operators $T[\cdot]$ and $T^{-1}[\cdot]$ been defined (for $u\in S-\{s,t\}$) differently, then the suitably modified counterparts of the just stated connections would not be met.
\end{remark}

As announced above, we next provide, by using the notions of elementary trans\-la\-tion and of translation, two characterizations of the congruences on a $\Sigma$-algebra. This shows, in particular, the significance of the notions of elementary translation and of translation. We note that in~\cite{m76}, on p.~199, it was announced without proof a proposition similar to that set out below.

\begin{proposition}\label{CharacCong}
Let $\mathbf{A}$ be a $\Sigma$-algebra and $\Phi$ an $S$-sorted equivalence on $A$. Then the following conditions are equivalent:
\begin{enumerate}
\item $\Phi$ is a congruence on $\mathbf{A}$.
\item $\Phi$ is closed under the elementary translations on
$\mathbf{A}$, i.e., for every every $t$, $s\in S$, every $x$, $y\in A_{t}$, and every $T\in \mathrm{Etl}_{t}(\mathbf{A})_{s}$, if $(x,y)\in \Phi_{t}$, then $(T(x),T(y))\in \Phi_{s}$.
\item $\Phi$ is closed under the translations on $\mathbf{A}$, i.e., for every every $t$, $s\in S$, every $x$, $y\in A_{t}$, and every $T\in \mathrm{Tl}_{t}(\mathbf{A})_{s}$, if $(x,y)\in \Phi_{t}$, then $(T(x),T(y))\in \Phi_{s}$.
\end{enumerate}
\end{proposition}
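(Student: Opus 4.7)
The plan is to establish the implications $(3)\Rightarrow(2)\Rightarrow(1)\Rightarrow(3)$, the first being essentially definitional, the second requiring a telescoping argument, and the third proceeding by induction on the ``length'' of a translation.

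First, for $(3)\Rightarrow(2)$, I would simply observe that, by the very definition of $\mathrm{Tl}_{t}(\mathbf{A})_{s}$, every elementary translation $T\in \mathrm{Etl}_{t}(\mathbf{A})_{s}$ is also a translation (take $n=1$ in the defining condition for translations, which collapses to a single elementary factor). Hence condition (3) immediately specializes to condition (2).

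For $(2)\Rightarrow(1)$, fix $(w,s)\in (S^{\star}-\{\lambda\})\times S$, $\sigma\in \Sigma_{w,s}$, and families $(a_{i})_{i\in\bb{w}},(b_{i})_{i\in\bb{w}}\in A_{w}$ with $(a_{i},b_{i})\in \Phi_{w_{i}}$ for every $i\in\bb{w}$. Set $n=\bb{w}$ and define, for every $k\in n+1$, the element $c^{(k)}\in A_{w}$ by $c^{(k)}_{i}=b_{i}$ if $i<k$ and $c^{(k)}_{i}=a_{i}$ if $i\geq k$, so that $c^{(0)}=(a_{i})_{i\in\bb{w}}$ and $c^{(n)}=(b_{i})_{i\in\bb{w}}$. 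For every $k\in n$ the mapping
\[
T_{k}\colon A_{w_{k}}\mor A_{s},\quad x\longmapsto F_{\sigma}(b_{0},\ldots,b_{k-1},x,a_{k+1},\ldots,a_{\bb{w}-1})
\]
is, by definition, a $w_{k}$-elementary translation of sort $s$ for $\mathbf{A}$. Applying (2) to $T_{k}$ and the pair $(a_{k},b_{k})\in \Phi_{w_{k}}$ yields
\[
(F_{\sigma}(c^{(k)}),F_{\sigma}(c^{(k+1)}))=(T_{k}(a_{k}),T_{k}(b_{k}))\in \Phi_{s}.
\]
Chaining these $n$ pairs together and using the transitivity of $\Phi_{s}$ delivers $(F_{\sigma}((a_{i})_{i\in\bb{w}}),F_{\sigma}((b_{i})_{i\in\bb{w}}))\in \Phi_{s}$, which is the congruence condition.

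For $(1)\Rightarrow(3)$, I would first verify the elementary case: if $T\in \mathrm{Etl}_{t}(\mathbf{A})_{s}$, with $T(x)=F_{\sigma}(a_{0},\ldots,a_{i-1},x,a_{i+1},\ldots,a_{\bb{w}-1})$ for some $\sigma\in \Sigma_{w,s}$ and some fixed entries, then $(x,y)\in \Phi_{t}$ together with the reflexive pairs $(a_{j},a_{j})\in \Phi_{w_{j}}$ for $j\neq i$ yields, by the congruence property applied to $\sigma$, that $(T(x),T(y))\in \Phi_{s}$. The general case then follows by a straightforward induction on the integer $n\in \mathbb{N}-1$ witnessing that $T\in \mathrm{Tl}_{t}(\mathbf{A})_{s}$: writing $T = T_{n-1}\circ \cdots\circ T_{0}$ with $T_{0}\in \mathrm{Etl}_{t}(\mathbf{A})_{s_{1}}$ and $T_{n-1}\circ \cdots\circ T_{1}\in \mathrm{Tl}_{s_{1}}(\mathbf{A})_{s}$ (the induction hypothesis), we get $(T_{0}(x),T_{0}(y))\in \Phi_{s_{1}}$ by the elementary case, and then $(T(x),T(y))\in \Phi_{s}$ by applying the induction hypothesis to the remaining composite; the base case $n=1$ being precisely the elementary case already handled. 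The identity $\mathrm{id}_{A_{t}}$, viewed as an element of $\mathrm{Tl}_{t}(\mathbf{A})_{t}$, trivially preserves $\Phi_{t}$.

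The only step that is not entirely mechanical is $(2)\Rightarrow(1)$, where one must set up the telescoping sequence of intermediate tuples to reduce the simultaneous substitution of all arguments of $\sigma$ to $\bb{w}$ successive one-argument replacements, each controlled by a single elementary translation. Once the indexing in that telescope is written out carefully, the rest of the argument reduces to bookkeeping.
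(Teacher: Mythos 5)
Your proposal is correct and follows essentially the same route as the paper: the telescoping family $T_{k}(x)=F_{\sigma}(b_{0},\ldots,b_{k-1},x,a_{k+1},\ldots,a_{\bb{w}-1})$ for $(2)\Rightarrow(1)$, the reflexivity-of-the-fixed-entries argument for the elementary case of $(1)\Rightarrow(3)$, and the composition of elementary factors for general translations are exactly the paper's arguments. The only difference is organizational --- you arrange the implications in a single cycle $(3)\Rightarrow(2)\Rightarrow(1)\Rightarrow(3)$ where the paper proves the two biconditionals $(1)\Leftrightarrow(2)$ and $(2)\Leftrightarrow(3)$ separately --- which does not change the mathematical content.
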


\begin{proof}
Let us first prove that (1) and (2) are equivalent.

Let us suppose that $\Phi$ is a congruence on $\mathbf{A}$. We want to show that $\Phi$ is closed under the elementary translations on $\mathbf{A}$. Let $t$ and $s$ be elements of $S$ and $T$ a $t$-elementary translation of sort $s$ for $\mathbf{A}$. Then $T\colon A_{t}\mor A_{s}$ and there is a word $w\in S^{\star}-\{\lambda\}$, an $i\in \lvert w \rvert$, a $\sigma\in
\Sigma_{w,s}$, a family $(a_{j})_{j\in i}\in\prod_{j\in i}A_{w_{j}}$, and a family $(a_{k})_{k\in \lvert w \rvert-(i+1)}
\in\prod_{k\in \lvert w \rvert-(i+1)}A_{w_{k}}$ such that $w_{i} = t$ and, for every $z\in A_{t}$, $T(z) =
F_{\sigma}(a_{0},\ldots,a_{i-1},z,a_{i+1},\ldots,a_{\lvert w \rvert-1})$. Let $x$ and $y$ be elements of $A_{t}$ such that $(x,y)\in \Phi_{t}$. Since, for every $j\in i$, $(a_{j},a_{j})\in \Phi_{w_{j}}$, for every $k\in \lvert w \rvert-(i+1)$, $(a_{k},a_{k})\in \Phi_{w_{k}}$, and, in addition, $(x,y)\in \Phi_{t} = \Phi_{w_{i}}$, then $(T(x),T(y))\in \Phi_{s}$.

Reciprocally, let us suppose that, for every $t$, $s\in S$, every $x$, $y\in A_{t}$, and every $T\in \mathrm{Etl}_{t}(\mathbf{A})_{s}$, if $(x,y)\in \Phi_{t}$, then $(T(x),T(y))\in \Phi_{s}$. We want to show that  $\Phi$ is a congruence on $\mathbf{A}$. Let $(w,u)\in (S^{\star}-\{\lambda\})\times S$, $\sigma\colon w\mor u$,
and $a = (a_{i})_{i\in\lvert w \rvert}$, $b = (b_{i})_{i\in\lvert w \rvert}\in A_{w}$ such that, for every $i\in \lvert w \rvert$ we have that $(a_{i}, b_{i})\in \Phi_{w_{i}}$. We now define, for every $i\in\lvert w \rvert$, $T_{i}$, the $w_{i}$-elementary translation of sort $u$ for $\mathbf{A}$, as the mapping from $A_{w_{i}}$ to $A_{u}$ which sends $x\in A_{w_{i}}$ to $F_{\sigma}(b_{0},\ldots,b_{i-1},x,a_{i+1},\ldots,a_{\lvert w \rvert-1}) \in A_{u}$. Then $F_{\sigma}(a_{0},\ldots,a_{\lvert w \rvert-1}) = T_{0}(a_{0})$ and $(T_{0}(a_{0}),T_{0}(b_{0}))\in \Phi_{w_{0}}$. But $T_{0}(b_{0}) = T_{1}(a_{1})$ and $(T_{1}(a_{1}),T_{1}(b_{1}))\in \Phi_{w_{1}}$. By proceeding in the same way we, finally, come to $T_{\lvert w \rvert-2}(b_{\lvert w \rvert-2}) = T_{\lvert w \rvert-1}(a_{\lvert w \rvert-1})$, $(T_{\lvert w \rvert-1}(a_{\lvert w \rvert-1}), T_{\lvert w \rvert-1}(b_{\lvert w \rvert-1}))\in \Phi_{w_{\lvert w \rvert-1}}$, and $T_{\lvert w \rvert-1}(b_{\lvert w \rvert-1})\! =\! F_{\sigma}(b_{0},\ldots,b_{\lvert w \rvert-1})$. Therefore $(F_{\sigma}(a), F_{\sigma}(b))\in \Phi_{u}$.

We shall now proceed to verify that (2) and (3) are equivalent.

Since every elementary translations on $\mathbf{A}$ is a translation on $\mathbf{A}$, it is obvious that if $\Phi$ is closed under the translations on $\mathbf{A}$, then $\Phi$ is closed under the elementary translations on $\mathbf{A}$.

Reciprocally, let us suppose that $\Phi$ is closed under the elementary translations on $\mathbf{A}$. We want to show that $\Phi$ is closed under the translations on $\mathbf{A}$. Let $t$ and $s$ be elements of $S$, $x$, $y$ elements of $A_{t}$, $T\in \mathrm{Tl}_{t}(\mathbf{A})_{s}$, and let us suppose that $(x,y)\in \Phi_{t}$. Then there is an $n\in \mathbb{N}-1$, a word $(s_{j})_{j\in n+1}\in S^{n+1}$, and a family $(T_{j})_{j\in n}$ such that $s_{0} = t$, $s_{n} = s$, $T_{0}\in \mathrm{Etl}_{t}(\mathbf{A})_{s_{1}}$, $T_{1}\in \mathrm{Etl}_{s_{1}}(\mathbf{A})_{s_{2}}$, \ldots, $T_{n-1}\in \mathrm{Etl}_{s_{n-1}}(\mathbf{A})_{s}$ and $T = T_{n-1}\circ\cdots\circ T_{0}$. Then, from $(x,y)\in \Phi_{t} = \Phi_{s_{0}}$, we infer that $(T_{0}(x),T_{0}(y))\in \Phi_{s_{1}}$. By proceeding in the same way we, finally, come to $(T_{n-1}(\ldots(T_{0}(x))\ldots),T_{n-1}(\ldots(T_{0}(y))\ldots))\in \Phi_{s} = \Phi_{s_{n}}$, i.e., to $(T(x),T(y))\in \Phi_{s}$.
\end{proof}

Our next aim is to assign, in a functional way, to every subset $L$ of the underlying $S$-sorted set $A$ of a $\Sigma$-algebra $\mathbf{A}$ the so-called congruence on $\mathbf{A}$ cogenerated by $L$, and to investigate its properties. But before doing it we need to recall the following result. For every $S$-sorted set $A$, the mapping from $\mathrm{Sub}(A)$ to $\mathrm{Hom}(A,(2)_{s\in S})$ that assigns to $L\in \mathrm{Sub}(A)$ precisely $\mathrm{ch}_{L}$, the character of $L$, i.e., the $S$-sorted mapping from $A$ to $(2)_{s\in S}$ whose $s$-th coordinate, for $s\in S$, is $\mathrm{ch}_{L_{s}}$, the characteristic mapping of $L_{s}$, is a natural isomorphism, in other words, $(2)_{s\in S}$ together with $\top^{S} = (\top)_{s\in S}\colon 1\mor (2)_{s\in S}$, where, for every $s\in S$, $\top$ is the mapping from $1$ to $2$ that sends $0$ to $1$, is a subobject classifier for $\mathbf{Set}^{S}$. Then, given a $\Sigma$-algebra  $\mathbf{A}$ and a subset $L$ of $A$, we associate to $L$ the $S$-sorted equivalence $\mathrm{Ker}(\mathrm{ch}_{L})$ on $A$ determined by $\mathrm{ch}_{L}$. So, for every $s\in S$, we have that:
$$
  \mathrm{Ker}(\mathrm{ch}_{L})_{s} = \mathrm{Ker}(\mathrm{ch}_{L_{s}}) = \{\,(x,y)\in A^{2}_{s}\mid x\in
  L_{s}\leftrightarrow y\in L_{s}\,\}.
$$

We next prove that there exists a congruence $\Omega^{\mathbf{A}}(L)$ on $\mathbf{A}$
that saturates $L$, i.e., which is such that  $\Omega^{\mathbf{A}}(L)\subseteq \mathrm{Ker}(\mathrm{ch}_{L})$, and is the largest congruence on $\mathbf{A}$ having such a property.

\begin{definition}
Let $\mathbf{A}$ be a $\Sigma$-algebra and $L\subseteq A$. Then $\Omega^{\mathbf{A}}(L)$ is the binary relation on $A$ defined, for every $t\in S$, as follows:
$$
\Omega^{\mathbf{A}}(L)_{t} = \biggl\{ (x,y)\in A^{2}_{t}\biggm|
\begin{gathered}
\forall\, s\in S\,\,\forall\, T\in \mathrm{Tl}_{t}(\mathbf{A})_{s}\,
\\[-3pt]
(T(x)\in L_{s}\leftrightarrow T(y)\in L_{s})
\end{gathered}
\biggr\}.
$$
\end{definition}

\begin{proposition}\label{CharacCogenCong}
Let $\mathbf{A}$ be a $\Sigma$-algebra and $L\subseteq A$. Then
\begin{enumerate}
\item $\Omega^{\mathbf{A}}(L)$ is a congruence on $\mathbf{A}$.

\item $\Omega^{\mathbf{A}}(L)\subseteq \mathrm{Ker}(\mathrm{ch}_{L})$.

\item For every congruence $\Phi$ on $\mathbf{A}$, if $\Phi\subseteq \mathrm{Ker}(\mathrm{ch}_{L})$, then $\Phi\subseteq \Omega^{\mathbf{A}}(L)$.
\end{enumerate}
In other words, $\Omega^{\mathbf{A}}(L)$ is the greatest congruence on $\mathbf{A}$ which saturates $L$.
\end{proposition}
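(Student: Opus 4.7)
The plan is to prove the three items in order, leveraging Proposition~\ref{CharacCong} (the characterization of congruences as equivalences closed under translations) as the principal technical tool.

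For item~(1), I would first observe that $\Omega^{\mathbf{A}}(L)$ is automatically an $S$-sorted equivalence on $A$: reflexivity, symmetry, and transitivity all follow directly from the fact that its defining condition is a universally quantified biconditional between elements of $L_{s}$. The substantive point is the congruence property, and for this I would invoke Proposition~\ref{CharacCong} and verify that $\Omega^{\mathbf{A}}(L)$ is closed under elementary translations. Specifically, given $t,s'\in S$, $T'\in\mathrm{Etl}_{t}(\mathbf{A})_{s'}$ and $(x,y)\in \Omega^{\mathbf{A}}(L)_{t}$, I need $(T'(x),T'(y))\in\Omega^{\mathbf{A}}(L)_{s'}$. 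Unfolding, this requires that for every $s\in S$ and every $T\in \mathrm{Tl}_{s'}(\mathbf{A})_{s}$, one has $T(T'(x))\in L_{s}\leftrightarrow T(T'(y))\in L_{s}$. The key step is to note that the composite $T\comp T'$ belongs to $\mathrm{Tl}_{t}(\mathbf{A})_{s}$ (either by prepending $T'$ to a non-trivial decomposition of $T$, or, if $T=\mathrm{id}_{A_{s'}}$, because $T'$ itself is a $t$-translation of sort $s'$), after which the conclusion is immediate from the hypothesis $(x,y)\in\Omega^{\mathbf{A}}(L)_{t}$ applied to $T\comp T'$.

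For item~(2), I would take any $t\in S$ and any $(x,y)\in\Omega^{\mathbf{A}}(L)_{t}$, and specialize the defining condition by choosing $s=t$ and $T=\mathrm{id}_{A_{t}}\in \mathrm{Tl}_{t}(\mathbf{A})_{t}$; this yields $x\in L_{t}\leftrightarrow y\in L_{t}$, which is precisely $(x,y)\in\mathrm{Ker}(\mathrm{ch}_{L})_{t}$.

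For item~(3), given any congruence $\Phi\in\mathrm{Cgr}(\mathbf{A})$ with $\Phi\subseteq \mathrm{Ker}(\mathrm{ch}_{L})$, I would take $t\in S$ and $(x,y)\in \Phi_{t}$, and for arbitrary $s\in S$ and $T\in\mathrm{Tl}_{t}(\mathbf{A})_{s}$ apply the translation closure of $\Phi$ guaranteed by Proposition~\ref{CharacCong}, obtaining $(T(x),T(y))\in \Phi_{s}\subseteq \mathrm{Ker}(\mathrm{ch}_{L_{s}})$, whence $T(x)\in L_{s}\leftrightarrow T(y)\in L_{s}$. Since $s$ and $T$ were arbitrary, $(x,y)\in \Omega^{\mathbf{A}}(L)_{t}$, as desired. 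The only delicate aspect in the whole argument is the bookkeeping for the identity case when composing translations in item~(1), which is handled by the convention that $\mathrm{id}_{A_{t}}\in \mathrm{Tl}_{t}(\mathbf{A})_{t}$; everything else is a routine unwinding of definitions once Proposition~\ref{CharacCong} is in place.
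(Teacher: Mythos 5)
Your proposal is correct and follows essentially the same route as the paper: item (1) via Proposition~\ref{CharacCong} and closure of $\Omega^{\mathbf{A}}(L)$ under translations (using that composites of translations are translations, with the identity case handled by convention), item (2) by specializing to $\mathrm{id}_{A_{t}}$, and item (3) by applying translation-closure of $\Phi$ together with $\Phi\subseteq\mathrm{Ker}(\mathrm{ch}_{L})$. The only difference is that you spell out the composition argument for (1), which the paper leaves implicit behind the phrase ``it suffices to take into account Proposition~\ref{CharacCong}.''
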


\begin{proof}
To prove (1) it suffices to take into account Proposition~\ref{CharacCong}. To prove (2), given $t\in S$ and $(x,y)\in \Omega^{\mathbf{A}}(L)_{t}$, it suffices to consider $\mathrm{id}_{A_{t}}\in \mathrm{Tl}_{t}(\mathbf{A})_{t}$, to conclude that $x\in L_{t}$ if and only if $y\in L_{t}$, i.e., that $(x,y)\in \mathrm{Ker}(\mathrm{ch}_{L})_{t}$. We now proceed to prove (3).
Let $\Phi$ be a congruence on $\mathbf{A}$ such that $\Phi\subseteq \mathrm{Ker}(\mathrm{ch}_{L})$, i.e., such that, for every $s\in S$ and every $x$, $y\in A_{s}$, if $(x,y)\in \Phi_{s}$, then $x\in L_{s}$ if and only if $y\in L_{s}$. We want to show that, for every $t\in S$, $\Phi_{t}\subseteq \Omega^{\mathbf{A}}(L)_{t}$. Let $t$ be an element of $S$ and $(x,y)\in \Phi_{t}$. Then, since $\Phi$ is a congruence on $\mathbf{A}$, for every $s\in S$ and every $T\in \mathrm{Tl}_{t}(\mathbf{A})_{s}$, we have that $(T(x),T(y))\in \Phi_{s}$. Hence, by the hypothesis on $\Phi$, $T(x)\in L_{s}$ if and only if $T(y)\in L_{s}$. Therefore $\Phi\subseteq \Omega^{\mathbf{A}}(L)$.
\end{proof}

\begin{definition}
Let $\mathbf{A}$ be a $\Sigma$-algebra and $L\subseteq A$. Then we call $\Omega^{\mathbf{A}}(L)$ the congruence on $\mathbf{A}$ \emph{cogenerated} by $L$, or the \emph{syntactic} congruence on $\mathbf{A}$ determined by $L$
(which is shorthand for ``the congruence on $\mathbf{A}$ cogenerated by the equivalence $\mathrm{Ker}(\mathrm{ch}_{L})$ on $\mathbf{A}$ canonically associated to $L$'').
\end{definition}

\begin{remark}
For every subset $L$ of the underlying $S$-sorted set of a $\Sigma$-algebra $\mathbf{A}$, $\Omega^{\mathbf{A}}(L)$ can be viewed as the value at $L$ of a mapping $\Omega^{\mathbf{A}}$ from $\mathrm{Sub}(A)$ to $\mathrm{Cgr}(\mathbf{A})$. We will call $\Omega^{\mathbf{A}}$ the \emph{congruence cogenerating operator for} $\mathbf{A}$ (with regard to subsets of $A$).
\end{remark}


With regard to the congruence cogenerated by an equivalence it is worthwhile to quote what B\"{u}chi, in~\cite{{rb89}}, on p.~113, wrote: ``The notion of induced [$\equiv$ cogenerated, \emph{we add}] congruence therefore is clearly relevant to automata theory. For some reason it seems to have escaped the attention of algebraists. In contrast, its mate, the generated congruence, is widely used in algebra.''

It may be worth reminding the reader that in the theory of formal languages, a congruence of the type $\Omega^{\mathbf{A}^{\star}}(L)$, where $L$ is a subset of the underlying set of a free monoid $\mathbf{A}^{\star}$ on an alphabet $A$, is called the syntactic congruence determined by $L$ (or the two-sided principal congruence of $L$). According to Lallement (in~\cite{Lall79}, on p. 175): ``the fact that the principal equivalence of $L$ is the largest congruence for which $L$ is a union of equivalence classes is due to Teissier in~\cite{Tei51} [for demi-groups, i.e., sets with an associative binary operation, \emph{we add}].'' To this we append that these congruences were defined by Sch\"{u}tzenberger (in~\cite{Sch55}, on p.~10) for monoids (he speaks of: ``demi-groupes contenant un \'{e}l\'{e}ment neutre''). It should also be pointed out that, for a \emph{single-sorted} algebra $\mathbf{A}$ and for an equivalence relation $\Phi$ on $A$, in~\cite{gree52}, on p.~65, Green defined the congruence on $\mathbf{A}$ \emph{analysed} ($\equiv$ cogenerated, \emph{we add}) by $\Phi$ as the join ($\equiv$ supremum) of the set of all congruences $\Psi$ on $\mathbf{A}$ contained in $\Phi$ and asserted that it is the greatest congruence on $\mathbf{A}$ contained in $\Phi$; and  in~\cite{sl74}, on pp.~32--33, S{\l}omi\'{n}ski proved that there exists the greatest congruence on $\mathbf{A}$ contained in $\Phi$. In this respect it should be noted that in~\cite{ja90}, and for \emph{single-sorted} algebras, Almeida addressed, in particular, the issue of the definition and basic properties of the syntactic congruence.

The theorem of Green and S{\l}omi\'{n}ski is also valid for the many-sorted case, as shown below.

\begin{proposition}\label{CogenCong}
Let $\mathbf{A}$ be a $\Sigma$-algebra and $\Phi$ an $S$-sorted equivalence on $A$. Then there exists a congruence $\widetilde{\Omega}^{\mathbf{A}}(\Phi)$ on $\mathbf{A}$ such that $\widetilde{\Omega}^{\mathbf{A}}(\Phi)\subseteq \Phi$ and, for every congruence $\Psi$ on $\mathbf{A}$, if $\Psi\subseteq \Phi$, then $\Psi\subseteq \widetilde{\Omega}^{\mathbf{A}}(\Phi)$. We will call $\widetilde{\Omega}^{\mathbf{A}}(\Phi)$ \emph{the congruence on} $\mathbf{A}$ \emph{cogenerated by} $\Phi$.
\end{proposition}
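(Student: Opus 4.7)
The plan is to mirror the construction of the congruence $\Omega^{\mathbf{A}}(L)$ cogenerated by a subset (Proposition~\ref{CharacCogenCong}) by giving an explicit description of $\widetilde{\Omega}^{\mathbf{A}}(\Phi)$ in terms of translations, and then verify the three required properties (congruence, contained in $\Phi$, maximal) by imitating the three parts of the proof of Proposition~\ref{CharacCogenCong}. Concretely, I would define, for every $t\in S$,
\[
\widetilde{\Omega}^{\mathbf{A}}(\Phi)_{t} \;=\; \bigl\{(x,y)\in A_{t}^{2} \bigm| \forall\, s\in S\,\,\forall\, T\in \mathrm{Tl}_{t}(\mathbf{A})_{s}\,\bigl((T(x),T(y))\in \Phi_{s}\bigr)\bigr\}.
\]
This is the natural many-sorted generalization of the syntactic congruence, now indexed by an equivalence $\Phi$ rather than by a subset $L$.

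First I would check that $\widetilde{\Omega}^{\mathbf{A}}(\Phi)$ is an $S$-sorted equivalence: reflexivity, symmetry, and transitivity for each $\widetilde{\Omega}^{\mathbf{A}}(\Phi)_{t}$ follow sort-wise from the corresponding properties of $\Phi_{s}$, evaluated along each translation $T$. Next, to see that it is a congruence, I would apply Proposition~\ref{CharacCong}(2) and check that $\widetilde{\Omega}^{\mathbf{A}}(\Phi)$ is closed under the elementary translations on $\mathbf{A}$; this is immediate because the composition of an elementary translation with any translation is again a translation, so if $(x,y)\in \widetilde{\Omega}^{\mathbf{A}}(\Phi)_{t}$ and $T'\in\mathrm{Etl}_{t}(\mathbf{A})_{u}$, then for each $s\in S$ and each $T\in \mathrm{Tl}_{u}(\mathbf{A})_{s}$ we have $T\comp T'\in \mathrm{Tl}_{t}(\mathbf{A})_{s}$, yielding $(T(T'(x)),T(T'(y)))\in \Phi_{s}$, so $(T'(x),T'(y))\in \widetilde{\Omega}^{\mathbf{A}}(\Phi)_{u}$.

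For the inclusion $\widetilde{\Omega}^{\mathbf{A}}(\Phi)\subseteq \Phi$, I would take $T=\mathrm{id}_{A_{t}}\in \mathrm{Tl}_{t}(\mathbf{A})_{t}$ in the defining formula, exactly as in the proof of Proposition~\ref{CharacCogenCong}(2). For maximality, let $\Psi$ be any congruence on $\mathbf{A}$ with $\Psi\subseteq \Phi$; given $t\in S$ and $(x,y)\in \Psi_{t}$, by Proposition~\ref{CharacCong}(3), $\Psi$ is closed under all translations, so for every $s\in S$ and every $T\in\mathrm{Tl}_{t}(\mathbf{A})_{s}$ we have $(T(x),T(y))\in\Psi_{s}\subseteq \Phi_{s}$; hence $(x,y)\in \widetilde{\Omega}^{\mathbf{A}}(\Phi)_{t}$, i.e., $\Psi\subseteq \widetilde{\Omega}^{\mathbf{A}}(\Phi)$.

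There is no genuine obstacle here: the proof is a straightforward adaptation of the previous one, and the only substantive ingredient is Proposition~\ref{CharacCong}, which has already been established. The mild subtlety worth emphasizing is the choice between two equivalent routes---(i)~the explicit translation-based formula above, which gives an ``internal'' description of $\widetilde{\Omega}^{\mathbf{A}}(\Phi)$, and (ii)~the more abstract route of taking the supremum in the algebraic lattice $\mathbf{Cgr}(\mathbf{A})$ of the family of all congruences contained in $\Phi$; this latter approach would require verifying that the join (not merely the componentwise union) of congruences contained in $\Phi$ remains inside $\Phi$, which is less transparent. I would therefore favor the explicit translation-based definition, as it simultaneously produces the object and certifies all three defining properties with minimal extra work.
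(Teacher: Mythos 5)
Your proof is correct, but it is not the route the paper's own proof takes. The paper constructs $\widetilde{\Omega}^{\mathbf{A}}(\Phi)$ as the join, in the lattice $\mathbf{Eqv}(A)$, of the set $\{\Psi\in \mathrm{Cgr}(\mathbf{A})\mid \Psi\subseteq\Phi\}$: it writes out the join of a family of congruences explicitly as the set of pairs connected by finite chains of related elements, checks that this join is compatible with the operations (hence a congruence), and notes that $\Delta_{\mathbf{A}}$ makes the family nonempty. Your explicit translation-based definition, $\widetilde{\Omega}^{\mathbf{A}}(\Phi)_{t} = \{(x,y)\in A_{t}^{2}\mid \forall s\,\forall T\in\mathrm{Tl}_{t}(\mathbf{A})_{s}\,((T(x),T(y))\in\Phi_{s})\}$, is precisely the alternative the authors record in the remark immediately following the proposition, so your argument is sanctioned by the paper itself even though it is not the official proof. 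The trade-off you identify is real and cuts both ways: your route leans entirely on Proposition~\ref{CharacCong} and delivers an internal formula for the congruence (which the paper exploits later, e.g.\ in the translation- and homomorphism-compatibility statements for $\widetilde{\Omega}^{\mathbf{A}}$), whereas the join route establishes in passing that arbitrary joins of congruences are congruences, a fact of independent use. Your worry about the join route is well placed but easily discharged: the chain description of the join together with the transitivity of each $\Phi_{s}$ shows the join stays inside $\Phi$ --- a step the paper leaves implicit. One small point worth making explicit in your version: when you compose $T\in\mathrm{Tl}_{u}(\mathbf{A})_{s}$ with $T'\in\mathrm{Etl}_{t}(\mathbf{A})_{u}$ you should separate the case $T=\mathrm{id}_{A_{u}}$ (where $T\circ T'=T'$ is itself a translation) from the case where $T$ is a genuine composite of elementary translations; both land in $\mathrm{Tl}_{t}(\mathbf{A})_{s}$, so nothing breaks, but the paper's convention of adjoining the identities to $\mathrm{Tl}_{t}(\mathbf{A})_{t}$ makes this a two-case check rather than a one-liner.
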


\begin{proof}
Since: (1) the join in $\mathbf{Eqv}(A)$ of a family $(\Psi^{i})_{i\in I}$ of congruences on $\mathbf{A}$ is the $S$-sorted equivalence $\Psi$ on $A$ whose $s$-th coordinate, $\Psi_{s}$, for $s\in S$, is:
$$
\Psi_{s} =
\biggl\{ (a,b)\in {A_{s}^{2}}\biggm|
\begin{gathered}
\exists n\geq 1\, \exists x\in {A_{s}^{n+1}}( x_{0}=a \And x_{n}=b
\And
\\[-3pt]
\forall p\in n ( (x_{p},x_{p+1})\in\textstyle\bigcup_{i\in
I}\Psi^{i}_{s}))
\end{gathered}
\biggr\};
$$
(2) for the join $\Psi$ in $\mathbf{Eqv}(A)$ of a family $(\Psi^{i})_{i\in I}$ of congruences on $\mathbf{A}$, we have that, for every $(w,s)\in (S^{\star}-\{\lambda\})\times S$, every $\sigma\in\Sigma_{w,s}$, every $j\in\bb{w}$, every $a$, $b\in A_{w_{j}}$, every $(c_{i})_{i\in j}\in\prod_{i\in j}A_{w_{i}}$, and every $(c_{k})_{k\in \lvert w \rvert-(j+1)}
\in\prod_{k\in \lvert w \rvert-(j+1)}A_{w_{k}}$, if $(a,b)\in\Psi_{w_{j}}$, then
$$
(F_{\sigma}(c_{0},\ldots,c_{j-1},a,c_{j+1},\ldots,c_{\bb{w}-1}),
F_{\sigma}(c_{0},\ldots,c_{j-1},b,c_{j+1},\ldots,c_{\bb{w}-1}))\in \Psi_{s};
$$
(3) the join in $\mathbf{Eqv}(A)$ of a family $(\Psi^{i})_{i\in I}$ of congruences on $\mathbf{A}$ is a congruence on $\mathbf{A}$ (this follows from (2)); and (4) $\Delta_{\mathbf{A}}$ is a congruence on $\mathbf{A}$ contained in $\Phi$, it suffices to take as $\widetilde{\Omega}^{\mathbf{A}}(\Phi)$ the join of $\{\Psi\in \mathrm{Cgr}(\mathbf{A})\mid \Psi\subseteq\Phi\}$ in $\mathbf{Eqv}(A)$.
\end{proof}


\begin{remark}
For every $S$-sorted equivalence $\Phi$ on the underlying $S$-sorted set of a $\Sigma$-algebra $\mathbf{A}$, $\widetilde{\Omega}^{\mathbf{A}}(\Phi)$ can be viewed as the value at $\Phi$ of a mapping $\widetilde{\Omega}^{\mathbf{A}}$ from $\mathrm{Eqv}(A)$ to $\mathrm{Cgr}(\mathbf{A})$. We will call $\widetilde{\Omega}^{\mathbf{A}}$ the \emph{congruence cogenerating operator} for $\mathbf{A}$ (with regard to $S$-sorted equivalences on $A$).
\end{remark}

\begin{remark}
For every subset $L$ of the underlying $S$-sorted set of a $\Sigma$-algebra $\mathbf{A}$, $\Omega^{\mathbf{A}}(L) = \widetilde{\Omega}^{\mathbf{A}}(\mathrm{Ker}(\mathrm{ch}_{L}))$. In other words, the mapping $\Omega^{\mathbf{A}}$ from $\mathrm{Sub}(A)$ to $\mathrm{Cgr}(\mathbf{A})$ is the composition of the mapping from $\mathrm{Sub}(A)$ to $\mathrm{Eqv}(A)$ that sends $L$ in $\mathrm{Sub}(A)$ to $\mathrm{Ker}(\mathrm{ch}_{L})$ in $\mathrm{Eqv}(A)$ (which in general is neither injecive, nor surjective, nor isotone) and the mapping $\widetilde{\Omega}^{\mathbf{A}}$ from $\mathrm{Eqv}(A)$ to $\mathrm{Cgr}(\mathbf{A})$.
\end{remark}

\begin{remark}
It is possible to provide a proof of Proposition~\ref{CogenCong} along the lines of Proposition~\ref{CharacCogenCong}. Let $\Phi$ be an $S$-sorted equivalence on the underlying $S$-sorted set of a $\Sigma$-algebra $\mathbf{A}$. Then the $S$-sorted binary relation $\widetilde{\Omega}^{\mathbf{A}}(\Phi)$ on $A$ defined, for every $t\in S$, as follows:
$$
\widetilde{\Omega}^{\mathbf{A}}(\Phi)_{t} = \biggl\{ (x,y)\in A^{2}_{t}\biggm|
\begin{gathered}
\forall\, s\in S\,\,\forall\, T\in \mathrm{Tl}_{t}(\mathbf{A})_{s}\,
\\[-3pt]
(T(x),T(y))\in \Phi_{s})
\end{gathered}
\biggr\},
$$
is the greatest congruence on $\mathbf{A}$ contained in $\Phi$.
\end{remark}


We next provide, for a $\Sigma$-algebra $\mathbf{A}$, some basic properties of the congruence cogenerating operator $\widetilde{\Omega}^{\mathbf{A}}$.

\begin{proposition}
Let $\mathbf{A}$ be a $\Sigma$-algebra. Then $\widetilde{\Omega}^{\mathbf{A}}$, considered as an endomapping of $\mathrm{Eqv}(A)$, is a kernel ($\equiv$ interior) operator, i.e., it is contractive ($\equiv$ deflationary), isotone, and idempotent. Moreover, $\widetilde{\Omega}^{\mathbf{A}}(\Delta_{\mathbf{A}}) = \Delta_{\mathbf{A}}$, $\widetilde{\Omega}^{\mathbf{A}}(\nabla_{\mathbf{A}}) = \nabla_{\mathbf{A}}$ and, for every nonempty set $I$ in $\boldsymbol{\mathcal{U}}$ and every $(\Phi^{i})_{i\in I}\in \mathrm{Eqv}(A)^{I}$, $\widetilde{\Omega}^{\mathbf{A}}(\bigcap_{i\in I}\Phi^{i}) = \bigcap_{i\in I}\widetilde{\Omega}^{\mathbf{A}}(\Phi^{i})$.
\end{proposition}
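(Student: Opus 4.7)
The plan is to derive every assertion directly from the universal characterization of $\widetilde{\Omega}^{\mathbf{A}}(\Phi)$ given in Proposition~\ref{CogenCong}, namely that $\widetilde{\Omega}^{\mathbf{A}}(\Phi)$ is the greatest element of $\{\Psi\in\mathrm{Cgr}(\mathbf{A})\mid \Psi\subseteq \Phi\}$. Contractivity is immediate from this characterization. For isotonicity, if $\Phi\subseteq \Psi$, then $\widetilde{\Omega}^{\mathbf{A}}(\Phi)$ is a congruence on $\mathbf{A}$ contained in $\Psi$, so by the maximality of $\widetilde{\Omega}^{\mathbf{A}}(\Psi)$ we have $\widetilde{\Omega}^{\mathbf{A}}(\Phi)\subseteq \widetilde{\Omega}^{\mathbf{A}}(\Psi)$. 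For idempotency, $\widetilde{\Omega}^{\mathbf{A}}(\Phi)\in \mathrm{Cgr}(\mathbf{A})$ and $\widetilde{\Omega}^{\mathbf{A}}(\Phi)\subseteq \widetilde{\Omega}^{\mathbf{A}}(\Phi)$, so by maximality $\widetilde{\Omega}^{\mathbf{A}}(\Phi)\subseteq \widetilde{\Omega}^{\mathbf{A}}(\widetilde{\Omega}^{\mathbf{A}}(\Phi))$; the reverse inclusion is a special instance of contractivity.

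The statements $\widetilde{\Omega}^{\mathbf{A}}(\Delta_{\mathbf{A}})=\Delta_{\mathbf{A}}$ and $\widetilde{\Omega}^{\mathbf{A}}(\nabla_{\mathbf{A}})=\nabla_{\mathbf{A}}$ then follow because $\Delta_{\mathbf{A}}$ and $\nabla_{\mathbf{A}}$ are themselves congruences and are, respectively, the least and greatest elements of $\mathbf{Cgr}(\mathbf{A})$, so each equals the largest congruence it contains. (Alternatively, applying contractivity yields $\widetilde{\Omega}^{\mathbf{A}}(\Delta_{\mathbf{A}})\subseteq \Delta_{\mathbf{A}}$, and the reverse holds because $\Delta_{\mathbf{A}}\in\mathrm{Cgr}(\mathbf{A})$ is contained in $\Delta_{\mathbf{A}}$.)

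The preservation of arbitrary nonempty intersections is the one point where some care is needed, but it too reduces to the universal property once we know that $\mathrm{Cgr}(\mathbf{A})$ is closed under arbitrary intersections (which is built into the fact that $\mathrm{Cgr}(\mathbf{A})$ is an algebraic closure system on $A\times A$). For the inclusion $\widetilde{\Omega}^{\mathbf{A}}(\bigcap_{i\in I}\Phi^{i})\subseteq \bigcap_{i\in I}\widetilde{\Omega}^{\mathbf{A}}(\Phi^{i})$, isotonicity applied to $\bigcap_{i\in I}\Phi^{i}\subseteq \Phi^{j}$ yields $\widetilde{\Omega}^{\mathbf{A}}(\bigcap_{i\in I}\Phi^{i})\subseteq \widetilde{\Omega}^{\mathbf{A}}(\Phi^{j})$ for each $j\in I$, and we intersect. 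For the reverse inclusion, $\bigcap_{i\in I}\widetilde{\Omega}^{\mathbf{A}}(\Phi^{i})$ is an intersection of congruences, hence a congruence on $\mathbf{A}$; moreover, by contractivity of each factor, $\bigcap_{i\in I}\widetilde{\Omega}^{\mathbf{A}}(\Phi^{i})\subseteq \bigcap_{i\in I}\Phi^{i}$, so by the maximality of $\widetilde{\Omega}^{\mathbf{A}}(\bigcap_{i\in I}\Phi^{i})$ we conclude $\bigcap_{i\in I}\widetilde{\Omega}^{\mathbf{A}}(\Phi^{i})\subseteq \widetilde{\Omega}^{\mathbf{A}}(\bigcap_{i\in I}\Phi^{i})$.

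No step looks difficult; the only conceptual obstacle would have been the existence and characterization of $\widetilde{\Omega}^{\mathbf{A}}(\Phi)$ itself, and that is precisely what Proposition~\ref{CogenCong} delivers. Thus the proposition is obtained without any further translation-theoretic computation, purely from the lattice-theoretic/universal properties already established.
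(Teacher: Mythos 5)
Your proof is correct. The paper states this proposition without proof (it is one of the facts in the preliminaries recalled ``mostly without proofs''), so there is no argument of the authors' to compare against; but your derivation of all five assertions purely from the universal characterization of $\widetilde{\Omega}^{\mathbf{A}}(\Phi)$ as the greatest congruence contained in $\Phi$ (Proposition~\ref{CogenCong}), together with the closure of $\mathrm{Cgr}(\mathbf{A})$ under nonempty intersections, is complete and is the natural route. The only other route suggested by the paper is the explicit translation-theoretic description of $\widetilde{\Omega}^{\mathbf{A}}(\Phi)_{t}$ given in the remark following Proposition~\ref{CogenCong}, from which contractivity, isotonicity and the intersection identity can be read off pointwise; that description makes the meet-preservation computation slightly more concrete, whereas your lattice-theoretic argument is shorter and makes clear that everything except the existence of $\widetilde{\Omega}^{\mathbf{A}}$ is formal. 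One small point worth making explicit: $\Delta_{\mathbf{A}}$ and $\nabla_{\mathbf{A}}$ coincide, as $S$-sorted relations, with $\Delta_{A}$ and $\nabla_{A}$ in $\mathrm{Eqv}(A)$, which is what licenses applying $\widetilde{\Omega}^{\mathbf{A}}$ to them and concluding that each, being already a congruence, is its own image.
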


\begin{proposition}
Let $\mathbf{A}$ be a $\Sigma$-algebra, $\Phi$ an $S$-sorted equivalence on $A$, $t$, $s\in S$, and $T\in\mathrm{Tl}_{t}(\mathbf{A})_{s}$. Then $\Omega^{\mathbf{A}}(\Phi)\subseteq \Omega^{\mathbf{A}}((T\times T)^{-1}[\Phi])$, where $(T\times T)^{-1}[\Phi]$ stands for the $S$-sorted equivalence on $A$ defined, for every $u\in S$, as follows:
$$
\Omega^{\mathbf{A}}((T\times T)^{-1}[\Phi])_{u} =
\begin{cases}
(T\times T)^{-1}[\Phi_{s}] \text{, if }u=t; \\
\nabla_{A_{u}} \text{, otherwise.} \\
\end{cases}
$$
\end{proposition}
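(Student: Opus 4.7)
The plan is to exploit the maximality characterization of $\widetilde{\Omega}^{\mathbf{A}}$ from Proposition~\ref{CogenCong}: since $\widetilde{\Omega}^{\mathbf{A}}((T\times T)^{-1}[\Phi])$ is the greatest congruence on $\mathbf{A}$ contained in the $S$-sorted equivalence $(T\times T)^{-1}[\Phi]$, and since $\widetilde{\Omega}^{\mathbf{A}}(\Phi)$ is itself a congruence on $\mathbf{A}$, it suffices to verify the sortwise inclusion $\widetilde{\Omega}^{\mathbf{A}}(\Phi)\subseteq(T\times T)^{-1}[\Phi]$.

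First I would remark that $(T\times T)^{-1}[\Phi]$ is genuinely an $S$-sorted equivalence on $A$, so that $\widetilde{\Omega}^{\mathbf{A}}$ can be applied to it: at a sort $u\neq t$ it coincides with $\nabla_{A_{u}}$, while at sort $t$ it is $\{(x,y)\in A_{t}^{2}\mid (T(x),T(y))\in\Phi_{s}\}$, which inherits reflexivity, symmetry and transitivity from $\Phi_{s}$.

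Next I would establish the sortwise inclusion. For each $u\in S-\{t\}$ the inclusion is trivial because the right-hand side equals $\nabla_{A_{u}}$. For $u=t$, given $(x,y)\in\widetilde{\Omega}^{\mathbf{A}}(\Phi)_{t}$, I would invoke the translation-based description of cogenerated congruences furnished by the remark immediately preceding the proposition, namely that such a pair satisfies $(U(x),U(y))\in\Phi_{v}$ for every $v\in S$ and every $U\in\mathrm{Tl}_{t}(\mathbf{A})_{v}$. Specialising to $v=s$ and to the translation $U=T\in\mathrm{Tl}_{t}(\mathbf{A})_{s}$ supplied by the hypothesis, I obtain $(T(x),T(y))\in\Phi_{s}$, which is precisely the statement that $(x,y)\in (T\times T)^{-1}[\Phi_{s}]=(T\times T)^{-1}[\Phi]_{t}$.

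I do not anticipate any substantive obstacle. The argument is a direct manipulation of the translation-based characterization of $\widetilde{\Omega}^{\mathbf{A}}$, coupled with the automatic containment at sorts distinct from $t$ because of the full relation $\nabla_{A_{u}}$ appearing there. The only piece of genuine content is the observation that the hypothesised translation $T$ itself plays the role of the witness that moves the pair from sort $t$ into $\Phi_{s}$ at sort $s$, which is exactly what being an element of $\mathrm{Tl}_{t}(\mathbf{A})_{s}$ guarantees.
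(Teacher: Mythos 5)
Your argument is correct. The paper states this proposition without proof, so there is nothing to compare against; your route --- reduce, via the maximality clause of the cogenerated congruence $\widetilde{\Omega}^{\mathbf{A}}$, to the sortwise inclusion $\widetilde{\Omega}^{\mathbf{A}}(\Phi)\subseteq (T\times T)^{-1}[\Phi]$, which is trivial at sorts $u\neq t$ and at sort $t$ follows by specialising the translation-based description of $\widetilde{\Omega}^{\mathbf{A}}(\Phi)_{t}$ to the given $T\in\mathrm{Tl}_{t}(\mathbf{A})_{s}$ --- is exactly the argument the surrounding results are set up to deliver, and you have also correctly read the statement's $\Omega^{\mathbf{A}}$ as $\widetilde{\Omega}^{\mathbf{A}}$ and the displayed cases as defining $((T\times T)^{-1}[\Phi])_{u}$ rather than its cogenerated congruence.
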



\begin{proposition}\label{TAntiHom}
Let $f$ be a homomorphism from $\mathbf{A}$ to $\mathbf{B}$ and $\Upsilon$ an $S$-sorted equivalence on $B$. Then
$(f\times f)^{-1}[\widetilde{\Omega}^{\mathbf{B}}(\Upsilon)]\subseteq \widetilde{\Omega}^{\mathbf{A}}((f\times f)^{-1}[\Upsilon])$. Moreover, if $f$ is an epimorphism, then $(f\times f)^{-1}[\widetilde{\Omega}^{\mathbf{B}}(M)] = \widetilde{\Omega}^{\mathbf{A}}((f\times f)^{-1}[\Upsilon])$.
\end{proposition}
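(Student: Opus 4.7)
The plan is to exploit the characterization of $\widetilde{\Omega}^{\mathbf{A}}(\Phi)$ given by Proposition~\ref{CogenCong}, namely that it is the largest congruence on $\mathbf{A}$ contained in $\Phi$, together with the elementary fact that the inverse image of a congruence under a homomorphism is a congruence. This reduces both inclusions to comparisons of congruences via the universal property of $\widetilde{\Omega}^{\mathbf{A}}$.

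First I would establish the inclusion. The $S$-sorted relation $(f\times f)^{-1}[\widetilde{\Omega}^{\mathbf{B}}(\Upsilon)]$ is a congruence on $\mathbf{A}$: since $f$ intertwines the structural operations of $\mathbf{A}$ and $\mathbf{B}$, the standard verification that preimages of congruences by homomorphisms are congruences applies without change. Moreover, the containment $\widetilde{\Omega}^{\mathbf{B}}(\Upsilon)\subseteq \Upsilon$ yields, by the monotonicity of $(f\times f)^{-1}[\cdot]$, the containment $(f\times f)^{-1}[\widetilde{\Omega}^{\mathbf{B}}(\Upsilon)]\subseteq (f\times f)^{-1}[\Upsilon]$. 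The maximality of $\widetilde{\Omega}^{\mathbf{A}}((f\times f)^{-1}[\Upsilon])$ then forces $(f\times f)^{-1}[\widetilde{\Omega}^{\mathbf{B}}(\Upsilon)]\subseteq \widetilde{\Omega}^{\mathbf{A}}((f\times f)^{-1}[\Upsilon])$, which is the first claim.

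For the reverse inclusion under the surjectivity assumption, set $\Psi=(f\times f)^{-1}[\Upsilon]$ and $\Theta=\widetilde{\Omega}^{\mathbf{A}}(\Psi)$. The strategy is to push $\Theta$ forward to $\mathbf{B}$. The key preliminary observation is that $\mathrm{Ker}(f)\subseteq \Theta$: indeed, $\mathrm{Ker}(f)$ is a congruence on $\mathbf{A}$, and whenever $f_{s}(x)=f_{s}(y)$ one has $(f_{s}(x),f_{s}(y))\in \Upsilon_{s}$ by reflexivity, so $\mathrm{Ker}(f)\subseteq \Psi$; the maximality of $\Theta$ inside $\Psi$ then forces $\mathrm{Ker}(f)\subseteq \Theta$. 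Combining $\mathrm{Ker}(f)\subseteq \Theta$ with the surjectivity of $f$, the image $\Theta'=(f\times f)[\Theta]$ is a well-defined congruence on $\mathbf{B}$ (essentially the correspondence theorem: $\Theta$ descends along $\mathrm{pr}_{\mathrm{Ker}(f)}$ to a congruence on $\mathbf{A}/\mathrm{Ker}(f)$, which transports to $\mathbf{B}$ through the isomorphism induced by $f$). An immediate check gives $\Theta'\subseteq \Upsilon$, since for $(b,b')\in \Theta'_{s}$ one picks $(a,a')\in \Theta_{s}\subseteq \Psi_{s}$ with $f_{s}(a)=b$, $f_{s}(a')=b'$, and then $(b,b')=(f_{s}(a),f_{s}(a'))\in \Upsilon_{s}$. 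By maximality of $\widetilde{\Omega}^{\mathbf{B}}(\Upsilon)$, we conclude $\Theta'\subseteq \widetilde{\Omega}^{\mathbf{B}}(\Upsilon)$, whence $\Theta\subseteq (f\times f)^{-1}[\Theta']\subseteq (f\times f)^{-1}[\widetilde{\Omega}^{\mathbf{B}}(\Upsilon)]$.

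The main obstacle I foresee is the verification that $(f\times f)[\Theta]$ is genuinely a congruence on $\mathbf{B}$; this requires a sortwise, componentwise argument through each operation symbol $\sigma\in\Sigma_{w,s}$, in which one chooses preimages under the surjection $f$ and uses $\mathrm{Ker}(f)\subseteq \Theta$ to rule out dependence on the representatives chosen. Everything else reduces to routine applications of the universal property of the congruence cogenerating operator.
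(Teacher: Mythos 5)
Your argument is correct. Note first that the paper states this proposition without proof, so there is nothing to compare against line by line; what you have written is the natural argument and it holds up. The first inclusion is exactly right: $(f\times f)^{-1}[\widetilde{\Omega}^{\mathbf{B}}(\Upsilon)]$ is a congruence on $\mathbf{A}$ contained in $(f\times f)^{-1}[\Upsilon]$, so the maximality clause of Proposition~\ref{CogenCong} finishes it. For the converse inclusion your two key observations --- that $\mathrm{Ker}(f)\subseteq\Psi$ by reflexivity of $\Upsilon$ and hence $\mathrm{Ker}(f)\subseteq\Theta$ by maximality, and that this containment is precisely what makes $(f\times f)[\Theta]$ transitive and a congruence on $\mathbf{B}$ once $f$ is surjective --- are the genuine content, and the final chain $\Theta\subseteq(f\times f)^{-1}[(f\times f)[\Theta]]\subseteq(f\times f)^{-1}[\widetilde{\Omega}^{\mathbf{B}}(\Upsilon)]$ is sound. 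Two small remarks. First, you are reading ``epimorphism'' as ``surjective homomorphism''; that is indeed what the paper intends (its category $\mathbf{Alg}(\Sigma)_{\mathrm{epi}}$ and the companion proposition on $\Omega^{\mathbf{A}}(f^{-1}[M])$ only work under surjectivity), but it is worth being aware that categorical epimorphisms of $\Sigma$-algebras need not be surjective. Second, there is an alternative route the paper makes available through its explicit translation-theoretic description of $\widetilde{\Omega}^{\mathbf{A}}(\Phi)_{t}$ as the pairs $(x,y)$ with $(T(x),T(y))\in\Phi_{s}$ for all translations $T$: for surjective $f$ every translation of $\mathbf{B}$ is obtained from one of $\mathbf{A}$ by lifting its parameters, and $f\circ T^{\mathbf{A}}=T^{\mathbf{B}}\circ f$, which yields both inclusions by a direct elementwise computation and avoids the correspondence-theorem detour. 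Your version buys a cleaner, purely order-theoretic argument from the universal property; the translation version makes the role of surjectivity more visibly local. Either is acceptable.
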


\begin{remark}
For every $\Sigma$-algebra $\mathbf{A}$, $\widetilde{\Omega}^{\mathbf{A}}$ can be regarded as the component at $\mathbf{A}$ of a natural transformation $\widetilde{\Omega}$ between two contravariant functors from a suitable category of $\Sigma$-algebras to the category $\mathbf{Set}$. In fact, let $\mathbf{Alg}(\Sigma)_{\mathrm{epi}}$ be the category whose objects are the $\Sigma$-algebras and whose morphisms are the epimorphisms between $\Sigma$-algebras. Then we have, on the one hand, the functor $\mathrm{Eqv}$ from $\mathbf{Alg}(\Sigma)^{\mathrm{op}}_{\mathrm{epi}}$, the dual of $\mathbf{Alg}(\Sigma)_{\mathrm{epi}}$, to $\mathbf{Set}$ which assigns to a $\Sigma$-algebra $\mathbf{A}$ the set $\mathrm{Eqv}(A)$, and to an epimorphism $f\colon \mathbf{A}\mor \mathbf{B}$ the mapping $(f\times f)^{-1}[\cdot]$ from $\mathrm{Eqv}(B)$ to  $\mathrm{Eqv}(A)$, and, on the other hand, the functor $\mathrm{Cgr}$ from $\mathbf{Alg}(\Sigma)^{\mathrm{op}}_{\mathrm{epi}}$ to $\mathbf{Set}$ which assigns to a $\Sigma$-algebra $\mathbf{A}$ the set $\mathrm{Cgr}(\mathbf{A})$, and to an epimorphism $f\colon \mathbf{A}\mor \mathbf{B}$ the mapping $(f\times f)^{-1}[\cdot]$ from $\mathrm{Cgr}(\mathbf{B})$ to $\mathrm{Cgr}(\mathbf{A})$. Then the mapping $\widetilde{\Omega}$ from $\mathrm{Alg}(\Sigma)$, the set of objects of $\mathbf{Alg}(\Sigma)$, to $\mathrm{Mor}(\mathbf{Set})$, the set of morphisms of $\mathbf{Set}$, which assigns to a $\Sigma$-algebra $\mathbf{A}$ the mapping $\widetilde{\Omega}^{\mathbf{A}}$ from $\mathrm{Eqv}(A)$ to $\mathrm{Cgr}(\mathbf{A})$ is a natural transformation from $\mathrm{Eqv}$ to $\mathrm{Cgr}$, because, for every epimorphism $f\colon \mathbf{A}\mor \mathbf{B}$, we have that $(f\times f)^{-1}[\cdot]\circ \widetilde{\Omega}^{\mathbf{B}} = \widetilde{\Omega}^{\mathbf{A}}\circ (f\times f)^{-1}[\cdot]$,
i.e., for every $S$-sorted equivalence $\Upsilon$ on $B$, $(f\times f)^{-1}[\widetilde{\Omega}^{\mathbf{B}}(\Upsilon)] = \widetilde{\Omega}^{\mathbf{A}}((f\times f)^{-1}[\Upsilon])$.
\end{remark}

We next gather together, for a $\Sigma$-algebra $\mathbf{A}$, some basic properties of the con\-gru\-ence cogenerating operator $\Omega^{\mathbf{A}}$.

\begin{proposition}\label{CharacSatCCog and CharacsSatCCog}\label{PSat and sPSat}
Let $\mathbf{A}$ be a $\Sigma$-algebra, $L$ a subset of $A$, and $\Phi\in\mathrm{Cgr}(\mathbf{A})$. Then $L\in \Phi\text{-}\mathrm{Sat}(A)$, i.e., $L = [L]^{\Phi}$, if and only if $\Phi\subseteq\Omega^{\mathbf{A}}(L)$. Moreover, for $s\in S$ and $L\subseteq A_{s}$, we have that $L = [L]^{\Phi_{s}}$ if and only if $\Phi_{s}\subseteq\Omega^{\mathbf{A}}(\delta^{s,L})_{s}$.
\end{proposition}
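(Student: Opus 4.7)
The first assertion will follow directly from the universal property of $\Omega^{\mathbf{A}}(L)$ established in Proposition~\ref{CharacCogenCong}. The key observation is the elementary reformulation: a subset $L\subseteq A$ is $\Phi$-saturated precisely when $\Phi\subseteq\mathrm{Ker}(\mathrm{ch}_{L})$, since $L = [L]^{\Phi}$ means exactly that, for every $s\in S$ and every $(x,y)\in \Phi_{s}$, one has $x\in L_{s}\leftrightarrow y\in L_{s}$. Given this reformulation, if $L=[L]^{\Phi}$, then $\Phi$ is a congruence on $\mathbf{A}$ contained in $\mathrm{Ker}(\mathrm{ch}_{L})$, hence by item (3) of Proposition~\ref{CharacCogenCong} we get $\Phi\subseteq \Omega^{\mathbf{A}}(L)$. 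Conversely, if $\Phi\subseteq \Omega^{\mathbf{A}}(L)$, then by item (2) of the same proposition $\Phi\subseteq \Omega^{\mathbf{A}}(L)\subseteq\mathrm{Ker}(\mathrm{ch}_{L})$, which gives $L = [L]^{\Phi}$.

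For the second assertion I would argue directly from the definition of $\Omega^{\mathbf{A}}(\delta^{s,L})$, since the result speaks only about the sort $s$ components of $\Phi$ and $\Omega^{\mathbf{A}}(\delta^{s,L})$. The crucial remark is that, since $\delta^{s,L}_{u}=\varnothing$ for $u\neq s$, for every $t\in S$ and every $(x,y)\in A_{t}^{2}$, the defining condition
\[
\forall\, u\in S\,\forall\, T\in \mathrm{Tl}_{t}(\mathbf{A})_{u}\;(T(x)\in \delta^{s,L}_{u}\leftrightarrow T(y)\in \delta^{s,L}_{u})
\]
reduces to $\forall\, T\in \mathrm{Tl}_{t}(\mathbf{A})_{s}\;(T(x)\in L\leftrightarrow T(y)\in L)$, because the biconditional is vacuously true whenever $u\neq s$. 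Taking $t=s$ and $T=\mathrm{id}_{A_{s}}\in \mathrm{Tl}_{s}(\mathbf{A})_{s}$ one immediately sees that $\Omega^{\mathbf{A}}(\delta^{s,L})_{s}\subseteq \mathrm{Ker}(\mathrm{ch}_{L})$. Hence, if $\Phi_{s}\subseteq \Omega^{\mathbf{A}}(\delta^{s,L})_{s}$, then $\Phi_{s}\subseteq\mathrm{Ker}(\mathrm{ch}_{L})$, which is precisely $L = [L]^{\Phi_{s}}$.

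For the converse direction, assume $L=[L]^{\Phi_{s}}$ and let $(x,y)\in\Phi_{s}$. By Proposition~\ref{CharacCong}, since $\Phi$ is a congruence on $\mathbf{A}$, it is closed under translations, so for every $T\in \mathrm{Tl}_{s}(\mathbf{A})_{s}$ we have $(T(x),T(y))\in\Phi_{s}$. The hypothesis that $L$ is $\Phi_{s}$-saturated then yields $T(x)\in L\leftrightarrow T(y)\in L$, and by the reduction noted above this is enough to conclude $(x,y)\in \Omega^{\mathbf{A}}(\delta^{s,L})_{s}$.

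I do not anticipate any serious obstacle: both directions rely only on the universal property of the syntactic congruence and on the characterization of congruences via closure under translations (Propositions~\ref{CharacCong} and~\ref{CharacCogenCong}). The only mild subtlety is the second part, where one must notice that it is not enough to invoke the first part applied to $\delta^{s,L}$ (that would give $\Phi\subseteq\Omega^{\mathbf{A}}(\delta^{s,L})$, a stronger conclusion than the $s$-component inclusion requested), so a sort-wise argument exploiting the vanishing of $\delta^{s,L}_{u}$ for $u\neq s$ is required.
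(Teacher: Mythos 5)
Your proof of the first assertion is correct and matches the paper's: both directions rest on $\Omega^{\mathbf{A}}(L)$ being the greatest congruence contained in $\mathrm{Ker}(\mathrm{ch}_{L})$, together with the observation that $L=[L]^{\Phi}$ is equivalent to $\Phi\subseteq\mathrm{Ker}(\mathrm{ch}_{L})$. The only cosmetic difference is in the converse: the paper routes it through Corollary~\ref{IncSat and sIncSat} (antitonicity of $\Phi\mapsto\Phi\text{-}\mathrm{Sat}(A)$) applied to the inclusion $\Phi\subseteq\Omega^{\mathbf{A}}(L)$, whereas you chain the inclusions $\Phi\subseteq\Omega^{\mathbf{A}}(L)\subseteq\mathrm{Ker}(\mathrm{ch}_{L})$ directly via item (2) of Proposition~\ref{CharacCogenCong}; both are equally valid. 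For the ``Moreover'' clause the paper offers no argument at all, so your sort-wise proof is a genuine addition: the reduction of the defining condition of $\Omega^{\mathbf{A}}(\delta^{s,L})_{t}$ to translations landing in sort $s$ (since $\delta^{s,L}_{u}=\varnothing$ for $u\neq s$ makes the biconditional vacuous), the use of $\mathrm{id}_{A_{s}}\in\mathrm{Tl}_{s}(\mathbf{A})_{s}$ for one direction, and closure of $\Phi$ under translations (Proposition~\ref{CharacCong}) for the other, are all sound. Your closing remark is also well taken: applying the first assertion to $\delta^{s,L}$ yields the implication from $L=[L]^{\Phi_{s}}$ to the full inclusion $\Phi\subseteq\Omega^{\mathbf{A}}(\delta^{s,L})$, hence in particular the $s$-component inclusion, but the reverse implication genuinely requires the componentwise argument you give.
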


\begin{proof}
Let us suppose that $L = [L]^{\Phi}$. Then, since $\Omega^{\mathbf{A}}(L)$ is the greatest congruence on $\mathbf{A}$ such that $L = [L]^{\Omega^{\mathbf{A}}(L)}$, we have that $\Phi\subseteq\Omega^{\mathbf{A}}(L)$.

Reciprocally, let us suppose that $\Phi\subseteq\Omega^{\mathbf{A}}(L)$ then, by Corollary~\ref{IncSat and sIncSat}, we have that  $\Omega^{\mathbf{A}}(L)\text{-}\mathrm{Sat}(A)\subseteq\Phi\text{-}\mathrm{Sat}(A)$. Since $L$ belongs to $\Omega^{\mathbf{A}}(L)\text{-}\mathrm{Sat}(A)$, we conclude that $L\in \Phi\text{-}\mathrm{Sat}(A)$.
%
\end{proof}


%


\begin{proposition}\label{Compl}
Let $\mathbf{A}$ be a $\Sigma$-algebra and $L$ a subset of $A$. Then it happens that $\Omega^{\mathbf{A}}(L) = \Omega^{\mathbf{A}}(\complement_{A}L)$.
\end{proposition}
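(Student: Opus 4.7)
The plan is to observe that the equality follows almost immediately from the characterization of $\Omega^{\mathbf{A}}(L)$ as the greatest congruence on $\mathbf{A}$ saturating $L$, established in Proposition~\ref{CharacCogenCong}, combined with the fact (implicit in Proposition~\ref{SatOperator}) that a congruence $\Phi$ saturates $L$ if and only if it saturates $\complement_{A}L$. Indeed, Proposition~\ref{SatOperator} explicitly states that for $\Phi\in\mathrm{Eqv}(A)$ and $X\subseteq A$, $X=[X]^{\Phi}$ implies $\complement_{A}X=[\complement_{A}X]^{\Phi}$; applying this twice (and using $\complement_{A}\complement_{A}L=L$) shows that the sets $\Phi\text{-}\mathrm{Sat}(A)$-membership conditions for $L$ and for $\complement_{A}L$ coincide.

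Thus I would argue as follows. Both $\Omega^{\mathbf{A}}(L)$ and $\Omega^{\mathbf{A}}(\complement_{A}L)$ are, by Proposition~\ref{CharacCogenCong}, characterized by a universal property: namely, each is the greatest element of $\mathrm{Cgr}(\mathbf{A})$ that saturates its respective argument. Since a congruence saturates $L$ iff it saturates $\complement_{A}L$, the two universal properties define the same object, and hence $\Omega^{\mathbf{A}}(L) = \Omega^{\mathbf{A}}(\complement_{A}L)$.

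Alternatively, one may give a direct proof from the definition. For each $t\in S$ and $(x,y)\in A_{t}^{2}$, the membership conditions
\begin{gather*}
\forall s\in S\,\forall T\in\mathrm{Tl}_{t}(\mathbf{A})_{s}\ (T(x)\in L_{s}\leftrightarrow T(y)\in L_{s}),\\
\forall s\in S\,\forall T\in\mathrm{Tl}_{t}(\mathbf{A})_{s}\ (T(x)\in\complement_{A_{s}}L_{s}\leftrightarrow T(y)\in\complement_{A_{s}}L_{s})
\end{gather*}
are logically equivalent, since for any propositions $P,Q$ one has $P\leftrightarrow Q$ iff $\neg P\leftrightarrow \neg Q$. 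Hence $\Omega^{\mathbf{A}}(L)_{t}=\Omega^{\mathbf{A}}(\complement_{A}L)_{t}$ for every $t\in S$, yielding the claim.

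There is no real obstacle here; the only subtle point is to avoid circularity by not invoking Proposition~\ref{CharacSatCCog and CharacsSatCCog} in the first approach (since its proof uses Corollary~\ref{IncSat and sIncSat} rather than the present proposition). I would accordingly favour the second, direct argument, which is a one-line verification from the definition of $\Omega^{\mathbf{A}}(\farg)$ and the classical propositional equivalence between biconditionals and their negations.
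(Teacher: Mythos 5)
Your proof is correct. The paper states Proposition~\ref{Compl} without proof (it is one of the preliminary facts given ``mostly without proofs''), so there is nothing to compare against; either of your two routes closes the gap. Your second, direct argument is the right one-liner: the defining condition of $\Omega^{\mathbf{A}}(L)_{t}$ quantifies the biconditional $T(x)\in L_{s}\leftrightarrow T(y)\in L_{s}$, which is propositionally equivalent to $T(x)\in(\complement_{A}L)_{s}\leftrightarrow T(y)\in(\complement_{A}L)_{s}$, so the two relations coincide sortwise. Note that this is exactly the observation, already implicit in the paper's remark that $\Omega^{\mathbf{A}}(L)=\widetilde{\Omega}^{\mathbf{A}}(\mathrm{Ker}(\mathrm{ch}_{L}))$, that $\mathrm{Ker}(\mathrm{ch}_{L})=\mathrm{Ker}(\mathrm{ch}_{\complement_{A}L})$ (a characteristic mapping and its negation have the same kernel), after which the equality is immediate by applying $\widetilde{\Omega}^{\mathbf{A}}$. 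Your first, universal-property argument is also sound, and your worry about circularity is unfounded for the reason you yourself give in the parenthesis: Proposition~\ref{CharacSatCCog and CharacsSatCCog} rests only on Corollary~\ref{IncSat and sIncSat}, not on the present statement.
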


\begin{proposition}\label{InterCCog and CCogInter}
Let $\mathbf{A}$ be a $\Sigma$-algebra, $I$ a nonempty set in $\boldsymbol{\mathcal{U}}$, and $(L^{i})_{i\in I}$ an $I$-indexed family of subsets of $A$. Then $\bigcap_{i\in I}\Omega^{\mathbf{A}}(L^{i})\subseteq \Omega^{\mathbf{A}}(\bigcap_{i\in I}L^{i})$, i.e., $\bigcap_{i\in I}\widetilde{\Omega}^{\mathbf{A}}(\mathrm{Ker}(\mathrm{ch}_{L^{i}}))\subseteq \widetilde{\Omega}^{\mathbf{A}}(\mathrm{Ker}(\mathrm{ch}_{\bigcap_{i\in I}L^{i}}))$.
\end{proposition}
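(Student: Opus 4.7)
The plan is to exploit the characterization of $\Omega^{\mathbf{A}}(L)$ as the greatest congruence saturating $L$, given by Proposition~\ref{CharacCogenCong}. Set $L = \bigcap_{i\in I}L^{i}$ and $\Theta = \bigcap_{i\in I}\Omega^{\mathbf{A}}(L^{i})$. By the universal property (item (3) of Proposition~\ref{CharacCogenCong}), in order to conclude that $\Theta \subseteq \Omega^{\mathbf{A}}(L)$ it suffices to verify two things: first, that $\Theta$ is a congruence on $\mathbf{A}$; second, that $\Theta \subseteq \mathrm{Ker}(\mathrm{ch}_{L})$, i.e., that $L$ is $\Theta$-saturated.

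First I would observe that $\Theta$ is a congruence on $\mathbf{A}$, since $\mathrm{Cgr}(\mathbf{A})$ is an algebraic closure system on $A\times A$ (see the definition preceding the remark on $\mathbf{Cgr}(\mathbf{A})$), so arbitrary intersections of congruences are congruences; in particular the intersection over the nonempty family $(\Omega^{\mathbf{A}}(L^{i}))_{i\in I}$ lives in $\mathrm{Cgr}(\mathbf{A})$.

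Next I would establish the saturation condition directly from the definitions. Fix $s\in S$ and $(x,y)\in \Theta_{s}$; then for every $i\in I$ we have $(x,y)\in \Omega^{\mathbf{A}}(L^{i})_{s}$, and by item (2) of Proposition~\ref{CharacCogenCong} this forces $x\in L^{i}_{s}$ if and only if $y\in L^{i}_{s}$. Taking the conjunction over all $i\in I$ yields that $x\in \bigcap_{i\in I}L^{i}_{s} = L_{s}$ if and only if $y\in L_{s}$, i.e., $(x,y)\in\mathrm{Ker}(\mathrm{ch}_{L})_{s}$. Thus $\Theta\subseteq \mathrm{Ker}(\mathrm{ch}_{L})$.

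Combining the two observations and applying item (3) of Proposition~\ref{CharacCogenCong} to the congruence $\Theta$ gives $\Theta\subseteq \Omega^{\mathbf{A}}(L)$, which is the desired inclusion. The reformulation $\bigcap_{i\in I}\widetilde{\Omega}^{\mathbf{A}}(\mathrm{Ker}(\mathrm{ch}_{L^{i}}))\subseteq \widetilde{\Omega}^{\mathbf{A}}(\mathrm{Ker}(\mathrm{ch}_{\bigcap_{i\in I}L^{i}}))$ then follows from the remark that $\Omega^{\mathbf{A}} = \widetilde{\Omega}^{\mathbf{A}}\circ \mathrm{Ker}(\mathrm{ch}_{(\cdot)})$. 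There is no real obstacle here: the argument is essentially a one-line calculation once one recognizes that the only non-trivial ingredient is the preservation of congruence properties under intersection, and that the biconditional ``$x\in L^{i}_{s}\leftrightarrow y\in L^{i}_{s}$ for all $i$'' immediately implies the biconditional for the intersection (whereas the converse direction, which would yield equality, fails in general, explaining why the statement is an inclusion rather than an equality).
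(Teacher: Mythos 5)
Your proof is correct and follows essentially the same route as the paper's: both reduce the claim, via part (3) of Proposition~\ref{CharacCogenCong}, to checking that $\bigcap_{i\in I}\Omega^{\mathbf{A}}(L^{i})$ is a congruence contained in $\mathrm{Ker}(\mathrm{ch}_{\bigcap_{i\in I}L^{i}})$, the latter following from part (2) together with the observation that a family of biconditionals $x\in L^{i}_{s}\leftrightarrow y\in L^{i}_{s}$ implies the biconditional for the intersection. You merely make explicit (the congruence property of the intersection) what the paper leaves implicit, and your closing remark about why equality fails in general is a correct and worthwhile aside.
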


\begin{proof}
To prove that $\bigcap_{i\in I}\Omega^{\mathbf{A}}(L^{i}) \subseteq \Omega^{\mathbf{A}}(\bigcap_{i\in I}L^{i})$ it suffices, by part (3) of Propo\-si\-tion~\ref{CharacCogenCong}, to verify that $\bigcap_{i\in I}\Omega^{\mathbf{A}}(L^{i}) \subseteq \mathrm{Ker}(\mathrm{ch}_{\bigcap_{i\in I}L^{i}})$. But  $\bigcap_{i\in I}\Omega^{\mathbf{A}}(L^{i}) \subseteq \bigcap_{i\in I}\mathrm{Ker}(\mathrm{ch}_{L^{i}})$ and, in addition, we have that $\bigcap_{i\in I}\mathrm{Ker}(\mathrm{ch}_{L^{i}}) \subseteq \mathrm{Ker}(\mathrm{ch}_{\bigcap_{i\in I}L^{i}})$. Therefore $\bigcap_{i\in I}\Omega^{\mathbf{A}}(L^{i}) \subseteq \mathrm{Ker}(\mathrm{ch}_{\bigcap_{i\in I}L^{i}})$.
\end{proof}



With regard to the following proposition, it must be borne in mind that, for every set $C$, $\mathrm{Ker}(\mathrm{ch}_{\varnothing}) = \mathrm{Ker}(\mathrm{ch}_{C}) = \nabla_{C}$.

\begin{proposition}\label{TAntiTrans}
Let $\mathbf{A}$ be a $\Sigma$-algebra, $L$ a subset of $A$, $t$, $s\in S$, and $T\in\mathrm{Tl}_{t}(\mathbf{A})_{s}$. Then $\Omega^{\mathbf{A}}(L)\subseteq \Omega^{\mathbf{A}}(T^{-1}[L])$, i.e., $\widetilde{\Omega}^{\mathbf{A}}(\mathrm{Ker}(\mathrm{ch}_{L}))\subseteq \widetilde{\Omega}^{\mathbf{A}}(\mathrm{Ker}(\mathrm{ch}_{T^{-1}[L]}))$, where, for every $u\in S-\{t\}$,  $\mathrm{Ker}(\mathrm{ch}_{T^{-1}[L]})_{u} = \mathrm{Ker}(\mathrm{ch}_{\varnothing}) = \nabla_{A_{u}}$ and
$$
\mathrm{Ker}(\mathrm{ch}_{T^{-1}[L]})_{t} = \mathrm{Ker}(\mathrm{ch}_{T^{-1}[L_{s}]}) = \{(x,y)\in A_{t}^{2}\mid T(x)\in L_{s}\leftrightarrow T(y)\in L_{s}\}.
$$
\end{proposition}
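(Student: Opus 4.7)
The plan is to exploit the universal characterization of the syntactic congruence given by Proposition~\ref{CharacCogenCong}(3): $\Omega^{\mathbf{A}}(T^{-1}[L])$ is the largest congruence on $\mathbf{A}$ contained in $\mathrm{Ker}(\mathrm{ch}_{T^{-1}[L]})$. Since Proposition~\ref{CharacCogenCong}(1) guarantees that $\Omega^{\mathbf{A}}(L)$ is itself a congruence on $\mathbf{A}$, the required inclusion $\Omega^{\mathbf{A}}(L)\subseteq \Omega^{\mathbf{A}}(T^{-1}[L])$ will follow at once if we can verify that $\Omega^{\mathbf{A}}(L)\subseteq \mathrm{Ker}(\mathrm{ch}_{T^{-1}[L]})$, i.e., that $\Omega^{\mathbf{A}}(L)$ saturates $T^{-1}[L]$. (Equivalently, one may invoke Proposition~\ref{PSat and sPSat} and show that $T^{-1}[L]$ lies in $\Omega^{\mathbf{A}}(L)\text{-}\mathrm{Sat}(A)$.)

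I would then carry out this saturation check componentwise. For every sort $u\in S-\{t\}$ the subset $T^{-1}[L]_{u}$ equals $\varnothing$, so by the convention recalled just before the proposition one has $\mathrm{Ker}(\mathrm{ch}_{T^{-1}[L]})_{u}=\mathrm{Ker}(\mathrm{ch}_{\varnothing})=\nabla_{A_{u}}$, and the inclusion $\Omega^{\mathbf{A}}(L)_{u}\subseteq \nabla_{A_{u}}$ is trivial. For the sort $u=t$ the task reduces to showing that whenever $(x,y)\in \Omega^{\mathbf{A}}(L)_{t}$, the equivalence $T(x)\in L_{s}\leftrightarrow T(y)\in L_{s}$ holds. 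But this is exactly what one obtains by specializing the defining clause of $\Omega^{\mathbf{A}}(L)_{t}$, which quantifies universally over all sorts $s'\in S$ and all translations $T'\in \mathrm{Tl}_{t}(\mathbf{A})_{s'}$, to the particular pair $(s',T')=(s,T)$.

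Thus the argument is essentially a single specialization of the universal quantifier built into the definition of the cogenerated congruence, combined with the observation that $T^{-1}[L]$ is concentrated at the single sort $t$ (so the other components impose no constraint). There is no real obstacle: the only thing to be careful about is the bookkeeping implicit in Definition~\ref{DAntiTrans}, namely that $T^{-1}[L]_{u}=\varnothing$ for $u\neq t$, which ensures that the relation $\mathrm{Ker}(\mathrm{ch}_{T^{-1}[L]})$ coincides with $\nabla_{A_{u}}$ outside the sort $t$ and so poses no restriction on $\Omega^{\mathbf{A}}(L)$.
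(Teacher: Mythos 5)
Your proof is correct; the paper states this proposition without proof (it is among the preliminaries given "mostly without proofs"), and your argument is exactly the intended one given the surrounding machinery. Reducing the claim via Proposition~\ref{CharacCogenCong}(3) to the inclusion $\Omega^{\mathbf{A}}(L)\subseteq \mathrm{Ker}(\mathrm{ch}_{T^{-1}[L]})$, noting that the sorts $u\neq t$ impose no constraint because $T^{-1}[L]_{u}=\varnothing$, and then specializing the universal quantifier in the definition of $\Omega^{\mathbf{A}}(L)_{t}$ to the particular translation $T$ is precisely the argument the authors rely on.
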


\begin{proposition}\label{TAntiHom}
Let $f$ be a homomorphism from $\mathbf{A}$ to $\mathbf{B}$ and $M$ a subset of $B$. Then
$(f\times f)^{-1}[\Omega^{\mathbf{B}}(M)]\subseteq \Omega^{\mathbf{A}}(f^{-1}[M])$. Moreover, if $f$ is an epimorphism, then
$(f\times f)^{-1}[\Omega^{\mathbf{B}}(M)] = \Omega^{\mathbf{A}}(f^{-1}[M])$.
\end{proposition}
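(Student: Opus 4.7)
My plan is to deduce the statement from the analogous proposition for $\widetilde{\Omega}$ on $S$-sorted equivalences (the preceding Proposition also labelled TAntiHom), combined with the identity $\Omega^{\mathbf{A}}(L) = \widetilde{\Omega}^{\mathbf{A}}(\mathrm{Ker}(\mathrm{ch}_{L}))$ recorded in the remark following Proposition~\ref{CogenCong}. The whole reduction rests on one pleasant compatibility between the operator $\mathrm{Ker}(\mathrm{ch}_{-})$ and inverse image formation along $f$.

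Concretely, the lemma I would establish first is the identity
\[
(f\times f)^{-1}[\mathrm{Ker}(\mathrm{ch}_{M})] = \mathrm{Ker}(\mathrm{ch}_{f^{-1}[M]}),
\]
which is immediate from the definitions: for each $s\in S$ and each $(x,y)\in A_{s}^{2}$, membership in either $s$-th component reduces to the biconditional ``$f_{s}(x)\in M_{s}$ iff $f_{s}(y)\in M_{s}$''. With this lemma in hand, both conclusions of the proposition drop out of the chain
\[
(f\times f)^{-1}[\Omega^{\mathbf{B}}(M)] = (f\times f)^{-1}[\widetilde{\Omega}^{\mathbf{B}}(\mathrm{Ker}(\mathrm{ch}_{M}))] \;\subseteq\; \widetilde{\Omega}^{\mathbf{A}}((f\times f)^{-1}[\mathrm{Ker}(\mathrm{ch}_{M})]) = \widetilde{\Omega}^{\mathbf{A}}(\mathrm{Ker}(\mathrm{ch}_{f^{-1}[M]})) = \Omega^{\mathbf{A}}(f^{-1}[M]),
\]
in which the middle step invokes the first part of the equivalence-version TAntiHom, and becomes an equality under the epimorphism hypothesis by the second part of that same proposition.

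The real work is therefore borne by the equivalence-level result, whose epimorphism half I would independently check via a translation-lifting argument: given $T'\in\mathrm{Etl}_{t}(\mathbf{B})_{s}$ realised as $T'(z) = G^{\mathbf{B}}_{\sigma}(b_{0},\ldots,b_{i-1},z,b_{i+1},\ldots,b_{\bb{w}-1})$, surjectivity of $f$ allows one to pick preimages $a_{j}\in A_{w_{j}}$ for the ``frozen'' arguments $b_{j}$, yielding a $T\in\mathrm{Etl}_{t}(\mathbf{A})_{s}$ with $f_{s}\circ T = T'\circ f_{t}$; the lift then propagates to the whole of $\mathrm{Tl}_{t}(\mathbf{A})_{s}$ by functorial composition along the chain of sorts $s_{0}=t,s_{1},\ldots,s_{n}=s$. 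The main obstacle, should one forgo the reduction and aim at a direct proof of the subset version, is precisely this lifting step: without surjectivity one cannot in general pull a translation of $\mathbf{B}$ back through $f$, and the $\Omega^{\mathbf{A}}(f^{-1}[M]) \subseteq (f\times f)^{-1}[\Omega^{\mathbf{B}}(M)]$ inclusion genuinely fails in the absence of the epimorphism assumption, which is exactly why the second half of the proposition is stated conditionally.
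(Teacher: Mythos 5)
Your reduction is correct, and since the paper states both this proposition and its equivalence-level counterpart without proof, your factorization through the identity $\Omega^{\mathbf{A}} = \widetilde{\Omega}^{\mathbf{A}}\circ\mathrm{Ker}(\mathrm{ch}_{(\cdot)})$ recorded after Proposition~\ref{CogenCong}, together with the lemma $(f\times f)^{-1}[\mathrm{Ker}(\mathrm{ch}_{M})] = \mathrm{Ker}(\mathrm{ch}_{f^{-1}[M]})$, is exactly the route the paper's own remarks point to. Two small points: you only sketch the epimorphism half of the equivalence-level result, whereas the unconditional inclusion also needs an argument --- the dual and easier observation that any translation $T$ of $\mathbf{A}$ pushes forward along an arbitrary homomorphism $f$ to a translation $T'$ of $\mathbf{B}$ with $f_{s}\circ T = T'\circ f_{t}$, obtained by applying $f$ to the frozen arguments and using the homomorphism identity. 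Also, your lifting argument reads ``epimorphism'' as ``surjective homomorphism''; this matches the paper's usage (e.g.\ in the proof of Proposition~\ref{Filter}) but is worth stating explicitly, since categorical epimorphisms of many-sorted algebras need not be surjective.
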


\begin{remark}
Let $\mathrm{P}^{-}$ be the functor from $\mathbf{Alg}(\Sigma)^{\mathrm{op}}_{\mathrm{epi}}$ to $\mathbf{Set}$ which assigns to a $\Sigma$-algebra $\mathbf{A}$ the set $\mathrm{Sub}(A)$, and to an epimorphism $f\colon \mathbf{A}\mor \mathbf{B}$ the mapping $f^{-1}[\cdot]$ from $\mathrm{Sub}(B)$ to  $\mathrm{Sub}(A)$.  Then the mapping $\Omega$ from $\mathrm{Alg}(\Sigma)$  to $\mathrm{Mor}(\mathbf{Set})$ which assigns to a $\Sigma$-algebra $\mathbf{A}$ the mapping $\Omega^{\mathbf{A}}$ from $\mathrm{Sub}(A)$ to $\mathrm{Cgr}(\mathbf{A})$ is a natural transformation from $\mathrm{P}^{-}$ to $\mathrm{Cgr}$, since, for every epimorphism $f\colon \mathbf{A}\mor \mathbf{B}$, we have that $(f\times f)^{-1}[\cdot]\circ \Omega^{\mathbf{B}} = \Omega^{\mathbf{A}}\circ f^{-1}[\cdot]$,
i.e., for every $M\subseteq B$, $(f\times f)^{-1}[\Omega^{\mathbf{B}}(M)] = \Omega^{\mathbf{A}}(f^{-1}[M])$.
\end{remark}

We finish this section by reviewing a few aspects of recognizability for subsets of the underlying many-sorted set of an arbitrary many-sorted algebra. But before going any further it is worth noting what G\'{e}cseg and Steinby, in~\cite{GS84}, at the beginning of Chapter 2, wrote: ``\ldots one should note that there are often many ways to generalize from languages [sets of words of the underlying set of a free monoid on a set, \emph{we add}] to forest [sets of terms of the underlying set of a free algebra on a set, \emph{we add}], and a right choice among the alternatives is essential if one wants to gen\-er\-al\-ize the corresponding results, too.'' In this regard, concerning the con\-gru\-ences on a many-sorted algebra---on which, ultimately, the notion of recognizability will be founded---we have two nonequivalent ways of defining the concept of congruence of ``finite'' index on it, depending on the notion of finiteness we choose: The categorial or the non-categorial notion of finiteness.



We shall now go on to define both notions of congruence of finite index on a many-sorted algebra.

\begin{definition}
Let $\mathbf{A}$ be a $\Sigma$-algebra and $\Phi\in\Cgr(\mathbf{A})$. We will say that $\Phi$ is of \emph{finite index}, abbreviated as $\mathrm{fi}$, if  $A/{\Phi}\in \mathrm{Sub}_{\mathrm{f}}(A^{\wp})$, i.e., if $\mathrm{card}(\mathrm{supp}_{S}(A/\Phi))$ is finite and, for every $s\in \mathrm{supp}_{S}(A/\Phi)$, $A_{s}/\Phi_{s}$ is finite. We will denote by $\mathrm{Cgr}_{\mathrm{fi}}(\mathbf{A})$ the set of all congruences on $\mathbf{A}$ of finite index. Moreover, we will say that $\Phi$ is of $S$-\emph{finite index} or of \emph{locally finite index}, abbreviated as $\mathrm{lfi}$, if $A/{\Phi}\in \mathrm{Sub}_{\mathrm{lf}}(A^{\wp})$ i.e., if, for every $s\in S$, $\mathrm{card}(A_{s}/\Phi_{s})<\aleph_{0}$. We will denote by $\mathrm{Cgr}_{\mathrm{lfi}}(\mathbf{A})$ the set of all congruences on $\mathbf{A}$ of locally finite index.
\end{definition}

Let us note that in~\cite{Cou89}, on p.~99, and in~\cite{Cou96}, on p.~30, and for a set of sorts $S$, eventually infinite, and a many-sorted algebra $\mathbf{A}$ such that, for every $s\in S$, $A_{s}\neq\varnothing$, Courcelle says that a  congruence on $\mathbf{A}$ is \emph{locally finite} if it has finitely many classes of each sort. To this we add that he uses such a type of congruence to investigate, for a many-sorted algebra $\mathbf{A}$, subject to satisfy the above condition, and a sort $s\in S$, the recognizable subsets of $A_{s}$. So, if we disregard the condition imposed by Courcelle on the many-sorted algebras, our notion of congruence of locally finite index coincides with Courcelle's notion of locally finite congruence.

\begin{proposition}
Let $\mathbf{A}$ be a $\Sigma$-algebra. Then $\mathrm{Cgr}_{\mathrm{fi}}(\mathbf{A})\neq\varnothing$ if and only if  $\mathrm{supp}_{S}(\mathbf{A})$ is finite. Moreover, $\mathrm{card}(S)<\aleph_{0}$ if and only if, for every $\Sigma$-algebra $\mathbf{A}$, $\mathrm{Cgr}_{\mathrm{fi}}(\mathbf{A})\neq\varnothing$. Therefore, the following conditions are equivalent: (1) for every $\Sigma$-algebra $\mathbf{A}$, $\mathrm{supp}_{S}(\mathbf{A})$ is finite, (2) for every $\Sigma$-algebra $\mathbf{A}$, $\mathrm{Cgr}_{\mathrm{fi}}(\mathbf{A})\neq\varnothing$, and (3) $S$ is  finite.
\end{proposition}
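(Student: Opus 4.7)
The plan hinges on one simple observation: for every $\Sigma$-algebra $\mathbf{A}$ and every congruence $\Phi\in\mathrm{Cgr}(\mathbf{A})$, the supports coincide, i.e.\ $\mathrm{supp}_{S}(A/\Phi)=\mathrm{supp}_{S}(\mathbf{A})$, because for every $s\in S$ the set $A_{s}/\Phi_{s}$ is nonempty exactly when $A_{s}$ is nonempty. From this, finiteness of $A/\Phi$ (as defined in the Preliminaries, i.e.\ finite support plus finite fibers over the support) forces $\mathrm{supp}_{S}(\mathbf{A})$ to be finite, giving the nontrivial direction of the first biconditional. For the converse, whenever $\mathrm{supp}_{S}(\mathbf{A})$ is finite I would exhibit the witness $\Phi=\nabla_{\mathbf{A}}$, the greatest congruence on $\mathbf{A}$: its quotient $A/\nabla_{\mathbf{A}}$ has the same (finite) support as $\mathbf{A}$ and each nonempty fiber $A_{s}/\nabla_{A_{s}}$ is a singleton, hence $\nabla_{\mathbf{A}}\in \mathrm{Cgr}_{\mathrm{fi}}(\mathbf{A})$.

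For the second biconditional, one direction is immediate: if $\mathrm{card}(S)<\aleph_{0}$, then for every $\Sigma$-algebra $\mathbf{A}$ the support $\mathrm{supp}_{S}(\mathbf{A})\subseteq S$ is automatically finite, and the first biconditional yields $\mathrm{Cgr}_{\mathrm{fi}}(\mathbf{A})\neq\varnothing$. For the other direction I would argue contrapositively: if $S$ is infinite, apply the first biconditional to the standard final $\Sigma$-algebra $\mathbf{1}^{S}$, whose underlying $S$-sorted set is $(1)_{s\in S}$ and therefore has support equal to $S$. Since $\mathrm{supp}_{S}(\mathbf{1}^{S})$ is infinite, the first biconditional gives $\mathrm{Cgr}_{\mathrm{fi}}(\mathbf{1}^{S})=\varnothing$, so the universally quantified statement fails.

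The final equivalence of (1), (2), (3) then follows by assembling what has just been done. The implication (2)$\Leftrightarrow$(3) is exactly the second biconditional, and the implication (3)$\Rightarrow$(1) is trivial since $\mathrm{supp}_{S}(\mathbf{A})\subseteq S$. The only remaining implication is (1)$\Rightarrow$(3), which I would again establish by taking $\mathbf{A}=\mathbf{1}^{S}$: from $\mathrm{supp}_{S}(\mathbf{1}^{S})=S$ and (1) one reads off directly that $S$ is finite. There is no genuine obstacle in any of these steps; the whole argument reduces to the support identity for quotients together with the existence of the two canonical witnesses $\nabla_{\mathbf{A}}$ and $\mathbf{1}^{S}$.
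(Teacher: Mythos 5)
Your proof is correct and follows essentially the same route as the paper, which likewise witnesses non-emptiness by $\nabla_{\mathbf{A}}$ when $S$ is finite and uses the final algebra $\mathbf{1}$ to show failure when $S$ is infinite. The only difference is that you spell out the first biconditional (via the support identity $\mathrm{supp}_{S}(A/\Phi)=\mathrm{supp}_{S}(A)$), which the paper dismisses as straightforward.
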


\begin{proof}
Since the first assertion is straightforward, we restrict ourselves to verify the second one. If the set of sorts $S$ is finite, then, for every $\Sigma$-algebra $\mathbf{A}$, $\nabla_{\mathbf{A}}\in \mathrm{Cgr}_{\mathrm{fi}}(\mathbf{A})$. If $S$ is infinite, then the final $\Sigma$-algebra $\mathbf{1}$ is such that $\mathrm{Cgr}_{\mathrm{fi}}(\mathbf{1}) = \varnothing$.
\end{proof}

\begin{remark}
If $S$ is finite, then, obviously, for every $X\in \boldsymbol{\mathcal{U}}^{S}$, $\mathrm{supp}_{S}(\mathbf{T}_{\Sigma}(X))$ is finite. If $S$ is infinite, then there exists an $X\in \boldsymbol{\mathcal{U}}^{S}$ such that $\mathrm{supp}_{S}(\mathbf{T}_{\Sigma}(X))$ is infinite, e.g., for $X = 1 = (1)_{s\in S}$, we have that $\mathrm{supp}_{S}(\mathbf{T}_{\Sigma}(1)) = S$, thus $\mathrm{supp}_{S}(\mathbf{T}_{\Sigma}(1))$ is infinite. Hence, if, for every $X\in \boldsymbol{\mathcal{U}}^{S}$, $\mathrm{supp}_{S}(\mathbf{T}_{\Sigma}(X))$ is finite, then $S$ is finite. Therefore, the following conditions are equivalent: (1) for every $X\in \boldsymbol{\mathcal{U}}^{S}$, $\mathrm{supp}_{S}(\mathbf{T}_{\Sigma}(X))$ is finite, (2) for every $X\in \boldsymbol{\mathcal{U}}^{S}$,  $\mathrm{Cgr}_{\mathrm{fi}}(\mathbf{T}_{\Sigma}(X))\neq\varnothing$, and (3) $S$ is  finite.
\end{remark}


\begin{proposition}\label{Filter}
Let $\mathbf{A}$ be a $\Sigma$-algebra. Then
\begin{enumerate}
\item for every $n\in \mathbb{N}-\{0\}$ and every $(\Phi^{i})_{i\in n}\in \mathrm{Cgr}_{\mathrm{fi}}(\mathbf{A})^{n}$,  $\bigcap_{i\in n}\Phi^{i}\in \mathrm{Cgr}_{\mathrm{fi}}(\mathbf{A})$;
\item $\mathrm{Cgr}_{\mathrm{fi}}(\mathbf{A})$ is an upward closed set of the lattice $\mathbf{Cgr}\mathbf{(A)}$, i.e., for every $\Phi,\Psi\in \mathrm{Cgr}\mathbf{(A)}$, if $\Phi\subseteq \Psi$ and $\Phi \in \mathrm{Cgr}_{\mathrm{fi}}(\mathbf{A})$, then $\Psi \in \mathrm{Cgr}_{\mathrm{fi}}(\mathbf{A})$.
\end{enumerate}
Therefore, if $\mathrm{supp}_{S}(\mathbf{A})$ is finite, then $\mathrm{Cgr}_{\mathrm{fi}}(\mathbf{A})$ is a filter of the lattice $\mathbf{Cgr}\mathbf{(A)}$.
\end{proposition}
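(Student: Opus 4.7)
The plan is to prove (1) by a straightforward induction on $n$, reducing to the binary case, and to prove (2) by exploiting the canonical surjection between quotients associated to refinements. The filter conclusion will then follow by combining (1), (2), and the nonemptiness statement from the preceding proposition.

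For (1), I would first observe that the case $n=1$ is trivial and set up an induction in which the inductive step reduces to intersecting two congruences of finite index, say $\Phi,\Psi\in\mathrm{Cgr}_{\mathrm{fi}}(\mathbf{A})$. For the binary case, the key observation is that, for every $s\in S$, the mapping $[a]_{(\Phi\cap\Psi)_{s}}\mapsto ([a]_{\Phi_{s}},[a]_{\Psi_{s}})$ is a well-defined injection from $A_{s}/(\Phi\cap\Psi)_{s}$ into $(A_{s}/\Phi_{s})\times(A_{s}/\Psi_{s})$. Hence, if both $A_{s}/\Phi_{s}$ and $A_{s}/\Psi_{s}$ are finite, then so is $A_{s}/(\Phi\cap\Psi)_{s}$. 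Moreover, since quotienting by an $S$-sorted equivalence preserves nonemptiness componentwise, we have $\mathrm{supp}_{S}(A/(\Phi\cap\Psi))=\mathrm{supp}_{S}(A)=\mathrm{supp}_{S}(A/\Phi)$, which is finite by hypothesis. Both finiteness conditions in the definition of $\mathrm{Cgr}_{\mathrm{fi}}(\mathbf{A})$ are therefore met.

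For (2), let $\Phi\subseteq\Psi$ with $\Phi\in\mathrm{Cgr}_{\mathrm{fi}}(\mathbf{A})$. For each $s\in S$, the canonical projection $\mathrm{p}_{\Phi_{s},\Psi_{s}}\colon A_{s}/\Phi_{s}\mor A_{s}/\Psi_{s}$ is a surjection, so $\mathrm{card}(A_{s}/\Psi_{s})\leq\mathrm{card}(A_{s}/\Phi_{s})$. Combining this with the identity $\mathrm{supp}_{S}(A/\Psi)=\mathrm{supp}_{S}(A)=\mathrm{supp}_{S}(A/\Phi)$ (finite by hypothesis), we conclude that $\Psi\in\mathrm{Cgr}_{\mathrm{fi}}(\mathbf{A})$.

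Finally, to obtain the filter assertion, it suffices to note that when $\mathrm{supp}_{S}(\mathbf{A})$ is finite, the preceding proposition yields $\mathrm{Cgr}_{\mathrm{fi}}(\mathbf{A})\neq\varnothing$ (indeed, $\nabla_{\mathbf{A}}$ belongs to it, since each of its nonempty classes is the whole of $A_{s}$, giving a single class per sort in $\mathrm{supp}_{S}(\mathbf{A})$). Combined with (1) and (2), this establishes that $\mathrm{Cgr}_{\mathrm{fi}}(\mathbf{A})$ is a filter of $\mathbf{Cgr}(\mathbf{A})$. I anticipate no real obstacle: the whole argument is a mild many-sorted adaptation of the single-sorted fact, and the only point requiring minor care is to verify that passing to quotients by an $S$-sorted equivalence preserves supports, so that the ``finite support'' clause in the definition of finite index is automatically inherited by both intersections and coarser refinements.
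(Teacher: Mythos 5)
Your proposal is correct and follows essentially the same route as the paper: for (1) the paper uses the subdirect embedding $[a]_{\bigcap_{i\in n}\Phi^{i}_{s}}\mapsto([a]_{\Phi^{i}_{s}})_{i\in n}$ of $\mathbf{A}/\bigcap_{i\in n}\Phi^{i}$ into the finite product $\prod_{i\in n}\mathbf{A}/\Phi^{i}$ (your binary injection plus induction is the same idea, merely organized componentwise), and for (2) it uses the same surjection $\mathrm{pr}_{\Phi,\Psi}$. Your explicit bookkeeping of supports, which the paper leaves implicit, is a harmless and correct addition.
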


\begin{proof}
Let $n$ be a non-zero natural number and $(\Phi^{i})_{i\in n}\in \mathrm{Cgr}_{\mathrm{fi}}(\mathbf{A})^{n}$. Then the congruence $\bigcap_{i\in n}\Phi^{i}$ on $\mathbf{A}$ is of finite index because $\prod_{i\in n}\mathbf{A}/\Phi^{i}$ is finite and the homomorphism $\mathrm{p}^{(\Phi^{i})_{i\in n}}$ from $\mathbf{A}/\bigcap_{i\in n}\Phi^{i}$ to $\prod_{i\in n}\mathbf{A}/\Phi^{i}$, defined, for every $s\in S$, as follows:
$$
\textstyle
\mathrm{p}^{(\Phi^{i})_{i\in n}}_{s}
\nfunction
{A_{s}/\bigcap_{i\in n}\Phi^{i}_{s}}
{\prod_{i\in n}A_{s}/\Phi^{i}_{s}}
{\left[a\right]_{\bigcap_{i\in n}\Phi^{i}_{s}}}
{\left(\left[a\right]_{\Phi^{i}_{s}}\right)_{i\in n}}
$$
is a subdirect embedding of $\mathbf{A}/\bigcap_{i\in n}\Phi^{i}$ into $\prod_{i\in n}\mathbf{A}/\Phi^{i}$. Let us note that $\mathrm{p}^{(\Phi^{i})_{i\in n}}$ is $\left<\mathrm{p}_{\bigcap_{i\in n}\Phi^{i},\Phi^{i}}\right>_{i\in n}$ (the unique homomorphism from $\mathbf{A}/\bigcap_{i\in n}\Phi^{i}$ to $\prod_{i\in n}\mathbf{A}/\Phi^{i}$ such that, for every $i\in n$, $\mathrm{pr}_{\Phi^{i}}\circ \left<\mathrm{p}_{\bigcap_{i\in n}\Phi^{i},\Phi^{i}}\right>_{i\in n} = \mathrm{p}_{\bigcap_{i\in n}\Phi^{i},\Phi^{i}}$, where $\mathrm{p}_{\bigcap_{i\in n}\Phi^{i},\Phi^{i}}$ and  $\mathrm{pr}_{\Phi^{i}}$ are the canonical homomorphisms from $\mathbf{A}/\bigcap_{i\in n}\Phi^{i}$ and $\prod_{i\in n}\mathbf{A}/\Phi^{i}$, respectively, to $\mathbf{A}/\Phi^{i}$).

Now let $\Phi$ be a congruence on $\mathbf{A}$ of finite index and $\Psi$ a congruence on $\mathbf{A}$ such that $\Phi\subseteq \Psi$. Then, since $\mathbf{A}/\Phi$ is finite and $\mathrm{pr}_{\Phi,\Psi}$ (the unique homomorphism from $\mathbf{A}/\Phi$ to $\mathbf{A}/\Psi$ such that $\mathrm{pr}_{\Phi,\Psi}\circ \mathrm{pr}_{\Phi} = \mathrm{pr}_{\Psi}$) is surjective, $\mathbf{A}/\Psi$ is finite. Hence $\Psi\in\mathrm{Cgr}_{\mathrm{fi}}(\mathbf{A})$.
\end{proof}


\begin{proposition}
Let $\mathbf{A}$ be a $\Sigma$-algebra. Then $\mathrm{Cgr}_{\mathrm{lfi}}(\mathbf{A})$ is a filter of the lattice $\mathbf{Cgr}\mathbf{(A)}$.
\end{proposition}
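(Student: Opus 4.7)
The plan is to verify the three defining conditions of a filter of $\mathbf{Cgr}(\mathbf{A})$: non-emptiness, closure under finite meets (which in $\mathbf{Cgr}(\mathbf{A})$ are intersections), and upward closure. The advantage over the proof of Proposition~\ref{Filter} is that, since ``locally finite'' is a sort-wise condition, we never need the set $S$ itself to be finite; everything reduces to checking finiteness of $A_s/\Phi_s$ separately for each $s\in S$.

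First, I would note that $\nabla_{\mathbf{A}}\in\mathrm{Cgr}_{\mathrm{lfi}}(\mathbf{A})$: for every $s\in S$, $A_s/(\nabla_{\mathbf{A}})_s$ has at most one element and is therefore finite, so $\mathrm{Cgr}_{\mathrm{lfi}}(\mathbf{A})$ is non-empty. Next, for closure under finite intersections, I would argue exactly as in the proof of Proposition~\ref{Filter}, but on a sort-by-sort basis. Given $n\in\mathbb{N}-\{0\}$ and $(\Phi^{i})_{i\in n}\in\mathrm{Cgr}_{\mathrm{lfi}}(\mathbf{A})^{n}$, the homomorphism $\mathrm{p}^{(\Phi^{i})_{i\in n}}=\langle \mathrm{p}_{\bigcap_{i\in n}\Phi^{i},\Phi^{i}}\rangle_{i\in n}$ from $\mathbf{A}/\bigcap_{i\in n}\Phi^{i}$ to $\prod_{i\in n}\mathbf{A}/\Phi^{i}$ is a subdirect embedding, so at each sort $s$ the component
\[
\mathrm{p}^{(\Phi^{i})_{i\in n}}_{s}\colon A_{s}/\textstyle\bigcap_{i\in n}\Phi^{i}_{s}\mor \prod_{i\in n}A_{s}/\Phi^{i}_{s}
\]
is injective. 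Since each $A_{s}/\Phi^{i}_{s}$ is finite by hypothesis and $n$ is finite, $\prod_{i\in n}A_{s}/\Phi^{i}_{s}$ is finite, and therefore so is $A_{s}/\bigcap_{i\in n}\Phi^{i}_{s}$. Hence $\bigcap_{i\in n}\Phi^{i}\in\mathrm{Cgr}_{\mathrm{lfi}}(\mathbf{A})$.

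Finally, to establish upward closure, let $\Phi\in\mathrm{Cgr}_{\mathrm{lfi}}(\mathbf{A})$ and $\Psi\in\mathrm{Cgr}(\mathbf{A})$ with $\Phi\subseteq\Psi$. Then the canonical homomorphism $\mathrm{pr}_{\Phi,\Psi}\colon\mathbf{A}/\Phi\mor\mathbf{A}/\Psi$, characterized by $\mathrm{pr}_{\Phi,\Psi}\circ\mathrm{pr}_{\Phi}=\mathrm{pr}_{\Psi}$, is surjective (since $\mathrm{pr}_{\Psi}$ is). Therefore, for each $s\in S$, $\mathrm{pr}_{\Phi,\Psi,s}\colon A_{s}/\Phi_{s}\mor A_{s}/\Psi_{s}$ is surjective, and as $A_{s}/\Phi_{s}$ is finite, so is $A_{s}/\Psi_{s}$. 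Thus $\Psi\in\mathrm{Cgr}_{\mathrm{lfi}}(\mathbf{A})$.

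I do not anticipate any genuine obstacle: the argument is a direct transcription of the proof of Proposition~\ref{Filter}, carried out pointwise in the sort index. The only point worth highlighting is that, in contrast to the finite-index case, no hypothesis on the cardinality of $\mathrm{supp}_{S}(\mathbf{A})$ is required, precisely because local finiteness does not constrain the support.
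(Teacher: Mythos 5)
Your proof is correct, and it is exactly the argument the paper intends: the proposition is stated without proof precisely because it is the sort-wise transcription of the proof of Proposition~\ref{Filter} that you carry out (subdirect embedding into a finite product for meets, surjectivity of $\mathrm{pr}_{\Phi,\Psi}$ for upward closure). Your closing remark is also on point --- non-emptiness holds unconditionally here via $\nabla_{\mathbf{A}}$, whereas in the finite-index case it forces the hypothesis that $\mathrm{supp}_{S}(\mathbf{A})$ be finite.
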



As for congruences of finite index on a many-sorted algebra, for many-sorted languages we have, on the one hand, those whose definition is based on the categorial notion of finiteness and which will be called recognizable, and, on the other, those whose definition is founded on the notion of local finiteness and which will be called locally finitely recognizable, abbreviated to $\mathrm{lf}$ recognizable.

\begin{definition}
Let $\mathbf{A}$ be a $\Sigma$-algebra, $T\subseteq S$, and $L\subseteq A\!\!\upharpoonright_{T} = (A_{t})_{t\in T}$.
\begin{enumerate}
\item We will say that $L$ is $T\text{-}$\emph{recognizable} if there exists a finite $\Sigma$-algebra $\mathbf{B}$, a homomorphism $f\colon\mathbf{A}\mor\mathbf{B}$, and a subset $M$ of $B\!\!\upharpoonright_{T}$ such that
    $(f\!\!\upharpoonright_{T})^{-1}[M] = L$, where $(f\!\!\upharpoonright_{T})^{-1}[M] = (f^{-1}_{t}[M_{t}])_{t\in T}$. We will denote by $\mathrm{Rec}_{T}(\mathbf{A})$ the set of all subsets of $A\!\!\upharpoonright_{T}$ which are $T\text{-}$recognizable.
\item We will say that $L$ is $(\mathrm{lf},T)\text{-}$\emph{recognizable} if there exists a locally finite ($\equiv$ $S$-finite) $\Sigma$-algebra $\mathbf{B}$, a homomorphism $f\colon\mathbf{A}\mor\mathbf{B}$, and a subset $M$ of $B\!\!\upharpoonright_{T}$ such that $L = ((f\!\!\upharpoonright_{T})^{-1}[M]$. We will denote by $\mathrm{Rec}_{\mathrm{lf},T}(\mathbf{A})$ the set of all subsets of $A\!\!\upharpoonright_{T}$ which are $(\mathrm{lf},T)$-recognizable.
\end{enumerate}

For $T = S$ we will denote by $\mathrm{Rec}(\mathbf{A})$ and $\mathrm{Rec}_{\mathrm{lf}}(\mathbf{A})$ the sets $\mathrm{Rec}_{S}(\mathbf{A})$ and $\mathrm{Rec}_{\mathrm{lf},S}(\mathbf{A})$, respectively. We will call the elements of $\mathrm{Rec}(\mathbf{A})$ and $\mathrm{Rec}_{\mathrm{lf}}(\mathbf{A})$ \emph{rec\-og\-nizable} and $\mathrm{lf}$-\emph{recognizable}, respectively.

For $s\in S$ we will denote by $\mathrm{Rec}_{s}(\mathbf{A})$ and $\mathrm{Rec}_{\mathrm{lf},s}(\mathbf{A})$ the sets $\mathrm{Rec}_{\{s\}}(\mathbf{A})$ and $\mathrm{Rec}_{\mathrm{lf},\{s\}}(\mathbf{A})$, respectively.  We will call the elements of $\mathrm{Rec}_{s}(\mathbf{A})$ and $\mathrm{Rec}_{\mathrm{lf},s}(\mathbf{A})$ $s$-\emph{recognizable} and $(\mathrm{lf},s)$-\emph{recognizable}, respectively.
\end{definition}

\begin{remark}
Let $\mathbf{A}$ be a $\Sigma$-algebra, $T\subseteq S$, and $s\in S$. Then $\mathrm{Rec}_{T}(\mathbf{A})\subseteq \mathrm{Rec}_{\mathrm{lf},T}(\mathbf{A})$, $\mathrm{Rec}(\mathbf{A})\subseteq \mathrm{Rec}_{\mathrm{lf}}(\mathbf{A})$, and $\mathrm{Rec}_{s}(\mathbf{A})\subseteq \mathrm{Rec}_{\mathrm{lf},s}(\mathbf{A})$. If $S$ is infinite, then the converse inclusions are not valid. Let $T$ be an infinite subset of $S$ and $\mathbf{A}$ a $\Sigma$-algebra such that $T\subseteq\mathrm{supp}_{S}(\mathbf{A})$. Then $A\!\!\upharpoonright_{T}$ is a language in $\mathrm{Rec}_{l.\mathrm{f},T}(\mathbf{A})$, but not in $\mathrm{Rec}_{T}(\mathbf{A})$.
\end{remark}

\begin{remark}
If $S$ is finite, $T\subseteq S$, $s\in S$, and $\mathbf{A}$ is a $\Sigma$-algebra, then $\mathrm{Rec}_{T}(\mathbf{A}) = \mathrm{Rec}_{\mathrm{lf},T}(\mathbf{A})$, $\mathrm{Rec}(\mathbf{A}) = \mathrm{Rec}_{\mathrm{lf}}(\mathbf{A})$, and $\mathrm{Rec}_{s}(\mathbf{A}) = \mathrm{Rec}_{\mathrm{lf},s}(\mathbf{A})$.
\end{remark}

In what follows, among other things, we investigate, for a many-sorted algebra and a language of it, the relationships between the different notions of recognizability for the language and the two notions of congruence of finite index on the many-sorted algebra saturating the language. In particular, we investigate such a relationship for the case of the congruence cogenerated by a many-sorted language.



\begin{proposition}
Let $\mathbf{A}$ be a $\Sigma$-algebra, $T\subseteq S$, and $L\subseteq A\!\!\upharpoonright_{T}$. Then the following assertions are equivalent:
\begin{enumerate}
\item $L$ is $T$-recognizable.
\item There exists a congruence $\Phi$ on $\mathbf{A}$ of finite index such that $L = [L]^{\Phi\!\upharpoonright_{T}}$, where $\Phi\!\!\upharpoonright_{T} = (\Phi_{t})_{t\in T}$ and, for every $t\in T$, $[L]^{\Phi\!\upharpoonright_{T}}_{t} = [L_{t}]^{\Phi_{t}}$.
\item $\Omega^{\mathbf{A}}([L,\varnothing^{S-T}])$ is of finite index, where $\varnothing^{S-T}$ is the mapping from $S-T$ to $\boldsymbol{\mathcal{U}}$ constantly $\varnothing$ and $[L,\varnothing^{S-T}]$ the unique mapping from $S$ to $\boldsymbol{\mathcal{U}}$ such that $[L,\varnothing^{S-T}]\!\!\upharpoonright_{T} = L$ and $[L,\varnothing^{S-T}]\!\!\upharpoonright_{S-T} = \varnothing^{S-T}$.
\end{enumerate}
\end{proposition}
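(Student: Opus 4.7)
The plan is to establish the cycle $(1) \Rightarrow (2) \Rightarrow (3) \Rightarrow (1)$, relying on the characterisation of $\Omega^{\mathbf{A}}(-)$ as the greatest congruence saturating a subset (Proposition~\ref{CharacCogenCong}), on the equivalence $L = [L]^{\Phi} \Leftrightarrow \Phi \subseteq \Omega^{\mathbf{A}}(L)$ (Proposition~\ref{CharacSatCCog and CharacsSatCCog}), and on the fact that $\mathrm{Cgr}_{\mathrm{fi}}(\mathbf{A})$ is upward closed in $\mathbf{Cgr}(\mathbf{A})$ (Proposition~\ref{Filter}).

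For $(1) \Rightarrow (2)$, given a recognizer $(\mathbf{B}, f, M)$ I would set $\Phi = \mathrm{Ker}(f)$; the canonically induced monomorphism $\mathbf{A}/\Phi \to \mathbf{B}$ into the finite algebra $\mathbf{B}$ forces $\Phi \in \mathrm{Cgr}_{\mathrm{fi}}(\mathbf{A})$, while each $L_t = f_t^{-1}[M_t]$ is, being a preimage, a union of $\Phi_t$-classes, so $L = [L]^{\Phi\!\!\upharpoonright_T}$. For $(2) \Rightarrow (3)$, writing $K = [L,\varnothing^{S-T}]$, I observe that $K_t = L_t$ is $\Phi_t$-saturated for $t\in T$ by hypothesis while $K_s = \varnothing$ is trivially $\Phi_s$-saturated for $s\in S-T$, whence $K = [K]^{\Phi}$. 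Proposition~\ref{CharacSatCCog and CharacsSatCCog} then gives $\Phi \subseteq \Omega^{\mathbf{A}}(K)$, and upward closure of $\mathrm{Cgr}_{\mathrm{fi}}(\mathbf{A})$ transfers finite index from $\Phi$ to $\Omega^{\mathbf{A}}(K)$.

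For $(3) \Rightarrow (1)$, I set $\Phi^{*} = \Omega^{\mathbf{A}}(K)$, so that $\mathbf{B} = \mathbf{A}/\Phi^{*}$ is finite, and let $f = \mathrm{pr}_{\Phi^{*}}\colon \mathbf{A} \to \mathbf{B}$. By Proposition~\ref{CharacCogenCong}(2), $\Phi^{*} \subseteq \mathrm{Ker}(\mathrm{ch}_{K})$, so $K$ is $\Phi^{*}$-saturated, which translates into $f^{-1}[f[K]] = K$. Setting $M = (f_t[L_t])_{t \in T} \subseteq B\!\!\upharpoonright_T$ and restricting the above identity to $T$ yields $(f\!\!\upharpoonright_T)^{-1}[M] = L$, witnessing $T$-recognizability.

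The only subtlety I foresee is the asymmetry between the $T$-indexed family $L$ and its $S$-indexed extension $K = [L, \varnothing^{S-T}]$ on which $\Omega^{\mathbf{A}}$ operates: one must verify that the sorts in $S-T$ do not contribute non-trivially to saturation. But in every such component the extension is empty and hence automatically saturated by any equivalence, so everything reduces to the $T$-components where the hypotheses apply directly; this is not a real obstacle but merely a bookkeeping point worth stating explicitly.
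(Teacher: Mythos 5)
Your proof is correct, and it follows exactly the same three-step cycle $(1)\Rightarrow(2)\Rightarrow(3)\Rightarrow(1)$ that the paper uses for the special case $T=S$ proved immediately afterwards (kernel of the recognizing homomorphism for $(1)\Rightarrow(2)$, saturation plus upward closure of $\mathrm{Cgr}_{\mathrm{fi}}(\mathbf{A})$ for $(2)\Rightarrow(3)$, and the quotient by the cogenerated congruence for $(3)\Rightarrow(1)$). The paper leaves the $T$-restricted version unproved, and your handling of the extension $[L,\varnothing^{S-T}]$ --- noting that the empty components are automatically saturated --- is precisely the bookkeeping needed to reduce it to that argument.
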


\begin{proposition}
Let $\mathbf{A}$ be a $\Sigma$-algebra and $L\subseteq A$. Then the following assertions are equivalent:
\begin{enumerate}
\item $L$ is recognizable.
\item There exists a congruence $\Phi$ on $\mathbf{A}$ of finite index such that $L = [L]^{\Phi}$.
\item $\Omega^{\mathbf{A}}(L)$ is of finite index.
\end{enumerate}
\end{proposition}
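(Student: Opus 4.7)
The plan is to establish the three-way equivalence by the cycle (1)$\Rightarrow$(2)$\Rightarrow$(3)$\Rightarrow$(1), each step being a short application of material already developed.

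For (1)$\Rightarrow$(2), suppose there is a finite $\Sigma$-algebra $\mathbf{B}$, a homomorphism $f\colon \mathbf{A}\to \mathbf{B}$ and $M\subseteq B$ with $f^{-1}[M] = L$. I would take $\Phi = \mathrm{Ker}(f)$, which is a congruence on $\mathbf{A}$. By the universal property of the quotient (recorded in the definition of $\mathbf{A}/\Phi$), $f$ factors as an injective homomorphism $\mathbf{A}/\Phi\to \mathbf{B}$, so $\mathbf{A}/\Phi$ embeds into the finite algebra $\mathbf{B}$ and is therefore finite; hence $\Phi\in \mathrm{Cgr}_{\mathrm{fi}}(\mathbf{A})$. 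Moreover $L = f^{-1}[M]$ is manifestly a union of $\Phi$-equivalence classes, so $L = [L]^{\Phi}$.

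For (2)$\Rightarrow$(3), given $\Phi\in \mathrm{Cgr}_{\mathrm{fi}}(\mathbf{A})$ with $L = [L]^{\Phi}$, Proposition~\ref{PSat and sPSat} yields $\Phi\subseteq \Omega^{\mathbf{A}}(L)$. Since by Proposition~\ref{Filter}(2) the set $\mathrm{Cgr}_{\mathrm{fi}}(\mathbf{A})$ is upward closed in $\mathbf{Cgr}(\mathbf{A})$, it follows that $\Omega^{\mathbf{A}}(L)\in \mathrm{Cgr}_{\mathrm{fi}}(\mathbf{A})$.

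For (3)$\Rightarrow$(1), I would exhibit a witness of recognizability by taking $\mathbf{B} = \mathbf{A}/\Omega^{\mathbf{A}}(L)$ (finite by hypothesis), the canonical projection $f = \mathrm{pr}_{\Omega^{\mathbf{A}}(L)}\colon \mathbf{A}\to \mathbf{B}$, and $M = f[L]\subseteq B$. The remaining point is $f^{-1}[M] = L$, i.e.\ that $L$ is $\Omega^{\mathbf{A}}(L)$-saturated; but this is immediate from Proposition~\ref{CharacCogenCong}(2), which asserts $\Omega^{\mathbf{A}}(L)\subseteq \mathrm{Ker}(\mathrm{ch}_{L})$, so that any two $\Omega^{\mathbf{A}}(L)$-equivalent elements lie simultaneously in or out of $L$. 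No genuine obstacle arises: every step is a direct invocation of the earlier structural results, and the proof is essentially a transcription of the classical Myhill--Nerode--Mezei--Wright argument to the many-sorted setting, made possible precisely because the congruence cogenerating operator $\Omega^{\mathbf{A}}$ and the filter $\mathrm{Cgr}_{\mathrm{fi}}(\mathbf{A})$ have already been set up with the right properties.
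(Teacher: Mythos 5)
Your proposal is correct and follows essentially the same route as the paper's proof: the same cycle (1)$\Rightarrow$(2)$\Rightarrow$(3)$\Rightarrow$(1), with $\mathrm{Ker}(f)$ as the witness congruence for (1)$\Rightarrow$(2), the upward-closure of $\mathrm{Cgr}_{\mathrm{fi}}(\mathbf{A})$ for (2)$\Rightarrow$(3), and the quotient by $\Omega^{\mathbf{A}}(L)$ together with its saturation of $L$ for (3)$\Rightarrow$(1). The only cosmetic difference is that you justify the saturation step via $\Omega^{\mathbf{A}}(L)\subseteq \mathrm{Ker}(\mathrm{ch}_{L})$ directly, where the paper invokes Proposition~\ref{PSat and sPSat}; these are interchangeable.
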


\begin{proof}
We first prove that $(1)\Rightarrow (2)$.

Let us suppose that $L$ is recognizable, i.e., that there exists a finite $\Sigma$-algebra $\mathbf{B}$, a homomorphism $f\colon \mathbf{A}\mor \mathbf{B}$, and a subset $M$ of $B$ such that $L = f^{-1}[M]$. Then, since $\mathbf{Im}(f)$ is a subalgebra of a finite algebra, $\mathbf{Im}(f)$ is finite. Therefore $\mathbf{A}/\mathrm{Ker}(f)$ is finite, because it is canonically isomorphic to $\mathbf{Im}(f)$. Hence $\mathrm{Ker}(f)$ is a congruence on $\mathbf{A}$ of finite index. Let us check that $L = [L]^{\mathrm{Ker}(f)}$. The inclusion $L\subseteq [L]^{\mathrm{Ker}(f)}$ is always true. To show that $[L]^{\mathrm{Ker}(f)}\subseteq L$, let $s$ be a sort in $S$, $a\in A_{s}$, $x\in L_{s}$, and let us suppose that $(a,x)\in \mathrm{Ker}(f_{s})$, i.e., that $f_{s}(a) = f_{s}(x)$. Then, since $L = f^{-1}[M]$, we have that $L_{s} = f^{-1}_{s}[M_{s}]$, thus $f_{s}(x)\in M_{s}$, but $f_{s}(a) = f_{s}(x)$, therefore $f_{s}(a)\in M_{s}$, consequently $a\in L_{s}$. From this it follows that $[L]^{\mathrm{Ker}(f)}\subseteq L$.

We next prove that $(2)\Rightarrow (3)$.

Let us suppose that there exists a congruence $\Phi$ on $\mathbf{A}$ such that $\Phi$ is of finite index and $L = [L]^{\Phi}$. Then, by Proposition~\ref{PSat and sPSat}, we have that $\Phi\subseteq\Omega^{\mathbf{A}}(L)$. Thus, by Proposition~\ref{Filter}, since, by hypothesis, $\Phi\in\mathrm{Cgr}_{\mathrm{fi}}(\mathbf{A})$, $\Omega^{\mathbf{A}}(L)\in\mathrm{Cgr}_{\mathrm{fi}}(\mathbf{A})$.

Finally, we prove that $(3)\Rightarrow (1)$.

Let us suppose that $\Omega^{\mathbf{A}}(L)\in\mathrm{Cgr}_{\mathrm{fi}}(\mathbf{A})$. Then $\mathbf{A}/{\Omega^{\mathbf{A}}(L)}$ is finite. Let $\mathcal{M}$ be the subset $\mathrm{pr}_{\Omega^{\mathbf{A}}(L)}[L]$ of $A/{\Omega^{\mathbf{A}}(L)}$, where $\mathrm{pr}_{\Omega^{\mathbf{A}}(L)}$ is the canonical projection from $\mathbf{A}$ to $\mathbf{A}/{\Omega^{\mathbf{A}}(L)}$. To verify that $L = (\pr_{\Omega^{\mathbf{A}}(L)})^{-1}[\mathcal{M}]$ it suffices to check the inclusion from right to left. Let $s$ be an element of $S$ and $b\in \mathrm{pr}_{\Omega^{\mathbf{A}}(L)_{s}}^{-1}[\mathcal{M}_{s}]$ (recall that $(\mathrm{pr}_{\Omega^{\mathbf{A}}(L)})^{-1}[\mathcal{M}] = (\mathrm{pr}_{\Omega^{\mathbf{A}}(L)_{s}}^{-1}[\mathcal{M}_{s})_{s\in S}$). Then $\mathrm{pr}_{\Omega^{\mathbf{A}}(L)_{s}}[b]\in \mathcal{M}_{s}$. Hence, there exists an $a\in L_{s}$ such that $[a]_{\Omega^{\mathbf{A}}(L)_{s}} = [b]_{\Omega^{\mathbf{A}}(L)_{s}}$. But, by Proposition~\ref{PSat and sPSat}, $L$ is $\Omega^{\mathbf{A}}(L)$-saturated and, since $a\in L_{s}$, we have that $b\in L_{s}$. Therefore   $(\pr_{\Omega^{\mathbf{A}}(L)})^{-1}[\mathcal{M}]\subseteq L$.
\end{proof}


\begin{proposition}
Let $\mathbf{A}$ be a $\Sigma$-algebra, $s\in S$, and $L\subseteq A_{s}$. Then the following assertions are equivalent:
\begin{enumerate}
\item $L$ is $s$-recognizable.
\item There exists a congruence $\Phi$ on $\mathbf{A}$ of finite index such that $L = [L]^{\Phi_{s}}$.
\item $\Omega^{\mathbf{A}}(\delta^{s,L})$ is of finite index.
\end{enumerate}
\end{proposition}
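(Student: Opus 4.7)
My plan is to prove the cyclic chain $(1)\Rightarrow(2)\Rightarrow(3)\Rightarrow(1)$, following the same strategy used in the preceding proposition but exploiting the fact that a subset $L\subseteq A_{s}$ can be canonically encoded as the subset $\delta^{s,L}\subseteq A$. The crucial observation that makes the argument run is that, for any $\Phi\in\mathrm{Cgr}(\mathbf{A})$, the equality $\delta^{s,L}=[\delta^{s,L}]^{\Phi}$ holds if and only if $L=[L]^{\Phi_{s}}$, since for $t\neq s$ the $t$-component of $\delta^{s,L}$ is empty and its $\Phi_{t}$-saturation is again empty. Combined with the first part of Proposition~\ref{PSat and sPSat} (applied to $\delta^{s,L}$), this yields the bridging equivalence: $L=[L]^{\Phi_{s}}$ if and only if $\Phi\subseteq\Omega^{\mathbf{A}}(\delta^{s,L})$.

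For $(1)\Rightarrow(2)$ I would assume $L=f_{s}^{-1}[M]$ for some finite $\Sigma$-algebra $\mathbf{B}$, a homomorphism $f\colon\mathbf{A}\mor\mathbf{B}$, and an $M\subseteq B_{s}$, and take $\Phi=\mathrm{Ker}(f)$. Since $\mathbf{Im}(f)$ is a subalgebra of the finite $\mathbf{B}$ it is finite, hence $\mathbf{A}/\mathrm{Ker}(f)\cong\mathbf{Im}(f)$ is finite, so $\Phi\in\mathrm{Cgr}_{\mathrm{fi}}(\mathbf{A})$. The saturation identity $L=[L]^{\Phi_{s}}$ follows by the same elementwise argument as in the preceding proof: the inclusion $L\subseteq[L]^{\Phi_{s}}$ is automatic, and conversely, if $(a,x)\in\Phi_{s}$ with $x\in L_{s}$, then $f_{s}(a)=f_{s}(x)\in M$, whence $a\in L$.

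For $(2)\Rightarrow(3)$ I would invoke the bridging equivalence above: a witnessing $\Phi\in\mathrm{Cgr}_{\mathrm{fi}}(\mathbf{A})$ with $L=[L]^{\Phi_{s}}$ yields $\Phi\subseteq\Omega^{\mathbf{A}}(\delta^{s,L})$, and then Proposition~\ref{Filter}(2) (upward closedness of $\mathrm{Cgr}_{\mathrm{fi}}(\mathbf{A})$ in $\mathbf{Cgr}(\mathbf{A})$) forces $\Omega^{\mathbf{A}}(\delta^{s,L})\in\mathrm{Cgr}_{\mathrm{fi}}(\mathbf{A})$. For $(3)\Rightarrow(1)$, assuming $\Omega^{\mathbf{A}}(\delta^{s,L})$ is of finite index, I would set $\mathbf{B}=\mathbf{A}/\Omega^{\mathbf{A}}(\delta^{s,L})$, which is finite, take $f=\mathrm{pr}_{\Omega^{\mathbf{A}}(\delta^{s,L})}$, and put $\mathcal{M}=f_{s}[L]\subseteq B_{s}$; the equality $L=f_{s}^{-1}[\mathcal{M}]$ follows from the fact that, by Proposition~\ref{PSat and sPSat} (moreover clause) applied with $\Phi=\Omega^{\mathbf{A}}(\delta^{s,L})$, the set $L$ is $\Omega^{\mathbf{A}}(\delta^{s,L})_{s}$-saturated, so any element of $A_{s}$ whose class lies in $\mathcal{M}$ already belongs to $L$.

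The only potentially subtle point, and hence what I would treat carefully, is the passage between the ``single-sort'' statement $L=[L]^{\Phi_{s}}$ and the ``global'' statement $\Phi\subseteq\Omega^{\mathbf{A}}(\delta^{s,L})$; once this is cleanly established via the delta of Kronecker, the remaining steps are routine applications of Propositions~\ref{PSat and sPSat} and~\ref{Filter} together with the universal property of the quotient.
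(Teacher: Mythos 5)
Your proof is correct and is essentially the argument the paper intends: the paper states this sorted variant without proof immediately after proving the unsorted version by the same cyclic chain $(1)\Rightarrow(2)\Rightarrow(3)\Rightarrow(1)$, and your bridging step---reducing $L=[L]^{\Phi_{s}}$ to $\delta^{s,L}=[\delta^{s,L}]^{\Phi}$ and hence to $\Phi\subseteq\Omega^{\mathbf{A}}(\delta^{s,L})$---is exactly the device the paper itself sets up in the ``moreover'' clause of Proposition~\ref{PSat and sPSat} and in Proposition~\ref{s-Rec iff Rec}. No gaps.
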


\begin{proposition}\label{Rec is Bool}
Let $\mathbf{A}$ and $\mathbf{B}$ be $\Sigma$-algebras. Then
\begin{enumerate}
\item $\varnothing^{S}, A\in \mathrm{Rec}(\mathbf{A})$.
\item If $K,L\in \mathrm{Rec}(\mathbf{A})$, then $K\cup L, K\cap L, K-L\in \mathrm{Rec}(\mathbf{A})$.
\item If $T\in \mathrm{Tl}_{t}(\mathbf{A})_{s}$, and $L\in \mathrm{Rec}(\mathbf{A})$, then $T^{-1}[L] = \delta^{t,T^{-1}[L_{s}]}\in \mathrm{Rec}(\mathbf{A})$.
\item If $f\colon \mathbf{A}\mor \mathbf{B}$ and $M\in \mathrm{Rec}(\mathbf{B})$, then $f^{-1}[M]\in \mathrm{Rec}(\mathbf{A})$.
\end{enumerate}
\end{proposition}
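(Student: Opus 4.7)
The plan is to prove each of the four items by invoking the characterization of recognizability via congruences of finite index (proved just above in the excerpt), together with the closure properties already established for the operator $\Omega^{\mathbf{A}}$ and for $\mathrm{Cgr}_{\mathrm{fi}}(\mathbf{A})$.

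For \textbf{(1)}, I would note that, by Proposition \ref{NablaSat}, both $\varnothing^{S}$ and $A$ lie in $\nabla_{A}\text{-}\mathrm{Sat}(A)$, and clearly $\nabla_{A} = \nabla_{\mathbf{A}}$ is a congruence on $\mathbf{A}$. The statement implicitly presupposes $\mathrm{supp}_{S}(\mathbf{A})$ finite (otherwise $\mathrm{Rec}(\mathbf{A}) = \varnothing$), and under that hypothesis $\nabla_{\mathbf{A}} \in \mathrm{Cgr}_{\mathrm{fi}}(\mathbf{A})$, so by the equivalence (recognizable $\Leftrightarrow$ saturated by a congruence of finite index) both $\varnothing^{S}$ and $A$ are recognizable.

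For \textbf{(2)}, I would start from the congruences $\Omega^{\mathbf{A}}(K)$ and $\Omega^{\mathbf{A}}(L)$, which are of finite index since $K,L \in \mathrm{Rec}(\mathbf{A})$. Set $\Phi = \Omega^{\mathbf{A}}(K)\cap\Omega^{\mathbf{A}}(L)$. By Proposition \ref{Filter}(1), $\Phi \in \mathrm{Cgr}_{\mathrm{fi}}(\mathbf{A})$, and by Corollary \ref{IncSat and sIncSat} both $K$ and $L$ are $\Phi$-saturated. Since Proposition \ref{CABA Saturades} asserts that $\Phi\text{-}\mathrm{Sat}(A)$ is a Boolean subalgebra of $\mathrm{Sub}(A)$, the sets $K\cup L$, $K\cap L$, and $K-L$ are all $\Phi$-saturated, hence recognizable via the congruence characterization.

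For \textbf{(3)}, the work has already been done: Proposition \ref{TAntiTrans} gives $\Omega^{\mathbf{A}}(L)\subseteq \Omega^{\mathbf{A}}(T^{-1}[L])$, and Proposition \ref{Filter}(2) states that $\mathrm{Cgr}_{\mathrm{fi}}(\mathbf{A})$ is upward closed in $\mathbf{Cgr}(\mathbf{A})$. Hence $\Omega^{\mathbf{A}}(T^{-1}[L])\in\mathrm{Cgr}_{\mathrm{fi}}(\mathbf{A})$, so $T^{-1}[L]\in\mathrm{Rec}(\mathbf{A})$.

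For \textbf{(4)}, I would argue directly from the defining condition rather than through congruences: unfold $M\in\mathrm{Rec}(\mathbf{B})$ as $M = g^{-1}[N]$ for some finite $\Sigma$-algebra $\mathbf{C}$, homomorphism $g\colon\mathbf{B}\mor\mathbf{C}$, and $N\subseteq C$. Then $g\comp f\colon\mathbf{A}\mor\mathbf{C}$ is a homomorphism to a finite $\Sigma$-algebra and $f^{-1}[M] = (g\comp f)^{-1}[N]$, so $f^{-1}[M]\in\mathrm{Rec}(\mathbf{A})$. No serious obstacle is anticipated: all four items reduce immediately to results already at our disposal; the only care needed is the sort-by-sort handling of subsets, which is harmless thanks to the convention that $T^{-1}[L]$ in (3) is taken to be concentrated in sort $t$ (Definition \ref{DAntiTrans}).
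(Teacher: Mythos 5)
Your proof is correct, and it follows exactly the route the surrounding machinery is designed for: the paper itself states Proposition~\ref{Rec is Bool} without proof (Section~2 collects these facts ``mostly without proofs''), and your use of the congruence characterization of recognizability together with Proposition~\ref{Filter}, Corollary~\ref{IncSat and sIncSat}, Proposition~\ref{CABA Saturades}, and Proposition~\ref{TAntiTrans} is precisely the intended argument, with (4) done even more directly by composing homomorphisms. Your remark that item (1) tacitly requires $\mathrm{supp}_{S}(\mathbf{A})$ to be finite (since otherwise no homomorphism into a finite $\Sigma$-algebra exists and $\mathrm{Rec}(\mathbf{A})=\varnothing$) is a legitimate observation consistent with the paper's earlier result that $\mathrm{Cgr}_{\mathrm{fi}}(\mathbf{A})\neq\varnothing$ if and only if $\mathrm{supp}_{S}(\mathbf{A})$ is finite.
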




\begin{proposition}\label{s-Rec iff Rec}
Let $\mathbf{A}$ be a $\Sigma$-algebra, $s\in S$, and $L\subseteq A_{s}$. Then $L\in \mathrm{Rec}_{s}(\mathbf{A})$ if and only if $\delta^{s,L}\in \mathrm{Rec}(\mathbf{A})$. Thus $\mathrm{Rec}_{s}(\mathbf{A})$ is isomorphic to a subset of $\mathrm{Rec}(\mathbf{A})$.
\end{proposition}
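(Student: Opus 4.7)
The plan is to establish the biconditional directly from the definitions of $s$-recognizability and recognizability, noting that both notions invoke the same class of finite $\Sigma$-algebras and homomorphisms from $\mathbf{A}$; the only difference is whether the recognizing subset lives in $B_{s}$ or in the whole $S$-sorted set $B$. The bridge between the two is the delta of Kronecker construction: given a subset $M\subseteq B_{s}$, we can extend it to $\delta^{s,M}\subseteq B$, and, conversely, given a subset $N\subseteq B$, we can restrict to its $s$-th component $N_{s}\subseteq B_{s}$.

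For the forward implication, I would assume $L\in \mathrm{Rec}_{s}(\mathbf{A})$, so there exists a finite $\Sigma$-algebra $\mathbf{B}$, a homomorphism $f\colon \mathbf{A}\mor \mathbf{B}$, and $M\subseteq B_{s}$ such that $f_{s}^{-1}[M] = L$. Taking $\delta^{s,M}\subseteq B$ as the new recognizing subset, I would verify componentwise that $f^{-1}[\delta^{s,M}] = \delta^{s,L}$: for the sort $s$ one uses $f_{s}^{-1}[M] = L$, and for any $t\in S$ with $t\neq s$ both sides are $\varnothing$ since $(\delta^{s,M})_{t} = \varnothing$ and $(\delta^{s,L})_{t} = \varnothing$. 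Hence $\delta^{s,L}\in \mathrm{Rec}(\mathbf{A})$.

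For the reverse implication, I would assume $\delta^{s,L}\in \mathrm{Rec}(\mathbf{A})$, witnessed by a finite $\Sigma$-algebra $\mathbf{B}$, a homomorphism $f\colon \mathbf{A}\mor \mathbf{B}$, and $N\subseteq B$ with $f^{-1}[N] = \delta^{s,L}$. Considering the $s$-th component $N_{s}\subseteq B_{s}$ and using that $(f^{-1}[N])_{s} = f_{s}^{-1}[N_{s}]$, I obtain $f_{s}^{-1}[N_{s}] = (\delta^{s,L})_{s} = L$, which shows $L\in \mathrm{Rec}_{s}(\mathbf{A})$.

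Finally, for the concluding ``isomorphic to a subset'' claim, I would observe that the mapping $L\longmapsto \delta^{s,L}$ from $\mathrm{Rec}_{s}(\mathbf{A})$ to $\mathrm{Rec}(\mathbf{A})$ is well-defined by the forward direction and is injective since $\delta^{s,\cdot}$ is injective on subsets (its $s$-th component recovers its input). No step is genuinely an obstacle; the main thing to be careful with is the bookkeeping between $f^{-1}[\cdot]$ as an $S$-sorted operation and $f_{s}^{-1}[\cdot]$ as its $s$-th component, together with the defining property of $\delta^{s,\cdot}$.
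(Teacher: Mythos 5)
Your proof is correct; the paper in fact states this proposition without proof (it belongs to the preliminaries, which are given ``mostly without proofs''), and your argument is exactly the definitional unfolding the authors leave implicit: pass from $M\subseteq B_{s}$ to $\delta^{s,M}\subseteq B$ for one direction, project $N\subseteq B$ to $N_{s}\subseteq B_{s}$ for the other, and use injectivity of $L\longmapsto\delta^{s,L}$ for the final claim. Nothing further is needed.
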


\begin{assumption}
To prove the following proposition we will assume that the set of sorts $S$ is finite.
\end{assumption}

\begin{proposition}\label{Rec SubdirectProd sRec}
Let $\mathbf{A}$ be a $\Sigma$-algebra and $L\subseteq A$. Then $L\in \mathrm{Rec}(\mathbf{A})$ if and only if, for every $s\in S$, $L_{s}\in \mathrm{Rec}_{s}(\mathbf{A})$. Thus there exists an embedding from $\mathrm{Rec}(\mathbf{A})$ into $\prod_{s\in S}\mathrm{Rec}_{s}(\mathbf{A})$ and $\mathrm{Rec}(\mathbf{A})$ is a subdirect product of $(\mathrm{Rec}_{s}(\mathbf{A}))_{s\in S}$.
\end{proposition}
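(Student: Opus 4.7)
The plan is to establish the biconditional first, deriving the subdirect product statement as an easy consequence of the characterization.

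For the forward direction, I would argue directly from the definitions. If $L\in \mathrm{Rec}(\mathbf{A})$, fix a finite $\Sigma$-algebra $\mathbf{B}$, a homomorphism $f\colon \mathbf{A}\mor \mathbf{B}$, and a subset $M\subseteq B$ with $f^{-1}[M]=L$. Then for each $s\in S$ one has $L_{s}=f_{s}^{-1}[M_{s}]$, so taking the subset $M_{s}\subseteq B\!\!\upharpoonright_{\{s\}}$ together with the same homomorphism $f$ witnesses $L_{s}\in \mathrm{Rec}_{s}(\mathbf{A})$. This direction does not require the finiteness of $S$.

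For the converse, which is where the hypothesis $\mathrm{card}(S)<\aleph_{0}$ is used, I would exploit the decomposition
\[
\textstyle L \;=\; \bigcup_{s\in S}\delta^{s,L_{s}}.
\]
By Proposition~\ref{s-Rec iff Rec}, the assumption $L_{s}\in \mathrm{Rec}_{s}(\mathbf{A})$ is equivalent to $\delta^{s,L_{s}}\in \mathrm{Rec}(\mathbf{A})$ for each $s\in S$. Since $S$ is finite and $\mathrm{Rec}(\mathbf{A})$ is closed under binary union by Proposition~\ref{Rec is Bool}(2), an immediate induction yields that a finite union of recognizable subsets is recognizable, so $L\in \mathrm{Rec}(\mathbf{A})$. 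The crux of the whole proof lies precisely here: without finiteness of $S$ the union above would be infinite, and closure under arbitrary union of recognizable languages is not available.

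For the subdirect product claim, I would define the mapping
\[
\Psi\colon \mathrm{Rec}(\mathbf{A})\mor \textstyle\prod_{s\in S}\mathrm{Rec}_{s}(\mathbf{A}),\quad L\longmapsto (L_{s})_{s\in S},
\]
which is well-defined by the forward direction. Injectivity is immediate: two $S$-sorted subsets of $A$ that agree componentwise are equal. To exhibit $\mathrm{Rec}(\mathbf{A})$ as a subdirect product, I would check that, for every $s\in S$, the composite $\mathrm{pr}^{s}\circ \Psi$ is surjective; given $K\in \mathrm{Rec}_{s}(\mathbf{A})$, the $S$-sorted subset $\delta^{s,K}$ lies in $\mathrm{Rec}(\mathbf{A})$ by Proposition~\ref{s-Rec iff Rec}, and $\Psi(\delta^{s,K})$ has $K$ in its $s$-th coordinate. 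Since the argument is essentially unpacking definitions once the equivalence is proved, the only potential subtlety is bookkeeping with the supports of the $\delta^{s,K_{s}}$, which is harmless because all sorts not equal to $s$ contribute the empty set.
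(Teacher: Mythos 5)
Your proof is correct and coincides with the paper's own argument: the paper in fact gives two proofs of the converse (one via the product $\prod_{s\in S}\mathbf{B}^{s}$ of the finite recognizing algebras, and a second via the finite union $\bigcup_{s\in S}\delta^{s,L_{s}}$ together with Proposition~\ref{s-Rec iff Rec} and closure under union), and yours is exactly the second. The forward direction and the subdirect-product argument via $L\mapsto (L_{s})_{s\in S}$ and the sections $\delta^{s,K}$ also match the paper's.
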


\begin{proof}
If $L\in \mathrm{Rec}(\mathbf{A})$, then, from the definitions, it follows that, for every $s\in S$, $L_{s}\in \mathrm{Rec}_{s}(\mathbf{A})$.

Conversely, let us suppose that, for every $s\in S$, $L_{s}\in \mathrm{Rec}_{s}(\mathbf{A})$. Then, for every $s\in S$, there exists a finite $\Sigma$-algebra $\mathbf{B}^{s}$, a homomorphism $f^{s}\colon\mathbf{A}\mor \mathbf{B}^{s}$ and $M_{s}\subseteq B^{s}_{s}$ such that $(f^{s}_{s})^{-1}[M_{s}] = L_{s}$. Let $\prod_{s\in S}\mathbf{B}^{s}$ be the product of $(\mathbf{B^{s}})_{s\in S}$, which is finite---since, for every $s\in S$, $\mathbf{B}^{s}$ is finite and, by hypothesis, $S$ is finite---, and $\langle f^{s}\rangle_{s\in S}$ the canonical homomorphism from $\mathbf{A}$ to $\prod_{s\in S}\mathbf{B}^{s}$. Let $N = (N_{t})_{t\in S}$ be the subset of $\prod_{s\in S}B^{s} = (\prod_{s\in S}B^{s}_{t})_{t\in S}$ defined, for every $t\in S$, up to isomorphism, as: $N_{t} = M_{t}\times \prod_{s\in S-\{t\}}B^{s}_{t}$. Then $(\langle f^{s}\rangle_{s\in S})^{-1}[N] = L$.
Consequently, $L$ is recognizable.

Before proceeding any further, we provide another proof of the just proved result. If, for every $s\in S$, $L_{s}\in \mathrm{Rec}_{s}(\mathbf{A})$, then, for every $s\in S$, by Proposition~\ref{s-Rec iff Rec}, $\delta^{s,L_{s}}\in \mathrm{Rec}(\mathbf{A})$. Therefore, by part (2) of Proposition~\ref{Rec is Bool} and since, by hypothesis, $S$ is finite, $\bigcup_{s\in S}\delta^{s,L_{s}} = L\in \mathrm{Rec}(\mathbf{A})$.

The mapping from $\mathrm{Rec}(\mathbf{A})$ to $\prod_{s\in S}\mathrm{Rec}_{s}(\mathbf{A})$ that sends $L$ in $\mathrm{Rec}(\mathbf{A})$ to $L$ in $\prod_{s\in S}\mathrm{Rec}_{s}(\mathbf{A})$ is well-defined and injective. Moreover, for every $s\in S$ and every $K\in \mathrm{Rec}_{s}(\mathbf{A})$, the $S$-sorted set $\delta^{s,K}$ is in $\mathrm{Rec}(\mathbf{A})$ and its $s$-th projection is $K$.
\end{proof}

From the just stated proposition (which, we recall, has been obtained under the assumption that $S$ is finite) and since, for a $\Sigma$-algebra $\mathbf{A}$ and an $L\subseteq A$, $L = \bigcup_{s\in S}\delta^{s,L_{s}}$, it follows that in order to investigate the languages in $\mathrm{Rec}(\mathbf{A})$ it suffices to investigate, for every $s\in S$, the languages in $\mathrm{Rec}_{s}(\mathbf{A})$.

\begin{proposition}\label{Rec s is Bool}
Let $\mathbf{A}$ and $\mathbf{B}$ be $\Sigma$-algebras and $s, t\in S$. Then
\begin{enumerate}
\item $\varnothing, A_{s}\in \mathrm{Rec}_{s}(\mathbf{A})$.
\item If $K,L\in \mathrm{Rec}_{s}(\mathbf{A})$, then $K\cup L, K\cap L, K-L\in \mathrm{Rec}_{s}(\mathbf{A})$.
\item If $T\in \mathrm{Tl}_{t}(\mathbf{A})_{s}$, and $L\in \mathrm{Rec}_{s}(\mathbf{A})$, then $T^{-1}[L]\in \mathrm{Rec}_{t}(\mathbf{A})$.
\item If $f\colon \mathbf{A}\mor \mathbf{B}$ and $M\in \mathrm{Rec}_{s}(\mathbf{B})$, then $f^{-1}_{s}[M]\in \mathrm{Rec}_{s}(\mathbf{A})$.
\end{enumerate}
\end{proposition}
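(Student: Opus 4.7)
The plan is to deduce this proposition directly from the corresponding proposition about $\mathrm{Rec}(\mathbf{A})$ (Proposition~\ref{Rec is Bool}) by using the bridging result of Proposition~\ref{s-Rec iff Rec}, which asserts that $L \in \mathrm{Rec}_{s}(\mathbf{A})$ if and only if $\delta^{s,L} \in \mathrm{Rec}(\mathbf{A})$. This way the four items become immediate transcriptions of their unsorted counterparts, with the advantage of avoiding the finiteness assumption on $S$ that the preceding proposition required.

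For (1), I would observe that $\delta^{s,\varnothing} = \varnothing^{S}$, so $\varnothing \in \mathrm{Rec}_{s}(\mathbf{A})$ by Proposition~\ref{Rec is Bool}(1) and Proposition~\ref{s-Rec iff Rec}; for $A_{s}$, I would display an explicit witness, namely the unique homomorphism $f\colon\mathbf{A}\mor\mathbf{1}$ and the subset $M = 1$ of the underlying $S$-sorted set of $\mathbf{1}$ at sort $s$, which gives $f_{s}^{-1}[M] = A_{s}$. For (2), I would combine the identities $\delta^{s,K\cup L} = \delta^{s,K}\cup \delta^{s,L}$, $\delta^{s,K\cap L} = \delta^{s,K}\cap \delta^{s,L}$, and $\delta^{s,K-L} = \delta^{s,K}-\delta^{s,L}$ (which hold componentwise since every other component is $\varnothing$) with Proposition~\ref{Rec is Bool}(2) and Proposition~\ref{s-Rec iff Rec}.

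For (3), given $T\in \mathrm{Tl}_{t}(\mathbf{A})_{s}$ and $L\in\mathrm{Rec}_{s}(\mathbf{A})$, I would pass to $\delta^{s,L}\in\mathrm{Rec}(\mathbf{A})$, apply Proposition~\ref{Rec is Bool}(3) to conclude that $T^{-1}[\delta^{s,L}] \in \mathrm{Rec}(\mathbf{A})$, and then use Definition~\ref{DAntiTrans}(2), which yields $T^{-1}[\delta^{s,L}] = \delta^{t,T^{-1}[L]}$; another invocation of Proposition~\ref{s-Rec iff Rec} gives $T^{-1}[L]\in \mathrm{Rec}_{t}(\mathbf{A})$. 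For (4), starting from $M\in\mathrm{Rec}_{s}(\mathbf{B})$, I would write $\delta^{s,M}\in\mathrm{Rec}(\mathbf{B})$, apply Proposition~\ref{Rec is Bool}(4) to $f$ to get $f^{-1}[\delta^{s,M}]\in\mathrm{Rec}(\mathbf{A})$, and use the straightforward identity $f^{-1}[\delta^{s,M}] = \delta^{s,f_{s}^{-1}[M]}$ together with Proposition~\ref{s-Rec iff Rec}.

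Since every step is a direct transfer through the delta-of-Kronecker bridge, I do not expect a genuine obstacle; the only point requiring a small verification is the componentwise identity $T^{-1}[\delta^{s,L}] = \delta^{t,T^{-1}[L]}$, which is exactly the content of Definition~\ref{DAntiTrans}(2) (the values at sorts $u\neq t$ vanish because $T^{-1}$ is only meaningful at sort $s$, whose preimage lives at sort $t$), and the analogous one for $f^{-1}$ acting on a delta.
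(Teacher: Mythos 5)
The paper records this proposition without proof (it belongs to the block of preliminary facts stated ``mostly without proofs''), so there is no official argument to compare yours against. Judged on its own, your reduction through the delta-of-Kronecker bridge of Proposition~\ref{s-Rec iff Rec} is the natural route and items (2), (3) and (4) are complete: the componentwise identities $\delta^{s,K\cup L}=\delta^{s,K}\cup\delta^{s,L}$ (and likewise for $\cap$ and $-$), $T^{-1}[\delta^{s,L}]=\delta^{t,T^{-1}[L]}$ (which is exactly Definition~\ref{DAntiTrans}) and $f^{-1}[\delta^{s,M}]=\delta^{s,f^{-1}_{s}[M]}$ all check out, and Proposition~\ref{Rec is Bool} then delivers precisely what each item needs.

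The one step that does not hold up as written is your witness for $A_{s}\in\mathrm{Rec}_{s}(\mathbf{A})$ in item (1). You propose the unique homomorphism $\mathbf{A}\mor\mathbf{1}$, but $\mathbf{1}=(1)_{s\in S}$ is a \emph{finite} $\Sigma$-algebra only when $S$ is finite --- the paper itself points out that for infinite $S$ one has $\mathrm{Cgr}_{\mathrm{fi}}(\mathbf{1})=\varnothing$ --- so this particular witness undercuts your stated aim of avoiding the finiteness hypothesis on $S$. The repair is easy and keeps you inside your own framework: Proposition~\ref{Rec is Bool}(1) gives $A\in\mathrm{Rec}(\mathbf{A})$, witnessed by some finite $\mathbf{B}$, a homomorphism $f\colon\mathbf{A}\mor\mathbf{B}$ and a subset $M$ with $f^{-1}[M]=A$; restricting this witness to the sort $s$ (take as distinguished subset the whole of $B_{s}$, so that $f_{s}^{-1}[B_{s}]=A_{s}$) yields $A_{s}\in\mathrm{Rec}_{s}(\mathbf{A})$ directly from the definition. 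Equivalently, $\delta^{s,A_{s}}$ is saturated by \emph{every} congruence on $\mathbf{A}$, so any $\Phi\in\mathrm{Cgr}_{\mathrm{fi}}(\mathbf{A})$ --- which exists because $A$ is recognizable --- recognizes it. With that substitution your proof is complete; be aware, though, that item (1) (like Proposition~\ref{Rec is Bool}(1), from which it is inherited) is genuinely true only when $\mathrm{supp}_{S}(\mathbf{A})$ is finite, since otherwise no homomorphism from $\mathbf{A}$ into a finite $\Sigma$-algebra exists at all.
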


For the notions of $\mathrm{lf}$-recognizability and congruence of locally finite index there are results similar to those we have just stated for recognizability and congruence of finite index. By way of illustration, here are some examples of it.



\begin{proposition}
Let $\mathbf{A}$ be a $\Sigma$-algebra, $T\subseteq S$, and $L\subseteq A\!\!\upharpoonright_{T}$. Then the following assertions are equivalent:
\begin{enumerate}
\item $L$ is $\mathrm{lf}$-recognizable.
\item There exists a congruence $\Phi$ on $\mathbf{A}$ of $S$-finite index such that $L = [L]^{\Phi\!\upharpoonright_{T}}$, where $\Phi\!\!\upharpoonright_{T} = (\Phi_{t})_{t\in T}$ and, for every $t\in T$, $[L]^{\Phi\!\upharpoonright_{T}}_{t} = [L_{t}]^{\Phi_{t}}$.
\item $\Omega^{\mathbf{A}}([L,\varnothing^{S-T}])$ is of $S$-finite index.
\end{enumerate}
\end{proposition}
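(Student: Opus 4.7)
The plan is to mimic, step by step, the pattern of the analogous proposition proved earlier for $T$-recognizability, replacing ``finite'' by ``$S$-finite'' (locally finite) throughout, and using the locally finite version of the filter property (the proposition stating that $\mathrm{Cgr}_{\mathrm{lfi}}(\mathbf{A})$ is a filter of $\mathbf{Cgr}(\mathbf{A})$) in place of Proposition~\ref{Filter}. The proof will be carried out as the cyclic chain $(1)\Rightarrow(2)\Rightarrow(3)\Rightarrow(1)$.

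For $(1)\Rightarrow(2)$, assume $L = (f\!\!\upharpoonright_{T})^{-1}[M]$ for some homomorphism $f\colon\mathbf{A}\mor\mathbf{B}$ with $\mathbf{B}$ locally finite and some $M\subseteq B\!\!\upharpoonright_{T}$. I would take $\Phi=\mathrm{Ker}(f)$. Since $\mathbf{A}/\mathrm{Ker}(f)$ is canonically isomorphic to $\mathbf{Im}(f)$, which is a subalgebra of the locally finite algebra $\mathbf{B}$, one gets that $\Phi$ is of $S$-finite index. The saturation identity $L = [L]^{\Phi\!\upharpoonright_{T}}$ would then follow componentwise exactly as in the finite-index case: the inclusion $L\subseteq [L]^{\Phi\!\upharpoonright_{T}}$ is trivial, and for the converse one uses that, for every $t\in T$, if $(a,x)\in\mathrm{Ker}(f_{t})$ and $x\in L_{t}=f_{t}^{-1}[M_{t}]$, then $f_{t}(a)=f_{t}(x)\in M_{t}$, so $a\in L_{t}$.

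For $(2)\Rightarrow(3)$, starting from $\Phi\in\mathrm{Cgr}_{\mathrm{lfi}}(\mathbf{A})$ with $L = [L]^{\Phi\!\upharpoonright_{T}}$, I would form the $S$-sorted extension $[L,\varnothing^{S-T}]$. For $t\in T$ the hypothesis gives $[L,\varnothing^{S-T}]_{t}=[L_{t}]^{\Phi_{t}}$, and for $s\in S-T$ the $s$-component is $\varnothing$, which is trivially $\Phi_{s}$-saturated. Hence $[L,\varnothing^{S-T}]$ is $\Phi$-saturated, so by Proposition~\ref{PSat and sPSat} we have $\Phi\subseteq\Omega^{\mathbf{A}}([L,\varnothing^{S-T}])$. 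Upward closure of $\mathrm{Cgr}_{\mathrm{lfi}}(\mathbf{A})$ in $\mathbf{Cgr}(\mathbf{A})$ then yields that $\Omega^{\mathbf{A}}([L,\varnothing^{S-T}])$ is of $S$-finite index.

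For $(3)\Rightarrow(1)$, put $\Phi=\Omega^{\mathbf{A}}([L,\varnothing^{S-T}])$ and consider the locally finite quotient algebra $\mathbf{A}/\Phi$ together with the canonical projection $\mathrm{pr}_{\Phi}\colon\mathbf{A}\mor\mathbf{A}/\Phi$. I would set $\mathcal{M} = \mathrm{pr}_{\Phi}[[L,\varnothing^{S-T}]]\!\!\upharpoonright_{T}$, i.e.\ the $T$-restriction of the direct image. Since $[L,\varnothing^{S-T}]$ is $\Phi$-saturated (it belongs to $\Phi\text{-}\mathrm{Sat}(A)$ by the very definition of $\Omega^{\mathbf{A}}$), we have $[L,\varnothing^{S-T}] = \mathrm{pr}_{\Phi}^{-1}[\mathrm{pr}_{\Phi}[[L,\varnothing^{S-T}]]]$; restricting this identity to the sorts of $T$ one gets $L = (\mathrm{pr}_{\Phi}\!\!\upharpoonright_{T})^{-1}[\mathcal{M}]$, witnessing that $L$ is $(\mathrm{lf},T)$-recognizable. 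The only substantive step, and the one to check carefully, is the bookkeeping at sorts $s\in S-T$ in both $(2)\Rightarrow(3)$ and $(3)\Rightarrow(1)$: one must be sure that padding $L$ with $\varnothing$ outside $T$ does not disturb the saturation arguments, which it does not since $\varnothing$ is saturated by every equivalence.
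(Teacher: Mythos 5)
Your proposal is correct and is exactly the intended argument: the paper leaves this proposition unproved, presenting it as the locally finite analogue of the recognizability characterization it does prove, and your cyclic chain $(1)\Rightarrow(2)\Rightarrow(3)\Rightarrow(1)$ reproduces that proof with ``finite'' systematically replaced by ``$S$-finite,'' the filter property of $\mathrm{Cgr}_{\mathrm{lfi}}(\mathbf{A})$ in place of Proposition~\ref{Filter}, and the correct observation that padding with $\varnothing^{S-T}$ is harmless since $\varnothing$ is saturated by every equivalence. No gaps.
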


\begin{proposition}\label{lf,s-Rec iff lf-Rec}
Let $\mathbf{A}$ be a $\Sigma$-algebra, $s\in S$, and $L\subseteq A_{s}$. Then $L$ is $(\mathrm{lf},s)$-recognizable if and only if  $\delta^{s,L}$ is $\mathrm{lf}$-recognizable, i.e., $L\in \mathrm{Rec}_{\mathrm{lf},s}(\mathbf{A})$ if and only if $\delta^{s,L}\in \mathrm{Rec}_{\mathrm{lf}}(\mathbf{A})$. Thus $\mathrm{Rec}_{\mathrm{lf},s}(\mathbf{A})$ is isomorphic to a subset of $\mathrm{Rec}_{\mathrm{lf}}(\mathbf{A})$.
\end{proposition}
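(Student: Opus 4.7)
The plan is to mirror the proof of Proposition~\ref{s-Rec iff Rec}, replacing the appeal to ``finite $\Sigma$-algebra'' by ``locally finite $\Sigma$-algebra'' throughout. The key observation is that local finiteness, unlike finiteness, is preserved componentwise, so the same witness algebra and homomorphism will work for both formulations of recognizability; the only item that needs to be recast is the subset serving as the recognizing set.

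For the forward implication, I would start from a witness of $(\mathrm{lf},s)$-recognizability for $L$: a locally finite $\mathbf{B}$, a homomorphism $f\colon \mathbf{A}\mor \mathbf{B}$, and a subset $M$ of $B\!\!\upharpoonright_{\{s\}}$ with $(f\!\!\upharpoonright_{\{s\}})^{-1}[M] = L$, i.e., $f^{-1}_{s}[M_{s}] = L$. Then I would show that $\delta^{s,M_{s}}$, viewed as a subset of $B$, recognizes $\delta^{s,L}$ through the very same $f$: on sort $s$ one obtains $f^{-1}_{s}[M_{s}] = L = \delta^{s,L}_{s}$, and on any other sort $t\neq s$ both sides collapse to $\varnothing$ since $\delta^{s,M_{s}}_{t} = \varnothing$. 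Because $\mathbf{B}$ is locally finite, this exhibits $\delta^{s,L}\in \mathrm{Rec}_{\mathrm{lf}}(\mathbf{A})$.

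For the backward implication, I would start from a witness of $\mathrm{lf}$-recognizability for $\delta^{s,L}$: a locally finite $\mathbf{B}$, a homomorphism $f\colon \mathbf{A}\mor \mathbf{B}$, and a subset $N$ of $B$ with $f^{-1}[N] = \delta^{s,L}$. Then I would take $M = (N_{s})$ as the single-sorted component and observe that $(f\!\!\upharpoonright_{\{s\}})^{-1}[M] = f^{-1}_{s}[N_{s}] = f^{-1}[N]_{s} = \delta^{s,L}_{s} = L$. Since $\mathbf{B}$ is still locally finite, this yields $L\in \mathrm{Rec}_{\mathrm{lf},s}(\mathbf{A})$.

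Finally, to justify the concluding assertion, I would note that the assignment $L\mapsto \delta^{s,L}$ is well-defined from $\mathrm{Rec}_{\mathrm{lf},s}(\mathbf{A})$ to $\mathrm{Rec}_{\mathrm{lf}}(\mathbf{A})$ by the equivalence just established, and it is injective because the $s$-th component of $\delta^{s,L}$ recovers $L$. I do not anticipate any substantive obstacle; the whole argument is essentially bookkeeping about the delta of Kronecker, and the slight subtlety worth flagging is simply that local finiteness---defined sortwise---transfers without modification from the ``small'' witness of $L$ to the ``large'' witness of $\delta^{s,L}$ and vice versa, which is exactly where the parallel with Proposition~\ref{s-Rec iff Rec} (whose forward direction invoked the finiteness of $\mathbf{Im}(f)$ only indirectly) goes through verbatim.
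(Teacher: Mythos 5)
Your proof is correct; the paper actually states this proposition without proof, as one of several results said to be routine analogues of the finite-index case, and your argument (transporting the same locally finite witness $\mathbf{B}$ and homomorphism $f$ in both directions, adjusting only the recognizing subset via the delta of Kronecker) is exactly the intended one. The concluding injectivity of $L\mapsto\delta^{s,L}$ is also handled correctly.
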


\begin{proposition}
Let $\mathbf{A}$ be a $\Sigma$-algebra and $L\subseteq A$. Then the following assertions are equivalent:
\begin{enumerate}
\item $L\in \mathrm{Rec}_{\mathrm{lf}}(\mathbf{A})$.
\item There exists a congruence $\Phi$ on $\mathbf{A}$ of $S$-finite index such that $L = [L]^{\Phi}$.
\item $\Omega^{\mathbf{A}}(L)\in \mathrm{Cgr}_{\mathrm{lfi}}(\mathbf{A})$.
\end{enumerate}
\end{proposition}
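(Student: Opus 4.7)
The plan is to mimic, essentially verbatim, the strategy used in the proof of the analogous proposition for ordinary recognizability, since replacing ``finite'' by ``locally finite'' leaves every argument intact thanks to the parallel results already established (in particular, the fact that $\mathrm{Cgr}_{\mathrm{lfi}}(\mathbf{A})$ is a filter of $\mathbf{Cgr}(\mathbf{A})$ and the characterization of saturation by inclusion of congruences, Proposition~\ref{PSat and sPSat}). I would prove the three implications $(1)\Rightarrow (2)\Rightarrow (3)\Rightarrow (1)$ in sequence.

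For $(1)\Rightarrow (2)$, assume $L$ is $\mathrm{lf}$-recognizable, so that there exist a locally finite $\Sigma$-algebra $\mathbf{B}$, a homomorphism $f\colon\mathbf{A}\mor\mathbf{B}$, and $M\subseteq B$ with $L = f^{-1}[M]$. Take $\Phi = \mathrm{Ker}(f)$. Then $\mathbf{A}/\Phi$ is canonically isomorphic to the subalgebra $\mathbf{Im}(f)$ of $\mathbf{B}$, and hence, for every $s\in S$, $A_{s}/\Phi_{s}$ embeds in $B_{s}$, which is finite; therefore $\Phi$ is of $S$-finite index. The saturation $L = [L]^{\Phi}$ is verified sort by sort: if $s\in S$, $a\in A_{s}$, $x\in L_{s}$ and $(a,x)\in \Phi_{s} = \mathrm{Ker}(f_{s})$, then $f_{s}(a) = f_{s}(x)\in M_{s}$, so $a\in f_{s}^{-1}[M_{s}] = L_{s}$.

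For $(2)\Rightarrow (3)$, suppose $\Phi$ is a congruence on $\mathbf{A}$ of $S$-finite index with $L = [L]^{\Phi}$. By Proposition~\ref{PSat and sPSat}, this last equality is equivalent to $\Phi\subseteq \Omega^{\mathbf{A}}(L)$. Since $\mathrm{Cgr}_{\mathrm{lfi}}(\mathbf{A})$ is a filter of $\mathbf{Cgr}(\mathbf{A})$ (as stated earlier), and $\Phi$ belongs to this filter, so does $\Omega^{\mathbf{A}}(L)$.

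For $(3)\Rightarrow (1)$, assume $\Omega^{\mathbf{A}}(L)\in \mathrm{Cgr}_{\mathrm{lfi}}(\mathbf{A})$, so that $\mathbf{A}/\Omega^{\mathbf{A}}(L)$ is locally finite. Set $\mathcal{M} = \mathrm{pr}_{\Omega^{\mathbf{A}}(L)}[L]\subseteq A/\Omega^{\mathbf{A}}(L)$. The inclusion $L\subseteq \mathrm{pr}_{\Omega^{\mathbf{A}}(L)}^{-1}[\mathcal{M}]$ is immediate. For the reverse inclusion, let $s\in S$ and $b\in \mathrm{pr}_{\Omega^{\mathbf{A}}(L)_{s}}^{-1}[\mathcal{M}_{s}]$; then $[b]_{\Omega^{\mathbf{A}}(L)_{s}} = [a]_{\Omega^{\mathbf{A}}(L)_{s}}$ for some $a\in L_{s}$, and since $L$ is $\Omega^{\mathbf{A}}(L)$-saturated (Proposition~\ref{PSat and sPSat} applied in the reverse direction, using $\Omega^{\mathbf{A}}(L)\subseteq\Omega^{\mathbf{A}}(L)$), we conclude $b\in L_{s}$. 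Hence $L = \mathrm{pr}_{\Omega^{\mathbf{A}}(L)}^{-1}[\mathcal{M}]$, exhibiting $L$ as the inverse image of a subset of a locally finite quotient, so $L\in \mathrm{Rec}_{\mathrm{lf}}(\mathbf{A})$.

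No genuine obstacle arises: the proof is a transcription of the recognizability proof, with ``finite'' replaced by ``locally finite''. The only point that requires care is ensuring that the filter property is invoked for $\mathrm{Cgr}_{\mathrm{lfi}}(\mathbf{A})$ rather than $\mathrm{Cgr}_{\mathrm{fi}}(\mathbf{A})$, and that the quotient $\mathbf{A}/\mathrm{Ker}(f)$ inherits local finiteness from $\mathbf{B}$ sortwise (which it does trivially, since subsets of finite sets are finite).
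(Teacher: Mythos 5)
Your proof is correct and follows exactly the route the paper intends: the paper states this proposition without proof as one of the results ``similar to those we have just stated for recognizability and congruence of finite index,'' and your argument is the faithful transcription of the paper's proof of the finite-index version, with the filter property of $\mathrm{Cgr}_{\mathrm{lfi}}(\mathbf{A})$ and sortwise finiteness of subsets of finite sets supplying the only points that needed checking.
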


\section{Recognizable subsets of a free many-sorted algebra}

This section is devoted to provide congruence based proofs of the recognizability theorems for free many-sorted algebras.
Actually, in order to deal with the different cases of recognizability, classified according to the type of operator under consideration, we have divided  this section into several subsections. In Subsection 1, entitled \emph{Basic terms}, we prove that the final sets containing a variable, a constant or an operation symbol applied to a suitable family of variables are recognizable. These results will be used later on to prove that the set of recognizable languages of a many-sorted term algebra forms an algebra of the same signature. In Subsection 2, entitled \emph{Substitutions}, after defining several substitution operators associated to a free many-sorted algebra and investigating the relationships between them, we prove that, if all input languages for a given substitution are recognizable, then the output language is recognizable as well. In Subsection 3, entitled \emph{Iterations}, we introduce the notion of iteration of a language with respect to a variable and we prove that, if the input language is recognizable then its iteration with respect to a variable is also recognizable. In Subsection 4, entitled \emph{Quotients}, we introduce the notion of quotient of a language by another with respect to a variable and we prove that if the input language is recognizable then its quotient is also recognizable.  In Subsection 5, entitled \emph{Tree Homomorphisms}, we introduce the notion of hyperderivor as a way of transforming terms relative to a signature into terms relative to another signature. We show that tree homomorphisms are, really, homomorphisms from a free many-sorted algebra to another many-sorted algebra, itself derived from a many-sorted algebra relative to another signature. Then we prove that the inverse image of a recognizable language under a tree homomorphism is recognizable and that the direct image of a recognizable language by a linear tree homomorphism, which is a particular type of tree homomorphism, is also recognizable. Finally, Subsection~6 is devoted to the study of derivors. We first introduce the category of many-sorted signatures and derivors. Next, after defining a contravariant functor from this category to the category of $\boldsymbol{\mathcal{U}}$-locally small categories and functors, we obtain by means of the Grothendieck contruction, the category of ordered pairs consisting of a many-sorted signature and an algebra of the same signature and the pairs of derivors and derived homomorphisms of algebras as morphisms. As a consequence, after showing that every derivor, together with some additional data, gives rise to a hyperderivor, the counterparts of the two main results in Subsection~5 are immediately obtained.


From now on, following a strongly rooted tradition in the fields of formal languages and automata, we agree to call \emph{languages} the subsets of the underlying many-sorted set of the free many-sorted algebra on a many-sorted set.

\subsection{Basic terms}
In this subsection we prove some basic recognizability results relative to a free algebra on an $S$-sorted set. Concretely, we prove that the final sets consisting, respectively, of a variable, a constant, and the action of an operator symbol on a family of variables, are recognizable.

\begin{assumption}
In this subsection we will assume that $S$ is finite.
\end{assumption}

\begin{proposition}\label{PRecVar}
Let $\Sigma$ be an $S$-sorted signature, $X$ an $S$-sorted set, $s\in S$, and $x$ and element in $X_{s}$. Then the language $\{x\}\subseteq \mathrm{T}_{\Sigma}(X)_{s}$ is recognizable.
\end{proposition}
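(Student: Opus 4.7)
The plan is to exhibit, by direct construction, a finite $\Sigma$-algebra $\mathbf{B}$ together with a homomorphism $f\colon\mathbf{T}_{\Sigma}(X)\mor \mathbf{B}$ and a subset $M\subseteq B$ such that $f^{-1}[M] = \delta^{s,x}$; then, by Proposition~\ref{s-Rec iff Rec}, the language $\{x\}\subseteq \mathrm{T}_{\Sigma}(X)_{s}$ is $s$-recognizable.

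First, I would define the underlying $S$-sorted set $B$ of $\mathbf{B}$ by $B_{s} = \{1,2\}$ and $B_{t} = \{\star\}$ for every $t\in S-\{s\}$. Since $S$ is finite by the standing assumption of the subsection, $B$ is finite. The structure of $\Sigma$-algebra on $B$ is essentially forced: for every $(w,u)\in S^{\star}\times S$ and every $\sigma\in \Sigma_{w,u}$, I set $F^{\mathbf{B}}_{\sigma}$ to be the mapping from $B_{w}$ to $B_{u}$ constantly equal to $\star$ when $u\neq s$ (the codomain being a singleton, this is the only choice), and constantly equal to $2$ when $u=s$. The intuitive idea is that $1\in B_{s}$ is reserved for $x$, while any other $(X,s)$-term, whether a different variable, a constant, or a compound term, should be collapsed to $2$.

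Next, I would define an $S$-sorted mapping $g\colon X\mor B$ by $g_{s}(x) = 1$, $g_{s}(y) = 2$ for every $y\in X_{s}-\{x\}$, and $g_{t}$ the constant mapping to $\star$ for every $t\in S-\{s\}$. By the universal property of the free many-sorted algebra, let $f = g^{\sharp}$ be the unique homomorphism from $\mathbf{T}_{\Sigma}(X)$ to $\mathbf{B}$ such that $f\circ \eta_{X} = g$. Applying the Principle of Proof by Algebraic Induction (Proposition~\ref{PPAI}) to the subset $Y = \{P\in \mathrm{T}_{\Sigma}(X)_{s} \mid f_{s}(P) = 1 \Rightarrow P = x\} \cup \bigcup_{t\neq s} \mathrm{T}_{\Sigma}(X)_{t}$, one verifies immediately that $f_{s}^{-1}[\{1\}] = \{x\}$: the basis uses the definition of $g$, and the induction step exploits the fact that for every $(w,s)\in S^{\star}\times S$ and every $\sigma\in \Sigma_{w,s}$, $F^{\mathbf{B}}_{\sigma}$ is constantly $2$, so neither $\sigma\in \Sigma_{\lambda,s}$ nor any $\sigma((P_{i})_{i\in \bb{w}})$ with $\sigma\in \Sigma_{w,s}$ and $w\neq \lambda$ can be sent to $1$.

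Finally, taking $M = \delta^{s,\{1\}}\subseteq B$, the above yields $f^{-1}[M] = \delta^{s,x}$, so $\delta^{s,x}\in \mathrm{Rec}(\mathbf{T}_{\Sigma}(X))$ and hence, by Proposition~\ref{s-Rec iff Rec}, $\{x\}\in \mathrm{Rec}_{s}(\mathbf{T}_{\Sigma}(X))$. There is no substantial obstacle: the only delicate point is that the target algebra $\mathbf{B}$ be well-defined and finite, which the singleton padding at every sort $t\neq s$ and the finiteness of $S$ guarantee; the rest is a routine separation argument, forced by the universal property of the free many-sorted algebra on $X$.
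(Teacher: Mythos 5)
Your proposal is correct and follows essentially the same strategy as the paper's proof: a finite $\Sigma$-algebra with all structural operations constant, a marker element at sort $s$ reserved for $x$, and the extension of the obvious $S$-sorted mapping via the universal property of $\mathbf{T}_{\Sigma}(X)$. The paper uses the two-element set $\{0,1\}$ at every sort with the characteristic mapping of $\{x\}$, whereas you pad the other sorts with singletons and route through $\delta^{s,x}$ and Proposition~\ref{s-Rec iff Rec}; these are only cosmetic differences.
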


\begin{proof}
We will denote by $2^{S}$ or, to abbreviate, by $2$, the $S$-sorted set $(2)_{s\in S}$, where $2=\{0,1\}$. Let $\mathbf{2}=(2,G^{\mathbf{2}})$ be the finite $\Sigma$-algebra defined as follows:
For every $(w,t)\in S^{\star}\times S$ and every $\sigma\in\Sigma_{w,t}$, the structural operation associated to $\sigma$, $G^{\mathbf{2}}_{\sigma}\colon 2_{w}\mor 2$, is the constant mapping with value $0$. Let $f\colon X\mor 2$ be the $S$-sorted mapping such that $f_{s}\colon X\mor 2$ is $\mathrm{ch}_{\{x\}}$, the characteristic mapping of $\{x\}$, and, for $t\in S-\{s\}$, $f_{t}$ is the mapping constantly $0$. Then, for $f^{\sharp}$, the unique homomorphism from $\mathbf{T}_{\Sigma}(X)$ to $\mathbf{2}$ such that $f^{\sharp}\circ \eta_{X} = f$,
we have that, for every $P\in \mathrm{T}_{\Sigma}(X)_{s}$, $f^{\sharp}_{s}(P)= \mathrm{ch}_{\{x\}}(P)  = 1$ if and only if $P = x$. Hence, since $\{x\}=(f^{\sharp}_{s})^{-1}[\{1\}]$,  the language $\{x\}$ is recognizable.
\end{proof}

\begin{proposition}\label{PRecConst}
Let $\Sigma$ be an $S$-sorted signature, $X$ an $S$-sorted set, and $\sigma\in\Sigma_{\lambda,s}$. Then the language $\{\sigma\}\subseteq \mathrm{T}_{\Sigma}(X)_{s}$ is recognizable.
\end{proposition}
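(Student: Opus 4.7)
The plan is to mimic the strategy used in Proposition~\ref{PRecVar} for variables, adapting it to single out a constant symbol instead. What I need to exhibit is a finite $\Sigma$-algebra $\mathbf{A}$, a homomorphism $h\colon \mathbf{T}_{\Sigma}(X)\mor \mathbf{A}$, and a subset $M$ of $A_{s}$ such that $h_{s}^{-1}[M] = \{\sigma\}$; since $S$ is assumed finite in this subsection, any $\Sigma$-algebra whose underlying $S$-sorted set is $(2)_{s\in S}$ is automatically finite, so this produces recognizability directly from the definition.

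Concretely, I would take $\mathbf{A} = (2^{S},G^{\mathbf{A}})$, and define the structural operations so as to distinguish the specific constant $\sigma\in\Sigma_{\lambda,s}$ from every other basic building block. For the distinguished nullary symbol $\sigma$, set $G^{\mathbf{A}}_{\sigma}\colon 1\mor 2$ to be the mapping picking out $1$; for every other pair $(w,t)\in S^{\star}\times S$ and every $\tau\in \Sigma_{w,t}$ with $(w,t,\tau)\neq(\lambda,s,\sigma)$, set $G^{\mathbf{A}}_{\tau}\colon 2_{w}\mor 2$ to be the constant mapping with value $0$. Then let $g\colon X\mor 2$ be the $S$-sorted mapping whose every component is constantly $0$, and let $h = g^{\sharp}$ be the unique homomorphism from $\mathbf{T}_{\Sigma}(X)$ to $\mathbf{A}$ extending $g$, obtained from the universal property of $(\eta_{X},\mathbf{T}_{\Sigma}(X))$.

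The key step is then the verification that $h_{s}^{-1}[\{1\}] = \{\sigma\}$, carried out by algebraic recursion on terms, using the characterization in Proposition~\ref{rut}. If $P = x$ for some $x\in X_{t}$, then $h_{t}(P) = g_{t}(x) = 0$. If $P = \tau$ for some $\tau\in \Sigma_{\lambda,t}$, then $h_{t}(P) = \tau^{\mathbf{A}}$, which equals $1$ precisely when $(t,\tau) = (s,\sigma)$ and equals $0$ otherwise. Finally, if $P = \tau(P_{0},\ldots,P_{\bb{w}-1})$ with $w\in S^{\star}-\{\lambda\}$, then $h_{t}(P) = G^{\mathbf{A}}_{\tau}(h_{w_{0}}(P_{0}),\ldots,h_{w_{\bb{w}-1}}(P_{\bb{w}-1})) = 0$ by construction, because $w\neq \lambda$ places $\tau$ outside the distinguished case. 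Hence the only term whose image under $h_{s}$ is $1$ is $\sigma$ itself, so $\{\sigma\} = h_{s}^{-1}[\{1\}]$ and $\{\sigma\}\in \mathrm{Rec}_{s}(\mathbf{T}_{\Sigma}(X))$.

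I do not anticipate any genuine obstacle here: the only thing to be slightly careful about is to keep the case analysis in Proposition~\ref{rut} disjoint, so that the definition of $G^{\mathbf{A}}_{\sigma}$ at the nullary level is not in conflict with the ``constant $0$'' prescription for the other symbols, and to remember that a constant symbol produces a term with empty list of immediate subterms, so that its image under $h_{s}$ is fixed purely by $\sigma^{\mathbf{A}}$ and not by any recursive call.
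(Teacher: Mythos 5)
Your proposal is correct and follows essentially the same route as the paper: the two-element algebra $(2)_{s\in S}$ with $\sigma$ interpreted as $1$, every other operation constantly $0$, the zero assignment on variables, and the observation that the induced homomorphism sends exactly $\sigma$ to $1$. Your explicit exclusion of the triple $(\lambda,s,\sigma)$ from the ``constant $0$'' clause is in fact slightly more careful than the paper's own phrasing, which leaves that exclusion implicit.
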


\begin{proof}
Let $\mathbf{2}=(2,H^{\mathbf{2}})$ be the finite $\Sigma$-algebra defined as follows: $H^{\mathbf{2}}_{\sigma}$ is the mapping from $2_{\lambda}$ to $2$ that sends the unique member of $2_{\lambda}$ to $1$, and, for every $(w,t)\in S^{\star}\times S$ and every $\tau\in\Sigma_{w,t}$, $H^{\mathbf{2}}_{\tau}\colon 2_{w}\mor 2$ is the constant mapping to $0$. Let $\kappa\colon X\mor 2$ be the $S$-sorted mapping constantly zero at each sort. Then, for $\kappa^{\sharp}$, the unique homomorphism from $\mathbf{T}_{\Sigma}(X)$ to $\mathbf{2}$ such that
$\kappa^{\sharp}\circ \eta_{X} = \kappa$,
we have that, for every $P\in \mathrm{T}_{\Sigma}(X)_{s}$, $\kappa_{s }^{\sharp}(P) = 1$ if and only if $P = \sigma$. Hence, since $\{\sigma\}=(\kappa_{s}^{\sharp})^{-1}[\{1\}]$, the language $\{\sigma\}$ is recognizable.
\end{proof}

\begin{proposition}\label{PRecOp}
Let $\Sigma$ be an $S$-sorted signature, $X$ an $S$-sorted set, $(w,s)\in (S^{\star}-\{\lambda\})\times S$, $\sigma\in\Sigma_{w,s}$, and $(x_{i})_{i\in \bb{w}}$ a family of variables in $X_{w}$. Then the language $\{\sigma((x_{i})_{i\in \bb{w}})\}\subseteq\mathrm{T}_{\Sigma}(X)_{s}$ is recognizable.
\end{proposition}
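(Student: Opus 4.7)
The approach I would take is to generalize the construction used in Propositions~\ref{PRecVar} and \ref{PRecConst} by building a finite $\Sigma$-algebra $\mathbf{B}$ whose carrier at each sort consists of tags for the designated variables together with a distinguished marker $\star$ at sort $s$ for the target term and a junk state. The relevant finite $S$-sorted subset of $X$ is $V=(V_t)_{t\in S}$ with $V_t=\{x_i\mid i\in\lvert w\rvert,\; w_i=t\}$; because $\lvert w\rvert$ is finite, each $V_t$ is finite, and by the standing assumption that $S$ is finite, so is the resulting carrier.

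Concretely, I would set $B_t=V_t\sqcup\{0_t\}$ for $t\neq s$ and $B_s=V_s\sqcup\{0_s,\star\}$; interpret $\sigma$ as the operation $G^{\mathbf{B}}_\sigma\colon B_w\to B_s$ that sends the distinguished tuple $(x_i)_{i\in\lvert w\rvert}\in B_w$ to $\star$ and every other tuple to $0_s$; and interpret every other formal operation of $\Sigma$ as the constant mapping to $0$. Defining the $S$-sorted mapping $f\colon X\to B$ by $f_t(y)=y$ if $y\in V_t$ and $f_t(y)=0_t$ otherwise, I would let $f^{\sharp}\colon\mathbf{T}_\Sigma(X)\to\mathbf{B}$ be its unique homomorphic extension and claim that $(f^{\sharp}_s)^{-1}[\{\star\}]=\{\sigma((x_i)_{i\in\lvert w\rvert})\}$, which would yield recognizability.

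The containment from right to left is a one-line computation by unfolding the definition of $f^\sharp$. The reverse containment is the main step and proceeds by applying Proposition~\ref{rut} to an arbitrary $P\in\mathrm{T}_\Sigma(X)_s$ with $f^{\sharp}_s(P)=\star$. If $P=y\in X_s$ then $f^{\sharp}_s(P)\in V_s\cup\{0_s\}$ and cannot equal $\star$; if $P=\tau\in\Sigma_{\lambda,s}$ then $\tau\neq\sigma$ (since $w\neq\lambda$), so $f^{\sharp}_s(P)=0_s$; if $P=\tau((P_j)_{j\in\lvert v\rvert})$ with $\tau\in\Sigma_{v,s}$ and $v\neq\lambda$, then $G^{\mathbf{B}}_\tau$ is constantly $0_s$ unless $\tau=\sigma$ and $v=w$, so these equalities must hold, and then $f^{\sharp}_{w_i}(P_i)=x_i$ for every $i\in\lvert w\rvert$. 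Since the image of every structural operation of $\mathbf{B}$ is contained in $\{0_t\mid t\in S\}\cup\{\star\}$, the value $x_i\in V_{w_i}$ is attainable only at variables, and by construction of $f$ only at $x_i$ itself; hence each $P_i=x_i$ and $P=\sigma((x_i)_{i\in\lvert w\rvert})$. The only subtlety in this argument is the bookkeeping needed when $(x_i)_{i\in\lvert w\rvert}$ has repetitions or when underlying elements of $X$ lie in several sorts; indexing the tag sets $V_t$ by sort renders both considerations automatic.
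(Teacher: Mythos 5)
Your proof is correct and follows essentially the same route as the paper's: both build a finite $\Sigma$-algebra with one tag per distinct designated variable at each sort, a junk element at every sort, and a distinguished accepting element at sort $s$, interpret $\sigma$ so that it detects exactly the tuple of tags of $(x_{i})_{i\in\bb{w}}$ while every other operation collapses to junk, and then check via Proposition~\ref{rut} that only $\sigma((x_{i})_{i\in\bb{w}})$ is sent to the accepting element. The only difference is cosmetic: the paper relabels the tags as ordinals $0,\ldots,k_{t}-1$ through a fixed bijection $\varphi_{t}$, whereas you use the variables themselves, which handles repetitions and shared underlying sets in exactly the same way.
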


\begin{proof}
For $w\in S^{\star}-\{\lambda\}$, let $\downarrow\!\!w$ be the $S$-sorted set defined, for every $t\in S$ as follows: $(\downarrow\!\!w)_{t}=\{i\in\bb{w}\mid w_{i}=t\}$, i.e., $(\downarrow\!\!w)_{t}=w^{-1}[\{t\}]$. Therefore, for every $t\in S-\mathrm{Im}(w)$, $(\downarrow\!\!w)_{t}=\varnothing$, while, for every $t\in\mathrm{Im}(w)$, $(\downarrow\!\!w)_{t}\neq\varnothing$. For every $S$-sorted set $X$, the sets $\mathrm{Hom}(\downarrow\!\!w,X)$ and $X_{w}$ are naturally isomorphic. If $x=(x_{i})_{i\in\bb{w}}\in X_{w}$, i.e., if $x$ is a mapping from $\bb{w}$ to $\bigcup_{i\in\bb{w}}X_{w_{i}}$ such that, for every $i\in\bb{w}$, $x_{i}\in X_{w_{i}}$, then we will denote by $\overline{x}$ the $S$-sorted mapping from $\downarrow\!\!w$ to $X$ which is associated, in virtue of the natural isomorphism, to $x$ and is defined as follows: For every $t\in S$ and every $i\in (\downarrow\!\!w)_{t}$, then $\overline{x}_{t}(i) = x_{i}$. Let us note that, for every $t\in S-\mathrm{Im}(w)$, $\overline{x}_{t}$ is the unique mapping from $\varnothing$ to $X_{t}$. Thus, given $w\in S^{\star}-\{\lambda\}$ and $x=(x_{i})_{i\in\bb{w}}\in X_{w}$, let $k = (k_{t})_{t\in S}$ be the $S$-sorted set defined, for every $t\in S$, as $k_{t} = \mathrm{card}((\mathrm{Im}(\overline{x}_{t}))) = \mathrm{card}(\{x_{i}\mid i\in w^{-1}[\{t\}]\})$. Then, for every $t\in S$, $k_{t}$ is the number of different variables of type $t$ in $x$.

Let $\varphi\colon\mathrm{Im}(\overline{x})\mor k$ be a fixed $S$-sorted mapping such that, for every $t\in S$, $\varphi_{t}\colon \mathrm{Im}(\overline{x}_{t})\mor k_{t}$ is a bijection.  Let $K$ be the $S$-sorted set such that, for every $t\in S-\{s\}$, $K_{t}=k_{t}+1$ and $K_{s}=k_{s}+2$. Then let $\mathbf{K}=(K,I^{\mathbf{K}})$ be the $\Sigma$-algebra defined as follows: For $(u,t)\in S^{\star}\times S$ and $\tau\in\Sigma_{u,t}$, if $(u,t)\neq (w,s)$ or $\tau\neq \sigma$,  then $I^{\mathbf{K}}\colon K_{u}\mor K_{t}$ is the constant mapping with value $k_{t}$, and for $\tau=\sigma$,  $I^{\mathbf{K}}$ is the mapping:
$$
I^{\mathbf{K}}_{\sigma}
\nfunction
{K_{w}}
{K_{s}}
{a}
{
\begin{cases}
k_{s}+1, & \text{if $a=\varphi(x)$;}\\
k_{s}, & \text{otherwise.}
\end{cases}
}
$$
Let $f\colon X\mor K$ be the $S$-sorted mapping defined, for every $s\in S$, as follows:
$$
f_{s}
\nfunction
{X_{s}}
{K_{s}}
{z}
{
\begin{cases}
\varphi_{s}(z), & \text{if $z$ is a variable in $\mathrm{Im}(\overline{x})_{s}$;}\\
k_{s}, & \text{otherwise.}
\end{cases}
}
$$
Then, for the unique homomorphism $f^{\sharp}$ from $\mathbf{T}_{\Sigma}(X)$ to $\mathbf{K}$ such that
$f^{\sharp}\circ \eta_{X} = f$,
we have that, for every $P\in \mathrm{T}_{\Sigma}(X)_{s}$, $f^{\sharp}(P) = k_{s}+1$ if and only if $P = \sigma((x_{i})_{i\in \bb{w}})$. Hence, since $\{\sigma((x_{i})_{i\in \bb{w}})\}=(f^{\sharp})^{-1}_{s}[\{k_{s}+1\}]$, the language $\{\sigma((x_{i})_{i\in n})\}$ is recognizable.
\end{proof}

\subsection{Substitutions}
In this subsection we introduce several substitution operators associated to a free algebra with the aim of proving that, if the languages under consideration are recognizable, then the language that results from the substitution operator applied to these languages is also recognizable.

\begin{definition}\label{GlobSubstOp}
Let $X$ be an $S$-sorted set, $s\in S$, and $P\in \mathrm{T}_{\Sigma}(X)_{s}$. Then we will denote by $\left(\!\begin{smallmatrix}(x,t)\\ \cdot\end{smallmatrix}\!\right)_{(x,t)\in \coprod X}(P)$ the mapping from $\prod_{(x,t)\in \coprod X}\mathrm{T}_{\Sigma}(X)_{t}^{\bb{P}_{x}}$ to $\mathrm{T}_{\Sigma}(X)_{s}$ that assigns to $\left((Q^{x,t}_{\alpha})_{\alpha\in \bb{P}_{x}}\right)_{(x,t)\in \coprod X}$ in $\prod_{(x,t)\in \coprod X}\mathrm{T}_{\Sigma}(X)_{t}^{\bb{P}_{x}}$ the term $\left(\!\begin{smallmatrix}(x,t)\\(Q^{x,t}_{\alpha})_{\alpha\in \bb{P}_{x}}\end{smallmatrix}\!\right)_{(x,t)\in \coprod X}(P)$ in $\mathrm{T}_{\Sigma}(X)_{s}$ obtained by substituting in $P$, for every $t\in S$, every $x\in X_{t}$, and every $\alpha\in \bb{P}_x$, $Q^{x,t}_{\alpha}$ for the $i_{\alpha}$-th occurrence of $x$ in $P$. We will call $\left(\!\begin{smallmatrix}(x,t)\\(Q^{x,t}_{\alpha})_{\alpha\in \bb{P}_{x}}\end{smallmatrix}\!\right)_{(x,t)\in \coprod X}(P)$ \emph{the substitution of $(Q^{x,t}_{\alpha})_{\alpha\in\bb{P}_{x}}$ for $x$ in $P$ for every $t\in S$ and every $x$ in $X_{t}$}, and $\left(\!\begin{smallmatrix}(x,t)\\ \cdot\end{smallmatrix}\!\right)_{(x,t)\in \coprod X}(P)$ the \emph{global substitution operator for $P$}.

Let $u$ be a sort in $S$ and $z\in X_{u}$. Then we will denote by $\left(\!\begin{smallmatrix}z\\ \cdot\end{smallmatrix}\!\right)(P)$ the mapping from $\mathrm{T}_{\Sigma}(X)_{u}^{\bb{P}_{z}}$ to $\mathrm{T}_{\Sigma}(X)_{s}$ that assigns to $(Q^{z}_{\alpha})_{\alpha\in \bb{P}_{z}}$ in $\mathrm{T}_{\Sigma}(X)_{u}^{\bb{P}_{z}}$ the term $\left(\!\begin{smallmatrix}z\\(Q^{z}_{\alpha})_{\alpha\in \bb{P}_{z}}\end{smallmatrix}\!\right)(P)$ in $\mathrm{T}_{\Sigma}(X)_{s}$ obtained by substituting in $P$, for every $\alpha\in \bb{P}_{z}$, $Q^{z}_{\alpha}$ for the $i_{\alpha}$-th occurrence of $z$ in $P$. We will call $\left(\!\begin{smallmatrix}z\\(Q^{z}_{\alpha})_{\alpha\in \bb{P}_{z}}\end{smallmatrix}\!\right)(P)$ \emph{the substitution of $(Q^{z}_{\alpha})_{\alpha\in\bb{P}_{z}}$ for $z$ in $P$}.
\end{definition}

\begin{remark}
For an $S$-sorted set $X$, a sort $s\in S$, and a term $P$ in $\mathrm{T}_{\Sigma}(X)_{s}$, the domain of $\left(\!\begin{smallmatrix}(x,t)\\ \cdot\end{smallmatrix}\!\right)_{(x,t)\in \coprod X}(P)$ is the set of all choice functions for $(\mathrm{T}_{\Sigma}(X)_{t}^{\bb{P}_{x}})_{(x,t)\in \coprod X}$. Regarding the index set $\coprod X$, since $\mathrm{Var}^{X}(P)$, the $S$-sorted of the variables of $P$, is finite, we can, without loss of generality, replace it by $\bigcup_{t\in \mathrm{sort}(\mathrm{Var}^{X}(P))}(X_{t}\times\{t\})$, where $\mathrm{sort}(\mathrm{Var}^{X}(P))$ is the finite set of the sorts of the variables which appear in $P$. However, for uniformity, we will continue using $\coprod X$.
\end{remark}

\begin{remark}
Let $X$ be an $S$-sorted set, $s$ a sort in $S$, and $P$ a term in $\mathrm{T}_{\Sigma}(X)_{s}$. Then we will denote by $\left(\!\begin{smallmatrix}x\\ \cdot\end{smallmatrix}\!\right)_{x\in X_{s}}(P)$ the mapping from $\prod_{x\in X_{s}}\mathrm{T}_{\Sigma}(X)_{s}^{\bb{P}_{x}}$ to $\mathrm{T}_{\Sigma}(X)_{s}$ that assigns to $\left((Q^{x}_{\alpha})_{\alpha\in \bb{P}_{x}}\right)_{x\in X_{s}}$ in $\prod_{x\in X_{s}}\mathrm{T}_{\Sigma}(X)_{s}^{\bb{P}_{x}}$ the term $\left(\!\begin{smallmatrix}x\\(Q^{x}_{\alpha})_{\alpha\in \bb{P}_{x}}\end{smallmatrix}\!\right)_{x\in X_{s}}(P)$ in $\mathrm{T}_{\Sigma}(X)_{s}$ obtained by substituting in $P$, for every $x\in X_{s}$ and every $\alpha\in \bb{P}_x$, $Q^{x}_{\alpha}$ for the $i_{\alpha}$-th occurrence of $x$ in $P$. We will call $\left(\!\begin{smallmatrix}x\\(Q^{x}_{\alpha})_{\alpha\in \bb{P}_{x}}\end{smallmatrix}\!\right)_{x\in X_{s}}(P)$ the substitution of $(Q^{x}_{\alpha})_{\alpha\in\bb{P}_{x}}$ for $x$ in $P$ for every $x$ in $X_{s}$, and $\left(\!\begin{smallmatrix}x\\ \cdot\end{smallmatrix}\!\right)_{x\in X_{s}}(P)$ the substitution operator for $P$.

Since (1) for every $s\in S$, $X_{s}\cong X_{s}\times \{s\}$, (2) by the associativity of the product, $\prod_{(x,t)\in \coprod X}\mathrm{T}_{\Sigma}(X)_{t}^{\bb{P}_{x}} \cong \prod_{t\in S}(\prod_{x\in X_{t}}\mathrm{T}_{\Sigma}(X)_{t}^{\bb{P}_{x}})$, and (3) $\prod_{x\in X_{s}}\mathrm{T}_{\Sigma}(X)_{s}^{\bb{P}_{x}}$ is naturally isomorphic to a subset of $\prod_{t\in S}(\prod_{x\in X_{t}}\mathrm{T}_{\Sigma}(X)_{t}^{\bb{P}_{x}})$, we have that $\left(\!\begin{smallmatrix}x\\ \cdot\end{smallmatrix}\!\right)_{x\in X_{s}}(P)$ is, essentially, the restriction of $\left(\!\begin{smallmatrix}(x,t)\\ \cdot\end{smallmatrix}\!\right)_{(x,t)\in \coprod X}(P)$ to $\prod_{x\in X_{s}}\mathrm{T}_{\Sigma}(X)_{s}^{\bb{P}_{x}}$. Similarly $\left(\!\begin{smallmatrix}z\\ \cdot\end{smallmatrix}\!\right)(P)$ is, essentially, the restriction of $\left(\!\begin{smallmatrix}(x,t)\\ \cdot\end{smallmatrix}\!\right)_{(x,t)\in \coprod X}(P)$ to $\mathrm{T}_{\Sigma}(X)_{u}^{\bb{P}_{z}}$.
\end{remark}

We next prove that, for every $s\in S$ and every $P\in \mathrm{T}_{\Sigma}(X)$, $\left(\!\begin{smallmatrix}(x,t)\\ \cdot\end{smallmatrix}\!\right)_{(x,t)\in \coprod X}(P)$ is, actually, a mapping from $\prod_{(x,t)\in \coprod X}\mathrm{T}_{\Sigma}(X)_{t}^{\bb{P}_{x}}$ to $\mathrm{T}_{\Sigma}(X)_{s}$.

\begin{proposition}
Let $X$ be an $S$-sorted set, $s\in S$, and $P$ a term in $\mathrm{T}_{\Sigma}(X)_{s}$. Then, for every $\left((Q^{x,t}_{\alpha})_{\alpha\in \bb{P}_{x}}\right)_{(x,t)\in \coprod X}$ in $\prod_{(x,t)\in \coprod X}\mathrm{T}_{\Sigma}(X)_{t}^{\bb{P}_{x}}$, we have that $\left(\!\begin{smallmatrix}(x,t)\\(Q^{x,t}_{\alpha})_{\alpha\in \bb{P}_{x}}\end{smallmatrix}\!\right)_{(x,t)\in \coprod X}(P)$ is a term in $\mathrm{T}_{\Sigma}(X)_{s}$. Therefore $\left(\!\begin{smallmatrix}(x,t)\\ \cdot\end{smallmatrix}\!\right)_{(x,t)\in \coprod X}(P)$ is a mapping from $\prod_{(x,t)\in \coprod X}\mathrm{T}_{\Sigma}(X)_{t}^{\bb{P}_{x}}$ to $\mathrm{T}_{\Sigma}(X)_{s}$.
\end{proposition}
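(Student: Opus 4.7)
The natural strategy is to fix, once and for all, the $S$-sorted set $X$, and to prove the statement by algebraic induction on $P\in \mathrm{T}_{\Sigma}(X)_{s}$, invoking the Principle of Proof by Algebraic Induction (Proposition~\ref{PPAI}). Concretely, define $\mathcal{T}$ to be the $S$-sorted subset of $\mathrm{T}_{\Sigma}(X)$ consisting, for each $s\in S$, of those terms $P\in\mathrm{T}_{\Sigma}(X)_{s}$ such that, for every choice of substitution data $\left((Q^{x,t}_{\alpha})_{\alpha\in \bb{P}_{x}}\right)_{(x,t)\in \coprod X}$ in $\prod_{(x,t)\in \coprod X}\mathrm{T}_{\Sigma}(X)_{t}^{\bb{P}_{x}}$, the word obtained by performing the prescribed substitutions in $P$ belongs to $\mathrm{T}_{\Sigma}(X)_{s}$. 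It will suffice to verify that $(\{(x)\mid x\in X_{s}\})_{s\in S}\subseteq \mathcal{T}$ and that $\mathcal{T}$ is a subalgebra of $\mathbf{T}_{\Sigma}(X)$.

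For the algebraic induction basis there are two elementary cases, coming from Proposition~\ref{rut}. If $P=x$ for some $s\in S$ and some $x\in X_{s}$, then $\bb{P}_{x}=1$ and $\bb{P}_{y}=0$ for every other variable $y$, so the substitution operator reduces to sending the data to $Q^{x,s}_{0}\in \mathrm{T}_{\Sigma}(X)_{s}$, which is a term. If $P=\sigma$ with $\sigma\in\Sigma_{\lambda,s}$, then $P$ contains no variable occurrences and the substitution returns $\sigma$ itself, which is in $\mathrm{T}_{\Sigma}(X)_{s}$. For the algebraic induction step, let $(w,s)\in (S^{\star}-\{\lambda\})\times S$, $\sigma\in\Sigma_{w,s}$, and assume that $P_{i}\in\mathcal{T}_{w_{i}}$ for each $i\in\bb{w}$. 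Consider $P=\sigma(P_{0},\dots,P_{\bb{w}-1})$. For every $(x,t)\in\coprod X$ we have $\bb{P}_{x}=\sum_{i\in\bb{w}}\bb{P_{i}}_{x}$, and the enumeration in ascending order of the occurrences of $x$ in $P$ is obtained by concatenating, in order of $i$, those of the $P_{i}$'s. Hence the substitution data prescribed for $P$ can be split canonically, for each $i\in\bb{w}$, into a family $\left((Q^{x,t,i}_{\alpha})_{\alpha\in \bb{P_{i}}_{x}}\right)_{(x,t)\in \coprod X}$ in $\prod_{(x,t)\in \coprod X}\mathrm{T}_{\Sigma}(X)_{t}^{\bb{P_{i}}_{x}}$, by assigning to $(Q^{x,t,i}_{\alpha})_{\alpha\in\bb{P_{i}}_{x}}$ the suitable contiguous block of $(Q^{x,t}_{\alpha})_{\alpha\in\bb{P}_{x}}$.

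Now, from the definition of the structural operation $F_{\sigma}$ of $\mathbf{W}_{\Sigma}(X)$ together with the pointwise character of the substitution operator on words, it follows that
$$
\left(\!\begin{smallmatrix}(x,t)\\(Q^{x,t}_{\alpha})_{\alpha\in \bb{P}_{x}}\end{smallmatrix}\!\right)_{(x,t)\in \coprod X}(P)
\;=\;
F_{\sigma}\!\left(\left(\left(\!\begin{smallmatrix}(x,t)\\(Q^{x,t,i}_{\alpha})_{\alpha\in \bb{P_{i}}_{x}}\end{smallmatrix}\!\right)_{(x,t)\in \coprod X}(P_{i})\right)_{i\in\bb{w}}\right),
$$
and by the inductive hypothesis each inner substitution belongs to $\mathrm{T}_{\Sigma}(X)_{w_{i}}$, so applying $F_{\sigma}$ keeps us inside $\mathrm{T}_{\Sigma}(X)_{s}$ because $\mathrm{T}_{\Sigma}(X)$ is a subalgebra of $\mathbf{W}_{\Sigma}(X)$. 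This shows $P\in\mathcal{T}_{s}$ and concludes the induction. The main obstacle, and essentially the only one, is the combinatorial bookkeeping establishing that splitting the global data for $P$ into the $\bb{w}$ local data for the $P_{i}$'s is compatible with the ascending enumeration of occurrences used in the definition of the substitution operator; once this additive decomposition of $\bb{P}_{x}$ across the subterms has been written down carefully, the remainder of the argument is formal.
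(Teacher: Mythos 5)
Your proof is correct and follows essentially the same route as the paper: algebraic induction via Proposition~\ref{PPAI} on the set $\mathcal{T}$ of terms for which all substitutions land in $\mathrm{T}_{\Sigma}(X)$, with the inductive step resting on the same block-decomposition of the substitution data (with offsets $\sum_{k\in i}\bb{P_{k}}_{x}$) and the identity expressing the global substitution in $\sigma((P_{i})_{i\in\bb{w}})$ as $\sigma$ applied to the local substitutions in the $P_{i}$. The only cosmetic difference is that you treat constants in the basis whereas the paper treats them as the nullary case of the subalgebra step.
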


\begin{proof}
Let $\mathcal{T}$ be the subset of $\mathrm{T}_{\Sigma}(X)$ defined, for every $s\in S$, as follows:
$$
\mathcal{T}_{s} = \Big\lbrace P\in \mathrm{T}_{\Sigma}(X)_{s}\,\,\Big|\,\, \mathrm{Im}(\left(\!\begin{smallmatrix}(x,t)\\ \cdot\end{smallmatrix}\!\right)_{(x,t)\in \coprod X}(P))\subseteq \mathrm{T}_{\Sigma}(X)_{s}\Big\rbrace.
$$
To prove that $\mathcal{T} = \mathrm{T}_{\Sigma}(X)$ it suffices to show, by Proposition~\ref{PPAI}, that $X\subseteq \mathcal{T}$ and that $\mathcal{T}$ is a subalgebra of $\mathbf{T}_{\Sigma}(X)$.

We first prove that $X\subseteq \mathcal{T}$.

Let $s$ be a sort in $S$ and $z\in X_{s}$. Then we have that $\bb{z}_{z}=1$, whilst, for every $x\in X_{s}-\{z\}$, $\bb{z}_{x}=0$, and, for every $t\in S-\{s\}$ and every $x\in X_{t}$ $\bb{z}_{x}=0$. Therefore, for a term $Q^{z,s}$ in $\mathrm{T}_{\Sigma}(X)_{s}$ associated to $z$, the global substitution of $Q^{z,s}$ for $z$ in $z$ is equal to $Q^{z,s}$. Hence is a term in $\mathrm{T}_{\Sigma}(X)_{s}$. Consequently $X\subseteq \mathcal{T}$.

We next prove that $\mathcal{T}$ is a subalgebra of $\mathbf{T}_{\Sigma}(X)$.

Let $s$ be a sort in $S$ and $\sigma\in\Sigma_{\lambda,s}$. Then, for every $t\in S$ and every $x\in X_{t}$, we have that $\bb{\sigma}_{x}=0$. Hence the global substitution operator leaves $\sigma$ invariant.

Let $(w,s)\in (S^{\star}-\{\lambda\})\times S$, $\sigma\in\Sigma_{w,s}$, and let $(P_{i})_{i\in\bb{w}}\in \mathrm{T}_{\Sigma}(X)_{w}$ be such that, for every $i\in \bb{w}$, $P_{i}$ satisfies the given requirement. Then, for every $t\in S$ and every $x\in X_{t}$, we have that
$$
\textstyle
\bb{\sigma((P_{i})_{i\in\bb{w}})}_{x}=\sum_{i\in\bb{w}}\bb{P_{i}}_{x}.
$$
But we have that
\begin{multline}\label{Eq1}
\left(\!\begin{smallmatrix}(x,t)\\(Q^{x,t}_{\alpha})_{\alpha\in \bb{\sigma((P_{i})_{i\in\bb{w}})}_{x}}\end{smallmatrix}\!\right)_{(x,t)\in \coprod X}(\sigma((P_{i})_{i\in\bb{w}})) = \\
\sigma\left(\left(\left(\!\begin{smallmatrix}(x,t)\\(Q^{x,t}_{\alpha+\sum_{k\in i}\bb{P_{k}}_{x}})_{\alpha\in \bb{P_{i}}_{x}}\end{smallmatrix}\!\right)_{(x,t)\in \coprod X}(P_{i})\right)_{i\in\bb{w}}\right)
\end{multline}
and, by induction hypothesis, for every $i\in\bb{w}$, $\left(\!\begin{smallmatrix}(x,t)\\(Q^{x,t}_{\alpha+\sum_{k\in i}\bb{P_{k}}_{x}})_{\alpha\in \bb{P_{i}}_{x}}\end{smallmatrix}\!\right)_{(x,t)\in \coprod X}(P_{i})$ is a term in $\mathrm{T}_{\Sigma}(X)_{w_{i}}$. Therefore the left side of the above equation is a term in $\mathrm{T}_{\Sigma}(X)_{s}$. Consequently $\mathcal{T}$ is a subalgebra of $\mathbf{T}_{\Sigma}(X)$. Thus $\mathcal{T} = \mathrm{T}_{\Sigma}(X)$. Hereby completing our proof.
\end{proof}

\begin{corollary}
Let $X$ be an $S$-sorted set, $u\in S$, $z\in X_{u}$, and $P$ a term in $\mathrm{T}_{\Sigma}(X)_{s}$. Then, for every
$(Q^{z}_{\alpha})_{\alpha\in \bb{P}_{z}}$ in $\mathrm{T}_{\Sigma}(X)_{u}^{\bb{P}_{z}}$, we have that  $\left(\!\begin{smallmatrix}z\\(Q^{z}_{\alpha})_{\alpha\in \bb{P}_{z}}\end{smallmatrix}\!\right)(P)$ is a term in $\mathrm{T}_{\Sigma}(X)_{s}$. Therefore $\left(\!\begin{smallmatrix}z\\ \cdot\end{smallmatrix}\!\right)(P)$ is a mapping from $\mathrm{T}_{\Sigma}(X)^{\bb{P}_{z}}_{u}$ to $\mathrm{T}_{\Sigma}(X)_{s}$.
\end{corollary}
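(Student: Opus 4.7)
The plan is to derive this corollary as an immediate specialization of the preceding proposition, rather than repeating the algebraic induction on $P$. Given the family $(Q^{z}_{\alpha})_{\alpha\in \bb{P}_{z}}\in \mathrm{T}_{\Sigma}(X)_{u}^{\bb{P}_{z}}$, the idea is to extend it to a full choice function for the global substitution operator and then observe that substituting each non-distinguished variable by itself is the identity operation on $P$.

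Concretely, first I would define a family $\big((\widetilde{Q}^{x,t}_{\alpha})_{\alpha\in \bb{P}_{x}}\big)_{(x,t)\in \coprod X}$ in $\prod_{(x,t)\in \coprod X}\mathrm{T}_{\Sigma}(X)_{t}^{\bb{P}_{x}}$ by setting $\widetilde{Q}^{z,u}_{\alpha}= Q^{z}_{\alpha}$ for every $\alpha\in \bb{P}_{z}$, and, for every $(x,t)\in \coprod X$ with $(x,t)\neq (z,u)$, letting $\widetilde{Q}^{x,t}_{\alpha} = x$ for every $\alpha\in \bb{P}_{x}$ (this uses $x\in \mathrm{T}_{\Sigma}(X)_{t}$, which is legitimate since $X_{t}\subseteq \mathrm{T}_{\Sigma}(X)_{t}$). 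This gives a well-defined element of the product $\prod_{(x,t)\in \coprod X}\mathrm{T}_{\Sigma}(X)_{t}^{\bb{P}_{x}}$.

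Second, I would verify the key compatibility identity
\[
\left(\!\begin{smallmatrix}(x,t)\\(\widetilde{Q}^{x,t}_{\alpha})_{\alpha\in \bb{P}_{x}}\end{smallmatrix}\!\right)_{(x,t)\in \coprod X}(P) = \left(\!\begin{smallmatrix}z\\(Q^{z}_{\alpha})_{\alpha\in \bb{P}_{z}}\end{smallmatrix}\!\right)(P),
\]
whose validity is immediate from the definition of the substitution operators: on the left-hand side, every occurrence of every $x\neq z$ is replaced by $x$ itself, which leaves the corresponding position in $P$ untouched, while each occurrence of $z$ is replaced by the prescribed $Q^{z}_{\alpha}$, which is exactly what the right-hand side does. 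By the preceding proposition, the left-hand side lies in $\mathrm{T}_{\Sigma}(X)_{s}$, hence so does the right-hand side. Since this holds for every family $(Q^{z}_{\alpha})_{\alpha\in \bb{P}_{z}}$, the single-variable substitution operator $\left(\!\begin{smallmatrix}z\\ \cdot\end{smallmatrix}\!\right)(P)$ is a well-defined mapping from $\mathrm{T}_{\Sigma}(X)^{\bb{P}_{z}}_{u}$ to $\mathrm{T}_{\Sigma}(X)_{s}$.

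There is essentially no obstacle here; the only subtlety is ensuring that the extension $(\widetilde{Q}^{x,t}_{\alpha})$ is legitimate (which requires recognizing that $X\subseteq \mathrm{T}_{\Sigma}(X)$ via the unit $\eta_{X}$) and checking the compatibility identity pointwise. Alternatively, one could reprove the result directly by algebraic induction on $P$ in the style of the preceding proposition—basis on $X$ handling the cases $P=z$ and $P=x\neq z$, induction step using equation~\eqref{Eq1} restricted to a single variable—but the reduction to the already established global result is shorter and underscores that single-variable substitution is just a restriction of global substitution, as noted in the remark preceding the corollary.
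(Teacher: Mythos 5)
Your proof is correct and matches the paper's intended argument: the paper gives no explicit proof of this corollary, relying instead on the remark that $\left(\!\begin{smallmatrix}z\\ \cdot\end{smallmatrix}\!\right)(P)$ is essentially the restriction of the global substitution operator to $\mathrm{T}_{\Sigma}(X)_{u}^{\bb{P}_{z}}$, which is exactly the reduction you carry out by extending the given family with the identity assignment $x\mapsto x$ on all other variables. The only point worth noting is that your extension is legitimate precisely because $X$ embeds into $\mathrm{T}_{\Sigma}(X)$ via $\eta_{X}$, which you correctly flag.
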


\begin{remark}
For every $S$-sorted set $X$ the family $\left(\left(\!\begin{smallmatrix}(x,t)\\ \cdot\end{smallmatrix}\!\right)_{(x,t)\in \coprod X}(P)\right)_{s\in S}$ of global substitution operators is in
$\prod_{(P,s)\in  \coprod\mathrm{T}_{\Sigma}(X)}\mathrm{Hom}(\prod_{(x,t)\in \coprod X}\mathrm{T}_{\Sigma}(X)_{t}^{\bb{P}_{x}},\mathrm{T}_{\Sigma}(X)_{s})$.
\end{remark}

We next prove that every homomorphism from $\mathbf{T}_{\Sigma}(X)$ is compatible with the substitution operator.

\begin{lemma}\label{LIGlobSubstOp}
Let $X$ be an $S$-sorted set, $s\in S$, $W\in\mathrm{T}_{\Sigma}(X)_{s}$, $\mathbf{A}$ a $\Sigma$-algebra, and $g$ a homomorphism from $\mathbf{T}_{\Sigma}(X)$ to $\mathbf{A}$. Then, for every $\left((P^{x,t}_{\alpha})_{\alpha\in \bb{W}_{x}}\right)_{(x,t)\in \coprod X}$, $\left((Q^{x,t}_{\alpha})_{\alpha\in \bb{W}_{x}}\right)_{(x,t)\in \coprod X}\in\prod_{(x,t)\in \coprod X}\mathrm{T}_{\Sigma}(X)_{t}^{\bb{W}_{x}}$, if, for every $(x,t)\in\coprod X$ and every  $\alpha\in\bb{W}_{x}$, we have that $g_{t}(P^{x,t}_{\alpha})=g_{t}(Q^{x,t}_{\alpha})$, then
$$
g_{s}\left(\left(\!\begin{smallmatrix}(x,t)\\(P^{x,t}_{\alpha})_{\alpha\in \bb{W}_{x}}\end{smallmatrix}\!\right)_{(x,t)\in \coprod X}(W)\right)
=
g_{s}\left(\left(\!\begin{smallmatrix}(x,t)\\(Q^{x,t}_{\alpha})_{\alpha\in \bb{W}_{x}}\end{smallmatrix}\!\right)_{(x,t)\in \coprod X}(W)\right).
$$
\end{lemma}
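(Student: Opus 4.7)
The plan is to prove this by algebraic induction on $W\in\mathrm{T}_{\Sigma}(X)_{s}$, invoking the Principle of Proof by Algebraic Induction (Proposition~\ref{PPAI}). Specifically, fix the homomorphism $g$ and consider the subset $\mathcal{W}$ of $\mathrm{T}_{\Sigma}(X)$ consisting, for each $u\in S$, of those terms $W\in\mathrm{T}_{\Sigma}(X)_{u}$ for which the desired implication holds (for all indexed families and all choices of sort $s = u$). The goal is to show $\mathcal{W} = \mathrm{T}_{\Sigma}(X)$ by verifying the algebraic induction basis $X\subseteq \mathcal{W}$ and the algebraic induction step that $\mathcal{W}$ is a subalgebra of $\mathbf{T}_{\Sigma}(X)$.

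For the basis, let $u\in S$ and $z\in X_{u}$. Then $\bb{z}_{z} = 1$, while $\bb{z}_{x} = 0$ for every other $(x,t)\in \coprod X$. Hence the global substitution operator applied to $z$ yields simply the term $P^{z,u}_{0}$ (respectively $Q^{z,u}_{0}$), so $g_{u}$ applied to the two substituted terms gives $g_{u}(P^{z,u}_{0}) = g_{u}(Q^{z,u}_{0})$, which holds by hypothesis. Hence $z\in \mathcal{W}$.

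For the inductive step, the case $W=\sigma$ with $\sigma\in\Sigma_{\lambda,s}$ is immediate, since no variable occurs in $W$ and both substitutions leave $\sigma$ invariant, so both sides equal $g_{s}(\sigma)$. Now let $(w,s)\in(S^{\star}-\{\lambda\})\times S$, $\sigma\in\Sigma_{w,s}$, and $(W_{i})_{i\in\bb{w}}\in\mathrm{T}_{\Sigma}(X)_{w}$ with each $W_{i}\in\mathcal{W}_{w_{i}}$. Applying equation~\eqref{Eq1} to both $W = \sigma((W_{i})_{i\in\bb{w}})$-substitutions and using that $g$ is a homomorphism, we obtain
\begin{multline*}
g_{s}\!\left(\left(\!\begin{smallmatrix}(x,t)\\(P^{x,t}_{\alpha})_{\alpha\in \bb{W}_{x}}\end{smallmatrix}\!\right)_{(x,t)\in \coprod X}\!(W)\right)
= F^{\mathbf{A}}_{\sigma}\!\left(\left(g_{w_{i}}\!\left(\left(\!\begin{smallmatrix}(x,t)\\(P^{x,t}_{\alpha+\sum_{k\in i}\bb{W_{k}}_{x}})_{\alpha\in \bb{W_{i}}_{x}}\end{smallmatrix}\!\right)_{(x,t)\in \coprod X}\!(W_{i})\right)\right)_{i\in\bb{w}}\right),
\end{multline*}
and the analogous identity for the families $(Q^{x,t}_{\alpha})$. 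Since the shifted subfamilies $(P^{x,t}_{\alpha+\sum_{k\in i}\bb{W_{k}}_{x}})_{\alpha\in \bb{W_{i}}_{x}}$ and $(Q^{x,t}_{\alpha+\sum_{k\in i}\bb{W_{k}}_{x}})_{\alpha\in \bb{W_{i}}_{x}}$ still agree pointwise under $g$ (being subfamilies of the original ones), the inductive hypothesis $W_{i}\in\mathcal{W}_{w_{i}}$ yields componentwise equality inside $F^{\mathbf{A}}_{\sigma}$, and hence the two sides coincide. This shows $\sigma((W_{i})_{i\in\bb{w}})\in\mathcal{W}_{s}$, so $\mathcal{W}$ is a subalgebra of $\mathbf{T}_{\Sigma}(X)$ and the induction closes.

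The routine bookkeeping is the only real work: one must verify that the re-indexing $\alpha\mapsto \alpha+\sum_{k\in i}\bb{W_{k}}_{x}$ arising from equation~\eqref{Eq1} preserves the pointwise hypothesis $g_{t}(P^{x,t}_{\beta}) = g_{t}(Q^{x,t}_{\beta})$ when restricted to each $W_{i}$. This is immediate since the re-indexing is an order-preserving injection into the original range $\bb{W}_{x}$, and so no conceptual difficulty arises; the chief thing to keep straight is the indexing, not the algebraic content.
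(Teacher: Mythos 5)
Your proof is correct and follows essentially the same route as the paper: algebraic induction on $W$ with the three-case decomposition of Proposition~\ref{rut}, using Equation~\ref{Eq1} together with the homomorphism property of $g$ in the inductive step. The only difference is presentational---you make the appeal to Proposition~\ref{PPAI} and the re-indexing bookkeeping explicit, which the paper leaves implicit.
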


\begin{proof}
We prove it by algebraic induction on $W$. But for $W$, either (1) $W=z$, for a unique $z\in X_{s}$ , or (2) $W=\sigma$, for a unique $\sigma\in\Sigma_{\lambda, s}$, or (3) $W=\sigma((W_{i})_{i\in\bb{w}})$, for a unique $w\in S^{\star}-\{\lambda\}$, a unique $\sigma\in\Sigma_{w,s}$, and a unique family $(W_{i})_{i\in\bb{w}}\in\mathrm{T}_{\Sigma}(X)_{w}$.

In case (1), $\bb{W}_{z}=1$, whilst for every $x\in X_{s}-\{z\}$, $\bb{W}_{x}=0$, and for every $t\in S-\{s\}$ and every $x\in X_{t}$, $\bb{W}_{x}=0$. Therefore, for two terms $P^{z,s}$ and $Q^{z,s}$ in $\mathrm{T}_{\Sigma}(X)_{s}$ with $g_{s}(P^{z,s})=g(Q^{z,s})$ then it holds that
$$
g_{s}(P^{z,s})=g_{s}\left(\left(\!\begin{smallmatrix}z\\P^{z,s}_{\alpha}\end{smallmatrix}\!\right)(z)\right)=
g_{s}\left(\left(\!\begin{smallmatrix}z\\Q^{z,s}_{\alpha}\end{smallmatrix}\!\right)(z)\right)=g_{s}(Q^{z,s}).
$$

In case (2), we note that no constant symbol $\sigma\in\Sigma_{\lambda,s}$ has variables  in $X$ and, consequently no proper substitution is made. To prove the statement, we need to show that $g_{s}(\sigma)=g_{s}(\sigma)$, which trivially holds.

Finally, in case (3), we have the following equations

$g_{s}\left(\left(\!\begin{smallmatrix}(x,t)\\(P^{x,t}_{\alpha})_{\alpha\in \bb{W}_{x}}\end{smallmatrix}\!\right)_{(x,t)\in \coprod X}(W)\right)$
\begin{alignat}{2}
&=g_{s}\left(\left(\!\begin{smallmatrix}(x,t)\\(P^{x,t}_{\alpha})_{\alpha\in \bb{\sigma((W_{i})_{i\in\bb{w}})}_{x}}\end{smallmatrix}\!\right)_{(x,t)\in \coprod X}(\sigma((W_{i})_{i\in\bb{w}}))\right)
\notag\\
&=g_{s}\left(
\sigma\left(\left(\left(\!\begin{smallmatrix}(x,t)\\(P^{x,t}_{\alpha+\sum_{k\in i}\bb{W_{k}}_{x}})_{\alpha\in \bb{W_{i}}_{x}}\end{smallmatrix}\!\right)_{(x,t)\in \coprod X}(W_{i})\right)_{i\in\bb{w}}\right)
\right)&&(\dagger_{1})
\notag\\
&=\sigma^{\mathbf{A}}\left(\left(g_{w_{i}}
\left(\left(\!\begin{smallmatrix}(x,t)\\(P^{x,t}_{\alpha+\sum_{k\in i}\bb{W_{k}}_{x}})_{\alpha\in \bb{W_{i}}_{x}}\end{smallmatrix}\!\right)_{(x,t)\in \coprod X}(W_{i})
\right)
\right)_{i\in\bb{w}}
\right)&&(\dagger_{2})
\notag\\
&=\sigma^{\mathbf{A}}\left(\left(g_{w_{i}}
\left(\left(\!\begin{smallmatrix}(x,t)\\(Q^{x,t}_{\alpha+\sum_{k\in i}\bb{W_{k}}_{x}})_{\alpha\in \bb{W_{i}}_{x}}\end{smallmatrix}\!\right)_{(x,t)\in \coprod X}(W_{i})
\right)
\right)_{i\in\bb{w}}
\right)&&(\dagger_{3})
\notag\\
&=g_{s}\left(
\sigma\left(\left(\left(\!\begin{smallmatrix}(x,t)\\(Q^{x,t}_{\alpha+\sum_{k\in i}\bb{W_{k}}_{x}})_{\alpha\in \bb{W_{i}}_{x}}\end{smallmatrix}\!\right)_{(x,t)\in \coprod X}(W_{i})\right)_{i\in\bb{w}}\right)
\right)&&(\dagger_{2})
\notag\\
&=g_{s}\left(\left(\!\begin{smallmatrix}(x,t)\\(Q^{x,t}_{\alpha})_{\alpha\in \bb{\sigma((W_{i})_{i\in\bb{w}})}_{x}}\end{smallmatrix}\!\right)_{(x,t)\in \coprod X}(\sigma((W_{i})_{i\in\bb{w}}))\right)&&(\dagger_{1})
\notag\\
&=g_{s}\left(\left(\!\begin{smallmatrix}(x,t)\\(Q^{x,t}_{\alpha})_{\alpha\in \bb{W}_{x}}\end{smallmatrix}\!\right)_{(x,t)\in \coprod X}(W)\right),
\notag
\end{alignat}
where, to shorten notation, we let $(\dagger_{1})$, $(\dagger_{2})$, and $(\dagger_{3})$ stand for (by equation~\ref{Eq1}), (by definition of homomorphism), and (by inductive hypothesis), respectively.
\end{proof}

As a consequence of the last lemma we have the following corollary, that states that every congruence on $\mathbf{T}_{\Sigma}(X)$ is compatible with the substitution operator.

\begin{corollary} Let $X$ be an $S$-sorted set, $s\in S$, $W\in\mathrm{T}_{\Sigma}(X)_{s}$, and $\Phi$ be a congruence on $\mathbf{T}_{\Sigma}(X)$. Then, for every $\left((P^{x,t}_{\alpha})_{\alpha\in \bb{W}_{x}}\right)_{(x,t)\in \coprod X}$, $\left((Q^{x,t}_{\alpha})_{\alpha\in \bb{W}_{x}}\right)_{(x,t)\in \coprod X}\in\prod_{(x,t)\in \coprod X}\mathrm{T}_{\Sigma}(X)_{t}^{\bb{W}_{x}}$, if, for every $(x,t)\in\coprod X$ and every  $\alpha\in\bb{W}_{x}$, we have that $[P^{x,t}_{\alpha}]_{\Phi_{t}}=[Q^{x,t}_{\alpha}]_{\Phi_{t}}$, then
$$
\left[\left(\!\begin{smallmatrix}(x,t)\\(P^{x,t}_{\alpha})_{\alpha\in \bb{W}_{x}}\end{smallmatrix}\!\right)_{(x,t)\in \coprod X}(W)\right]_{\Phi_{s}}
=
\left[\left(\!\begin{smallmatrix}(x,t)\\(Q^{x,t}_{\alpha})_{\alpha\in \bb{W}_{x}}\end{smallmatrix}\!\right)_{(x,t)\in \coprod X}(W)\right]_{\Phi_{s}}.
$$
\end{corollary}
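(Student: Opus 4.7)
The statement is a direct corollary of Lemma~\ref{LIGlobSubstOp}, and the plan is to deduce it simply by specializing the lemma to the canonical projection homomorphism associated to $\Phi$. Recall that for any congruence $\Phi$ on $\mathbf{T}_{\Sigma}(X)$, the ordered pair $(\mathbf{T}_{\Sigma}(X)/\Phi,\mathrm{pr}_{\Phi})$ has the universal property of the quotient, so $\mathrm{pr}_{\Phi}$ is a genuine $\Sigma$-homomorphism from $\mathbf{T}_{\Sigma}(X)$ to the quotient $\Sigma$-algebra $\mathbf{T}_{\Sigma}(X)/\Phi$. Moreover, by the definition of $\mathrm{pr}_{\Phi}$, for every $t\in S$ and every $P,Q\in\mathrm{T}_{\Sigma}(X)_{t}$, we have $[P]_{\Phi_{t}} = [Q]_{\Phi_{t}}$ if and only if $(\mathrm{pr}_{\Phi})_{t}(P) = (\mathrm{pr}_{\Phi})_{t}(Q)$.

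The plan, then, is the following. First, set $\mathbf{A} = \mathbf{T}_{\Sigma}(X)/\Phi$ and $g = \mathrm{pr}_{\Phi}$ in the statement of Lemma~\ref{LIGlobSubstOp}. The hypothesis of the corollary, namely that for every $(x,t)\in\coprod X$ and every $\alpha\in\bb{W}_{x}$ one has $[P^{x,t}_{\alpha}]_{\Phi_{t}} = [Q^{x,t}_{\alpha}]_{\Phi_{t}}$, translates, via the equivalence just recalled, into the assertion that $(\mathrm{pr}_{\Phi})_{t}(P^{x,t}_{\alpha}) = (\mathrm{pr}_{\Phi})_{t}(Q^{x,t}_{\alpha})$ for every such $(x,t,\alpha)$. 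This is precisely the hypothesis required to apply Lemma~\ref{LIGlobSubstOp} with $g = \mathrm{pr}_{\Phi}$.

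Second, the conclusion of Lemma~\ref{LIGlobSubstOp} applied in this instance reads
$$
(\mathrm{pr}_{\Phi})_{s}\!\left(\left(\!\begin{smallmatrix}(x,t)\\(P^{x,t}_{\alpha})_{\alpha\in \bb{W}_{x}}\end{smallmatrix}\!\right)_{(x,t)\in \coprod X}(W)\right)
=
(\mathrm{pr}_{\Phi})_{s}\!\left(\left(\!\begin{smallmatrix}(x,t)\\(Q^{x,t}_{\alpha})_{\alpha\in \bb{W}_{x}}\end{smallmatrix}\!\right)_{(x,t)\in \coprod X}(W)\right),
$$
which, again by the correspondence between equality in the quotient and equality of $\Phi_{s}$-classes, is exactly the desired identity in $\mathrm{T}_{\Sigma}(X)_{s}/\Phi_{s}$. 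No obstacle is expected: since the heavy inductive work has already been carried out in Lemma~\ref{LIGlobSubstOp}, the corollary amounts only to rephrasing that lemma in the language of equivalence classes, using that quotient projections are homomorphisms and that their kernels recover the congruence $\Phi$.
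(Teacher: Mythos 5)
Your proposal is correct and follows exactly the route the paper intends: the corollary is stated as an immediate consequence of Lemma~\ref{LIGlobSubstOp}, and instantiating that lemma with the quotient algebra $\mathbf{T}_{\Sigma}(X)/\Phi$ and the canonical projection $\mathrm{pr}_{\Phi}$ (whose kernel is $\Phi$) is precisely the implicit argument. Nothing is missing.
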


Since it will be used in the following definition we recall that, as a particular case of Proposition~\ref{subsetalg}, for an $S$-sorted set $X$, we have the power $\Sigma$-algebra $\mathbf{T}_{\Sigma}(X)^{\wp}$ associated to   $\mathbf{T}_{\Sigma}(X)$, which has as underlying $S$-sorted set $\mathrm{T}_{\Sigma}(X)^{\wp} = (\mathrm{Sub}(\mathrm{T}_{\Sigma}(X)_{s}))_{s\in S}$ and, for every $(w,s)\in S^{\star}\times S$ and every $\sigma\in \Sigma_{w,s}$, as structural operation associated to $\sigma$, the mapping $\sigma^{\wp}$ from $\mathrm{T}_{\Sigma}(X)^{\wp}_{w} = \prod_{i\in \bb{w}}\mathrm{Sub}(\mathrm{T}_{\Sigma}(X)_{w_{i}})$ to $\mathrm{T}_{\Sigma}(X)^{\wp}_{s} = \mathrm{Sub}(\mathrm{T}_{\Sigma}(X)_{s})$ that sends $(L_{i})_{i\in \bb{w}}$ in $\mathrm{T}_{\Sigma}(X)^{\wp}_{w}$ to
$$
\textstyle
\sigma^{\wp}((L_{i})_{i\in \bb{w}}) = \{\sigma((P_{i})_{i\in \bb{w}})\mid (P_{i})_{i\in \bb{w}}\in \prod_{i\in \bb{w}}L_{i}\}
$$
in $\mathrm{T}_{\Sigma}(X)^{\wp}_{s}$. We remind the reader that in accordance with what is established in  Proposition~\ref{rut}, we have let, for abbreviation, for every $(w,s)\in S^{\star}\times S$ and every $\sigma\in\Sigma_{w,s}$, $\sigma$ stand for $F_{\sigma}^{\mathbf{T}_{\Sigma}(X)}$, the structural operation of $\mathbf{T}_{\Sigma}(X)$ associated to $\sigma$.

\begin{definition}
For every $t\in S$, let $(L_{x})_{x\in X_{t}}$ be a mapping from $X_{t}$ to $\mathrm{T}_{\Sigma}(X)^{\wp}_{t} = \mathrm{Sub}(\mathrm{T}_{\Sigma}(X)_{t})$, also written, in this context, as $\left(\!\begin{smallmatrix}x\\L_{x}\end{smallmatrix}\!\right)_{x\in X_{t}}$. Then, for the $S$-sorted mapping $\left(\left(\!\begin{smallmatrix}x\\L_{x}\end{smallmatrix}\!\right)_{x\in X_{t}}\right)_{t\in S}$ from $X$ to $\mathrm{T}_{\Sigma}(X)^{\wp}$, we will denote by  $\left(\left(\left(\!\begin{smallmatrix}x\\L_{x}\end{smallmatrix}\!\right)_{x\in X_{t}}\right)_{t\in S}\right)^{\sharp}$ the unique homomorphism from $\mathbf{T}_{\Sigma}(X)$ to $\mathbf{T}_{\Sigma}(X)^{\wp}$ such that
$$
\left(\left(\left(\!\begin{smallmatrix}x\\L_{x}\end{smallmatrix}\!\right)_{x\in X_{t}}\right)_{t\in S}\right)^{\sharp}\circ \eta_ {X} = \left(\left(\!\begin{smallmatrix}x\\L_{x}\end{smallmatrix}\!\right)_{x\in X_{t}}\right)_{t\in S}.
$$
Let $u$ be a sort in $S$, $z$ an element of $X_{u}$, and $L\in \mathrm{T}_{\Sigma}(X)^{\wp}_{u}$. Then, for the $S$-sorted mapping $\left(\!\begin{smallmatrix}z\\L\end{smallmatrix}\!\right)$ from $X$ to $\mathrm{T}_{\Sigma}(X)^{\wp}$ defined as: $\left(\!\begin{smallmatrix}z\\L\end{smallmatrix}\!\right)_{u}$ is the mapping from $X_{u}$ to $\mathrm{Sub}(\mathrm{T}_{\Sigma}(X)_{u})$ that sends $z$ to $L$ and $y\in X_{u}-\{z\}$ to $\{y\}$, while, for $t\in S-\{u\}$, $\left(\!\begin{smallmatrix}z\\L\end{smallmatrix}\!\right)_{t}$ is the mapping from $X_{t}$ to $\mathrm{Sub}(\mathrm{T}_{\Sigma}(X)_{t})$ that sends $x\in X_{t}$ to $\{x\}$, we will denote by $\left(\!\begin{smallmatrix}z\\L\end{smallmatrix}\!\right)^{\sharp}$ the unique homomorphism from $\mathbf{T}_{\Sigma}(X)$ to $\mathbf{T}_{\Sigma}(X)^{\wp}$ such that $\left(\!\begin{smallmatrix}z\\L\end{smallmatrix}\!\right)^{\sharp}\circ \eta_ {X} = \left(\!\begin{smallmatrix}z\\L\end{smallmatrix}\!\right)$.
\end{definition}

We recall that the homomorphism $\left(\left(\left(\!\begin{smallmatrix}x\\L_{x}\end{smallmatrix}\!\right)_{x\in X_{t}}\right)_{t\in S}\right)^{\sharp}$ from  $\mathbf{T}_{\Sigma}(X)$ to $\mathbf{T}_{\Sigma}(X)^{\wp}$ is defined, by algebraic recursion, as follows. Let $s$ be a sort in $S$ and $P\in\mathrm{T}_{\Sigma}(X)_{s}$. Then we know that $P$ either has the form (1) $z$, for a unique $z\in X_{s}$, or (2), $\sigma$, for a unique $\sigma\in\Sigma_{\lambda, s}$, or (3) $\sigma((P_{i})_{i\in \bb{w}})$, for a unique $w\in S^{\star}-\{\lambda\}$, a unique $\sigma\in \Sigma_{w,s}$, and a unique family $(P_{i})_{i\in\bb{w}}\in\mathrm{T}_{\Sigma}(X)_{w}$.

In case (1), we have that $\left(\left(\left(\!\begin{smallmatrix}x\\L_{x}\end{smallmatrix}\!\right)_{x\in X_{t}}\right)_{t\in S}\right)^{\sharp}_{s}(z) = L_{z}$.

In case (2), we have that $\left(\left(\left(\!\begin{smallmatrix}x\\L_{x}\end{smallmatrix}\!\right)_{x\in X_{t}}\right)_{t\in S}\right)^{\sharp}_{s}(\sigma) = \{\sigma\}$.

Finally, in case (3), and under the hypothesis that, for every $i\in \bb{w}$, the subset $\left(\left(\left(\!\begin{smallmatrix}x\\L_{x}\end{smallmatrix}\!\right)_{x\in X_{t}}\right)_{t\in S}\right)^{\sharp}_{w_{i}}(P_{i})$ of $\mathrm{T}_{\Sigma}(X)_{w_{i}}$ has been defined, we have that
\begin{multline*}
 \left(\left(\left(\!\begin{smallmatrix}x\\L_{x}\end{smallmatrix}\!\right)_{x\in X_{t}}\right)_{t\in S}\right)^{\sharp}_{s}(\sigma((P_{i})_{i\in \bb{w}})) = \\
\sigma^{\wp}\left(\left(\left(\left(\left(\!\begin{smallmatrix}x\\L_{x}\end{smallmatrix}\!\right)_{x\in X_{t}}\right)_{t\in S}\right)^{\sharp}_{w_{i}}(P_{i})\right)_{i\in \bb{w}}\right) = \\
\textstyle\Big\lbrace\sigma((Q_{i})_{i\in \bb{w}})\,\,\Big|\,\, (Q_{i})_{i\in \bb{w}}\in \prod_{i\in \bb{w}}\left(\left(\left(\!\begin{smallmatrix}x\\L_{x}\end{smallmatrix}\!\right)_{x\in X_{t}}\right)_{t\in S}\right)^{\sharp}_{w_{i}}(P_{i})\Big\rbrace.
\end{multline*}


We next prove that, for every $S$-sorted mapping $\left(\left(\!\begin{smallmatrix}x\\L_{x}\end{smallmatrix}\!\right)_{x\in X_{t}}\right)_{t\in S}$ from $X$ to $\mathrm{T}_{\Sigma}(X)^{\wp}$, every $s\in S$, and every $P\in \mathrm{T}_{\Sigma}(X)_{s}$, $\left(\left(\left(\!\begin{smallmatrix}x\\L_{x}\end{smallmatrix}\!\right)_{x\in X_{t}}\right)_{t\in S}\right)^{\sharp}_{s}(P)$ is given by collecting together the terms $\left(\!\begin{smallmatrix}(x,t)\\(Q^{x,t}_{\alpha})_{\alpha\in \bb{P}_{x}}\end{smallmatrix}\!\right)_{(x,t)\in \coprod X}(P)$ with $\left((Q^{x,t}_{\alpha})_{\alpha\in \bb{P}_{x}}\right)_{(x,t)\in \coprod X}$ varying in $\prod_{(x,t)\in \coprod X}L_{x}^{\bb{P}_{x}}$.
Let us note that, since, for every $x\in X_{t}$, $L_{x}^{\bb{P}_{x}}$ is embedded into $\mathrm{T}_{\Sigma}(X)_{t}^{\bb{P}_{x}}$, $\prod_{x\in X_{t}}L_{x}^{\bb{P}_{x}}$ is also embedded into $\prod_{x\in X_{t}}\mathrm{T}_{\Sigma}(X)_{t}^{\bb{P}_{x}}$. Therefore $\prod_{(x,t)\in \coprod X}L_{x}^{\bb{P}_{x}}$ is embedded into $\prod_{(x,t)\in \coprod X}\mathrm{T}_{\Sigma}(X)_{t}^{\bb{P}_{x}}$ and, consequently, the restriction of $\left(\!\begin{smallmatrix}(x,t)\\ \cdot\end{smallmatrix}\!\right)_{(x,t)\in \coprod X}(P)$ to $\prod_{(x,t)\in \coprod X}L_{x}^{\bb{P}_{x}}$ is available.

\begin{proposition}
Let $X$ be an $S$-sorted set, $s\in S$, and $P$ a term in $\mathrm{T}_{\Sigma}(X)_{s}$. Then, for every $S$-sorted mapping $\left(\left(\!\begin{smallmatrix}x\\L_{x}\end{smallmatrix}\!\right)_{x\in X_{t}}\right)_{t\in S}$ from $X$ to $\mathrm{T}_{\Sigma}(X)^{\wp}$, we have that
$$
\left(\left(\left(\!\begin{smallmatrix}x\\L_{x}\end{smallmatrix}\!\right)_{x\in X_{t}}\right)_{t\in S}\right)^{\sharp}_{s}(P) = \mathrm{Im}\left(\left(\!\begin{smallmatrix}(x,t)\\ \cdot\end{smallmatrix}\!\right)_{(x,t)\in \coprod X}(P){\textstyle\upharpoonright}_{\prod_{(x,t)\in \coprod X}L_{x}^{\bb{P}_{x}}}\right).
$$
\end{proposition}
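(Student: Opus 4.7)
The plan is to prove the statement by algebraic induction on $P$, in the sense of Proposition~\ref{PPAI}. By Proposition~\ref{rut}, any $(X,s)$-term $P$ is either a variable $z\in X_s$, a constant $\sigma\in\Sigma_{\lambda,s}$, or of the form $\sigma((P_i)_{i\in\bb{w}})$ for some $(w,s)\in(S^{\star}-\{\lambda\})\times S$, $\sigma\in\Sigma_{w,s}$, and $(P_i)_{i\in\bb{w}}\in\mathrm{T}_{\Sigma}(X)_w$; and these three possibilities are mutually exclusive. I would therefore establish the identity for each of these three forms.

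First I would verify the algebraic induction basis, i.e., that the equality holds for $P=z$ with $z\in X_s$. In this case $\bb{z}_z=1$ while $\bb{z}_x=0$ for every other variable $x$, so the product $\prod_{(x,t)\in\coprod X}L_x^{\bb{z}_x}$ reduces essentially to $L_z$, and the restricted substitution operator sends $(Q^{z,s}_0)\in L_z$ to $Q^{z,s}_0$; hence its image is $L_z$. On the other hand, from $\left(\left(\left(\!\begin{smallmatrix}x\\L_{x}\end{smallmatrix}\!\right)_{x\in X_{t}}\right)_{t\in S}\right)^{\sharp}\circ\eta_X=\left(\left(\!\begin{smallmatrix}x\\L_{x}\end{smallmatrix}\!\right)_{x\in X_{t}}\right)_{t\in S}$, the left-hand side of the claimed identity also equals $L_z$, closing this case.

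Next I would treat the two remaining forms, which constitute the algebraic induction step. If $P=\sigma$ for some $\sigma\in\Sigma_{\lambda,s}$, then $\bb{\sigma}_x=0$ for every variable $x$, the index product is a singleton (the empty family), the substitution operator leaves $\sigma$ untouched, and the right-hand side is $\{\sigma\}$; the left-hand side equals $\sigma^{\wp}$ applied to the empty tuple, which, by the definition of the subset algebra $\mathbf{T}_{\Sigma}(X)^{\wp}$, is also $\{\sigma\}$. If $P=\sigma((P_i)_{i\in\bb{w}})$, and assuming the statement holds for each $P_i$ (inductive hypothesis), I would use the homomorphism property to unfold $\left(\left(\left(\!\begin{smallmatrix}x\\L_{x}\end{smallmatrix}\!\right)_{x\in X_{t}}\right)_{t\in S}\right)^{\sharp}_s(\sigma((P_i)_{i\in\bb{w}}))$ as $\sigma^{\wp}$ applied to the tuple of values of the homomorphism at the $P_i$; by inductive hypothesis, each such value coincides with the image of the substitution operator for $P_i$ restricted to $\prod_{(x,t)\in\coprod X}L_x^{\bb{P_i}_x}$.

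The main technical step, and the principal obstacle, is to identify this double collection of substitution instances with the image of the substitution operator for $\sigma((P_i)_{i\in\bb{w}})$ restricted to $\prod_{(x,t)\in\coprod X}L_x^{\bb{\sigma((P_i)_{i\in\bb{w}})}_x}$. To this end I would invoke Equation~\eqref{Eq1}, which shows that the global substitution on $\sigma((P_i)_{i\in\bb{w}})$ factors, for each $(x,t)\in\coprod X$, through the individual substitutions on each $P_i$ after splitting the occurrences of $x$ according to the subterm in which they lie, with offset $\sum_{k\in i}\bb{P_k}_x$. Combined with the additivity $\bb{\sigma((P_i)_{i\in\bb{w}})}_x=\sum_{i\in\bb{w}}\bb{P_i}_x$, this yields a natural bijection between $\prod_{(x,t)\in\coprod X}L_x^{\bb{\sigma((P_i)_{i\in\bb{w}})}_x}$ and $\prod_{i\in\bb{w}}\prod_{(x,t)\in\coprod X}L_x^{\bb{P_i}_x}$, under which the substitution operator for $\sigma((P_i)_{i\in\bb{w}})$ becomes the application of $\sigma$ to the componentwise substitutions on the $P_i$. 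Taking images through this bijection and invoking the inductive hypothesis then completes the argument.
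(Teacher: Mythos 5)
Your proposal is correct and follows essentially the same route as the paper: algebraic induction via Proposition~\ref{PPAI} with the three cases of Proposition~\ref{rut}, computing both sides to be $L_{z}$ for variables and $\{\sigma\}$ for constants, and in the compound case combining the homomorphism property of $\left(\left(\left(\!\begin{smallmatrix}x\\L_{x}\end{smallmatrix}\!\right)_{x\in X_{t}}\right)_{t\in S}\right)^{\sharp}$ with Equation~\eqref{Eq1} and the additivity $\bb{\sigma((P_{i})_{i\in\bb{w}})}_{x}=\sum_{i\in\bb{w}}\bb{P_{i}}_{x}$ to match the image of the restricted global substitution operator with $\sigma^{\wp}$ of the componentwise images. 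The bijection between $\prod_{(x,t)\in \coprod X}L_{x}^{\bb{\sigma((P_{i})_{i\in\bb{w}})}_{x}}$ and $\prod_{i\in\bb{w}}\prod_{(x,t)\in \coprod X}L_{x}^{\bb{P_{i}}_{x}}$ that you make explicit is exactly what the paper uses implicitly, so no gap remains.
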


\begin{proof}
Let $\mathcal{T}$ be the subset of $\mathrm{T}_{\Sigma}(X)$ defined, for every $s\in S$, as follows: For every $P\in \mathrm{T}_{\Sigma}(X)_{s}$, $P\in \mathcal{T}_{s}$ if and only if
$$
\left(\left(\left(\!\begin{smallmatrix}x\\L_{x}\end{smallmatrix}\!\right)_{x\in X_{t}}\right)_{t\in S}\right)^{\sharp}_{s}(P) = \mathrm{Im}\left(\left(\!\begin{smallmatrix}(x,t)\\ \cdot\end{smallmatrix}\!\right)_{(x,t)\in \coprod X}(P){\textstyle\upharpoonright}_{\prod_{(x,t)\in \coprod X}L_{x}^{\bb{P}_{x}}}\right).
$$
To prove that $\mathcal{T} = \mathrm{T}_{\Sigma}(X)$ it suffices to show, by Proposition~\ref{PPAI}, that $X\subseteq \mathcal{T}$ and that $\mathcal{T}$ is a subalgebra of $\mathbf{T}_{\Sigma}(X)$.

We first prove that $X\subseteq \mathcal{T}$.

Let $s$ be a sort in $S$ and $z\in X_{s}$. Then $\left(\left(\left(\!\begin{smallmatrix}x\\L_{x}\end{smallmatrix}\!\right)_{x\in X_{t}}\right)_{t\in S}\right)^{\sharp}_{s}(z) = L_{z}$. On the other hand, we have that $\bb{z}_{z}=1$, whilst, for every $x\in X_{s}-\{z\}$, $\bb{z}_{x}=0$, and, for every $t\in S-\{s\}$ and every $x\in X_{t}$ $\bb{z}_{x}=0$. Therefore,  since, for every $Q\in L_{z}$, it happens that $\left(\!\begin{smallmatrix}z\\Q\end{smallmatrix}\!\right)(z) = Q$, we have that $\mathrm{Im}\left(\left(\!\begin{smallmatrix}(x,t)\\ \cdot\end{smallmatrix}\!\right)_{(x,t)\in \coprod X}(P)\right) = L_{z}$. Thus $\mathrm{Im}\left(\left(\!\begin{smallmatrix}(x,t)\\ \cdot\end{smallmatrix}\!\right)_{(x,t)\in \coprod X}(P){\textstyle\upharpoonright}_{\prod_{(x,t)\in \coprod X}L_{x}^{\bb{P}_{x}}}\right) = L_{z}$. Hence both sets coincide. Consequently $X\subseteq \mathcal{T}$.

We next prove that $\mathcal{T}$ is a subalgebra of $\mathbf{T}_{\Sigma}(X)$.

Let $s$ be a sort in $S$ and $\sigma\in\Sigma_{\lambda,s}$. Then $\left(\left(\left(\!\begin{smallmatrix}x\\L_{x}\end{smallmatrix}\!\right)_{x\in X_{t}}\right)_{t\in S}\right)^{\sharp}_{s}(\sigma) = \{\sigma\}$, which is the interpretation of the constant symbol $\sigma$ in the subset algebra. On the other hand, $\mathrm{Im}\left(\left(\!\begin{smallmatrix}(x,t)\\ \cdot\end{smallmatrix}\!\right)_{(x,t)\in \coprod X}(P){\textstyle\upharpoonright}_{\prod_{(x,t)\in \coprod X}L_{x}^{\bb{P}_{x}}}\right)$, which is the result of collecting together the terms $\left(\!\begin{smallmatrix}(x,t)\\(Q^{x,t}_{\alpha})_{\alpha\in \bb{P}_{x}}\end{smallmatrix}\!\right)_{(x,t)\in \coprod X}(\sigma)$ when $\left((Q^{x,t}_{\alpha})_{\alpha\in \bb{P}_{x}}\right)_{(x,t)}$ varies in $\prod_{(x,t)\in\coprod X}L_{x}^{\bb{P}_{x}}$, is, simply, $\{\sigma\}$. Consequently $$
\mathrm{Im}\left(\left(\!\begin{smallmatrix}(x,t)\\ \cdot\end{smallmatrix}\!\right)_{(x,t)\in \coprod X}(P){\textstyle\upharpoonright}_{\prod_{(x,t)\in \coprod X}L_{x}^{\bb{P}_{x}}}\right) = \{\sigma\}.
$$

Let $(w,s)\in (S^{\star}-\{\lambda\})\times S$, $\sigma\in\Sigma_{w,s}$, and let $(P_{i})_{i\in\bb{w}}\in \mathrm{T}_{\Sigma}(X)_{w}$ be such that, for every $i\in \bb{w}$, $P_{i}$ satisfies the given requirement. Then, since $\left(\left(\left(\!\begin{smallmatrix}x\\L_{x}\end{smallmatrix}\!\right)_{x\in X_{t}}\right)_{t\in S}\right)^{\sharp}$ is a homomorphism from $\mathbf{T}_{\Sigma}(X)$ to $\mathbf{T}_{\Sigma}(X)^{\wp}$, we have that
$$
\left(\left(\left(\!\begin{smallmatrix}x\\L_{x}\end{smallmatrix}\!\right)_{x\in X_{t}}\right)_{t\in S}\right)^{\sharp}_{s}(\sigma((P_{i})_{i\in\bb{w}})) = \sigma^{\wp}\left(\left(\left(\left(\left(\!\begin{smallmatrix}x\\L_{x}\end{smallmatrix}\!\right)_{x\in X_{t}}\right)_{t\in S}\right)^{\sharp}_{w_{i}}(P_{i})\right)_{i\in \bb{w}}\right).
$$
On the other hand, by the induction hypothesis, bearing in mind that, for every $t\in S$, every $x\in X_{t}$, $\bb{\sigma((P_{i})_{i\in n})}_{x} = \sum_{i\in n}\bb{P_{i}}_{x}$, and taking into account Equation~\ref{Eq1}, we have that
$$
\mathrm{Im}\left(\left(\!\begin{smallmatrix}(x,t)\\ \cdot\end{smallmatrix}\!\right)_{(x,t)\in \coprod X}(P){\textstyle\upharpoonright}_{\prod_{(x,t)\in \coprod X}L_{x}^{\bb{P}_{x}}}\right) = \sigma^{\wp}\left(\left(\left(\left(\left(\!\begin{smallmatrix}x\\L_{x}\end{smallmatrix}\!\right)_{x\in X_{t}}\right)_{t\in S}\right)^{\sharp}_{w_{i}}(P_{i})\right)_{i\in \bb{w}}\right).
$$
Consequently $\mathcal{T}$ is a subalgebra of $\mathbf{T}_{\Sigma}(X)$. Thus $\mathcal{T} = \mathrm{T}_{\Sigma}(X)$. Hereby completing our proof.
\end{proof}

\begin{remark}
Given an $S$-sorted mapping $\left(\left(\!\begin{smallmatrix}x\\L_{x}\end{smallmatrix}\!\right)_{x\in X_{t}}\right)_{t\in S}$ from $X$ to $\mathrm{T}_{\Sigma}(X)^{\wp}$, and $P$ a term in $\mathrm{T}_{\Sigma}(X)_{s}$, for some $s\in S$, we have that
$
\textstyle
\left(\left(\left(\!\begin{smallmatrix}x\\L_{x}\end{smallmatrix}\!\right)_{x\in X_{t}}\right)_{t\in S}\right)^{\sharp}_{s}(P)
$
is equal to
$$
\textstyle
\bigcup_{\left((Q^{x,t}_{\alpha})_{\alpha\in \bb{P}_{x}}\right)_{(x,t)\in \coprod X}\in \prod_{(x,t)\in \coprod X}L_{x}^{\bb{P}_{x}}}\left\lbrace\left(\!\begin{smallmatrix}(x,t)\\(Q^{x,t}_{\alpha})_{\alpha\in \bb{P}_{x}}\end{smallmatrix}\!\right)_{(x,t)\in \coprod X}(P)\right\rbrace.
$$
In other words, the value of $\left(\left(\left(\!\begin{smallmatrix}x\\L_{x}\end{smallmatrix}\!\right)_{x\in X_{t}}\right)_{t\in S}\right)^{\sharp}_{s}$ at $P$ is the union of the image of the mapping $\{\cdot\}_{\mathrm{T}_{\Sigma}(X)_{s}}\circ \left(\left(\!\begin{smallmatrix}(x,t)\\ \cdot\end{smallmatrix}\!\right)_{(x,t)\in \coprod X}(P)\!\upharpoonright_{\prod_{(x,t)\in \coprod X}L_{x}^{\bb{P}_{x}}}\right)$ from $\prod_{(x,t)\in \coprod X}L_{x}^{\bb{P}_{x}}$ to $\mathrm{Sub}(\mathrm{T}_{\Sigma}(X))$, where $\left(\!\begin{smallmatrix}(x,t)\\ \cdot\end{smallmatrix}\!\right)_{(x,t)\in \coprod X}(P)\!\upharpoonright_{\prod_{(x,t)\in \coprod X}L_{x}^{\bb{P}_{x}}}$ is the restriction of the  substitution mapping $\left(\!\begin{smallmatrix}(x,t)\\ \cdot\end{smallmatrix}\!\right)_{(x,t)\in \coprod X}(P)$ to $\prod_{(x,t)\in \coprod X}L_{x}^{\bb{P}_{x}}$, and $\{\cdot\}_{\mathrm{T}_{\Sigma}(X)_{s}}$ the canonical embedding of $\mathrm{T}_{\Sigma}(X)_{s}$ into $\mathrm{T}_{\Sigma}(X)_{s}^{\wp}$.
\end{remark}

\begin{corollary}
Let $s$ and $u$  be sorts in $S$, $z\in X_{u}$, $L\in \mathrm{T}_{\Sigma}(X)^{\wp}_{u}$, and $P$ a term in $\mathrm{T}_{\Sigma}(X)_{s}$. Then
$$
\textstyle
\left(\!\begin{smallmatrix}z\\L\end{smallmatrix}\!\right)^{\sharp}_{s}(P) = \mathrm{Im}\left(\left(\!\begin{smallmatrix}z\\ \cdot\end{smallmatrix}\!\right)(P){\textstyle\upharpoonright}_{L^{\bb{P}_{z}}}\right).
$$
\end{corollary}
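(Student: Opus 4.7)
The plan is to derive the corollary as an immediate specialization of the preceding proposition. Concretely, I would define the $S$-sorted mapping $\left(\left(\!\begin{smallmatrix}x\\L_{x}\end{smallmatrix}\!\right)_{x\in X_{t}}\right)_{t\in S}$ from $X$ to $\mathrm{T}_{\Sigma}(X)^{\wp}$ by setting $L_{z} = L$ and, for every $(x,t)\in \coprod X$ with $(x,t)\neq (z,u)$, $L_{x} = \{x\}$. With this choice, the composite $\left(\left(\left(\!\begin{smallmatrix}x\\L_{x}\end{smallmatrix}\!\right)_{x\in X_{t}}\right)_{t\in S}\right)^{\sharp}$ coincides, by the universal property of $\mathbf{T}_{\Sigma}(X)$, with $\left(\!\begin{smallmatrix}z\\L\end{smallmatrix}\!\right)^{\sharp}$, since both homomorphisms have the same underlying $S$-sorted mapping on the generators $X$.

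Next I would invoke the preceding proposition to obtain
$$
\left(\!\begin{smallmatrix}z\\L\end{smallmatrix}\!\right)^{\sharp}_{s}(P) = \mathrm{Im}\left(\left(\!\begin{smallmatrix}(x,t)\\ \cdot\end{smallmatrix}\!\right)_{(x,t)\in \coprod X}(P){\textstyle\upharpoonright}_{\prod_{(x,t)\in \coprod X}L_{x}^{\bb{P}_{x}}}\right).
$$
The key observation is that for every $(x,t)\in \coprod X$ with $(x,t)\neq (z,u)$, we have $L_{x}^{\bb{P}_{x}} = \{x\}^{\bb{P}_{x}}$, which is a singleton consisting of the constant family at $x$. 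Hence the set $\prod_{(x,t)\in \coprod X}L_{x}^{\bb{P}_{x}}$ is in natural bijection with $L^{\bb{P}_{z}}$, and substituting $x$ for each occurrence of $x$ (for $x\neq z$) leaves the term unchanged.

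Consequently, for every $(Q^{z}_{\alpha})_{\alpha\in \bb{P}_{z}} \in L^{\bb{P}_{z}}$, the global substitution $\left(\!\begin{smallmatrix}(x,t)\\(Q^{x,t}_{\alpha})_{\alpha\in \bb{P}_{x}}\end{smallmatrix}\!\right)_{(x,t)\in \coprod X}(P)$ agrees with $\left(\!\begin{smallmatrix}z\\(Q^{z}_{\alpha})_{\alpha\in \bb{P}_{z}}\end{smallmatrix}\!\right)(P)$. Therefore the image of the restricted global substitution mapping coincides with $\mathrm{Im}\left(\left(\!\begin{smallmatrix}z\\ \cdot\end{smallmatrix}\!\right)(P){\textstyle\upharpoonright}_{L^{\bb{P}_{z}}}\right)$, which yields the desired equality. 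There is no real obstacle here; the only point requiring care is verifying formally that substituting a variable for itself acts as the identity on terms, which is an immediate algebraic induction on $P$ (or a direct consequence of the defining equations of $\left(\!\begin{smallmatrix}(x,t)\\ \cdot\end{smallmatrix}\!\right)_{(x,t)\in \coprod X}$).
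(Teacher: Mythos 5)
Your proof is correct and follows exactly the route the paper intends: the mapping $\left(\!\begin{smallmatrix}z\\L\end{smallmatrix}\!\right)$ is by definition the special case of $\left(\left(\!\begin{smallmatrix}x\\L_{x}\end{smallmatrix}\!\right)_{x\in X_{t}}\right)_{t\in S}$ with $L_{z}=L$ and $L_{x}=\{x\}$ otherwise, so the corollary is the announced specialization of the preceding proposition, combined with the earlier remark identifying $\left(\!\begin{smallmatrix}z\\ \cdot\end{smallmatrix}\!\right)(P)$ with the restriction of the global substitution operator. Your observation that the singleton factors $\{x\}^{\bb{P}_{x}}$ collapse the product onto $L^{\bb{P}_{z}}$ and that substituting a variable for itself is the identity is precisely the bookkeeping the paper leaves implicit.
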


\begin{definition}
Let $\left(\left(\!\begin{smallmatrix}x\\L_{x}\end{smallmatrix}\!\right)_{x\in X_{t}}\right)_{t\in S}$ be an $S$-sorted mapping from $X$ to $\mathrm{T}_{\Sigma}(X)^{\wp}$. Then we will denote by  $\left(\left(\left(\!\begin{smallmatrix}x\\L_{x}\end{smallmatrix}\!\right)_{x\in X_{t}}\right)_{t\in S}\right)^{\sharp\mathfrak{p}}$ the $S$-sorted mapping from $\mathrm{T}_{\Sigma}(X)^{\wp}$ to $\mathrm{T}_{\Sigma}(X)^{\wp}$ that, for every $s\in S$, sends $K$ in $\mathrm{T}_{\Sigma}(X)^{\wp}_{s}$ to
$$
\textstyle
\left(\left(\left(\!\begin{smallmatrix}x\\L_{x}\end{smallmatrix}\!\right)_{x\in X_{t}}\right)_{t\in S}\right)^{\sharp\mathfrak{p}}_{s}(K) = \bigcup_{P\in K}\left(\left(\left(\!\begin{smallmatrix}x\\L_{x}\end{smallmatrix}\!\right)_{x\in X_{t}}\right)_{t\in S}\right)^{\sharp}_{s}(P)
$$
in $\mathrm{T}_{\Sigma}(X)^{\wp}_{s}$. Therefore $\left(\left(\left(\!\begin{smallmatrix}x\\L_{x}\end{smallmatrix}\!\right)_{x\in X_{t}}\right)_{t\in S}\right)^{\sharp\mathfrak{p}}$ is the canonical extension of the underlying $S$-sorted mapping of the homomorphism $\left(\left(\left(\!\begin{smallmatrix}x\\L_{x}\end{smallmatrix}\!\right)_{x\in X_{t}}\right)_{t\in S}\right)^{\sharp}$ from $\mathbf{T}_{\Sigma}(X)$ to $\mathbf{T}_{\Sigma}(X)^{\wp}$.

Let $u$ be a sort in $S$, $z\in X_{u}$, and $L\in \mathrm{T}_{\Sigma}(X)^{\wp}_{u}$. Then we will denote by $\left(\!\begin{smallmatrix}z\\L\end{smallmatrix}\!\right)^{\sharp\mathfrak{p}}$ the canonical extension of the underlying mapping of the homomorphism $\left(\!\begin{smallmatrix}z\\L\end{smallmatrix}\!\right)^{\sharp}$ from $\mathbf{T}_{\Sigma}(X)$ to $\mathbf{T}_{\Sigma}(X)^{\wp}$.
\end{definition}

\begin{remark}
Let $\left(\left(\!\begin{smallmatrix}x\\L_{x}\end{smallmatrix}\!\right)_{x\in X_{t}}\right)_{t\in S}$ be an $S$-sorted mapping from $X$ to $\mathrm{T}_{\Sigma}(X)^{\wp}$, $s\in S$, $K\subseteq \mathrm{T}_{\Sigma}(X)_{s}$, and $W\in\mathrm{T}_{\Sigma}(X)_{s}$. Then $W\in \left(\left(\left(\!\begin{smallmatrix}x\\L_{x}\end{smallmatrix}\!\right)_{x\in X_{t}}\right)_{t\in S}\right)^{\sharp\mathfrak{p}}_{s}(K)$ if and only if there are $P\in K$ and $\left((Q^{x,t}_{\alpha})_{\alpha\in \bb{P}_{x}}\right)_{(x,t)\in \coprod X}\in \prod_{(x,t)\in \coprod X}L_{x}^{\bb{P}_{x}}$ such that
$$
W = \left(\!\begin{smallmatrix}(x,t)\\(Q^{x,t}_{\alpha})_{\alpha\in \bb{P}_{x}}\end{smallmatrix}\!\right)_{(x,t)\in \coprod X}(P).
$$
In particular, for $z\in X_{s}$ and $L\in \mathrm{Sub}(\mathrm{T}_{\Sigma}(X)_{s})$, we have that $W\in \left(\!\begin{smallmatrix}z\\L\end{smallmatrix}\!\right)^{\sharp\mathfrak{p}}_{s}(K)$ if and only if there are $P\in K$ and $(Q^{z}_{\alpha})_{\alpha\in \bb{P}_{z}}\in L^{\bb{P}_{z}}$ such that $W = \left(\!\begin{smallmatrix}z\\(Q^{z}_{\alpha})_{\alpha\in \bb{P}_{z}}\end{smallmatrix}\!\right)(P)$.
\end{remark}


We next prove the many-sorted version of Theorem~4.6, on p.~74, in \cite{GS84}. The proposition states that, for every sort $s\in S$, every $s$-recognizable language $K$, and every operator  $\left(\left(\!\begin{smallmatrix}x\\L_{x}\end{smallmatrix}\!\right)_{x\in X_{t}}\right)_{t\in S}$ such that, for every $t\in S$ and every $x\in X_{t}$, $L_{x}\in \mathrm{Rec}_{t}(\mathbf{T}_{\Sigma}(X))$,
$\left(\left(\left(\!\begin{smallmatrix}x\\L_{x}\end{smallmatrix}\!\right)_{x\in X_{t}}\right)_{t\in S}\right)^{\sharp\mathfrak{p}}_{s}(K)$ is $s$-recognizable.

\begin{assumption}
To prove the following proposition we will assume that $S$ and $X$ are finite.
\end{assumption}

\begin{proposition}\label{PRecSubs}
Let $s$ be a sort in $S$, $K\in\mathrm{Rec}_{s}(\mathbf{T}_{\Sigma}(X))$, and $\left(\left(\!\begin{smallmatrix}x\\L_{x}\end{smallmatrix}\!\right)_{x\in X_{t}}\right)_{t\in S}$ an $S$-sorted mapping from $X$ to $(\mathrm{Rec}_{t}(\mathbf{T}_{\Sigma}(X)))_{t\in S}$. Then
$$
\left(\left(\left(\!\begin{smallmatrix}x\\L_{x}\end{smallmatrix}\!\right)_{x\in X_{t}}\right)_{t\in S}\right)^{\sharp\mathfrak{p}}_{s}(K)\in\mathrm{Rec}_{s}(\mathbf{T}_{\Sigma}(X)).
$$
\end{proposition}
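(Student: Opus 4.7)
Let me denote the substituted language by
$\mathcal{L} = \left(\left(\left(\!\begin{smallmatrix}x\\L_{x}\end{smallmatrix}\!\right)_{x\in X_{t}}\right)_{t\in S}\right)^{\sharp\mathfrak{p}}_{s}(K)$.
The plan, in keeping with the paper's methodology, is to exhibit a congruence of finite index on $\mathbf{T}_{\Sigma}(X)$ whose $s$-th component saturates $\mathcal{L}$. By the $s$-recognizability of $K$ and the $t$-recognizability of each $L_{x}$ (for $x\in X_{t}$), the syntactic congruences $\Phi^{K}=\Omega^{\mathbf{T}_{\Sigma}(X)}(\delta^{s,K})$ and, for every $(x,t)\in\coprod X$, $\Phi^{x,t}=\Omega^{\mathbf{T}_{\Sigma}(X)}(\delta^{t,L_{x}})$ are all of finite index. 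Since $S$ and $X$ are finite, $\coprod X$ is finite, so by Proposition~\ref{Filter} the intersection $\Phi_{0}=\Phi^{K}\cap\bigcap_{(x,t)\in\coprod X}\Phi^{x,t}$ is a congruence on $\mathbf{T}_{\Sigma}(X)$ of finite index which saturates $\delta^{s,K}$ and every $\delta^{t,L_{x}}$.

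The congruence $\Phi_{0}$ alone, however, need not saturate $\delta^{s,\mathcal{L}}$: two $\Phi_{0}$-equivalent terms may arise from different outer ``patterns'' $P\in K$, and this positional information must be encoded in the refinement. To this end, I will refine $\Phi_{0}$ to a congruence $\Phi^{*}$ obtained as the kernel of a homomorphism $\Psi\colon\mathbf{T}_{\Sigma}(X)\mor\mathbf{C}$ into a suitable finite $\Sigma$-algebra $\mathbf{C}$. The components of $\mathbf{C}$ at each sort $u\in S$ will be products of the corresponding components of $\mathbf{T}_{\Sigma}(X)/\Phi^{x,t}$ together with a power-set factor $\mathrm{Sub}((\mathbf{T}_{\Sigma}(X)/\Phi^{K})_{u})$. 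For a term $W\in \mathrm{T}_{\Sigma}(X)_{u}$, the intended meaning of $\Psi_{u}(W)$ is the pair consisting of (i) the tuple of $\Phi^{x,t}$-classes of $W$ (needed to detect whether a subterm may serve as a substituent from some $L_{x}$), and (ii) the set of $\Phi^{K}$-classes $[P]$ such that $W$ can be realized as $\left(\!\begin{smallmatrix}(y,v)\\(Q^{y,v}_{\alpha})_{\alpha\in\bb{P}_{y}}\end{smallmatrix}\!\right)_{(y,v)\in\coprod X}(P)$ with every $Q^{y,v}_{\alpha}\in L_{y}$. The algebraic operations on $\mathbf{C}$ are then forced by the recursive clauses ``$[P]\in\Psi_{s}(\sigma((W_{i})_{i}))$ iff $[P]$ is of the form $[\sigma((P_{i})_{i})]_{\Phi^{K}}$ for $[P_{i}]\in\Psi_{w_{i}}(W_{i})$, or $[P]=[y]_{\Phi^{K}}$ for some $y\in X_{s}$ with $\sigma((W_{i}))\in L_{y}$'', the latter condition being decidable from the $\Phi^{y,s}$-classes supplied by component (i). Finiteness of $\mathbf{C}$ follows from the finite index of $\Phi^{K}$ and the $\Phi^{x,t}$ together with finiteness of $S$ and $X$.

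Once $\mathbf{C}$ is in place, three verifications remain. First, $\Psi$ is well-defined as the unique homomorphism extending a suitable $S$-sorted mapping on variables; the recursive clauses above pin down the structural operations on $\mathbf{C}$. Second, one shows $\mathcal{L}=\Psi_{s}^{-1}[N]$, where $N\subseteq C_{s}$ is the set of those elements whose second component meets the accepting set $M^{K}\subseteq(\mathbf{T}_{\Sigma}(X)/\Phi^{K})_{s}$ witnessing $K=(\pi^{K}_{s})^{-1}[M^{K}]$: this is precisely the content of the equivalence $W\in\mathcal{L}$ iff some $[P]$ in the second component of $\Psi_{s}(W)$ lies in $M^{K}$, which unfolds to the definition of $\mathcal{L}$ through the substitution operator. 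Third, by the characterization of $s$-recognizability in terms of finite-index saturating congruences, we conclude $\mathcal{L}\in\mathrm{Rec}_{s}(\mathbf{T}_{\Sigma}(X))$.

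The hard part will be the recursive clause that defines the second component of $\Psi$ at non-variable arguments: one must verify that the passage from $(\Psi_{w_{i}}(W_{i}))_{i\in\bb{w}}$ to $\Psi_{s}(\sigma((W_{i})_{i}))$ is indeed functional (so that $\Psi$ is a well-defined homomorphism) and, simultaneously, that it faithfully enumerates all substitution decompositions of $\sigma((W_{i})_{i})$ up to $\Phi^{K}$-equivalence. The compatibility of the global substitution operator with the $\Phi^{K}$-equivalence, which is Lemma~\ref{LIGlobSubstOp} applied to the projection $\pi^{K}$, is the algebraic fact that underwrites both points: it guarantees that the $\Phi^{K}$-class of a substituted term depends only on the $\Phi^{K}$-classes of the substituents, which is exactly what is needed for the recursion on $\Psi$ to close up into an honest homomorphism into a finite algebra.
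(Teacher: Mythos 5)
Your proposal is correct and follows essentially the same route as the paper: both start from the finite-index intersection of the syntactic congruences of $K$ and of the $L_{x}$, and both refine it by recording, for each term, which classes' substitution images it belongs to --- you package this refinement as the kernel of a homomorphism into a finite power-set-type algebra, whereas the paper defines the same relation directly on $\mathrm{T}_{\Sigma}(X)$ and verifies by the case analysis you defer as the ``hard part'' that it is a congruence of finite index saturating the output language. The facts you identify as underwriting the recursion --- compatibility of the global substitution operator with congruences (Lemma~\ref{LIGlobSubstOp} and its corollary) together with the decomposition of a substitution instance of $\sigma((W_{i})_{i\in\bb{w}})$ through the outer symbol $\sigma$ --- are exactly the ones the paper's case analysis rests on.
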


\begin{proof}
Let  $\Phi$ be the congruence on $\mathbf{T}_{\Sigma}(X)$ defined as follows:
$$
\textstyle
\Phi = \bigcap((\bigcup_{t\in S}\{\Omega^{\mathbf{T}_{\Sigma}(X)}(\delta^{t,L_{x}})\mid x\in X_{t}\})\cup\{\Omega^{\mathbf{T}_{\Sigma}(X)}(\delta^{s,K})\}).
$$
By the assumption and Proposition~\ref{Filter}, $\Phi$ is of finite index. Moreover, for every $t\in S-\{s\}$ and every $x\in X_{t}$, $\Phi_{t}$ saturates $L_{x}$, and, for $t = s$, $\Phi_{s}$ saturates $K$.

From now on, for every $r\in S$, $k_{r}$ and $\mathcal{W}_{\Phi_{r}}=\{W_{r,l}\mid l\in k_{r}\}$ stand for the index of $\Phi_{r}$ and a fixed transversal of $\mathrm{T}_{\Sigma}(X)_{r}/\Phi_{r}$ in $\mathrm{T}_{\Sigma}(X)_{r}$, respectively. Moreover, $k$ and $\mathcal{W}_{\Phi}$ denote the $S$-sorted sets $(k_{r})_{r\in S}$ and $(\mathcal{W}_{\Phi_{r}})_{r\in S}$, respectively.

Let $\Psi=(\Psi_{r})_{r\in S}$ be the binary relation on $\mathrm{T}_{\Sigma}(X)$ defined as follows: For every $r\in S$, $\Psi_{r}$ is the subset of $\mathrm{T}_{\Sigma}(X)_{r}^{2}$ consisting of all ordered pairs $(P,Q)$ in $\mathrm{T}_{\Sigma}(X)_{r}^{2}$ such that the following two conditions are satisfied:
\begin{enumerate}
\item $(P,Q)\in\Phi_{r}$.
\item For every $l\in k_{r}$
\[\left(
P\in
\left(\left(\left(\!\begin{smallmatrix}x\\L_{x}\end{smallmatrix}\!\right)_{x\in X_{t}}\right)_{t\in S}\right)^{\sharp\mathfrak{p}}_{r}([W_{r,l}]_{\Phi_{r}})
\leftrightarrow
Q\in
\left(\left(\left(\!\begin{smallmatrix}x\\L_{x}\end{smallmatrix}\!\right)_{x\in X_{t}}\right)_{t\in S}\right)^{\sharp\mathfrak{p}}_{r}([W_{r,l}]_{\Phi_{r}})
\right).\]
\end{enumerate}

By definition, for every $r\in S$, $\Psi_{r}$ is a refinement of $\Phi_{r}$ and an equivalence relation on $\mathrm{T}_{\Sigma}(X)_{r}$. Moreover, for every $r\in S$, 	the index of $\Psi_{r}$ on $\mathrm{T}_{\Sigma}(X)_{r}$ is bounded by $k_{r}2^{k_{r}}$. Consequently, the $S$-sorted set $\mathrm{T}_{\Sigma}(X)/{\Psi}$ is finite.

Let us check that $\Psi$ is a congruence on $\mathbf{T}_{\Sigma}(X)$. Let $(w,u)\in(S^{\star}-\{\lambda\})\times S$, $\sigma\in \Sigma_{w,u}$, and let $(P_{i})_{i\in\bb{w}}$ and $(Q_{i})_{in\bb{w}}$ be sequences of terms in $\mathrm{T}_{\Sigma}(X)_{w}$ such that, for every $i\in\bb{w}$, $(P_{i}, Q_{i})\in \Psi_{w_{i}}$. We want to show that
$$
\left(\sigma((P_{i})_{i\in \bb{w}}), \sigma((Q_{i})_{i\in\bb{w}})\right)\in\Psi_{u}.
$$

Let us note that, by definition of $\Psi$, for every $i\in \bb{w}$, we have that $(P_{i}, Q_{i})\in\Phi_{w_{i}}$. Since $\Phi$ is a congruence on $\mathrm{T}_{\Sigma}(X)$, we conclude that $\left(\sigma((P_{i})_{i\in \bb{w}}), \sigma((Q_{i})_{i\in\bb{w}})\right)$ is a pair in $\Phi_{u}$, so $\sigma((P_{i})_{i\in \bb{w}})$ and $\sigma((Q_{i})_{i\in \bb{w}})$ satisfy the first condition for being related under $\Psi$.

Regarding the second condition, let $l$ be an element of $k_{u}$. Assume that
$$
\sigma((P_{i})_{i\in\bb{w}})\in\left(\left(\left(\!\begin{smallmatrix}x\\L_{x}\end{smallmatrix}\!\right)_{x\in X_{t}}\right)_{t\in S}\right)^{\sharp\mathfrak{p}}_{u}([W_{u,l}]_{\Phi_{u}}).
$$
Then there are $W^{\dagger}\in[W_{u,l}]_{\Phi_{u}}$ and
$\left((U^{x,t}_{\alpha})_{\alpha\in \bb{W^{\dagger}}_{x}}\right)_{(x,t)\in\coprod X}$ in $\prod_{(x,t)\in\coprod X}L_{x}^{\bb{W^{\dagger}}_{x}}$ such that
\begin{equation}\label{EqSubsHtg}
\sigma((P_{i})_{i\in\bb{w}})=\left(\!\begin{smallmatrix}(x,t)\\(U^{x,t}_{\alpha})_{\alpha\in \bb{W^{\dagger}}_{x}}\end{smallmatrix}\!\right)_{(x,t)\in \coprod X}(W^{\dagger}).
\end{equation}

But for $W^{\dagger}$, either (a) $\mathrm{Var}(W^{\dagger})\cap X=\varnothing^{S}$ or (b) $\mathrm{Var}(W^{\dagger})\cap X\neq\varnothing^{S}$.

In case (a), since for every $t\in S$ and $x\in X_{t}$, $\bb{W^{\dagger}}_{x}=0$, we derive from Equation~\ref{EqSubsHtg} that $W^{\dagger}=\sigma((P_{i})_{i\in\bb{w}})$. It follows that, for every $i\in\bb{w}$, for every $t\in S$ and $x\in X_{t}$, $\bb{P_{i}}_{x}=0$. Therefore, for every $i\in\bb{w}$, the term $P_{i}$ is a term in $\left(\left(\left(\!\begin{smallmatrix}x\\L_{x}\end{smallmatrix}\!\right)_{x\in X_{t}}\right)_{t\in S}\right)^{\sharp\mathfrak{p}}_{w_{i}}([P_{i}]_{\Phi_{w_{i}}})$. Thus, by assumption, for every $i\in\bb{w}$, the term $Q_{i}$ is in $\left(\left(\left(\!\begin{smallmatrix}x\\L_{x}\end{smallmatrix}\!\right)_{x\in X_{t}}\right)_{t\in S}\right)^{\sharp\mathfrak{p}}_{w_{i}}([P_{i}]_{\Phi_{w_{i}}})$. Hence, for every $i\in\bb{w}$, there are $W^{\ddagger_{i}}\in [P_{i}]_{\Phi_{w_{i}}}$ and $\left((V^{x,t,i}_{\beta})_{\beta\in \bb{W^{\ddagger_{i}}}_{x}}\right)_{(x,t)\in\coprod X}$ in $\prod_{(x,t)\in\coprod X}L_{x}^{\bb{W^{\ddagger_{i}}}_{x}}$ such that
$$
Q_{i}=\left(\!\begin{smallmatrix}(x,t)\\(V^{x,t,i}_{\alpha})_{\alpha\in \bb{W^{\ddagger_{i}}}_{x}}\end{smallmatrix}\!\right)_{(x,t)\in \coprod X}(W^{\ddagger_{i}}).
$$
Let us denote by $W^{\ddagger}$ the term $\sigma((W^{\ddagger_{i}})_{i\in \bb{w}})$. Since, for every $i\in \bb{w}$, $W^{\ddagger_{i}}$ is a term in $[P_{i}]_{\Phi_{w_{i}}}$ and $\Phi$ is a congruence on $\mathrm{T}_{\Sigma}(X)$, we have that $(W^{\dagger}, W^{\ddagger})\in\Phi_{w_{u}}$. Thus $W^{\ddagger}\in [W_{u,l}]_{\Phi_{u}}$. Moreover, let us note that, for every $t\in S$ and $x\in X_{t}$, $\bb{W^{\ddagger}}_{x}=\sum_{i\in \bb{w}}\bb{W^{\ddagger_{i}}}_{x}$. Let $\left((V^{x,t}_{\beta})_{\beta\in \bb{W^{\ddagger}}_{x}}\right)_{(x,t)\in\coprod X}$ be the element in $\prod_{(x,t)\in\coprod X}L_{x}^{\bb{W^{\ddagger}}_{x}}$ obtained by joining, in order, the family $\left(\left((V^{x,t,i}_{\beta})_{\beta\in \bb{W^{\ddagger_{i}}}_{x}}\right)_{(x,t)\in\coprod X}\right)_{i\in \bb{w}}$. Then, from Equation~\ref{Eq1}, it follows that
\begin{align*}
\sigma((Q_{i})_{i\in\bb{w}})&=
\sigma\left(\left(
\left(\!\begin{smallmatrix}(x,t)\\(V^{x,t,i}_{\alpha})_{\alpha\in \bb{W^{\ddagger_{i}}}_{x}}\end{smallmatrix}\!\right)_{(x,t)\in \coprod X}(W^{\ddagger_{i}})
\right)_{i\in \bb{w}}\right)\\
&=
\left(\!\begin{smallmatrix}(x,t)\\(V^{x,t}_{\alpha})_{\alpha\in \bb{W^{\ddagger}}_{x}}\end{smallmatrix}\!\right)_{(x,t)\in \coprod X}(W^{\ddagger}).
\end{align*}
Therefore, $\sigma((Q_{i})_{i\in\bb{w}})\in \left(\left(\left(\!\begin{smallmatrix}x\\L_{x}\end{smallmatrix}\!\right)_{x\in X_{t}}\right)_{t\in S}\right)^{\sharp\mathfrak{p}}_{u}([W_{u,l}]_{\Phi_{u}})$. The other implication follows by a similar reasoning, thus proving case (a).

In case (b), either (b.1) $W^{\dagger}$ is a variable or (b.2) $W^{\dagger}$ has the form of an operation symbol applied to a suitable family of terms. Let us note that the case in which $W^{\dagger}$ is a constant is excluded because a term of such type does not contain any variable.

In case (b.1), we have that $W^{\dagger}=z$, for some $z\in X_{u}$. Thus, $\bb{W^{\dagger}}_{z}=1$, whilst, for every $x\in X_{u}-\{z\}$, $\bb{W^{\dagger}}_{x}=0$, and, for every $r\in S-\{u\}$ and every $x\in X_{r}$, $\bb{W^{\dagger}}_{x}=0$. Therefore, from Equation~\ref{EqSubsHtg} we obtain the following equation
$$
\left(\!\begin{smallmatrix}(x,t)\\(U^{x,t}_{\alpha})_{\alpha\in \bb{W^{\dagger}}_{x}}\end{smallmatrix}\!\right)_{(x,t)\in \coprod X}(W^{\dagger})=
\left(\!\begin{smallmatrix}(x,t)\\(U^{x,t}_{\alpha})_{\alpha\in \bb{W^{\dagger}}_{x}}\end{smallmatrix}\!\right)_{(x,t)\in \coprod X}(z)=
U^{z,u}_{0}=
\sigma((P_{i})_{i\in\bb{w}}).
$$
Hence $\sigma((P_{i})_{i\in\bb{w}})$ is a term in $L_{z}$. Let us consider the term $V^{z,u}_{0}=\sigma((Q_{i})_{i\in\bb{w}})$ in $L_{z}$. Since, for every $i\in\bb{w}$, $(P_{i}, Q_{i})\in\Phi_{w_{i}}$, we have that $(U^{z,u}_{0}, V^{z,u}_{0})\in\Phi_{u}$, and, since $L_{z}$ is $\Phi_{u}$-saturated, we have that $V^{z,u}_{0}$ is a term in $L_{x}$. Moreover, $\sigma((Q_{i})_{i\in\bb{w}})=\left(\!\begin{smallmatrix}z\\V^{z,u}_{0}\end{smallmatrix}\!\right)(z)$. Therefore $\sigma((Q_{i})_{i\in\bb{w}})\in\left(\left(\left(\!\begin{smallmatrix}x\\L_{x}\end{smallmatrix}\!\right)_{x\in X_{t}}\right)_{t\in S}\right)^{\sharp\mathfrak{p}}_{u}([W_{u,l}]_{\Phi_{u}})$. The other implication follows from a similar reasoning, thus proving case (b.1).

In case (b.2), we conclude, from Equation~\ref{EqSubsHtg}, that the term $W^{\dagger}$ has the form $\sigma((W^{\dagger_{i}})_{i\in\bb{w}})$ for some sequence of terms $(W^{\dagger_{i}})_{i\in\bb{w}}\in \mathrm{T}_{\Sigma}(X)_{w}$, since no sub\-sti\-tu\-tion replaces the operation symbol. From Equation~\ref{Eq1}, for every $i\in\bb{w}$,
$$
P_{i}=\left(\left(\!\begin{smallmatrix}(x,t)\\(U^{x,t}_{\alpha+\sum_{j\in i}\bb{W^{\dagger_{j}}}_{x}})_{\alpha\in \bb{W^{\dagger_{i}}}_{x}}\end{smallmatrix}\!\right)_{(x,t)\in \coprod X}(W^{\dagger_{i}})\right).
$$
Hence, for every $i\in \bb{w}$,
$P_{i}\in\left(\left(\left(\!\begin{smallmatrix}x\\L_{x}\end{smallmatrix}\!\right)_{x\in X_{t}}\right)_{t\in S}\right)^{\sharp\mathfrak{p}}_{w_{i}}([W^{\dagger_{i}}]_{\Phi_{w_{i}}})$.

By definition of $\Psi$, for every $i\in\bb{w}$,  $Q_{i}\in\left(\left(\left(\!\begin{smallmatrix}x\\L_{x}\end{smallmatrix}\!\right)_{x\in X_{t}}\right)_{t\in S}\right)^{\sharp\mathfrak{p}}_{w_{i}}([W^{\dagger_{i}}]_{\Phi_{w_{i}}})$. There\-fore, for every $i\in\bb{w}$, there are $W^{\ddagger_{i}}\in[W^{\dagger_{i}}]_{\Phi_{w_{i}}}$ and $\left((V^{x,t,i}_{\beta})_{\beta\in \bb{W^{\ddagger_{i}}}_{x}}\right)_{(x,t)\in\coprod X}$ in $\prod_{(x,t)\in\coprod X}L_{x}^{\bb{W^{\ddagger_{i}}}_{x}}$ such that
$$
Q_{i}=\left(\!\begin{smallmatrix}(x,t)\\(V^{x,t,i}_{\alpha})_{\alpha\in \bb{W^{\ddagger_{i}}}_{x}}\end{smallmatrix}\!\right)_{(x,t)\in \coprod X}(W^{\ddagger_{i}}).
$$
For $W^{\ddagger}=\sigma((W^{\ddagger_{i}})_{i\in\bb{w}})$ and  $\left((V^{x,t}_{\beta})_{\beta\in \bb{W^{\ddagger}}_{x}}\right)_{(x,t)\in\coprod X}$ of $\prod_{(x,t)\in\coprod X}L_{x}^{\bb{W^{\ddagger}}_{x}}$ ob\-tained by joining, in order, the family  $\left(\left((V^{x,t,i}_{\beta})_{\beta\in \bb{W^{\ddagger_{i}}}_{x}}\right)_{(x,t)\in\coprod X}\right)_{i\in \bb{w}}$, we have that $\sigma((Q_{i})_{i\in\bb{w}})$ is in $\left(\left(\left(\!\begin{smallmatrix}x\\L_{x}\end{smallmatrix}\!\right)_{x\in X_{t}}\right)_{t\in S}\right)^{\sharp\mathfrak{p}}_{u}([W_{u,l}]_{\Phi_{u}})$. The other implication follows by a similar reasoning, thus proving case (b.2).

Therefore $\Psi$ is a congruence of finite index in $\mathrm{Cgr}_{\mathrm{fi}}(\mathbf{T}_{\Sigma}(X))$.

Finally, we prove that $\left(\left(\left(\!\begin{smallmatrix}x\\L_{x}\end{smallmatrix}\!\right)_{x\in X_{t}}\right)_{t\in S}\right)^{\sharp\mathfrak{p}}_{s}(K)$ is $\Psi_{s}$-saturated. Note that, by definition of $\Psi$, for every $r\in S$ and every $l\in k_{r}$,   $\left(\left(\left(\!\begin{smallmatrix}x\\L_{x}\end{smallmatrix}\!\right)_{x\in X_{t}}\right)_{t\in S}\right)^{\sharp\mathfrak{p}}_{r}([W_{r,l}]_{\Phi_{r}})$ is $\Psi_{r}$-saturated. Moreover,
$$
\textstyle
\left(\left(\left(\!\begin{smallmatrix}x\\L_{x}\end{smallmatrix}\!\right)_{x\in X_{t}}\right)_{t\in S}\right)^{\sharp\mathfrak{p}}_{s}(K)=\bigcup_{W_{s,l}\in\mathcal{W}_{\Phi_{s}}\And [W_{s,l}]_{\Phi_{s}}\subseteq K}\left(\left(\left(\!\begin{smallmatrix}x\\L_{x}\end{smallmatrix}\!\right)_{x\in X_{t}}\right)_{t\in S}\right)^{\sharp\mathfrak{p}}_{s}([W_{s,l}]_{\Phi_{s}}).
$$
Hence, $\left(\left(\left(\!\begin{smallmatrix}x\\L_{x}\end{smallmatrix}\!\right)_{x\in X_{t}}\right)_{t\in S}\right)^{\sharp\mathfrak{p}}_{s}(K)$ is $\Psi_{s}$-saturated because it is a finite union of $\Psi_{s}$-saturated languages. The statement follows from Proposition~\ref{Rec is Bool}.
\end{proof}

\begin{corollary}\label{CRecSubs}
Let $u$ and $s$ be sorts in $S$, $z\in X_{u}$, $L\in \mathrm{Rec}_{u}(\mathbf{T}_{\Sigma}(X))$, and $K\in\mathrm{Rec}_{s}(\mathbf{T}_{\Sigma}(X))$. Then $\left(\!\begin{smallmatrix}z\\L\end{smallmatrix}\!\right)^{\sharp\mathfrak{p}}_{s}(K)\in \mathrm{Rec}_{s}(\mathbf{T}_{\Sigma}(X))$.
\end{corollary}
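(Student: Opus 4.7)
The plan is to derive the corollary directly from Proposition~\ref{PRecSubs} by reducing the one-variable substitution to the global substitution. Specifically, define an $S$-sorted family $\left(\left(\!\begin{smallmatrix}x\\L_{x}\end{smallmatrix}\!\right)_{x\in X_{t}}\right)_{t\in S}$ from $X$ to $\mathrm{T}_{\Sigma}(X)^{\wp}$ by setting $L_{z}=L$ and $L_{x}=\{x\}$ for every pair $(x,t)\in\coprod X$ with $(x,t)\neq(z,u)$. By hypothesis $L\in\mathrm{Rec}_{u}(\mathbf{T}_{\Sigma}(X))$, and by Proposition~\ref{PRecVar}, for every sort $t\in S$ and every variable $x\in X_{t}$, the singleton $\{x\}\in\mathrm{Rec}_{t}(\mathbf{T}_{\Sigma}(X))$. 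Consequently this family takes values in the $S$-sorted set $(\mathrm{Rec}_{t}(\mathbf{T}_{\Sigma}(X)))_{t\in S}$ and Proposition~\ref{PRecSubs} applies, yielding $\left(\left(\left(\!\begin{smallmatrix}x\\L_{x}\end{smallmatrix}\!\right)_{x\in X_{t}}\right)_{t\in S}\right)^{\sharp\mathfrak{p}}_{s}(K)\in\mathrm{Rec}_{s}(\mathbf{T}_{\Sigma}(X))$.

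The remaining task is to verify the identity
\[
\left(\left(\left(\!\begin{smallmatrix}x\\L_{x}\end{smallmatrix}\!\right)_{x\in X_{t}}\right)_{t\in S}\right)^{\sharp\mathfrak{p}}_{s}(K)=\left(\!\begin{smallmatrix}z\\L\end{smallmatrix}\!\right)^{\sharp\mathfrak{p}}_{s}(K),
\]
so that the recognizability result transfers. Since the right-hand operator is also, up to unfolding of definitions, a canonical extension of a homomorphism from $\mathbf{T}_{\Sigma}(X)$ to $\mathbf{T}_{\Sigma}(X)^{\wp}$, it suffices (by the remark following the definition of the substitution operator) to show that for every $s\in S$ and every $P\in\mathrm{T}_{\Sigma}(X)_{s}$ the two operators agree on the singleton $\{P\}$. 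By the proposition relating $\left(\left(\left(\!\begin{smallmatrix}x\\L_{x}\end{smallmatrix}\!\right)_{x\in X_{t}}\right)_{t\in S}\right)^{\sharp}_{s}(P)$ to the image of the restricted global substitution mapping, a term $W$ belongs to the left-hand side iff there is a tuple $\left((Q^{x,t}_{\alpha})_{\alpha\in\bb{P}_{x}}\right)_{(x,t)\in\coprod X}\in\prod_{(x,t)\in\coprod X}L_{x}^{\bb{P}_{x}}$ with $W=\left(\!\begin{smallmatrix}(x,t)\\(Q^{x,t}_{\alpha})_{\alpha\in\bb{P}_{x}}\end{smallmatrix}\!\right)_{(x,t)\in\coprod X}(P)$. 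Since $L_{x}=\{x\}$ for every $(x,t)\neq(z,u)$, the only admissible choice for such $(x,t)$ forces $Q^{x,t}_{\alpha}=x$ for all $\alpha\in\bb{P}_{x}$, so substituting $x$ for $x$ at every occurrence leaves those positions untouched; only the $z$-occurrences are genuinely replaced, and they are replaced by elements of $L^{\bb{P}_{z}}$.

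Formally, I would establish the equality $\left(\!\begin{smallmatrix}(x,t)\\(Q^{x,t}_{\alpha})_{\alpha\in\bb{P}_{x}}\end{smallmatrix}\!\right)_{(x,t)\in\coprod X}(P)=\left(\!\begin{smallmatrix}z\\(Q^{z,u}_{\alpha})_{\alpha\in\bb{P}_{z}}\end{smallmatrix}\!\right)(P)$ by algebraic induction on $P$ using Proposition~\ref{PPAI}: the basis case splits into $P=y$ for a variable (where if $y=z$ the two sides both yield $Q^{z,u}_{0}$ and otherwise both yield $y$) and $P=\sigma\in\Sigma_{\lambda,s}$ (where both sides yield $\sigma$), while the inductive step propagates the equality through Equation~\ref{Eq1} applied on both sides. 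Taking unions over $\left((Q^{x,t}_{\alpha})_{\alpha\in\bb{P}_{x}}\right)_{(x,t)\in\coprod X}$ on the left (which reduces, by the singleton constraint, to varying only $(Q^{z,u}_{\alpha})_{\alpha\in\bb{P}_{z}}\in L^{\bb{P}_{z}}$) and over $(Q^{z}_{\alpha})_{\alpha\in\bb{P}_{z}}\in L^{\bb{P}_{z}}$ on the right yields the desired equality of the $\sharp$-homomorphisms at $P$, and then of the $\sharp\mathfrak{p}$-extensions at $K$. The main (mild) obstacle is bookkeeping in the inductive step, where one must correctly track the splitting of the index set $\bb{\sigma((P_{i})_{i\in\bb{w}})}_{x}=\sum_{i\in\bb{w}}\bb{P_{i}}_{x}$ so that the two substitution operators apply the same choices at matching positions.
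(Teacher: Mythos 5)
Your proposal is correct and is essentially the paper's intended argument: the corollary is an immediate specialization of Proposition~\ref{PRecSubs} to the family with $L_{z}=L$ and $L_{x}=\{x\}$ otherwise, with Proposition~\ref{PRecVar} supplying the recognizability of the singletons. Note that the identity you set out to verify in your last two paragraphs is actually definitional, since the paper defines $\left(\!\begin{smallmatrix}z\\L\end{smallmatrix}\!\right)$ to be precisely that $S$-sorted mapping, so its $\sharp$- and $\sharp\mathfrak{p}$-extensions coincide with those of your family by uniqueness of the extension, and no induction is needed.
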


We next state the many-sorted version of Corollary 4.12, on p.~78, in \cite{GS84}.

\begin{corollary}\label{CRecOp}
Let $(w,s)$ be an element of $S^{\star}\times S$, $\sigma\in\Sigma_{w,s}$, and $(L_{i})_{i\in \bb{w}}\in \prod_{i\in \bb{w}}\mathrm{Rec}_{w_{i}}(\mathbf{T}_{\Sigma}(X))$. Then the language $\sigma^{\wp}((L_{i})_{i\in \bb{w}})\in \mathrm{Rec}_{s}(\mathbf{T}_{\Sigma}(X))$.
\end{corollary}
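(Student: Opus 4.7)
The plan is to bypass the substitution machinery and produce directly a congruence of finite index on $\mathbf{T}_{\Sigma}(X)$ that saturates $\delta^{s,M}$, where $M = \sigma^{\wp}((L_{i})_{i\in\bb{w}})$; the $s$-recognizability of $M$ then follows at once from the characterization of $s$-recognizability in terms of congruences of finite index. The key structural fact to be exploited is the unique decomposition of terms from Proposition~\ref{rut}, which lets us detect whether a given term of sort $s$ begins with $\sigma$ and, if so, read off its immediate $w$-indexed subterms.

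First, for every $i \in \bb{w}$, because $L_{i} \in \mathrm{Rec}_{w_{i}}(\mathbf{T}_{\Sigma}(X))$, there is a congruence $\Phi^{i} \in \mathrm{Cgr}_{\mathrm{fi}}(\mathbf{T}_{\Sigma}(X))$ with $L_{i} = [L_{i}]^{\Phi^{i}_{w_{i}}}$. Setting $\Phi = \bigcap_{i\in\bb{w}}\Phi^{i}$, Proposition~\ref{Filter} yields $\Phi \in \mathrm{Cgr}_{\mathrm{fi}}(\mathbf{T}_{\Sigma}(X))$ (a finite intersection, since $\bb{w}$ is finite), and Corollary~\ref{IncSat and sIncSat} guarantees that each $L_{i}$ remains $\Phi_{w_{i}}$-saturated.

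Next, I would refine $\Phi$ only at sort $s$: define $\Psi$ by $\Psi_{r} = \Phi_{r}$ for $r \neq s$ and
\[
\Psi_{s} = \{(P,Q) \in \Phi_{s} \mid P \in M \Leftrightarrow Q \in M\}.
\]
Each $\Phi_{s}$-class splits into at most two $\Psi_{s}$-classes, so $\Psi$ is again of finite index, and $M$ is $\Psi_{s}$-saturated by construction. The substantive check is that $\Psi$ is a congruence on $\mathbf{T}_{\Sigma}(X)$. Given $\tau \in \Sigma_{u,r}$ and $(P_{j},Q_{j}) \in \Psi_{u_{j}}$ for every $j \in \bb{u}$, the congruence property is inherited from $\Phi$ whenever $r \neq s$. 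When $r = s$ and $(u,\tau) \neq (w,\sigma)$, neither $\tau((P_{j}))$ nor $\tau((Q_{j}))$ can be of the shape $\sigma((R_{i})_{i\in\bb{w}})$, by the exclusivity clause of Proposition~\ref{rut}; both lie outside $M$, so the biconditional defining $\Psi_{s}$ is trivial. The heart of the argument is the remaining case $(u,\tau) = (w,\sigma)$: unique decomposition gives $\sigma((P_{j})) \in M$ if and only if $P_{j} \in L_{j}$ for every $j$, and similarly for $Q$; since each $L_{j}$ is $\Phi_{w_{j}}$-saturated and $(P_{j},Q_{j}) \in \Phi_{w_{j}}$, the two conjunctions are equivalent, establishing the second condition of $\Psi_{s}$, while the first is inherited from $\Phi$.

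The main obstacle is precisely this last verification, where the separately recorded saturations of the $L_{j}$ must align with the defining shape of $M$; this is what forces the intersection $\Phi = \bigcap_{i\in\bb{w}}\Phi^{i}$ and relies crucially on the uniqueness part of Proposition~\ref{rut}. The degenerate cases ($\bb{w} = 0$, giving $M = \{\sigma\}$, and $L_{i} = \varnothing$ for some $i$, giving $M = \varnothing$) fit uniformly into the same construction, but can alternatively be dispatched at once using Proposition~\ref{PRecConst} together with the trivial $s$-recognizability of $\varnothing$.
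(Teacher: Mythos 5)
Your argument is correct, but it takes a genuinely different route from the paper. The paper obtains Corollary~\ref{CRecOp} in one line by combining Proposition~\ref{PRecOp} (the singleton $\{\sigma((x_{i})_{i\in\bb{w}})\}$ is recognizable) with the substitution theorem, Proposition~\ref{PRecSubs}: one substitutes $L_{i}$ for the variable $x_{i}$ in that basic term. You instead build the recognizing congruence for $M=\sigma^{\wp}((L_{i})_{i\in\bb{w}})$ by hand: intersect finite-index congruences saturating the $L_{i}$ (Proposition~\ref{Filter}), then refine only at sort $s$ by intersecting with $\mathrm{Ker}(\mathrm{ch}_{M})$, and verify compatibility using the unique-readability clause of Proposition~\ref{rut} to reduce membership of $\sigma((P_{j})_{j\in\bb{w}})$ in $M$ to the conjunction of the memberships $P_{j}\in L_{j}$, which the saturations of the $L_{j}$ control. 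All the individual steps check out, including the two boundary cases and the finiteness of the index (each $\Phi_{s}$-class splits into at most two $\Psi_{s}$-classes). What your approach buys is self-containedness and slightly weaker hypotheses: Proposition~\ref{PRecSubs} is proved under the assumption that both $S$ and $X$ are finite, and Proposition~\ref{PRecOp} tacitly needs a suitable family of variables $(x_{i})_{i\in\bb{w}}\in X_{w}$ to exist, whereas your construction needs only that $S$ is finite and makes no demand on $X$. What it costs is that you replicate in miniature the saturation-and-refinement schema that Proposition~\ref{PRecSubs} packages once and for all; your $\Psi$ is essentially the specialization of the congruence constructed there to the single term $\sigma((x_{i})_{i\in\bb{w}})$, with the case analysis on the shape of $W^{\dagger}$ collapsing to the single case you treat.
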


\begin{proof}
It follows from Proposition~\ref{PRecOp} and Proposition~\ref{PRecSubs}.
\end{proof}

\begin{corollary}
The $S$-sorted set $(\mathrm{Rec}_{s}(\mathbf{T}_{\Sigma}(X)))_{s\in S}$ is a subalgebra of $\mathbf{T}_{\Sigma}(X)^{\wp}$. We will denote by $\mathbf{Rec}_{\boldsymbol{\cdot}}(\mathbf{T}_{\Sigma}(X))$ the $\Sigma$-algebra canonically associated to the subalgebra $(\mathrm{Rec}_{s}(\mathbf{T}_{\Sigma}(X)))_{s\in S}$ of $\mathbf{T}_{\Sigma}(X)^{\wp}$.
\end{corollary}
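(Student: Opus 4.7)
The plan is to unpack the definition of subalgebra (Definition~\ref{Subalg}) and verify that $(\mathrm{Rec}_{s}(\mathbf{T}_{\Sigma}(X)))_{s\in S}$ is closed under every structural operation of $\mathbf{T}_{\Sigma}(X)^{\wp}$. First I would fix a rank $(w,s)\in S^{\star}\times S$ and a formal operation $\sigma\in\Sigma_{w,s}$, pick an arbitrary family $(L_{i})_{i\in \bb{w}}$ in $\prod_{i\in \bb{w}}\mathrm{Rec}_{w_{i}}(\mathbf{T}_{\Sigma}(X))$, and establish that the value $\sigma^{\wp}((L_{i})_{i\in \bb{w}})$ lies in $\mathrm{Rec}_{s}(\mathbf{T}_{\Sigma}(X))$.

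The central observation is that this closure statement is exactly what Corollary~\ref{CRecOp} asserts, uniformly for $w\in S^{\star}$. Hence the whole verification collapses into a single invocation of that corollary. In particular, the constant case $w=\lambda$ requires no separate treatment: the empty family is sent by $\sigma^{\wp}$ to the singleton $\{\sigma\}$, whose recognizability is already built into Corollary~\ref{CRecOp} (and, independently, is the content of Proposition~\ref{PRecConst}).

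I do not expect any genuine obstacle, since the nontrivial work has already been deposited upstream: Proposition~\ref{PRecOp} provides, via an explicit finite target algebra, the recognizability of each basic language $\{\sigma((x_{i})_{i\in \bb{w}})\}$, and Proposition~\ref{PRecSubs} provides, via the congruence-refinement argument that is the methodological backbone of the paper, the preservation of recognizability under substitution. Corollary~\ref{CRecOp} itself is obtained by substituting the given recognizable languages $(L_{i})_{i\in \bb{w}}$ for the variables $(x_{i})_{i\in \bb{w}}$ in the basic recognizable language $\{\sigma((x_{i})_{i\in \bb{w}})\}$, and the present statement then follows. The only caveat worth flagging is that the finiteness hypotheses on $S$ (from the subsection \emph{Basic terms}) and on $X$ (from Proposition~\ref{PRecSubs}) propagate here, so the corollary should be understood under the same standing assumption that $S$ and $X$ are finite.
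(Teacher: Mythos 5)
Your proposal is correct and follows exactly the route the paper intends: the corollary is an immediate restatement of Corollary~\ref{CRecOp}, since being a subalgebra of $\mathbf{T}_{\Sigma}(X)^{\wp}$ means precisely being closed under every $\sigma^{\wp}$, and your remarks about the constant case and the standing finiteness assumptions on $S$ and $X$ are accurate. Nothing further is needed.
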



\subsection{Iterations}
In this subsection we introduce the notion of iteration of a language with respect to a variable with the aim of proving that, when the considered language is recognizable, then its iteration with respect to a variable is also a recognizable language.

We begin by stating the many-sorted counterpart of the single-sorted notion of $z$-iteration as defined by G\'{e}cseg and Steinby in \cite{GS84}, Definition~4.7, on p.~76.

\begin{definition}
Let $s$ be a sort in $S$, $z\in X_{s}$, and $L\in \mathrm{T}_{\Sigma}(X)^{\wp}_{s}$. Then the \emph{$z$-iteration of $L$} is the language
$$
\textstyle
L^{\star\, z} = \bigcup_{j\in\mathbb{N}}L^{j,z},
$$
where $(L^{j,z})_{j\in \mathbb{N}}$ is the family of subsets of $\mathrm{T}_{\Sigma}(X)_{s}$ defined recursively as follows:
$$
L^{0,z}=\{z\},\, \text{and, for } j\in \mathbb{N},\, L^{j+1,z} = L^{j,z}\cup \left(\!\begin{smallmatrix}z\\L^{j,z}\end{smallmatrix}\!\right)^{\sharp\mathfrak{p}}_{s}(L).
$$
\end{definition}

\begin{remark}
The language $L^{\star\, z}$ is obtained as follows. First include $z$. New members of $L^{\star\, z}$ are obtained by substituting in some term of $L$, for every occurrence of $z$, some term already known to be in $L^{\star\, z}$. Let us note that $L^{1,z} = L\cup \{z\}$ and that $(L^{j,z})_{j\in \mathbb{N}}$ is an ascending chain of subsets of  $\mathrm{T}_{\Sigma}(X)_{s}$, i.e., that, for every $j\in \mathbb{N}$, $L^{j,z}\subseteq L^{j+1,z}$.
\end{remark}

We next prove the many-sorted version of Theorem~4.8., on p.~76, in \cite{GS84}. The proposition states that, for every sort $s\in S$ and $z\in X_{s}$, if the input language $L\subseteq \mathrm{T}_{\Sigma}(X)_{s}$ is recognizable, then its $z$-iteration is also recognizable.

\begin{assumption}
To prove the following proposition we will assume that $S$ is finite.
\end{assumption}
\begin{proposition}
Let $s$ be a sort in $S$ and $z\in X_{s}$. If $L\in\mathrm{Rec}_{s}(\mathbf{T}_{\Sigma}(X))$, then $L^{\star\,z}\in\mathrm{Rec}_{s}(\mathbf{T}_{\Sigma}(X))$.
\end{proposition}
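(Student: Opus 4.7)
The strategy mirrors the argument for Proposition~\ref{PRecSubs}: starting from a finite-index congruence that saturates the input data, construct a finite-index refinement that saturates $L^{\star\,z}$, and then invoke the characterization of $s$-recognizability by finite-index saturating congruences.

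First, by Proposition~\ref{PRecVar}, $\{z\}$ is $s$-recognizable, so by Proposition~\ref{Filter} the congruence
$$
\Phi = \Omega^{\mathbf{T}_{\Sigma}(X)}(\delta^{s,L}) \cap \Omega^{\mathbf{T}_{\Sigma}(X)}(\delta^{s,\{z\}})
$$
is of finite index. It saturates $L$ and, crucially, separates $z$ from every other term of sort $s$, i.e., $[z]_{\Phi_{s}} = \{z\}$. For each $r\in S$, fix the index $k_{r}$ and a transversal $\mathcal{W}_{\Phi_{r}} = \{W_{r,l}\mid l\in k_{r}\}$ of $\mathrm{T}_{\Sigma}(X)_{r}/\Phi_{r}$. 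Next, define the relation $\Psi = (\Psi_{r})_{r\in S}$ on $\mathrm{T}_{\Sigma}(X)$ by declaring $(P,Q)\in\Psi_{r}$ if and only if (i) $(P,Q)\in\Phi_{r}$, and (ii) for every $l\in k_{r}$, $P\in\left(\!\begin{smallmatrix}z\\L^{\star\,z}\end{smallmatrix}\!\right)^{\sharp\mathfrak{p}}_{r}([W_{r,l}]_{\Phi_{r}})$ if and only if $Q\in\left(\!\begin{smallmatrix}z\\L^{\star\,z}\end{smallmatrix}\!\right)^{\sharp\mathfrak{p}}_{r}([W_{r,l}]_{\Phi_{r}})$. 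Each $\Psi_{r}$ is an equivalence refining $\Phi_{r}$ with at most $k_{r}\cdot 2^{k_{r}}$ classes, so $\Psi$ is of finite index.

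Then, to show $\Psi$ is a congruence on $\mathbf{T}_{\Sigma}(X)$, take $\sigma\in\Sigma_{w,u}$ with $(w,u)\in(S^{\star}-\{\lambda\})\times S$ and sequences $(P_{i})_{i\in\bb{w}}$, $(Q_{i})_{i\in\bb{w}}$ in $\mathrm{T}_{\Sigma}(X)_{w}$ satisfying $(P_{i},Q_{i})\in\Psi_{w_{i}}$ for all $i$; condition (i) for the pair $(\sigma((P_{i})_{i\in\bb{w}}),\sigma((Q_{i})_{i\in\bb{w}}))$ is immediate from $\Phi$ being a congruence. For (ii), given $l\in k_{u}$ and a witness $\sigma((P_{i})_{i\in\bb{w}}) = \left(\!\begin{smallmatrix}z\\(U_{\alpha})_{\alpha\in\bb{W^{\dagger}}_{z}}\end{smallmatrix}\!\right)(W^{\dagger})$ with $W^{\dagger}\in[W_{u,l}]_{\Phi_{u}}$ and $(U_{\alpha})\in(L^{\star\,z})^{\bb{W^{\dagger}}_{z}}$, analyze $W^{\dagger}$: if $W^{\dagger} = \sigma((W^{\dagger}_{i})_{i\in\bb{w}})$ has $\sigma$ at its root, distribute the substitution via the analog of Equation~\ref{Eq1}, apply condition (ii) of $\Psi_{w_{i}}$ to each $P_{i}\in\left(\!\begin{smallmatrix}z\\L^{\star\,z}\end{smallmatrix}\!\right)^{\sharp\mathfrak{p}}_{w_{i}}([W^{\dagger}_{i}]_{\Phi_{w_{i}}})$ to obtain $W''_{i}\sim_{\Phi}W^{\dagger}_{i}$ and $(V^{i}_{\alpha})\in(L^{\star\,z})^{\bb{W''_{i}}_{z}}$ with $Q_{i} = \left(\!\begin{smallmatrix}z\\(V^{i}_{\alpha})\end{smallmatrix}\!\right)(W''_{i})$, and glue via the congruence property of $\Phi$ to obtain a witness for $\sigma((Q_{i})_{i\in\bb{w}})$; if instead $W^{\dagger} = z$ (forcing $u=s$ and $[W_{u,l}]_{\Phi_{u}} = \{z\}$), then $\sigma((P_{i})_{i\in\bb{w}}) = U_{0}\in L^{\star\,z}$, and since $\sigma((P_{i})_{i\in\bb{w}})$ has $\sigma$ at its root (not a variable), one unfolds the fixed-point definition $L^{\star\,z} = \{z\}\cup\left(\!\begin{smallmatrix}z\\L^{\star\,z}\end{smallmatrix}\!\right)^{\sharp\mathfrak{p}}_{s}(L)$ minimally to extract a witness $\sigma((P_{i})_{i\in\bb{w}}) = \left(\!\begin{smallmatrix}z\\(\tilde{U}_{\alpha})\end{smallmatrix}\!\right)(\tilde{W})$ with $\tilde{W}\in L$ and $\tilde{W} = \sigma((\tilde{W}_{i})_{i\in\bb{w}})$, reducing to the previous case.

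Finally, to show $\Psi_{s}$ saturates $L^{\star\,z}$: if $P\in L^{\star\,z}$ and $(P,Q)\in\Psi_{s}$, either $P = z$ and one applies (ii) at the unique $l$ with $[W_{s,l}]_{\Phi_{s}} = \{z\}$, using that $z = \left(\!\begin{smallmatrix}z\\z\end{smallmatrix}\!\right)(z)\in\left(\!\begin{smallmatrix}z\\L^{\star\,z}\end{smallmatrix}\!\right)^{\sharp\mathfrak{p}}_{s}(\{z\})$, to transfer the membership to $Q$; or $P\in\left(\!\begin{smallmatrix}z\\L^{\star\,z}\end{smallmatrix}\!\right)^{\sharp\mathfrak{p}}_{s}(L)$, and choosing $l$ with the underlying witness $W\in[W_{s,l}]_{\Phi_{s}}\subseteq L$ (possible since $L$ is $\Phi_{s}$-saturated), condition (ii) transfers membership to $Q$, yielding $Q\in\left(\!\begin{smallmatrix}z\\L^{\star\,z}\end{smallmatrix}\!\right)^{\sharp\mathfrak{p}}_{s}(L)\subseteq L^{\star\,z}$. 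The main obstacle will be the subcase $W^{\dagger} = z$ of the congruence verification, which seems self-referential; the key observation is that the root symbol $\sigma$ of $\sigma((P_{i})_{i\in\bb{w}})$ forces the fixed-point unfolding of $L^{\star\,z}$ to terminate at a witness of the form $\left(\!\begin{smallmatrix}z\\(\tilde{U}_{\alpha})\end{smallmatrix}\!\right)(\tilde{W})$ with $\tilde{W}\in L$ structurally non-trivial, which reduces the analysis to the first, structurally simpler, subcase.
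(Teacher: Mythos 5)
Your proposal is correct and follows essentially the same route as the paper's proof: the same intersection congruence $\Phi$, the same refinement $\Psi$ defined via membership in the sets $\left(\!\begin{smallmatrix}z\\L^{\star\,z}\end{smallmatrix}\!\right)^{\sharp\mathfrak{p}}_{r}([W_{r,l}]_{\Phi_{r}})$, and the same minimal-unfolding argument to resolve the self-referential subcase $W^{\dagger}=z$. The only difference is presentational: the paper splits that subcase further according to whether $z$ occurs in the extracted witness $\overline{W}^{\dagger}\in L$, whereas you fold both into the root-symbol distribution argument, which is legitimate since minimality already guarantees $\overline{W}^{\dagger}\neq z$.
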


\begin{proof}
Let $\Phi$ be the congruence on $\mathbf{T}_{\Sigma}(X)$ defined as follows:
$$
\Phi = \Omega^{\mathbf{T}_{\Sigma}(X)}(\delta^{s,L})\cap\Omega^{\mathbf{T}_{\Sigma}(X)}(\delta^{s,z}).
$$
By Proposition~\ref{Filter}, $\Phi$ is of finite index (recall that, by Proposition~\ref{PRecVar}, $\{z\}\subseteq\mathrm{T}_{\Sigma}(X)_{s}$ is recognizable). Moreover, for the sort $s\in S$, $\Phi_{s}$ saturates $L$ and $\{z\}$.

From now on, for every $r\in S$, $k_{r}$ and $\mathcal{W}_{\Phi_{r}}=\{W_{r,l}\mid l\in k_{r}\}$ stand for the index of $\Phi_{r}$ and a fixed transversal of $\mathrm{T}_{\Sigma}(X)_{r}/\Phi_{r}$ in $\mathrm{T}_{\Sigma}(X)_{r}$, respectively. Moreover, $k$ and $\mathcal{W}_{\Phi}$ denote the $S$-sorted sets $(k_{r})_{r\in S}$ and $(\mathcal{W}_{\Phi_{r}})_{r\in S}$, respectively.

Let $\Psi=(\Psi_{r})_{r\in S}$ be the binary relation on $\mathrm{T}_{\Sigma}(X)$ that is defined as follows: For every $r\in S$, $\Psi_{r}$ is the binary relation on $\mathrm{T}_{\Sigma}(X)_{r}$ consisting of all ordered pairs $(P,Q)\in\mathrm{T}_{\Sigma}(X)^{2}_{r}$ such that the following two conditions are satisfied:
\begin{enumerate}
\item $(P,Q)\in\Phi_{r}$.
\item For every $l\in k_{r}$,
$\left(P\in \left(\!\begin{smallmatrix}z\\L^{\star\, z}\end{smallmatrix}\!\right)^{\sharp\mathfrak{p}}_{r}([W_{r,l}]_{\Phi_{r}})\leftrightarrow Q\in \left(\!\begin{smallmatrix}z\\L^{\star\, z}\end{smallmatrix}\!\right)^{\sharp\mathfrak{p}}_{r}([W_{r,l}]_{\Phi_{r}})\right)$.
\end{enumerate}

By definition, for every $r\in S$, $\Psi_{r}$ is a refinement of $\Phi_{r}$ and an equivalence relation on $\mathrm{T}_{\Sigma}(X)_{r}$. Moreover, for every $r\in S$, the index of $\Psi_{r}$ on $\mathrm{T}_{\Sigma}(X)_{r}$ is bounded by $k_{r}2^{k_{r}}$. Consequently, the $S$-sorted $\mathrm{T}_{\Sigma}(X)/{\Psi}$ is finite.

Let us check that $\Psi$ is a congruence on $\mathbf{T}_{\Sigma}(X)$.  Let $(w,u)\in (S^{\star}-\{\lambda\}\times S)$, $\sigma\in\Sigma_{w,u}$ and let $(P_{i})_{i\in\bb{w}}$ and $(Q_{i})_{i\in \bb{w}}$ be sequences of terms in $\mathrm{T}_{\Sigma}(X)_{w}$ such that for every $i\in\bb{w}$, $(P_{i},Q_{i})\in\Psi_{w_{i}}$. We want to show that $(\sigma((P_{i})_{i\in\bb{w}}),\sigma((Q_{i})_{i\in\bb{w}}))\in\Psi_{u}$.

Let us note that, by definition of $\Psi$, for every $i\in\bb{w}$, we have that $(P_{i},Q_{i})\in\Phi_{w_{i}}$. Since $\Phi$ is a congruence on $\mathrm{T}_{\Sigma}(X)$, we conclude that $(\sigma((P_{i})_{i\in\bb{w}}), \sigma((Q_{i})_{i\in\bb{w}}))$ is a pair in $\Phi_{u}$, so $\sigma((P_{i})_{i\in \bb{w}})$ and $\sigma((Q_{i})_{i\in \bb{w}})$ satisfy the first condition for being related under $\Psi_{u}$.

Regarding the second condition, let $l$ be an element of $k_{u}$. Assume that
\[
\sigma((P_{i})_{i\in\bb{w}}))\in \left(\!\begin{smallmatrix}z\\L^{\star\, z}\end{smallmatrix}\!\right)^{\sharp\mathfrak{p}}_{u}([W_{u,l}]_{\Phi_{u}}).
\]
Then there are $W^{\dagger}\in [W_{u,l}]_{\Phi_{s}}$ and $(U^{z}_{\alpha})_{\alpha\in\bb{W^{\dagger}}_{z}}$ in $(L^{\star z})^{\bb{W^{\dagger}}_{z}}$ such that
\begin{equation}\label{EqItHtg}
\sigma((P_{i})_{i\in \bb{w}}) = \left(\!\begin{smallmatrix}z\\(U^{z}_{\alpha})_{\alpha\in\bb{W^{\dagger}}_{z}}\end{smallmatrix}\!\right)\left(W^{\dagger}\right).
\end{equation}

Note that, for $W^{\dagger}$, either (a) $z\not\in \mathrm{Var}(W^{\dagger})_{s}$ or (b) $z\in\mathrm{Var}(W^{\dagger})_{s}$.

Case (a) follows by a similar argument to that presented in case (a) of Proposition~\ref{PRecSubs}.

In case (b), either (b.1) $W^{\dagger}$ is the variable $z$ (and $u=s$) or (b.2) $W^{\dagger}$ has the form of an operation symbol applied to a suitable family of terms.

In case (b.1), from Equation~\ref{EqItHtg}, we obtain the following equation:
$$
\left(\!\begin{smallmatrix}z\\(U^{z}_{\alpha})_{\alpha\in\bb{W^{\dagger}}_{z}}\end{smallmatrix}\!\right)\left(W^{\dagger}\right) = \left(\!\begin{smallmatrix}z\\(U^{z}_{\alpha})_{\alpha\in\bb{W^{\dagger}}_{z}}\end{smallmatrix}\!\right)(z)=U^{z}_{0} = \sigma((P_{i})_{i\in \bb{w}}).
$$
It follows that $\sigma((P_{i})_{i\in \bb{w}})$ is a term in $L^{\star\, z}$. Thus, there exists a $j\in\mathbb{N}$ such that  $\sigma((P_{i})_{i\in \bb{w}})\in L^{j,z}$. Let $j\in\mathbb{N}$ be the smallest natural number satisfying the just mentioned  property. Since $L^{0,z}=\{z\}$ it follows that $j\neq 0$. Therefore, we have that
$$
\sigma((P_{i})_{i\in \bb{w}})\in L^{j,z}=L^{j-1,z}\cup \left(\!\begin{smallmatrix}z\\L^{j-1,z}\end{smallmatrix}\!\right)^{\sharp\mathfrak{p}}_{s}(L).
$$
By the minimality of $j$, we conclude that $\sigma((P_{i})_{i\in \bb{w}})\in \left(\!\begin{smallmatrix}z\\L^{j-1,z}\end{smallmatrix}\!\right)^{\sharp\mathfrak{p}}_{s}(L)$. Then there are $\overline{W}^{\dagger}\in L$ and $(\overline{U}^{z}_{\alpha})_{\alpha\in\bb{\overline{W}^{\dagger}}_{z}}\in (L^{j-1,z})^{\bb{\overline{W}^{\dagger}}_{z}}$ such that
\begin{equation}\label{EqIt2Htg}
\sigma((P_{i})_{i\in \bb{w}}) = \left(\!\begin{smallmatrix}z\\ \left(\overline{U}^{z}_{\alpha}\right)_{\alpha\in\bb{\overline{W}^{\dagger}}_{z}}\end{smallmatrix}\!\right)\left(\overline{W}^{\dagger}\right).
\end{equation}
For $\overline{W}^{\dagger}$, either (b.1.i) $z\notin \mathrm{Var}(\overline{W}^{\dagger})_{s}$  or (b.1.ii) $z\in \mathrm{Var}(\overline{W}^{\dagger})_{s}$.

In case (b.1.i). since we are assuming that $\bb{\overline{W}^{\dagger}}_{z}=0$, the substitution leaves $\overline{W}^{\dagger}$ invariant. Hence $\overline{W}^{\dagger}=\sigma((P_{i})_{i\in \bb{w}})$. It follows that $\sigma((P_{i})_{i\in \bb{w}})\in L$. Let $V^{z}_{0}=\sigma((Q_{i})_{i\in \bb{w}})$. Let us note that, for every $i\in \bb{w}$, $(P_{i}, Q_{i})\in\Phi_{w_{i}}$. Therefore, $(U^{z}_{0}, V^{z}_{0})\in \Phi_{s}$ and, since $L$ is $\Phi_{s}$-saturated, we conclude that $V^{z}_{0}\in L\subseteq L^{\star\, z}$. It follows that $\sigma((Q_{i})_{i\in \bb{w}})$ is a term in $\left(\!\begin{smallmatrix}z\\L^{\star\, z}\end{smallmatrix}\!\right)^{\sharp\mathfrak{p}}_{s}([W_{s,l}]_{\Phi_{s}})$, as desired.

In case (b.1.ii), where we are assuming that $\bb{\overline{W}^{\dagger}}_{z}\neq 0$, we claim that $\overline{W}^{\dagger}$ cannot be the term $z$. Otherwise, from Equation~\ref{EqIt2Htg}, we can conclude that $\sigma((P_{i})_{i\in \bb{w}})$ is a term in $L^{j-1,z}$, contradicting the minimality of $j$. Thus, from Equation~\ref{EqIt2Htg}  we conclude that $\overline{W}^{\dagger} = \sigma((\overline{W}^{\dagger_{i}})_{i\in \bb{w}})$, for a unique $(\overline{W}^{\dagger_{i}})_{i\in \bb{w}}\in \mathrm{T}_{\Sigma}(X)_{w}$. Hence, from Equation~\ref{Eq1}, for every $i\in \bb{w}$, we have that
$$
P_{i} = \left(\!\begin{smallmatrix}z\\ \left(\overline{U}^{z}_{\alpha + \sum_{k\in i}\bb{\overline{W}^{\dagger_{k}}}_{z}}\right)_{\alpha\in\bb{\overline{W}^{\dagger_{i}}}_{z}}\end{smallmatrix}\!\right)\left(\overline{W}^{\dagger_{i}}\right)
\in \left(\!\begin{smallmatrix}z\\L^{\star\, z}\end{smallmatrix}\!\right)^{\sharp\mathfrak{p}}_{w_{i}}\left([\overline{W}^{\dagger_{i}}]_{\Phi_{w_{i}}}\right).
$$

By construction, for every $i\in \bb{w}$, $Q_{i}\in  \left(\!\begin{smallmatrix}z\\L^{\star\, z}\end{smallmatrix}\!\right)^{\sharp\mathfrak{p}}_{w_{i}}([\overline{W}^{\dagger_{i}}]_{\Phi_{w_{i}}})$. Therefore, for every $i\in \bb{w}$, there are $\overline{W}^{\ddagger_{i}}\in [\overline{W}^{\dagger_{i}}]_{\Phi_{w_{i}}}$ and $(\overline{V}^{z,i}_{\beta})_{\beta\in\bb{\overline{W}^{\ddagger_{i}}}_{z}}\in (L^{\star\, z})^{\bb{\overline{W}^{\ddagger_{i}}}_{z}}$ such that
$$
Q_{i} = \left(\!\begin{smallmatrix}z\\ \left(\overline{V}^{z,i}_{\beta}\right)_{\beta\in\bb{\overline{W}^{\ddagger_{i}}}_{z}}\end{smallmatrix}\!\right)\left(\overline{W}^{\ddagger_{i}}\right).
$$
Let us note that the term $\overline{W}^{\ddagger}=\sigma((\overline{W}^{\ddagger_{i}})_{i\in n})$ is in $L$ because, for every $i\in \bb{w}$, the pair $(\overline{W}^{\dagger_{i}}, \overline{W}^{\ddagger_{i}})$ is in $\Phi_{s}$, $\overline{W}^{\dagger}\in L$, and $L$ is $\Phi_{s}$-saturated.

Let $(\overline{V}^{z}_{\beta})_{\beta\in \bb{\overline{W}^{\ddagger}}_{z}}$ be the element of $(L^{\star\, z})^{\bb{\overline{W}^{\ddagger}}_{z}}$ obtained by joining, in order, the $\bb{w}$-indexed family $\left((\overline{V}^{z,i}_{\beta})_{\beta\in \bb{\overline{W}^{\dagger_{i}}}_{z}}\right)_{i\in \bb{w}}$. Since $(\overline{V}^{z}_{\beta})_{\beta\in \bb{\overline{W}^{\dagger}}_{z}}$ is finite, there exists a  $t\in\mathbb{N}$ such that, for every  $\beta\in\bb{\overline{W}^{\ddagger}}_{x}$, the term $\overline{V}^{z}_{\beta}$ is in $L^{t, z}$.
On the whole, we conclude that
$$
\sigma((Q_{i})_{i\in \bb{w}})\in \left(\!\begin{smallmatrix}z\\L^{t, z}\end{smallmatrix}\!\right)^{\sharp\mathfrak{p}}_{s}(L)\subseteq L^{\star\, z}.
$$
Therefore $\sigma((Q_{i})_{i\in \bb{w}})$ is a term in $\left(\!\begin{smallmatrix}z\\L^{\star\, z}\end{smallmatrix}\!\right)^{\sharp\mathfrak{p}}_{s}([W_{s,l}]_{\Phi_{s}})$, as desired. The other implication follows by a similar reasoning, thus proving case (1.b.ii).

Case (b.2) follows by an argument similar to that used in case (b.2) of Proposition~\ref{PRecSubs}.

Therefore $\Psi$ is a congruence of finite index in $\mathrm{Cgr}_{\mathrm{fi}}(\mathbf{T}_{\Sigma}(X))$.

Finally, we prove that $L^{\star\, z}$ is $\Psi_{s}$-saturated. Note that, by definition of $\Psi$, for every $r\in S$ and $l\in k_{r}$, the language $\left(\!\begin{smallmatrix}z\\ L^{\star\, z}\end{smallmatrix}\!\right)^{\sharp\mathfrak{p}}_{r}([W_{r,l}]_{\Phi_{r}})$ is $\Psi_{r}$-saturated. In particular, since $\Phi_{s}$ recognizes $\{z\}$, we conclude that $L^{\star\, z}=\left(\!\begin{smallmatrix}z\\ L^{\star\, z}\end{smallmatrix}\!\right)^{\sharp\mathfrak{p}}_{s}(\{z\})$ is $\Psi_{s}$-saturated, thus proving the stamement.
\end{proof}

\subsection{Quotients}
We next define the notion of quotient of a language by another with respect to a variable of a specified sort with the aim of proving that, when one of the languages is recognizable, then the resulting quotient is also recognizable.

We begin by stating the many-sorted counterpart of the single-sorted notion of $z$-quotient as defined by G\'{e}cseg and Steinby in \cite{GS84}, Definition~4.9., on p.~77.

\begin{definition}
Let $s$ be a sort in $S$ and $L\in\mathrm{T}_{\Sigma}(X)^{\wp}_{s}$.
Let $t$ be a sort in $S$, $z$ an element of $X_{t}$, and $K\in \mathrm{T}_{\Sigma}(X)^{\wp}_{t}$.  Then the $z$-quotient of $L$ by $K$ is the language
$$
K^{-z}L = \left\lbrace U \in \mathrm{T}_{\Sigma}(X)_{s}\mid \left(\!\begin{smallmatrix}z\\ K\end{smallmatrix}\!\right)^{\sharp}_{s}(U)\cap L\neq \varnothing\right\rbrace.
$$
This operation may be seen as a converse of the $z$-substitution. If $K = \{x\}$ is a final set, then we will write $x^{-z}L$ instead of $K^{-z}L$.
\end{definition}

We prove now the many-sorted version of Theorem~4.10., on p.~77, in \cite{GS84}. The proposition states that, for a sort $s$ in $S$, if  $L\in \mathrm{T}_{\Sigma}(X)^{\wp}_{s}$ is recognizable, then, for every $t\in S$, $z\in X_{t}$ and every $K\in \mathrm{T}_{\Sigma}(X)^{\wp}_{t}$, the $z$-quotient of $L$ by $K$ is also recognizable.

\begin{assumption}
To prove the following proposition we will assume that $S$ is finite.
\end{assumption}

\begin{proposition}\label{PRecQ}
Let $s$ be a sort in $S$. If $L\in\mathrm{Rec}_{s}(\mathbf{T}_{\Sigma}(X))$, then, for every $t\in S$, every $z\in X_{t}$, and every $K\in \mathrm{T}_{\Sigma}(X)^{\wp}_{t}$, $K^{-z}L\in\mathrm{Rec}_{s}(\mathbf{T}_{\Sigma}(X))$. Moreover, the number of different $z$-quotients $K^{-z}L$ for any fixed $L\in\mathrm{Rec}_{s}(\mathbf{T}_{\Sigma}(X))$ is finite.
\end{proposition}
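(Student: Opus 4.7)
The natural first attempt---refining a congruence on $\mathbf{T}_{\Sigma}(X)$ of finite index saturating $L$ until it also saturates $K^{-z}L$, as in Propositions~\ref{PRecSubs} and the iteration proposition of Subsection~3---runs into an immediate obstruction: if $U\Phi_{s}U'$ then typically $\bb{U}_{z}\neq\bb{U'}_{z}$, so a family $(Q_{\alpha})_{\alpha\in\bb{U}_{z}}$ of substitutes witnessing $U\in K^{-z}L$ does not transport to $U'$ in any canonical way. The plan is to circumvent this mismatch by working in the subset algebra $\mathbf{T}_{\Sigma}(X)^{\wp}$, where $\left(\!\begin{smallmatrix}z\\K\end{smallmatrix}\!\right)^{\sharp}$ is an honest homomorphism and the whole image set absorbs the differing occurrence counts. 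Throughout, let $\Phi = \Omega^{\mathbf{T}_{\Sigma}(X)}(\delta^{s,L})$, which by recognizability of $L$ and Proposition~\ref{PSat and sPSat} is a congruence on $\mathbf{T}_{\Sigma}(X)$ of finite index that saturates $L$.

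The construction to carry out is the composite homomorphism
$$
h = (\mathrm{pr}_{\Phi})^{\wp}\circ \left(\!\begin{smallmatrix}z\\K\end{smallmatrix}\!\right)^{\sharp}\colon \mathbf{T}_{\Sigma}(X)\mor (\mathbf{T}_{\Sigma}(X)/\Phi)^{\wp},
$$
where the second factor is the image under the endofunctor $(\cdot)^{\wp}$ of Proposition~\ref{subsetalg} of the canonical projection $\mathrm{pr}_{\Phi}\colon \mathbf{T}_{\Sigma}(X)\mor \mathbf{T}_{\Sigma}(X)/\Phi$. Since $S$ is finite and $\mathbf{T}_{\Sigma}(X)/\Phi$ is finite, so is $(\mathbf{T}_{\Sigma}(X)/\Phi)^{\wp}$; thus $h$ is a homomorphism into a finite $\Sigma$-algebra, which is the kind of witness required for $s$-recognizability.

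Setting $\mathcal{N} = \{N\in\mathrm{Sub}((\mathrm{T}_{\Sigma}(X)/\Phi)_{s})\mid N\cap \mathrm{pr}_{\Phi_{s}}[L]\neq\varnothing\}$, the identity to verify is $K^{-z}L = h_{s}^{-1}[\mathcal{N}]$. The nontrivial direction rests on the observation that, because $L$ is $\Phi_{s}$-saturated, a subset $V$ of $\mathrm{T}_{\Sigma}(X)_{s}$ meets $L$ if and only if $\mathrm{pr}_{\Phi_{s}}[V]$ meets $\mathrm{pr}_{\Phi_{s}}[L]$; applied to $V = \left(\!\begin{smallmatrix}z\\K\end{smallmatrix}\!\right)^{\sharp}_{s}(U)$ this yields $U\in K^{-z}L \Leftrightarrow h_{s}(U)\in\mathcal{N}$ directly from the definition of the $z$-quotient, giving $K^{-z}L\in \mathrm{Rec}_{s}(\mathbf{T}_{\Sigma}(X))$.

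For the finiteness of distinct $z$-quotients, the idea is that the assignment $K\mapsto h$ depends on $K$ only through the set $\mathrm{pr}_{\Phi_{t}}[K]\subseteq \mathrm{T}_{\Sigma}(X)_{t}/\Phi_{t}$. Given $K, K'\subseteq\mathrm{T}_{\Sigma}(X)_{t}$ with $\mathrm{pr}_{\Phi_{t}}[K] = \mathrm{pr}_{\Phi_{t}}[K']$, for any $U$ and any $(Q_{\alpha})\in K^{\bb{U}_{z}}$ one picks $Q'_{\alpha}\in K'\cap[Q_{\alpha}]_{\Phi_{t}}$, whence the substitution compatibility corollary following Lemma~\ref{LIGlobSubstOp} gives $\left(\!\begin{smallmatrix}z\\(Q_{\alpha})\end{smallmatrix}\!\right)(U)\Phi_{s}\left(\!\begin{smallmatrix}z\\(Q'_{\alpha})\end{smallmatrix}\!\right)(U)$; symmetrizing yields $h_{s}(U) = h'_{s}(U)$, where $h'$ is defined from $K'$, and consequently $K^{-z}L = K'^{-z}L$. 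Hence the assignment $K\mapsto K^{-z}L$ factors through the finite set $\mathrm{Sub}(\mathrm{T}_{\Sigma}(X)_{t}/\Phi_{t})$, bounding the number of distinct $z$-quotients by $2^{\mathrm{card}(\mathrm{T}_{\Sigma}(X)_{t}/\Phi_{t})}<\aleph_{0}$.
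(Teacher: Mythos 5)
Your proof is correct, but it takes a genuinely different route from the paper's. The paper stays within the congruence-refinement schema used throughout Section~3: starting from $\Phi = \Omega^{\mathbf{T}_{\Sigma}(X)}(\delta^{s,L})$ it builds a refinement $\Psi$ by adding, for each representative $W_{r,l}$ of a fixed transversal, the condition $P\in K^{-z}[W_{r,l}]_{\Phi_{r}}\leftrightarrow Q\in K^{-z}[W_{r,l}]_{\Phi_{r}}$, then verifies (by an argument delegated to the proof of Proposition~\ref{PRecSubs}) that $\Psi$ is a congruence of finite index and that $K^{-z}L$ is $\Psi_{s}$-saturated as a finite union of the saturated pieces $K^{-z}[W_{s,l}]_{\Phi_{s}}$. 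You instead exhibit a single explicit homomorphism $h = (\mathrm{pr}_{\Phi})^{\wp}\circ \left(\!\begin{smallmatrix}z\\K\end{smallmatrix}\!\right)^{\sharp}$ into the finite algebra $(\mathbf{T}_{\Sigma}(X)/\Phi)^{\wp}$ (finiteness using that $S$ is finite, as the standing assumption permits) and recognize $K^{-z}L$ as $h_{s}^{-1}[\mathcal{N}]$; the only substantive inputs are the $\Phi_{s}$-saturation of $L$ and the compatibility of substitution with congruences from the corollary to Lemma~\ref{LIGlobSubstOp} --- the same two facts the paper's $\Psi$ secretly relies on. What your approach buys is (i) a direct witness of recognizability with no need to re-verify a congruence property, and (ii) a clean proof of the ``moreover'' clause, which the paper's written proof in fact never addresses: since $h$, hence $K^{-z}L$, depends on $K$ only through $\mathrm{pr}_{\Phi_{t}}[K]$, the number of distinct $z$-quotients is bounded by $2^{\mathrm{card}(\mathrm{T}_{\Sigma}(X)_{t}/\Phi_{t})}$. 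What the paper's approach buys is uniformity with the other recognizability proofs of the section, which all follow the same transversal-based refinement template.
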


\begin{proof}
Let $\Phi$ be the congruence on $\mathbf{T}_{\Sigma}(X)$ defined as follows:
$$
\Phi=\Omega^{\mathbf{T}_{\Sigma}(X)}(\delta^{s,L}).
$$
Then $\Phi$ is of finite index because it is the syntactic congruence of a recognizable language. Moreover, for the sort $s\in S$, $\Phi_{s}$ saturates $L$.

From now on, for every $r\in S$, let $k_{r}$  and $\mathcal{W}_{\Phi_{r}}=\{W_{r,l}\mid l\in k_{r}\}$ stand for the index of $\Phi_{r}$ and a fixed transversal of $\mathrm{T}_{\Sigma}(X)_{r}/{\Phi_{r}}$ in $\mathrm{T}_{\Sigma}(X)_{r}$, respectively. Moreover, $k$ and $\mathcal{W}_{\Phi}$ denote the $S$-sorted sets $(k_{r})_{r\in S}$ and $(\mathcal{W}_{\Phi_{r}})_{r\in S}$, respectively.

Let $\Psi=(\Psi_{r})_{r\in S}$ be the binary relation on $\mathrm{T}_{\Sigma}(X)$ that is defined as follows: For every $r\in S$, $\Psi_{r}$ is the binary relation on $\mathrm{T}_{\Sigma}(X)_{r}$ consisting of all ordered pairs $(P,Q)\in\mathrm{T}_{\Sigma}(X)_{r}^{2}$ such that the following two conditions are satisfied:
\begin{enumerate}
\item $(P,Q)\in\Phi_{r}$.
\item for every $l\in k_{r}\, ( P\in K^{-z}[W_{r,l}]_{\Phi_{r}}\leftrightarrow Q\in K^{-z}[W_{r,l}]_{\Phi_{r}})$.
\end{enumerate}
By definition, for every $r\in S$, $\Psi_{r}$ is a refinement of $\Phi_{r}$ and an equivalence relation on $\mathrm{T}_{\Sigma}(X)_{r}$. Moreover, for every $r\in S$, the index of $\Psi_{r}$ on $\mathrm{T}_{\Sigma}(X)_{r}$ is bounded by $k_{r}2^{k_{r}}$. Consequently, the $S$-sorted set $\mathrm{T}_{\Sigma}(X)/{\Psi}$ is finite.

Analysis similar to that in the proof of Proposition~\ref{PRecSubs} shows that $\Psi$ is a congruence on $\mathbf{T}_{\Sigma}(X)$.

Finally, we prove that $K^{-z}L$ is $\Psi_{s}$-saturated. Note that, by definition of $\Psi$, for every $r\in S$ and every $l\in k_{r}$, the language $K^{-z}[W_{r,l}]_{\Phi_{r}}$ is $\Psi_{r}$-saturated. Moreover
$$
\textstyle
K^{-z}L=\bigcup_{W_{s,l}\in \mathcal{W}_{\Phi_{s}} \And [W_{s,l}]_{\Phi_{s}}\subseteq L}K^{-z}[W_{s,l}]_{\Phi_{s}}.
$$
Hence, $K^{-z}L$ is $\Psi_{s}$-saturated because it is a finite union of $\Psi_{s}$-saturated languages. The statement follows from Proposition~\ref{Rec is Bool}.
\end{proof}

\subsection{Tree Homomorphisms}

Tree automata and tree homomorphisms were defined for the first time by Thatcher in~\cite{Tha69}. In the just cited paper  Thatcher proved, among other things, that linear tree homomorphisms preserve recognizability. We shall now consider a class of many-sorted \emph{homomorphisms}, the tree homomorphisms---which are the generalization to the many-sorted field of the  tree homomorphism defined by G\'{e}cseg and Steinby in~\cite{GS84}, Definition~4.13., on p.~78--- which go from a \emph{free} many-sorted algebra (of a certain many-sorted signature $(S,\Sigma)$) to another many-sorted algebra (of the same many-sorted signature), itself derived from a \emph{free} many-sorted algebras (of another many-sorted signature $(T,\Xi)$). These tree homomorphisms, as we will prove, have the property of reflecting suitable recognizable languages. Moreover, we will define a proper subset of the set of the tree homomorphisms, the linear tree homomorphisms, and we will prove that they have the remarkable property of preserving recognizable languages.

\begin{definition}
Let $\varphi\colon S\mor T$ be a mapping. Then we will denote by $\Delta_{\varphi}$ the functor from $\mathbf{Set}^{T}$ to $\mathbf{Set}^{S}$ that sends a $T$-sorted set $A$ the $S$-sorted set $A_{\varphi} = (A_{\varphi(s)})_{s\in S}$, i.e., $A\circ \varphi$, and a $T$-sorted mapping $f\colon A\mor B$ to the $S$-sorted mapping $f_{\varphi} = (f_{\varphi(s)})_{s\in  S}\colon A_{\varphi}\mor B_{\varphi}$.
\end{definition}

\begin{remark}
The functor $\Delta_{\varphi}$ has a left and a right adjoint.
\end{remark}

Let $\varphi\colon S\mor T$ be a mapping, $\varphi^{\star}$ the canonical homomorphism from $\mathbf{S}^{\star}$, the free monoid on $S$, to $\mathbf{T}^{\star}$, the free monoid on $T$, and $w\in S^{\star}$. Then, for a standard $T$-infinite countable $T$-sorted set of variables $V^{T} = (\{v^{t}_{n}\mid n\in\mathbb{N}\})_{t\in T}$, which is assumed to be disjoint from all other alphabets, we will denote by $V^{T}_{\downarrow\varphi^{\star}(w)} = (V^{T}_{(\downarrow\varphi^{\star}(w))_{t}})_{t\in T}$ the $T$-sorted set, where, for every $t\in T$, $(\downarrow\varphi^{\star}(w))_{t} = (\varphi^{\star}(w))^{-1}[\{t\}]$ and
$V^{T}_{(\downarrow\varphi^{\star}(w))_{t}}$ is the subset of $V^{T}_{t}$ defined as follows:
$$
V^{T}_{(\downarrow\varphi^{\star}(w))_{t}} = \{v^{t}_{i}\mid i\in (\downarrow\!\varphi^{\star}(w))_{t}\} = \{v^{t}_{i}\mid i\in \bb{\varphi^{\star}(w)} \!\And\! \varphi(w_{i}) = t\}.
$$
Since $V^{T}_{\downarrow\varphi^{\star}(w)}$ is isomorphic to $\downarrow\!\varphi^{\star}(w)$, we abbreviate $V^{T}_{\downarrow\varphi^{\star}(w)}$ to $\downarrow\!\varphi^{\star}(w)$. Let us note that $\mathrm{card}(V^{T}_{\downarrow\varphi^{\star}(w)})$, the total number of variables, is $\bb{w} = \bb{\varphi^{\star}(w)}$. Moreover, for every $i\in \bb{w}$, the number of variables of type $\varphi(w_{i})$ is $\bb{\varphi^{\star}(w)}_{\varphi(w_{i})}$ (while, for $t\in T-\mathrm{Im}(\varphi^{\star}(w))$, $V^{T}_{(\downarrow\varphi^{\star}(w))_{t}} = \varnothing$).

\begin{remark}
For $V^{T}_{\downarrow\varphi^{\star}(w)}$, if we disregard the classification into types, then we have the following variables:
$$
v^{\varphi(w_{0})}_{0},\ldots,v^{\varphi(w_{i})}_{i},\ldots,v^{\varphi(w_{\bb{w}-1})}_{\bb{w}-1}.
$$

On the other hand, instead of $V^{T}_{\downarrow\varphi^{\star}(w)}$ we can, equivalently, use the $T$-sorted set $(\downarrow\! v^{t}_{\bb{\varphi^{\star}(w)}_{t}})_{t\in T}$, where, for every $t\in T$, $\downarrow\! v^{t}_{\bb{\varphi^{\star}(w)}_{t}}$ is the set of the first $\bb{\varphi^{\star}(w)}_{t}$ variables in $V^{T}_{t}$. Therefore, $\downarrow\! v^{t}_{\bb{\varphi^{\star}(w)}_{t}} = \varnothing$, if $t\notin \mathrm{Im}(\varphi^{\star}(w))$; while $\downarrow\! v^{t}_{\bb{\varphi^{\star}(w)}_{t}} = \{v^{t}_{j}\mid j\in \bb{\varphi^{\star}(w)}_{t}\}$, if  $t\in \mathrm{Im}(\varphi^{\star}(w))$.

Thus, preserving the classification into types of the variables, we have:
$$
\begin{matrix}
v^{\varphi(w_{0})}_{0} &\dots & v^{\varphi(w_{0})}_{\bb{\varphi^{\star}(w)}_{\varphi(w_{0})}-1}\\
\vdots &\ddots & \vdots \\
v^{\varphi(w_{\bb{w}-1})}_{0} &\dots & v^{\varphi(w_{\bb{w}-1})}_{\bb{\varphi^{\star}(w)}_{\varphi(w_{\bb{w}-1})}-1}
\end{matrix}
$$
\end{remark}


\begin{definition}
A \emph{many-sorted signature} is an ordered pair $\mathbf{\Sigma} = (S,\Sigma)$ where $S$ is a set (of sorts) and $\Sigma$ an $S$-sorted signature.
\end{definition}

We next define the notion of hyperderivor from a pair $(\mathbf{\Sigma},X)$, where $\mathbf{\Sigma}$ is a many-sorted signature and $X$ an $S$-sorted set, to another pair $(\mathbf{\Xi},Y)$, where $\mathbf{\Xi} = (T,\Xi)$ is a many-sorted signature and $Y$ a $T$-sorted set.

\begin{definition}
Let $\mathbf{\Sigma}$ and $\mathbf{\Xi}$ be many-sorted signatures, $X$ an $S$-sorted set, and $Y$ a $T$-sorted set. A \emph{hyperderivor from} $(\mathbf{\Sigma},X)$ \emph{to} $(\mathbf{\Xi},Y)$ is an ordered pair $((\varphi,c),f)$, denoted by $(\mathbf{c},f)$, where $\varphi$ is a mapping from $S$ to $T$,
$c = (c_{w,s})_{(w,s)\in S^{\star}\times S}$ an $S^{\star}\times S$-sorted mapping from $\Sigma$ to $(\mathrm{T}_{\Xi}(Y\cup \downarrow\!\varphi^{\star}(w))_{\varphi(s)})_{(w,s)\in S^{\star}\times S}$, and $f$ an $S$-sorted mapping from $X$ to $\mathrm{T}_{\Xi}(Y)_{\varphi}$. We will say that a hyperderivor $(\mathbf{c},f)$ from $(\mathbf{\Sigma},X)$ \emph{to} $(\mathbf{\Xi},Y)$ is \emph{linear} if, for every $(w,s)\in S^{\star}\times S$, every $\sigma\in\Sigma_{w,s}$, and every $i\in \bb{w}$, no variable $v^{\varphi(w_{i})}_{i}$ appears more than once in $c_{w,s}(\sigma)$, i.e., $\bb{c_{w,s}(\sigma)}_{v^{\varphi(w_{i})}_{i}}\leq 1$.
%
\end{definition}

\begin{remark}
The reason for the terminology \emph{hyperderivor} lies in the analogy with the notion of derivor defined by Goguen, Thatcher, and Wagner in~\cite{gtw85}, on p.~137.
\end{remark}

We show now that, for every hyperderivor $(\mathbf{c},f)$ from $(\mathbf{\Sigma},X)$ to $(\mathbf{\Xi},Y)$, the $S$-sorted set $\mathrm{T}_{\Xi}(Y)_{\varphi}$ is equipped, in a natural way, with a structure of $\Sigma$-algebra. But for this we need to show that, given a mapping $\varphi$ from a set of sorts $S$ to another $T$ and a $T$-sorted set $Y$, it happens that, for every $w\in S^{\star}$ and every $(P_{i})_{i\in\bb{w}}\in \mathrm{T}_{\Xi}(Y)_{\varphi^{\star}(w)}$, there exists a canonical homomorphism $\mathrm{S}^{w}_{(P_{i})_{i\in\bb{w}}}$ from $\mathbf{T}_{\Xi}(Y\cup \downarrow\!\varphi^{\star}(w))$ to $\mathbf{T}_{\Xi}(Y)$. In what follows we will assume that, for every $w\in S^{\star}$, $Y\cap\downarrow\!\varphi^{\star}(w) = \varnothing^{T}$.

\begin{remark}
Let us note that the just stated assumption is not, in anyway, a loss in generality. Actually, given a $T$-sorted set $A$ and an $I$-indexed family $(B^{i})_{i\in I}$ of $T$-sorted sets there exists a $T$-sorted set $C$ such that $A\cong C$ and, for every $i\in I$, $C\cap B^{i} = \varnothing^{T}$. In fact, it suffices, by the Axiom of Regularity, to take as $C$ the $T$-sorted set defined, for every $t\in T$, as $C_{t} = A_{t}\times \{\{(i,B^{i}_{t})\mid i\in I\}\}$.
\end{remark}

Let $w$ be an element of $S^{\star}$. Then the sets $\mathrm{T}_{\Xi}(Y)_{\varphi^{\star}(w)}$ and $\mathrm{Hom}(\downarrow\!\varphi^{\star}(w),\mathrm{T}_{\Xi}(Y))$ are naturally isomorphic. On the basis of this isomorphism, we let $\left(\!\begin{smallmatrix}v^{\varphi(w_{i})}_{i}\\ P_{i}\end{smallmatrix}\!\right)_{i\in \bb{w}}$ stand for the $T$-sorted mapping from $\downarrow\!\varphi^{\star}(w)$ to $\mathrm{T}_{\Xi}(Y)$ canonically associated to $(P_{i})_{i\in\bb{w}}\in \mathrm{T}_{\Xi}(Y)_{\varphi^{\star}(w)}$. And then we let $\left(\left(\!\begin{smallmatrix}v^{\varphi(w_{i})}_{i}\\ P_{i}\end{smallmatrix}\!\right)_{i\in \bb{w}}\right)^{\sharp}$ stand for the unique homomorphism from $\mathbf{T}_{\Xi}(\downarrow\!\varphi^{\star}(w))$ to $\mathbf{T}_{\Xi}(Y)$ such that
$$
\left(\left(\!\begin{smallmatrix}v^{\varphi(w_{i})}_{i}\\ P_{i}\end{smallmatrix}\!\right)_{i\in \bb{w}}\right)^{\sharp}\circ \eta_{\downarrow\varphi^{\star}(w)} = \left(\!\begin{smallmatrix}v^{\varphi(w_{i})}_{i}\\ P_{i}\end{smallmatrix}\!\right)_{i\in \bb{w}}.
$$
On the other hand, since $\mathbf{T}_{\Xi}$ has a right adjoint, the $\Xi$-algebras $\mathbf{T}_{\Xi}(Y\cup\downarrow\!\varphi^{\star}(w))$ and $\mathbf{T}_{\Xi}(Y)\coprod\mathbf{T}_{\Xi}(\downarrow\!\varphi^{\star}(w))$ are naturally isomorphic. Then, finally, we let $\mathrm{S}^{w}_{(P_{i})_{i\in\bb{w}}}$ stand for $\left[\mathrm{id}_{\mathbf{T}_{\Xi}(Y)},\left(\left(\!\begin{smallmatrix}v^{\varphi(w_{i})}_{i}\\ P_{i}\end{smallmatrix}\!\right)_{i\in \bb{w}}\right)^{\sharp}\right]$, the unique homomorphism from $\mathbf{T}_{\Xi}(Y\cup
\downarrow\!\varphi^{\star}(w))$ to $\mathbf{T}_{\Xi}(Y)$ obtained, from $\mathrm{id}_{\mathbf{T}_{\Xi}(Y)}$ and $\left(\left(\!\begin{smallmatrix}v^{\varphi(w_{i})}_{i}\\ P_{i}\end{smallmatrix}\!\right)_{i\in \bb{w}}\right)^{\sharp}$, by the universal property of the coproduct.

\begin{proposition}
Let $(\mathbf{c},f)$ be a hyperderivor from $(\mathbf{\Sigma},X)$ to $(\mathbf{\Xi},Y)$. Then the $S$-sorted set $\mathrm{T}_{\Xi}(Y)_{\varphi}$ is equipped, in a natural way, with a structure of $\Sigma$-algebra.
\end{proposition}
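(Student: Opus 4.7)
The plan is to equip $\mathrm{T}_{\Xi}(Y)_{\varphi}$ with a structure of $\Sigma$-algebra by exhibiting, for each $(w,s)\in S^{\star}\times S$ and each $\sigma\in\Sigma_{w,s}$, an operation $F_{\sigma}^{\mathbf{c}}$ from $(\mathrm{T}_{\Xi}(Y)_{\varphi})_{w}$ to $(\mathrm{T}_{\Xi}(Y)_{\varphi})_{s}$ that uses $c_{w,s}(\sigma)$ as a ``term pattern'' and the substitution homomorphism $\mathrm{S}^{w}_{(P_{i})_{i\in\bb{w}}}$ introduced right before the statement. Nothing delicate is needed beyond organizing the canonical identifications between the various products and hom-sets, since all the machinery is already available.

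First I would identify the source and target. By definition, $(\mathrm{T}_{\Xi}(Y)_{\varphi})_{s} = \mathrm{T}_{\Xi}(Y)_{\varphi(s)}$; and, by the natural isomorphism between $\mathrm{T}_{\Xi}(Y)_{\varphi^{\star}(w)}$ and $\mathrm{Hom}(\downarrow\!\varphi^{\star}(w),\mathrm{T}_{\Xi}(Y))$ already used in the paragraph preceding the statement, we can identify
$$
(\mathrm{T}_{\Xi}(Y)_{\varphi})_{w} \;=\; \prod_{i\in\bb{w}}\mathrm{T}_{\Xi}(Y)_{\varphi(w_{i})} \;\cong\; \mathrm{T}_{\Xi}(Y)_{\varphi^{\star}(w)}.
$$
So a typical element of $(\mathrm{T}_{\Xi}(Y)_{\varphi})_{w}$ is a family $(P_{i})_{i\in\bb{w}}$ with $P_{i}\in \mathrm{T}_{\Xi}(Y)_{\varphi(w_{i})}$, and to define $F_{\sigma}^{\mathbf{c}}((P_{i})_{i\in\bb{w}})$ it suffices to produce a term in $\mathrm{T}_{\Xi}(Y)_{\varphi(s)}$ out of $c_{w,s}(\sigma)\in\mathrm{T}_{\Xi}(Y\cup\downarrow\!\varphi^{\star}(w))_{\varphi(s)}$ and the family $(P_{i})_{i\in\bb{w}}$.

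Next, I would invoke the homomorphism
$$
\mathrm{S}^{w}_{(P_{i})_{i\in\bb{w}}}\;=\;\Bigl[\mathrm{id}_{\mathbf{T}_{\Xi}(Y)},\,\bigl(\bigl(\!\begin{smallmatrix}v^{\varphi(w_{i})}_{i}\\ P_{i}\end{smallmatrix}\!\bigr)_{i\in \bb{w}}\bigr)^{\sharp}\Bigr]\colon \mathbf{T}_{\Xi}(Y\cup\downarrow\!\varphi^{\star}(w))\longrightarrow \mathbf{T}_{\Xi}(Y),
$$
whose existence is guaranteed by the universal property of the coproduct $\mathbf{T}_{\Xi}(Y)\amalg\mathbf{T}_{\Xi}(\downarrow\!\varphi^{\star}(w))\cong \mathbf{T}_{\Xi}(Y\cup\downarrow\!\varphi^{\star}(w))$. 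I then set
$$
F_{\sigma}^{\mathbf{c}}\bigl((P_{i})_{i\in\bb{w}}\bigr)\;=\;\bigl(\mathrm{S}^{w}_{(P_{i})_{i\in\bb{w}}}\bigr)_{\varphi(s)}\!\bigl(c_{w,s}(\sigma)\bigr).
$$
Since $c_{w,s}(\sigma)\in\mathrm{T}_{\Xi}(Y\cup\downarrow\!\varphi^{\star}(w))_{\varphi(s)}$, the output lies in $\mathrm{T}_{\Xi}(Y)_{\varphi(s)}=(\mathrm{T}_{\Xi}(Y)_{\varphi})_{s}$, as required. In the degenerate case $w=\lambda$, the set $\downarrow\!\varphi^{\star}(\lambda)$ is empty, $\mathrm{S}^{\lambda}_{()}$ reduces to $\mathrm{id}_{\mathbf{T}_{\Xi}(Y)}$, and $F_{\sigma}^{\mathbf{c}}$ picks out the constant $c_{\lambda,s}(\sigma)\in\mathrm{T}_{\Xi}(Y)_{\varphi(s)}$. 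This uniformly covers nullary symbols.

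Finally, I would package this family $F^{\mathbf{c}}=(F_{\sigma}^{\mathbf{c}})_{(w,s)\in S^{\star}\times S,\,\sigma\in\Sigma_{w,s}}$ as a structure of $\Sigma$-algebra on $\mathrm{T}_{\Xi}(Y)_{\varphi}$, denoted $\mathbf{T}_{\Xi}(Y)_{\varphi}^{\mathbf{c}}$. Naturality is essentially built in: the construction $\sigma\mapsto F_{\sigma}^{\mathbf{c}}$ is determined by $c_{w,s}$ and by the universal property of free algebras, so no compatibility condition needs to be verified to obtain a $\Sigma$-algebra (only to assert anything functorial in the hyperderivor, which is not part of the statement). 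The only potential obstacle is purely bureaucratic, namely keeping the identifications $(\mathrm{T}_{\Xi}(Y)_{\varphi})_{w}\cong\mathrm{T}_{\Xi}(Y)_{\varphi^{\star}(w)}\cong \mathrm{Hom}(\downarrow\!\varphi^{\star}(w),\mathrm{T}_{\Xi}(Y))$ transparent so that the substitution homomorphism $\mathrm{S}^{w}_{(P_{i})_{i\in\bb{w}}}$ can be applied in a well-typed way, and handling separately the empty-arity case so that no implicit nonemptiness of $\downarrow\!\varphi^{\star}(w)$ is used.
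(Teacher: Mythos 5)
Your proposal is correct and coincides with the paper's own proof: the paper defines $\sigma^{\mathbf{c}(\mathbf{T}_{\Xi}(Y))}((P_{i})_{i\in\bb{w}}) = \mathrm{S}^{w}_{(P_{i})_{i\in\bb{w}}}(c_{w,s}(\sigma))$ exactly as you do, with the same identification of $(\mathrm{T}_{\Xi}(Y)_{\varphi})_{w}$ with $\mathrm{T}_{\Xi}(Y)_{\varphi^{\star}(w)}$. Your observation that well-definedness follows from the codomain of $\mathrm{S}^{w}_{(P_{i})_{i\in\bb{w}}}$ being $\mathbf{T}_{\Xi}(Y)$ is a slightly cleaner justification than the paper's explicit check that no variable of $V^{T}$ survives the substitution, but the argument is the same.
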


\begin{proof}
Let $\mathbf{c}(\mathbf{T}_{\Xi}(Y))$ be the $\Sigma$-algebra defined as follows: The underlying $S$-sorted set of $\mathbf{c}(\mathbf{T}_{\Xi}(Y))$ is $\mathrm{T}_{\Xi}(Y)_{\varphi}$ while, for $(w,s)\in S^{\star}\times S$ and $\sigma\in \Sigma_{w,s}$, the operation $\sigma^{\mathbf{c}(\mathbf{T}_{\Xi}(Y))}$ from $\mathbf{T}_{\Xi}(Y)_{\varphi^{\star}(w)}$ to $\mathbf{T}_{\Xi}(Y)_{\varphi(s)}$ associated to $\sigma$ is defined as:
$$
\sigma^{\mathbf{c}(\mathbf{T}_{\Xi}(Y))}
\nfunction
{\mathbf{T}_{\Xi}(Y)_{\varphi^{\star}(w)}}
{\mathbf{T}_{\Xi}(Y)_{\varphi(s)}}
{(P_{i})_{i\in\bb{w}}}
{\mathrm{S}^{w}_{(P_{i})_{i\in\bb{w}}}(c_{w,s}(\sigma))}
$$
Thus $\sigma^{\mathbf{c}(\mathbf{T}_{\Xi}(Y))}((P_{i})_{i\in\bb{w}})$ is the term obtained by substituting in $c_{w,s}(\sigma)$, for every $i\in \bb{w}$, $P_{i}$ for $v^{\varphi(w_{i})}_{i}$.
Let us note that since $(\bigcup_{i\in\bb{w}}\mathrm{Var}(P_{i}))\cap V^{T} = \varnothing^{T}$,  $\mathrm{Var}(\sigma^{\mathbf{c}(\mathbf{T}_{\Xi}(Y))}((P_{i})_{i\in\bb{w}}))\cap V^{T} = \varnothing^{T}$. Therefore $\sigma^{\mathbf{c}(\mathbf{T}_{\Xi}(Y))}((P_{i})_{i\in\bb{w}})\in \mathrm{T}_{\Xi}(Y)_{\varphi(s)}$. Consequently, the operation $\sigma^{\mathbf{c}(\mathbf{T}_{\Xi}(Y))}$ is well-defined.

\end{proof}

\begin{definition}
Let $(\mathbf{c},f)$ be a hyperderivor from $(\mathbf{\Sigma},X)$ to $(\mathbf{\Xi},Y)$. Then the unique homomorphism $f^{\sharp}$ from $\mathbf{T}_{\Sigma}(X)$ to $\mathbf{c}(\mathbf{T}_{\Xi}(Y))$ such that $f^{\sharp}\circ\eta_{X} = f$ will be called the \emph{tree homomorphism} determined by the hyperderivor $(\mathbf{c},f)$. Moreover, $f^{\sharp}$ will be called \emph{linear} if $(\mathbf{c},f)$ is linear.
\end{definition}


The just defined tree homomorphisms are a generalization of those proposed by G\'{e}cseg and Steinby, in~\cite{GS84}, Definition~4.13., on p.~78, for single-sorted algebras, which, in its turn, generalize those of Thatcher in~\cite{Tha69}.

G\'{e}cseg and Steinby in~\cite{GS84}, on p.~70, wrote: ``Tree homomorphisms are not really homomorphisms in the sense of algebra.'' Literally speaking the above assertion is true due, simply, to the fact that the signatures of the domain and codomain of a tree homomorphism are, in general, not identical. However, as we have seen $\mathrm{T}_{\Xi}(Y)_{\varphi}$ is equipped with a structure of $\Sigma$-algebra and $f^{\sharp}$ is a homomorphism of $\Sigma$-algebras from $\mathbf{T}_{\Sigma}(X)$ to $\mathbf{c}(\mathbf{T}_{\Xi}(Y))$.

In~\cite{GS84}, Theorem~4.18., on p.~82, GÈcseg and Steinby proved, in the single-sorted case, that the inverse image of a recognizable language under a tree homomorphism is a recognizable language. We next extend GÈcseg and Steinby's result to the many-sorted case.

\begin{proposition}\label{IndAlgStrucImHom}
Let $(\mathbf{c},f)$ be a hyperderivor from $(\mathbf{\Sigma},X)$ to $(\mathbf{\Xi},Y)$, $\mathbf{A}$ a $\Xi$-algebra, and $g$ a surjective homomorphism from $\mathbf{T}_{\Xi}(Y)$ to $\mathbf{A}$. Then there exists a structure of $\Sigma$-algebra  $F^{\mathbf{c}(\mathbf{A})}$ on $A_{\varphi}$ such that $g_{\varphi}$ is a homomorphism from $\mathbf{c}(\mathbf{T}_{\Lambda}(Y))$ to $\mathbf{c}(\mathbf{A}) = (A_{\varphi},F^{\mathbf{c}(\mathbf{A})})$.
\end{proposition}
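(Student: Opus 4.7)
The plan is to transport the $\Sigma$-algebra structure of $\mathbf{c}(\mathbf{T}_{\Xi}(Y))$ along $g_{\varphi}$. For each $(w,s)\in S^{\star}\times S$ and $\sigma\in\Sigma_{w,s}$, and for each family $(a_{i})_{i\in\bb{w}}\in A_{\varphi^{\star}(w)}$, I would set
$$
\sigma^{\mathbf{c}(\mathbf{A})}((a_{i})_{i\in\bb{w}}) := g_{\varphi(s)}\left(\mathrm{S}^{w}_{(P_{i})_{i\in\bb{w}}}(c_{w,s}(\sigma))\right),
$$
where $(P_{i})_{i\in\bb{w}}$ is any family in $\mathrm{T}_{\Xi}(Y)_{\varphi^{\star}(w)}$ with $g_{\varphi(w_{i})}(P_{i})=a_{i}$ for every $i\in\bb{w}$. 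Such families exist because the surjectivity of $g$ yields surjectivity of each component $g_{\varphi(w_{i})}$, and one picks a preimage coordinate by coordinate.

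The main (and essentially only) obstacle is to verify that this definition is independent of the chosen preimage family. Suppose $(P_{i})_{i\in\bb{w}}$ and $(Q_{i})_{i\in\bb{w}}$ both satisfy $g_{\varphi(w_{i})}(P_{i})=a_{i}=g_{\varphi(w_{i})}(Q_{i})$. Recall that, by its construction, $\mathrm{S}^{w}_{(P_{i})_{i\in\bb{w}}}=\bigl[\mathrm{id}_{\mathbf{T}_{\Xi}(Y)},\left(\left(\!\begin{smallmatrix}v^{\varphi(w_{i})}_{i}\\ P_{i}\end{smallmatrix}\!\right)_{i\in \bb{w}}\right)^{\sharp}\bigr]$, viewed as a homomorphism from $\mathbf{T}_{\Xi}(Y\cup\downarrow\!\varphi^{\star}(w))\iso\mathbf{T}_{\Xi}(Y)\amalg\mathbf{T}_{\Xi}(\downarrow\!\varphi^{\star}(w))$ into $\mathbf{T}_{\Xi}(Y)$. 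Composing with $g$ gives a homomorphism from $\mathbf{T}_{\Xi}(Y\cup\downarrow\!\varphi^{\star}(w))$ to $\mathbf{A}$, which by the universal property of this free $\Xi$-algebra is determined by its restriction to the generators $Y\cup\downarrow\!\varphi^{\star}(w)$. On $Y$ both $g\circ\mathrm{S}^{w}_{(P_{i})}$ and $g\circ\mathrm{S}^{w}_{(Q_{i})}$ agree with $g{\upharpoonright}_{Y}$, while on each $v^{\varphi(w_{i})}_{i}$ they take the common value $a_{i}$. Hence the two homomorphisms coincide, and in particular they agree on the term $c_{w,s}(\sigma)$, which is exactly the required well\nobreakdash-definedness. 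Since the argument uses only that $g$ is a homomorphism, surjectivity enters only to guarantee that preimages exist for every $(a_{i})_{i\in\bb{w}}$.

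Once well-defined, the homomorphism condition is automatic. Indeed, for any $(P_{i})_{i\in\bb{w}}\in\mathrm{T}_{\Xi}(Y)_{\varphi^{\star}(w)}$ one may take $(P_{i})_{i\in\bb{w}}$ itself as the preimage family of $(g_{\varphi(w_{i})}(P_{i}))_{i\in\bb{w}}$, obtaining
$$
\sigma^{\mathbf{c}(\mathbf{A})}\bigl((g_{\varphi(w_{i})}(P_{i}))_{i\in\bb{w}}\bigr) = g_{\varphi(s)}\bigl(\mathrm{S}^{w}_{(P_{i})_{i\in\bb{w}}}(c_{w,s}(\sigma))\bigr) = g_{\varphi(s)}\bigl(\sigma^{\mathbf{c}(\mathbf{T}_{\Xi}(Y))}((P_{i})_{i\in\bb{w}})\bigr),
$$
which is precisely the naturality square expressing that $g_{\varphi}$ is a $\Sigma$-homomorphism from $\mathbf{c}(\mathbf{T}_{\Xi}(Y))$ to $\mathbf{c}(\mathbf{A})$. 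Taken together, the well-definedness verification and this one-line check constitute the whole proof, with the former being the sole nontrivial point.
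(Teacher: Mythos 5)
Your construction of $F^{\mathbf{c}(\mathbf{A})}$ is exactly the paper's: transport the structure of $\mathbf{c}(\mathbf{T}_{\Xi}(Y))$ along $g_{\varphi}$ by choosing preimages, check independence of the choice, and observe that the homomorphism condition then holds by definition. The only point where you diverge is the justification of well-definedness. The paper invokes Lemma~\ref{LIGlobSubstOp} (compatibility of homomorphisms with the global substitution operator, itself proved by algebraic induction on the term), specialized to the substitution of the $P_{i}$ and $Q_{i}$ for the variables $v^{\varphi(w_{i})}_{i}$ in $c_{w,s}(\sigma)$. You instead note that $g\circ\mathrm{S}^{w}_{(P_{i})_{i\in\bb{w}}}$ and $g\circ\mathrm{S}^{w}_{(Q_{i})_{i\in\bb{w}}}$ are homomorphisms out of the free algebra $\mathbf{T}_{\Xi}(Y\cup\downarrow\!\varphi^{\star}(w))$ that agree on the generators ($g\circ\eta_{Y}$ on $Y$, the common value $a_{i}$ on $v^{\varphi(w_{i})}_{i}$), hence coincide everywhere and in particular at $c_{w,s}(\sigma)$. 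This is correct and arguably cleaner: it replaces an explicit induction with a single appeal to the universal property, and it makes transparent that surjectivity of $g$ is needed only for existence of preimages, not for well-definedness. The trade-off is that the paper's lemma is stated for arbitrary terms $W$ and is reused elsewhere (e.g.\ in the proof of Proposition~\ref{PRecLH}), so the inductive formulation earns its keep globally even though your freeness argument suffices here.
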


\begin{proof}
Let $(w,s)$ be an element of $S^{\star}\times S$ and $\sigma\in\Sigma_{w,s}$. Then we denote by $F_{\sigma}^{\mathbf{c}(\mathbf{A})}$ the mapping from $A_{\varphi^{\star}(w)}$ to $A_{\varphi(s)}$ defined as follows:
$$
F_{\sigma}^{\mathbf{c}(\mathbf{A})}
\nfunction
{A_{\varphi^{\star}(w)}}
{A_{\varphi(s)}}
{(a_{i})_{i\in\bb{w}}}
{g_{\varphi(s)}(\sigma^{\mathbf{c}(\mathbf{T}_{\Xi}(Y))}((P_{i})_{i\in\bb{w}}))}
$$
where, for every $i\in\bb{w}$, $P_{i}\in \mathrm{T}_{\Lambda}(Y)_{\varphi(w_{i})}$ such that  $g_{\varphi(w_{i})}(P_{i})=a_{i}$.

The operation $F_{\sigma}^{\mathbf{c}(\mathbf{A})}$ is well-defined. In fact, let $(a_{i})_{i\in\bb{w}}$ be an element of $A_{\varphi^{\star}(w)}$ and $(P_{i})_{i\in\bb{w}}$, $(Q_{i})_{i\in\bb{w}}$ elements of $\mathrm{T}_{\Lambda}(Y)_{\varphi^{\star}(w)}$ such that, for every $i\in\bb{w}$,
$$
g_{\varphi(w_{i})}(P_{i}) = a_{i} = g_{\varphi(w_{i})}(Q_{i}).
$$
But we have that:
\begin{align*}
g_{\varphi(s)}(\sigma^{\mathbf{T}_{\Lambda}(Y)^{\Sigma}_{\varphi}}((P_{i}^{\varphi(w_{i})})_{i\in\bb{w}}))
  &=
g_{\varphi(s)}\left(
\left(\!\begin{smallmatrix}v^{\varphi(w_{i})}_{i}\\P^{\varphi(w_{i})}_{i}\end{smallmatrix}\!\right)_{i\in \bb{w}}(c_{w,s}(\sigma))\right) \quad(\dagger_{1}) \\
  &=
g_{\varphi(s)}\left(
\left(\!\begin{smallmatrix}v^{\varphi(w_{i})}_{i}\\Q^{\varphi(w_{i})}_{i}\end{smallmatrix}\!\right)_{i\in \bb{w}}(c_{w,s}(\sigma))\right) \quad(\dagger_{2}) \\
  &=
g_{\varphi(s)}(\sigma^{\mathbf{T}_{\Lambda}(Y)^{\Sigma}_{\varphi}}((Q_{i}^{\varphi(w_{i})})_{i\in\bb{w}}))  \,\,\quad\quad(\dagger_{1})
\end{align*}
where, to shorten notation, $(\dagger_{1})$ and $(\dagger_{2})$ stand for (by definition of $\mathbf{c}(\mathbf{T}_{\Xi}(Y))$) and (by Lemma~\ref{LIGlobSubstOp}), respectively. Therefore
$$
g_{\varphi(s)}(\sigma^{\mathbf{c}(\mathbf{T}_{\Xi}(Y))}((P_{i})_{i\in\bb{w}})) = g_{\varphi(s)}(\sigma^{\mathbf{c}(\mathbf{T}_{\Xi}(Y))}((Q_{i})_{i\in\bb{w}})).
$$

It is obvious that $g_{\varphi}$ is a homomorphism of $\Sigma$-algebras.

By construction $g^{\sharp}_{\varphi}$ is a homomorphism from $\mathbf{T}_{\Lambda}(Y)^{\Sigma}_{\varphi}$ to $g^{\sharp}[\mathbf{T}_{\Lambda}(Y)]^{\Sigma}_{\varphi}$. Indeed, for $\sigma\in \Sigma_{w,s}$ and  a family $(P^{\varphi(w_{i})}_{i})_{i\in\bb{w}}\in\mathrm{T}_{\Lambda}(Y)_{\varphi^{\star}(w)}$ we have that
$$
g^{\sharp}_{\varphi(s)}
(
\sigma^{\mathbf{T}_{\Lambda}(Y)^{\Sigma}_{\varphi}}
(
(P^{\varphi(w_{i})}_{i})_{i\in\bb{w}}
)
)=
\sigma^{g^{\sharp}[\mathbf{T}_{\Lambda}(Y)]^{\Sigma}_{\varphi}}
(
(
g^{\sharp}
(
P^{\varphi(w_{i})}_{i}
)
)_{i\in\bb{w}}
).
$$
\end{proof}

\begin{proposition}\label{PRecH}
Let $(\mathbf{c},f)$ be a hyperderivor from $(\mathbf{\Sigma},X)$ to $(\mathbf{\Xi},Y)$, $f^{\sharp}$ the tree homomorphism from $\mathbf{T}_{\Sigma}(X)$ to $\mathbf{c}(\mathbf{T}_{\Xi}(Y))$ determined by $(\mathbf{c},f)$, $s\in S$, and $L\in\mathrm{Rec}_{\varphi(s)}(\mathbf{T}_{\Xi}(Y))$. Then $\left(f^{\sharp}_{\varphi(s)}\right)^{-1}[L]\in\mathrm{Rec}_{s}(\mathbf{T}_{\Sigma}(X))$.
\end{proposition}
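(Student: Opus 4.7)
The plan is to transfer recognizability from $\mathbf{T}_{\Xi}(Y)$ to $\mathbf{T}_{\Sigma}(X)$ by composing the tree homomorphism $f^{\sharp}$ with a $\Sigma$-homomorphism arising from the ``derived algebra'' construction of Proposition~\ref{IndAlgStrucImHom}. This parallels the strategy used in part (4) of Proposition~\ref{Rec is Bool} (stability of recognizability under inverse images of homomorphisms), lifted from the $\Xi$-setting to the $\Sigma$-setting by the hyperderivor data.

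First I would invoke the hypothesis $L\in\mathrm{Rec}_{\varphi(s)}(\mathbf{T}_{\Xi}(Y))$ to fix a finite $\Xi$-algebra $\mathbf{A}$, a $\Xi$-homomorphism $g\colon\mathbf{T}_{\Xi}(Y)\mor\mathbf{A}$, and a subset $M\subseteq A_{\varphi(s)}$ with $g_{\varphi(s)}^{-1}[M]=L$. Replacing $\mathbf{A}$ by $\mathbf{Im}(g)$---which is finite as a subalgebra of a finite $\Xi$-algebra---one may assume that $g$ is surjective. Proposition~\ref{IndAlgStrucImHom} then equips the $S$-sorted set $A_{\varphi}$ with a structure of $\Sigma$-algebra $\mathbf{c}(\mathbf{A})$ for which the $S$-sorted mapping $g_{\varphi}=(g_{\varphi(s')})_{s'\in S}$ becomes a $\Sigma$-homomorphism from $\mathbf{c}(\mathbf{T}_{\Xi}(Y))$ to $\mathbf{c}(\mathbf{A})$.

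Next, I would form the composite $\Sigma$-homomorphism $h=g_{\varphi}\circ f^{\sharp}\colon\mathbf{T}_{\Sigma}(X)\mor\mathbf{c}(\mathbf{A})$, which is a $\Sigma$-homomorphism because both factors are. A direct calculation gives
\[
h_{s}^{-1}[M]=(f^{\sharp}_{\varphi(s)})^{-1}\bigl[g_{\varphi(s)}^{-1}[M]\bigr]=(f^{\sharp}_{\varphi(s)})^{-1}[L],
\]
so that $h$ together with $M\subseteq(\mathbf{c}(\mathbf{A}))_{s}=A_{\varphi(s)}$ would witness the $s$-recognizability of $(f^{\sharp}_{\varphi(s)})^{-1}[L]$, provided the target $\Sigma$-algebra is finite. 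Since the underlying $S$-sorted set of $\mathbf{c}(\mathbf{A})$ is $A\circ\varphi$ and each $A_{\varphi(s')}$ is finite, I would then pass to the subalgebra $\mathbf{Im}(h)\subseteq\mathbf{c}(\mathbf{A})$, whose components remain finite and whose support is contained in $\mathrm{supp}_{S}(\mathbf{T}_{\Sigma}(X))$; this image is therefore a finite $\Sigma$-algebra under the standing finiteness assumption on $S$, and the inverse image $(f^{\sharp}_{\varphi(s)})^{-1}[L]=(h\!\!\upharpoonright_{\mathbf{Im}(h)})_{s}^{-1}[M\cap\mathrm{Im}(h)_{s}]$ lies in $\mathrm{Rec}_{s}(\mathbf{T}_{\Sigma}(X))$.

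The only step I expect to require care is verifying that $g_{\varphi}\circ f^{\sharp}$ genuinely behaves as a $\Sigma$-homomorphism with respect to the $\Sigma$-structure $\mathbf{c}(\mathbf{A})$ produced in Proposition~\ref{IndAlgStrucImHom}, i.e., that the operations $\sigma^{\mathbf{c}(\mathbf{A})}$ induced on $A_{\varphi}$ interact correctly with the substitution operators $\mathrm{S}^{w}_{(P_{i})_{i\in\bb{w}}}(c_{w,s}(\sigma))$ that define $\sigma^{\mathbf{c}(\mathbf{T}_{\Xi}(Y))}$. This is a routine diagram chase reducing, via the definition of the derived operation, to the compatibility of $g$ with term substitution---precisely the content of Lemma~\ref{LIGlobSubstOp} already exploited in the proof of Proposition~\ref{IndAlgStrucImHom}---so no genuinely new ingredient is needed beyond what has been developed in the preceding sections.
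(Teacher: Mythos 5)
Your argument is correct and is precisely the one the paper intends: the paper's own proof of this proposition is a single sentence citing the definition of $s$-recognizability and Proposition~\ref{IndAlgStrucImHom}, and your write-up simply fills in the details (surjectivizing $g$ by passing to its image, forming the composite $\Sigma$-homomorphism $g_{\varphi}\circ f^{\sharp}$ into $\mathbf{c}(\mathbf{A})$, and computing the inverse image of $M$). Your additional care about the finiteness of $A_{\varphi}$ when $S$ is infinite addresses a point the paper glosses over, and under the finiteness assumption on $S$ in force throughout this section it causes no difficulty.
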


\begin{proof}
It follows from the definition of recognizability relative to a sort and from Proposition~\ref{IndAlgStrucImHom}.
\end{proof}

Our immediate goal is to prove the many-sorted version of Theorem~4.16., on p.~80, in \cite{GS84}. The proposition states that, for a sort $s\in S$, if the language $L$ is $s$-recognizable, then its direct image by the $s$-th coordinate of a linear tree homomorphism $f^{\sharp}$ is $\varphi(s)$-recognizable.

\begin{assumption}
To prove the following proposition we will assume that $S$, $T$, $\Sigma$ and $X$ are finite.
\end{assumption}

\begin{proposition}\label{PRecLH}
Let $(\mathbf{c},f)$ be a linear hyperderivor from $(\mathbf{\Sigma},X)$ to $(\mathbf{\Xi},Y)$, $f^{\sharp}$ the linear tree homomorphism from $\mathbf{T}_{\Sigma}(X)$ to $\mathbf{c}(\mathbf{T}_{\Xi}(Y))$ determined by $(\mathbf{c},f)$, $s\in S$, and $L\in \mathrm{T}_{\Sigma}(X)^{\wp}_{s}$. If $L\in\mathrm{Rec}_{s}(\mathbf{T}_{\Sigma}(X))$, then $f^{\sharp}_{s}[L]\in\mathrm{Rec}_{\varphi(s)}(\mathbf{T}_{\Xi}(Y))$.
\end{proposition}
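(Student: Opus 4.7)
Plan. Following the template used in this section I want to produce a $\Xi$-congruence $\Psi$ on $\mathbf{T}_{\Xi}(Y)$ of finite index whose $\varphi(s)$-component saturates $f^{\sharp}_{s}[L]$; by the characterization of $\varphi(s)$-recognizability this suffices. Since the source and target algebras now live over different signatures, however, the ``refinement'' step cannot be carried out on a single algebra and I will obtain $\Psi$ as $\mathrm{Ker}(g)$ for a $\Xi$-homomorphism $g\colon\mathbf{T}_{\Xi}(Y)\to\mathbf{B}$ into a suitable finite $\Xi$-algebra $\mathbf{B}$. From $L\in\mathrm{Rec}_{s}(\mathbf{T}_{\Sigma}(X))$ I extract the syntactic congruence $\Phi:=\Omega^{\mathbf{T}_{\Sigma}(X)}(\delta^{s,L})$, which is of finite index and for which $L$ is $\Phi_{s}$-saturated; write $\mathbf{A}:=\mathbf{T}_{\Sigma}(X)/\Phi$ and $h:=\mathrm{pr}_{\Phi}$, so that $L = h^{-1}_{s}[h_{s}[L]]$ and $\mathbf{A}$ is finite.

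The state that $g$ will assign to a term $Q\in\mathrm{T}_{\Xi}(Y)_{t}$ records two kinds of information: first, for each $u\in\varphi^{-1}(t)$, the set $\{h_{u}(P)\mid P\in\mathrm{T}_{\Sigma}(X)_{u},\,f^{\sharp}_{u}(P)=Q\}$ of $\Phi_{u}$-classes of $\Sigma$-preimages of $Q$ under $f^{\sharp}$; second, for each member $R$ of a fixed finite $T$-sorted set $\mathcal{T}$ of subterms of the finitely many terms $c_{w,u}(\sigma)$ (for $\sigma\in\Sigma_{w,u}$) and $f_{u}(x)$ (for $x\in X_{u}$), whether $Q=R$ or, more generally, which ``pattern-match'' of $Q$ against $R$ is possible. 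Since $S$, $T$, $\Sigma$, $X$ and $\mathbf{A}$ are all finite, the resulting carriers $B_{t}$ are finite and the $T$-sorted set $B$ is finite. The goal is then to equip $\mathbf{B}$ with a $\Xi$-algebra structure such that the state map $g$ becomes a $\Xi$-homomorphism, and to let $N:=\{b\in B_{\varphi(s)}\mid b_{s}\cap h_{s}[L]\neq\varnothing\}$; by construction a term $Q\in\mathrm{T}_{\Xi}(Y)_{\varphi(s)}$ will lie in $g^{-1}_{\varphi(s)}[N]$ exactly when some $P\in L$ satisfies $f^{\sharp}_{s}(P)=Q$, i.e.\ exactly when $Q\in f^{\sharp}_{s}[L]$.

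The crux is to show that the state of $\xi((Q_{i})_{i\in\bb{v}})$ is computable from $\xi\in\Xi_{v,t}$ and $(g_{v_{i}}(Q_{i}))_{i\in\bb{v}}$ alone. Given $u\in\varphi^{-1}(t)$ and a hypothetical $P\in\mathrm{T}_{\Sigma}(X)_{u}$ with $f^{\sharp}_{u}(P)=\xi((Q_{i}))$ I do a case analysis on the outermost shape of $P$. If $P=x\in X_{u}$, the equation $f_{u}(x)=\xi((Q_{i}))$ reduces to matching the fixed term $f_{u}(x)\in\mathcal{T}$ against $\xi((Q_{i}))$, a test determined by the $\mathcal{T}$-components of the $g_{v_{i}}(Q_{i})$. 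The case $P=\sigma\in\Sigma_{\lambda,u}$ is analogous with $c_{\lambda,u}(\sigma)$ in place of $f_{u}(x)$. When $P=\sigma((P_{j})_{j\in\bb{w}})$ with $\sigma\in\Sigma_{w,u}$ and $w\neq\lambda$, the image $f^{\sharp}_{u}(P)$ is the substitution of $f^{\sharp}_{w_{j}}(P_{j})$ for $v^{\varphi(w_{j})}_{j}$ into $c_{w,u}(\sigma)$, so either the root of $c_{w,u}(\sigma)$ is already $\xi$ (and pattern-matching against $\mathcal{T}$ again suffices to locate each $Q_{i}$ and, by the congruence property of $\Phi$, the class of $P$ is obtained as $\sigma^{\mathbf{A}}$ applied to the $\Phi$-classes of the constrained $P_{j}$'s and arbitrary elements of $A_{w_{j}}$ for the free $P_{j}$'s), or $c_{w,u}(\sigma)$ is itself a single variable $v^{\varphi(w_{j_{0}})}_{j_{0}}$ and the analysis recurses into the equation $f^{\sharp}_{w_{j_{0}}}(P_{j_{0}})=\xi((Q_{i}))$, i.e.\ into another $u$-coordinate of the same state of $\xi((Q_{i}))$. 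The hard part will be this well-definedness: without the linearity hypothesis a variable $v^{\varphi(w_{j})}_{j}$ could occur more than once in $c_{w,u}(\sigma)$, forcing two subterms of $\xi((Q_{i}))$ to be \emph{literally} equal and thereby requiring the state of $Q_{i}$ to distinguish terms with identical image-and-pattern behaviour; linearity is precisely what restricts the matching to at most one occurrence per hole and thus confines the information needed about each $Q_{i}$ to what is already recorded in $g_{v_{i}}(Q_{i})$.
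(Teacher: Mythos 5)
Your plan is, in substance, the same proof as the paper's, just presented in the dual form: where you build a finite $\Xi$-algebra $\mathbf{B}$ and a homomorphism $g$ and take $\mathrm{Ker}(g)$, the paper directly defines the finite-index congruence $\Psi$ on $\mathbf{T}_{\Xi}(Y)$ that refines $\Phi=\bigcap_{(x,r)}\Omega^{\mathbf{T}_{\Xi}(Y)}(\delta^{\varphi(r),\{f_{r}(x)\}})$ by recording, for each subterm $R$ of each $c_{w,r}(\sigma)$ and each choice of transversal representatives $W_{w_{i},l_{i}}$ of the classes of $\Theta=\Omega^{\mathbf{T}_{\Sigma}(X)}(\delta^{s,L})$, whether a term lies in the set obtained by substituting $f^{\sharp}_{w_{i}}[[W_{w_{i},l_{i}}]_{\Theta_{w_{i}}}]$ into $R$. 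The ingredients (the source syntactic congruence of $L$, the finitely many patterns $c_{w,r}(\sigma)$ and $f_{r}(x)$, parameterization by $\Theta$-classes, and linearity entering exactly where you say it does) coincide, and the two packagings are interchangeable by the characterization of $s$-recognizability via finite-index congruences proved in Section~2.

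That said, what you have is a plan whose crux is deferred, and two points must be tightened for it to close. First, the ``pattern-match'' component of a state cannot be a mere possibility bit for each $R$: to compute the state of $\xi((Q_{i})_{i\in\bb{v}})$ from the states of the $Q_{i}$ you must record, for every pattern $R$ \emph{and every tuple of $\Theta$-classes assigned to the variables of $R$}, whether the term lies in the corresponding substitution image (this is exactly the paper's condition (2)); only with that refinement does linearity let a match of $\xi((Q_{i})_{i\in\bb{v}})$ against $\xi((R_{i})_{i\in\bb{v}})$ decompose into independent matches of each $Q_{i}$ against $R_{i}$, each governing a disjoint set of holes. Second, your preimage-class component is not given by a plain structural recursion: if $c_{w,u}(\sigma)$ is a bare variable $v^{\varphi(w_{j_{0}})}_{j_{0}}$, then $f^{\sharp}_{u}(\sigma((P_{j})_{j\in\bb{w}}))=f^{\sharp}_{w_{j_{0}}}(P_{j_{0}})$ with $\varphi(w_{j_{0}})=t$, so the $u$-component of the state of $Q$ depends on its own $w_{j_{0}}$-component for $w_{j_{0}}\in\varphi^{-1}[\{t\}]$. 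You must resolve this circularity, e.g.\ by computing the tuple of components over $\varphi^{-1}[\{t\}]$ as the least fixed point of the resulting monotone system (correct because any actual preimage $P$ is a finite term, so the recursion on $P$ is well founded). Neither issue is fatal, but they are precisely where the paper's proof does its work, so they cannot remain ``the hard part''.
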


\begin{proof}
Let $\Phi$ be the congruence on $\mathbf{T}_{\Xi}(Y)$ defined as follows:
$$
\textstyle
\Phi = \bigcap_{(x,r)\in\coprod X}\Omega^{\mathbf{T}_{\Xi}(Y)}(\delta^{\varphi(r),\{f_{r}(x)\}}).
$$
Since $f\colon X\mor \mathrm{T}_{\Xi}(Y)_{\varphi}$ we have that, for every $r\in S$ and every $x\in X_{r}$, the language $\{f_{r}(x)\}$ is in $\mathrm{Rec}_{\varphi(r)}(\mathbf{T}_{\Xi}(Y))$. Hence, $\Omega^{\mathbf{T}_{\Xi}(Y)}(\delta^{\varphi(r),\{f_{r}(x)\}})$ is of finite index. Therefore, by the assumption and Proposition~\ref{Filter}, $\Phi$ is of finite index. Moreover, for every $r\in S$ and every $x\in X_{r}$, the language $\{f_{r}(x)\}$ is $\Phi_{\varphi(r)}$-saturated.

For abbreviation we let $\Theta$ stand for $\Omega^{\mathbf{T}_{\Sigma}(X)}(\delta^{s,L})$, the syntactic congruence on $\mathbf{T}_{\Sigma}(X)$ determined by $L$. Since $L$ is a language in $\mathrm{Rec}_{s}(\mathbf{T}_{\Sigma}(X))$, $\Theta$ is of finite index. Moreover, for $s\in S$, $L$ is $\Theta_{s}$-saturated.

From now on, for every $r\in S$, $k_{r}$ and $\mathcal{W}_{\Theta_{r}}=\{W_{r,l}\mid l\in k_{r}\}$ stand for the index of $\Theta_{r}$  and a fixed transversal of $\mathrm{T}_{\Sigma}(X)/{\Theta_{r}}$ in $\mathrm{T}(X)_{r}$, respectively.
Moreover, $k$ and $\mathcal{W}_{\Theta}$ denote the $S$-sorted sets $(k_{r})_{r\in S}$ and
$(\mathcal{W}_{\Theta_{r}})_{r\in S}$, respectively.

Let $\Psi = (\Psi_{t})_{t\in T}$ be the binary relation on $\mathrm{T}_{\Xi}(Y)$ defined as follows: For every $t\in T$, $\Psi_{t}$ is the binary relation on $\mathrm{T}_{\Xi}(Y)_{t}$ consisting of all ordered pairs $(M,N)\in\mathrm{T}_{\Xi}(Y)_{t}^{2}$ such that the following two conditions are satisfied:
\begin{enumerate}
\item $(M,N)\in\Phi_{t}$.
\item $\forall\, (w,r)\in S^{\star}\times S$,  $\forall\, \sigma\in \Sigma_{w,r}$, $\forall\, R\in\mathrm{Subt}(c_{w,r}(\sigma))_{t}$, $\forall\, i\in\bb{w}$, $\forall\, l_{i}\in k_{w_{i}}$
$$
\left(M\in \left(\left(\!\begin{smallmatrix}v^{\varphi(w_{i})}_{i}\\ f^{\sharp}_{w_{i}}[[W_{w_{i},l_{i}}]_{\Theta_{w_{i}}}]\end{smallmatrix}\!\right)_{i\in\bb{w}}\right)^{\sharp}_{t}(R) \leftrightarrow N\in \left(\left(\!\begin{smallmatrix}v^{\varphi(w_{i})}_{i}\\ f^{\sharp}_{w_{i}}[[W_{w_{i},l_{i}}]_{\Theta_{w_{i}}}]\end{smallmatrix}\!\right)_{i\in\bb{w}}
\right)^{\sharp}_{t}(R)\right).
$$
\end{enumerate}

By definition, for every $t\in T$, $\Psi_{t}$ is a refinement of $\Phi_{t}$ and an equivalence relation on $\mathrm{T}_{\Xi}(Y)_{t}$. Moreover, if $a=\card(\mathrm{T}_{\Xi}(Y)/{\Phi})$, $b=\card(\Sigma)$,
$$
d = \max\{\mathrm{card}(\mathrm{Subt}(c_{w,r}(\sigma))_{t})\mid (w,r)\in S^{\star}\times S,\, \sigma\in\Sigma_{w,r},\, t\in T\}
$$
and $e=\max\{\bb{w}\mid (w,r)\in S^{\star}\times S \!\And \!\Sigma_{w,r}\neq\varnothing\}$, then the index of $\Psi_{t}$ is bounded by $a2^{bd\card(k)^{e}}$. In addition, all different possibilities defining $\Psi$ are bounded.  Consequently, the $T$-sorted set $\mathrm{T}_{\Xi}(Y)/{\Psi}$ is finite.

Let us check that $\Psi$ is a congruence on $\mathbf{T}_{\Xi}(Y)$. Let $(u,t)$ be an element of  $(T^{\star}-\{\lambda\})\times T$, $\xi\in\Xi_{u,t}$, and let $(M_{j})_{j\in\bb{u}}$ and $(N_{j})_{j\in \bb{u}}$ be elements of $\mathrm{T}_{\Xi}(Y)_{u}$ such that, for every $j\in\bb{u}$, $(M_{j},N_{j})\in\Psi_{u_{j}}$. We want to show that $(\xi((M_{j})_{j\in\bb{u}}),\xi((N_{j})_{j\in\bb{u}}))$ is an element of $\Psi_{t}$. Let us note that, by definition of $\Psi$, for every $j\in\bb{u}$, we have that $(M_{j}, N_{j})\in\Phi_{u_{j}}$. Therefore,  since $\Phi$ is a congruence on $\mathrm{T}_{\Xi}(Y)$, $(\xi((M_{j})_{j\in\bb{u}}),\xi((N_{j})_{j\in\bb{u}}))$ belongs to  $\Phi_{t}$. So the pair $(\xi((M_{j})_{j\in\bb{u}}),\xi((N_{j})_{j\in\bb{u}}))$ satisfies the first condition for being related under $\Psi_{t}$.

Regarding the second condition, let us assume that, for $(w,r)\in S^{\star}\times S$, $\sigma\in \Sigma_{w,r}$, $R\in\mathrm{Subt}(c_{w,r}(\sigma))_{t}$, $i\in\bb{w}$, and $l_{i}\in k_{w_{i}}$, we have that
$$
\xi((M_{j})_{j\in \bb{u}})
\in \left(\left(\!\begin{smallmatrix}v^{\varphi(w_{i})}_{i}\\ f^{\sharp}_{w_{i}}[[W_{w_{i},l_{i}}]_{\Theta_{w_{i}}}]\end{smallmatrix}\!\right)_{i\in\bb{w}}\right)^{\sharp}_{t}(R).
$$

However, given that the tree homomorphism $f^{\sharp}$ is linear, we have that, for every $(w,r)\in S^{\star}\times S$, every $\sigma\in\Sigma_{w,r}$, and every $i\in \bb{w}$, no variable $v^{\varphi(w_{i})}_{i}$ appears more than once in $c_{w,r}(\sigma)$. Thus, for every $i\in \bb{w}$, $\bb{R}_{v^{\varphi(w_{i})}_{i}}\leq 1$. Hence there exists a family $(W_{i})_{i\in \bb{w}}$ in $\mathrm{T}_{\Sigma}(X)_{w}$ such that, for every $i\in\bb{w}$, $W_{i}\in[W_{w_{i},l_{i}}]_{\Theta_{w_{i}}}$ and
\begin{equation}\label{EqtmHtg}
\xi((M_{j})_{j\in \bb{u}})= \left(\!\begin{smallmatrix}v^{\varphi(w_{i})}_{i}\\ f^{\sharp}_{w_{i}}(W_{i})\end{smallmatrix}\!\right)_{i\in \bb{w}}(R).
\end{equation}

Now, for $R$, either (a) $\downarrow\! \varphi^{\star}(w)\cap\mathrm{Var}(R) = \varnothing^{T}$ or (b) $\downarrow\! \varphi^{\star}(w)\cap\mathrm{Var}(R)\neq \varnothing^{T}$.

In case (a), Equation~\ref{EqtmHtg} turns into $\xi((M_{j})_{j\in \bb{u}})=R$. It follows that, for every $i\in\bb{w}$ and every $j\in \bb{u}$, $\bb{M_{j}}_{v^{\varphi(w_{i})}}=0$. Moreover, since $R\in\mathrm{Subt}(c_{w,r}(\sigma))_{t}$, we have that, for every $j\in \bb{u}$, $M_{j}\in\mathrm{Subt}(c_{w,r}(\sigma))_{u_{j}}$. In addition, for every $j\in\bb{u}$,
$$
M_{j}\in \left(\left(\!\begin{smallmatrix}v^{\varphi(w_{i})}_{i}\\ f^{\sharp}_{w_{i}}[[W_{w_{i},l_{i}}]_{\Theta_{w_{i}}}]\end{smallmatrix}\!\right)_{i\in \bb{w}}\right)^{\sharp}_{u_{j}}(M_{j}) = \{M_{j}\}.
$$
Since, for every $j\in \bb{u}$, $(M_{j}, N_{j})\in \Psi_{u_{j}}$, it happens that $N_{j}=M_{j}$. Thus $\xi((N_{j})_{j\in \bb{u}})=\xi((M_{j})_{j\in \bb{u}})=R$. Therefore this case is settled.

In case (b), in its turn, either (b.1) there exists an $i\in \bb{w}$ such that $R = v^{\varphi(w_{i})}_{i}$ or (b.2) there exists a unique $(R_{j})_{j\in \bb{u}}\in \mathrm{T}_{\Xi}(Y)_{u}$ such that $R = \xi((R_{j})_{j\in \bb{u}})$.  Note that if $R$ starts with an operation symbol from $\Xi$, then, from Equation~\ref{EqtmHtg}, it follows that $\xi$ is the only possibility.

In case (b.1), Equation~\ref{EqtmHtg} turns into
\begin{equation}\label{Eqtm2Htg}
\xi((M_{j})_{j\in \bb{u}}) = f^{\sharp}_{w_{i}}(W_{i}).
\end{equation}
Note that in this case $t$ must be equal to $\varphi(w_{i})$.

Case (b.1) still needs to be further refined attending to the different possibilities for $W_{i}$ according to the definition of the tree homomorphism $f^{\sharp}$. Either (b.1.i) there exists a unique $x\in X_{w_{i}}$ such that $W_{i} = x$ or (b.1.ii) there exists a unique  $w'\in S^{\star}$ a unique $\nu\in\Sigma_{w',w_{i}}$, and a unique $((Q_{i'})_{i'\in \bb{w'}})\in \mathrm{T}_{\Sigma}(X)_{w'}$ such that $W_{i}=\nu((Q_{i'})_{i'\in\bb{w'}})$.

In case (b.1.i), the value of the tree homomorphism $f^{\sharp}_{w_{i}}$ at $W_{i}$ is $f_{w_{i}}(x)$. Let us recall that $\Phi$ was defined so that the set $\{f_{w_{i}}(x)\}$ is recognized by $\Phi_{\varphi(w_{i})}$. Since, for every $j\in \bb{u}$, $(M_{j},N_{j})\in \Phi_{u}$, it follows that $\xi((N_{j})_{j\in \bb{u}})\in\{f_{w_{i}}(x)\}$. Therefore this case is settled.

In case (b.1.ii), the value of the tree homomorphism $f^{\sharp}_{w_{i}}$ at $W_{i}$ is given by
$$
f^{\sharp}_{w_{i}}(W_{i})=\left(\!\begin{smallmatrix}v^{\varphi(w'_{i'})}_{i'}\\ f^{\sharp}_{w'_{i'}}(Q_{i'})\end{smallmatrix}\!\right)_{i'\in\bb{w'}}
(d_{w',w_{i}}(\nu)).
$$

In virtue of Equation~\ref{Eqtm2Htg}, this last term is equal to $\xi((M_{j})_{j\in \bb{u}})$. On the other hand, since no substitution changes the operation symbol, we have that $d_{w', w_{i}}(\nu)$ has the form $\xi((R_{j})_{j\in \bb{u}})$ for a unique  $(R_{j})_{j\in \bb{u}}\in \mathrm{T}_{\Xi}(Y\cup{\downarrow}\varphi^{\star}(w))_{u}$. Moreover, for every $j\in \bb{u}$, $R_{j}$ is a subterm of sort $u_{j}$ of $d_{w', w_{i}}(\nu)$. Hence, by the linearity of the tree homomorphism $f^{\sharp}$, we have that
$$
\left(\!\begin{smallmatrix}v^{\varphi(w'_{i'})}_{i'}\\ f^{\sharp}_{w'_{i'}}(Q_{i'})\end{smallmatrix}\!\right)_{i'\in\bb{w'}}(\xi((R_{j})_{j\in \bb{u}}))=\xi\left(\left(\left(\!\begin{smallmatrix}v^{\varphi(w'_{i'})}_{i'}\\ f^{\sharp}_{w'_{i'}}(Q_{i'})\end{smallmatrix}\!\right)_{i'\in\bb{w'}}(R_{j})\right)_{j\in \bb{u}}\right).
$$

Consequently, for every $j\in \bb{u}$, $M_{j} = \left(\!\begin{smallmatrix}v^{\varphi(w'_{i'})}_{i'}\\ f^{\sharp}_{w'_{i'}}(Q_{i'})\end{smallmatrix}\!\right)_{i'\in\bb{w'}}(R_{j})$. Thus, for every $j\in \bb{u}$,
$M_{j}\in \left(\left(\!\begin{smallmatrix}v^{\varphi(w'_{i'})}_{i'}\\ f^{\sharp}_{w'_{i'}}[[Q_{i'}]_{\Theta_{w'_{i'}}}]\end{smallmatrix}\!\right)_{i'\in\bb{w'}}\right)^{\sharp}_{u_{j}}(R_{j})$. However, since, for every $j\in \bb{u}$, $(M_{j},N_{j})\in \Psi_{u_{j}}$, we can assert that $N_{j}\in\left(\left(\!\begin{smallmatrix}v^{\varphi(w'_{i'})}_{i'}\\ f^{\sharp}_{w'_{i'}}[[Q_{i'}]_{\Theta_{w'_{i'}}}]\end{smallmatrix}\!\right)_{i'\in\bb{w'}}\right)^{\sharp}_{u_{j}}(R_{j})$. Therefore, for every $i'\in \bb{w'}$, there exists a $\overline{Q}_{i'}\in [Q_{i'}]_{\Theta_{w'_{i'}}}$ such that $N_{j}=\left(\!\begin{smallmatrix}v^{\varphi(w'_{i'})}_{i'}\\ f^{\sharp}_{w'_{i'}}(\overline{Q}_{i'})\end{smallmatrix}\!\right)_{i'\in \bb{w'}}(R_{j})$. Let $\overline{W}_{i}$ stand for $\nu((\overline{Q}_{i'})_{i'\in \bb{w'}})$.

Before proceeding any further, let us note that
$$
f^{\sharp}_{w_{i}}(\overline{W}_{i})=\left(\!\begin{smallmatrix}v^{\varphi(w'_{i'})}_{i'}\\ f^{\sharp}_{w'_{i'}}(\overline{Q}_{i'})\end{smallmatrix}\!\right)_{i'\in \bb{w'}}(d_{w',w_{i}}(\nu))=\xi((N_{j})_{j\in \bb{u}}).
$$

Now, since, for every $i'\in w'$, $\overline{Q}_{i'}\in [Q_{i'}]_{\Theta_{w'_{i'}}}$, it follows that $\overline{W}_{i}\in [W_{i}]_{\Theta_{w_{i}}}$ and $\xi((N_{j})_{j\in \bb{u}})\in \left(\left(\!\begin{smallmatrix}v^{\varphi(w_{i})}_{i}\\ f^{\sharp}_{w_{i}}[[W_{i}]_{\Theta_{w_{i}}}]\end{smallmatrix}\!\right)_{i\in\bb{w}}\right)^{\sharp}_{t}(v^{\varphi(w_{i})}_{i})$. Therefore this case is settled.

It only remains to consider the case (b.2), in which $R=\xi((R_{j})_{j\in \bb{u}})$. Again, by the linearity of the tree homomorphism $f^{\sharp}$, it follows, from Equation~\ref{EqtmHtg}, that
$$
\xi((M_{j})_{j\in \bb{u}})=\xi\left(\left(\left(\!\begin{smallmatrix}v^{\varphi(w_{i})}_{i}\\ f^{\sharp}_{w_{i}}(Q_{i})\end{smallmatrix}\!\right)_{i\in\bb{w}}(R_{j})\right)_{j\in \bb{u}}\right).
$$
Hence, for every $j\in \bb{u}$, $M_{j}=\left(\!\begin{smallmatrix}v^{\varphi(w_{i})}_{i}\\ f^{\sharp}_{w_{i}}(Q_{i})\end{smallmatrix}\!\right)_{i\in\bb{w}}(R_{j})$. Then the statement may be handled in much the same way as in case (b.1.ii).

Therefore $\Psi$ is a congruence on $\mathbf{T}_{\Xi}(Y)$ of $T$-finite index.

Finally, we prove that  $[f^{\sharp}_{s}[L]]^{\Psi}_{\varphi(s)}=f^{\sharp}_{s}[L]$, i.e., that $f^{\sharp}_{s}[L]$ is $\Psi_{\varphi(s)}$-saturated. Since, obviously, $f^{\sharp}_{s}[L]\subseteq [f^{\sharp}_{s}[L]]^{\Psi}_{\varphi(s)}$, we restrict ourselves to prove the other inclusion.

Let $M$ be a term in $[f^{\sharp}_{s}[L]]^{\Psi}_{\varphi(s)}$, then there exists a term $P\in L$ such that $M$ and $f^{\sharp}_{s}(P)$ are $\Psi_{\varphi(s)}$-related. Now, for $P$, either (a) $P=x$, for a unique variable $x\in X_{s}$, or (b) $P=\sigma((P_{i})_{i\in\bb{w}})$, for a unique $w\in S^{\star}$, a unique $\sigma\in\Sigma_{w,s}$, and a unique  $(P_{i})_{i\in \bb{w}}\in\mathrm{T}_{\Sigma}(X)_{w}$.
Before proceeding any further, let us explain the reason why,  in the second case, it is unnecessary to take into account sorts $r\in S- \{s\}$.  The reason is simple, terms of the form $\sigma((P_{i})_{i\in\bb{w}})$, for $\sigma\in\Sigma_{w,r}$ with $r\neq s$ and $(P_{i})_{i\in \bb{w}}\in\mathrm{T}_{\Sigma}(X)_{w}$ are terms of sort $r$, and, consequently,  terms that are not in $L$.

In case (a), $f^{\sharp}_{s}(x)=f_{s}(x)$. By construction of $\Phi$, the set $\{f_{s}(x)\}$ is $\Phi_{\varphi(s)}$-saturated. Moreover, since $\Psi_{\varphi(s)}$ is a refinement of $\Phi_{\varphi(s)}$, we conclude that $\{f_{s}(x)\}$ is $\Psi_{\varphi(s)}$-saturated. In this case, sinece $M$ is related to $f_{s}(x)$ for $\Psi_{\varphi(s)}$, we conclude that $M$ must be equal to $f_{s}(x)$ and, consequently, $M$ is an element in $f^{\sharp}_{s}[L]$.

In case (b), $f^{\sharp}_{s}(\sigma((P_{i})_{i\in\bb{w}}))  = \left(\!\begin{smallmatrix}v^{\varphi(w_{i})}_{i}\\f^{\sharp}_{w_{i}}(P_{i})\end{smallmatrix}\!\right)_{i\in \bb{w}}(c_{w,s}(\sigma))$. Note that
$$
 \left(\!\begin{smallmatrix}v^{\varphi(w_{i})}_{i}\\f^{\sharp}_{w_{i}}(P_{i})\end{smallmatrix}\!\right)_{i\in \bb{w}}(c_{w,s}(\sigma))\in \left(\left(\!\begin{smallmatrix}v^{\varphi(w_{i})}_{i}\\ f^{\sharp}_{w_{i}}[[P_{i}]_{\Theta_{w_{i}}}]\end{smallmatrix}\!\right)_{i\in\bb{w}}\right)^{\sharp}_{s}(c_{w,s}(\sigma)).
$$
By construction of $\Psi$, it follows that $M\in \left(\left(\!\begin{smallmatrix}v^{\varphi(w_{i})}_{i}\\ f^{\sharp}_{w_{i}}[[P_{i}]_{\Theta_{w_{i}}}]\end{smallmatrix}\!\right)_{i\in\bb{w}}\right)^{\sharp}_{s}(c_{w,s}(\sigma))$. Therefore, there exists, for every $i\in\bb{w}$, terms $Q_{i}$ in $[P_{i}]_{\Theta_{w_{i}}}$ such that
$$
M=\left(\!\begin{smallmatrix}v^{\varphi(w_{i})}_{i}\\f^{\sharp}_{w_{i}}(Q_{i})\end{smallmatrix}\!\right)_{i\in \bb{w}}(c_{w,s}(\sigma)).
$$

It follows that $M=f^{\sharp}_{s}(\sigma((Q_{i})_{i\in\bb{w}}))$. Since $Q_{i}\in [P_{i}]_{\Theta_{w_{i}}}$, and $\Theta$ is a congruence on $\mathrm{T}_{\Sigma}(X)$, we conclude that $\sigma((P_{i})_{i\in\bb{w}})$ and $\sigma((Q_{i})_{i\in\bb{w}})$ are $\Theta_{s}$-related. Moreover, since $L$ is $\Theta_{s}$-saturated and $\sigma((P_{i})_{i\in\bb{w}})$ is a term in $L$, we can assert that $\sigma((Q_{i})_{i\in\bb{w}})$ is also a term in $L$. Consequently, $M$ is a term in $f^{\sharp}_{s}[L]$.

Hence, $f^{\sharp}_{s}[L]$ is $\Psi_{\varphi(s)}$-saturated.
\end{proof}

In~\cite{GS97}, Proposition 7.7, on p.~18, GÈcseg and Steinby state that, for a homomorphism $f\colon \mathbf{T}_{\Sigma}(X)\mor \mathbf{T}_{\Sigma}(Y)$ between single-sorted algebras and a subset $L$ of $\mathrm{T}_{\Sigma}(X)$, if $L\in \mathrm{Rec}(\mathbf{T}_{\Sigma}(X))$, then $f[L]\in \mathrm{Rec}(\mathbf{T}_{\Sigma}(Y))$. Moreover, they, seemingly, provide a proof of it. One can obtain a plain proof of such a result, as an immediate corollary of Proposition~\ref{PRecLH} and particularizing it to the single-sorted case, taking into account the following fact: Every homomorphism $f\colon \mathbf{T}_{\Sigma}(X)\mor \mathbf{T}_{\Sigma}(Y)$ be\-tween many-sorted algebras  is an instance of a linear tree homomorphism. Indeed, for $((\mathrm{id}_{S},c),f\circ\eta_{X})$, where $c = (c_{w,s})_{(w,s)\in S^{\star}\times S}$ is the family of mappings defined, for every $(w,s)\in S^{\star}\times S$, as follows:
$$
c_{w,s}
\nfunction
{\Sigma_{w,s}}
{\mathrm{T}_{\Sigma}(Y\cup \downarrow\! w)_{s}}
{\sigma}
{\sigma\left(v^{w_{0}}_{0},\ldots,v^{w_{\bb{w}-1}}_{\bb{w}-1}\right)}
$$
we have that $(f\circ\eta_{X})^{\sharp}\colon \mathbf{T}_{\Sigma}(X)\mor \mathbf{T}_{\Sigma}(Y)$, the tree homomorphism determined by the pair $((\mathrm{id}_{S},c),f\circ\eta_{X})$, is $f$.

\begin{corollary}
Let $f$ be a homomorphism from $\mathbf{T}_{\Sigma}(X)$ to $\mathbf{T}_{\Sigma}(Y)$, $s\in S$ and $L\in \mathrm{T}_{\Sigma}(X)^{\wp}_{s}$. If $L\in\mathrm{Rec}_{s}(\mathbf{T}_{\Sigma}(X))$, then $f_{s}[L]\in\mathrm{Rec}_{s}(\mathbf{T}_{\Sigma}(Y))$.
\end{corollary}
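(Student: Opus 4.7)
The plan is to exhibit the given homomorphism $f$ as a particular linear tree homomorphism, so that Proposition~\ref{PRecLH} applies immediately with $T = S$, $\Xi = \Sigma$, and $\varphi = \mathrm{id}_{S}$. Concretely, I would consider the hyperderivor $((\mathrm{id}_{S},c),f\circ\eta_{X})$ from $(\mathbf{\Sigma},X)$ to $(\mathbf{\Sigma},Y)$, where $c = (c_{w,s})_{(w,s)\in S^{\star}\times S}$ is defined by
\[
c_{w,s}(\sigma) \;=\; \sigma\bigl(v^{w_{0}}_{0},\ldots,v^{w_{\bb{w}-1}}_{\bb{w}-1}\bigr)
\]
for every $(w,s)\in S^{\star}\times S$ and every $\sigma\in\Sigma_{w,s}$. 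Linearity is immediate: for each $\sigma\in\Sigma_{w,s}$ and each $i\in\bb{w}$, the variable $v^{w_{i}}_{i}$ occurs exactly once in $c_{w,s}(\sigma)$.

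The first technical step is to verify that, with this choice of $c$, the derived $\Sigma$-algebra $\mathbf{c}(\mathbf{T}_{\Sigma}(Y))$ is nothing other than $\mathbf{T}_{\Sigma}(Y)$ itself. This is a direct unravelling of the definition: for $(w,s)\in S^{\star}\times S$, $\sigma\in\Sigma_{w,s}$, and $(Q_{i})_{i\in\bb{w}}\in\mathrm{T}_{\Sigma}(Y)_{w}$, the operation $\sigma^{\mathbf{c}(\mathbf{T}_{\Sigma}(Y))}$ sends $(Q_{i})_{i\in\bb{w}}$ to $\mathrm{S}^{w}_{(Q_{i})_{i\in\bb{w}}}(c_{w,s}(\sigma))$, which is obtained from $\sigma\bigl(v^{w_{0}}_{0},\ldots,v^{w_{\bb{w}-1}}_{\bb{w}-1}\bigr)$ by replacing, for each $i\in\bb{w}$, the variable $v^{w_{i}}_{i}$ with $Q_{i}$; this is precisely $\sigma((Q_{i})_{i\in\bb{w}}) = F^{\mathbf{T}_{\Sigma}(Y)}_{\sigma}((Q_{i})_{i\in\bb{w}})$. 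Hence $\mathbf{c}(\mathbf{T}_{\Sigma}(Y)) = \mathbf{T}_{\Sigma}(Y)$.

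Once this identification is in place, the tree homomorphism determined by $((\mathrm{id}_{S},c),f\circ\eta_{X})$, namely $(f\circ\eta_{X})^{\sharp}$, is a homomorphism from $\mathbf{T}_{\Sigma}(X)$ to $\mathbf{T}_{\Sigma}(Y)$ satisfying $(f\circ\eta_{X})^{\sharp}\circ\eta_{X} = f\circ\eta_{X}$. But $f$ itself also satisfies $f\circ\eta_{X} = f\circ\eta_{X}$, so by the uniqueness clause in the universal property of $(\mathbf{T}_{\Sigma}(X),\eta_{X})$ we obtain $(f\circ\eta_{X})^{\sharp} = f$. Thus $f$ is a linear tree homomorphism.

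The conclusion then follows by direct invocation of Proposition~\ref{PRecLH}: since $L\in\mathrm{Rec}_{s}(\mathbf{T}_{\Sigma}(X))$ and $\varphi = \mathrm{id}_{S}$, we conclude $f_{s}[L] = (f\circ\eta_{X})^{\sharp}_{s}[L]\in\mathrm{Rec}_{\mathrm{id}_{S}(s)}(\mathbf{T}_{\Sigma}(Y)) = \mathrm{Rec}_{s}(\mathbf{T}_{\Sigma}(Y))$. The only subtle point is the identification $\mathbf{c}(\mathbf{T}_{\Sigma}(Y)) = \mathbf{T}_{\Sigma}(Y)$, which---once the definition of $\mathrm{S}^{w}_{(Q_{i})_{i\in\bb{w}}}$ is unfolded---is entirely formal; there is no genuine obstacle, and no appeal to the inductive machinery developed inside the proof of Proposition~\ref{PRecLH} is needed in this corollary.
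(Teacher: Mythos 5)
Your proposal is correct and follows essentially the same route as the paper: the paper likewise realizes $f$ as the linear tree homomorphism determined by the hyperderivor $((\mathrm{id}_{S},c),f\circ\eta_{X})$ with $c_{w,s}(\sigma) = \sigma(v^{w_{0}}_{0},\ldots,v^{w_{\bb{w}-1}}_{\bb{w}-1})$ and then invokes Proposition~\ref{PRecLH}. The only difference is that you make explicit the (entirely formal) verification that $\mathbf{c}(\mathbf{T}_{\Sigma}(Y)) = \mathbf{T}_{\Sigma}(Y)$ and the appeal to uniqueness in the universal property, which the paper leaves implicit.
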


We next provide, among other things, a categorial rendering of the just indicated result, but for a suitable class of homomorphisms.

\begin{proposition}
Let $\mathrm{Rec}_{\Sigma}$ be the mapping that sends an $S$-sorted set $X$ to the $\Sigma$-algebra $\mathbf{Rec}_{\boldsymbol{\cdot}}(\mathbf{T}_{\Sigma}(X))$ and an $S$-sorted mapping $f$ from $X$ to $Y$ to the $S$-sorted mapping $(f^{@})^{\wp} = (f^{@}_{s}[\cdot])_{s\in S}$ from $(\mathrm{Rec}_{s}(\mathbf{T}_{\Sigma}(X)))_{s\in S}$ to $(\mathrm{Rec}_{s}(\mathbf{T}_{\Sigma}(Y)))_{s\in S}$, where, we recall, $f^{@}$ is the unique homomorphism from $\mathbf{T}_{\Sigma}(X)$ to $\mathbf{T}_{\Sigma}(Y)$ such that $f^{@}\circ\eta_{X} = \eta_{Y}\circ f$---let us note that  $(f^{@})^{\wp}$ is an $S$-sorted mapping from $\mathrm{T}_{\Sigma}(X)^{\wp}$ to $\mathrm{T}_{\Sigma}(Y)^{\wp}$, and that to simplify notation we have used the same symbol for its restriction to $(\mathrm{Rec}_{s}(\mathbf{T}_{\Sigma}(X)))_{s\in S}$ and $(\mathrm{Rec}_{s}(\mathbf{T}_{\Sigma}(Y)))_{s\in S}$. Then, for every $(w,s)\in S^{\star}\times S$ and every $\sigma\in \Sigma_{w,s}$, $f^{@}_{s}[\cdot]\circ \sigma^{\wp} = \sigma^{\wp}\circ f^{@}_{w}[\cdot]$, i.e., for every $(L_{i})_{i\in \bb{w}}\in \prod_{i\in \bb{w}}\mathrm{Rec}_{w_{i}}(\mathbf{T}_{\Sigma}(X))$, the sets
\begin{gather*}
\textstyle
(f^{@}_{s}[\cdot]\circ \sigma^{\wp})((L_{i})_{i\in \bb{w}}) = \{f^{@}_{s}(\sigma((P_{i})_{i\in n}))\mid (P_{i})_{i\in \bb{w}}\in \prod_{i\in \bb{w}}L_{i}\} \text{ and} \\
\textstyle
(\sigma^{\wp}\circ f^{@}_{w}[\cdot])((L_{i})_{i\in \bb{w}}) = \{\sigma((Q_{i})_{i\in n})\mid (Q_{i})_{i\in \bb{w}}\in \prod_{i\in \bb{w}}f^{@}_{w_{i}}[L_{i}]\}
\end{gather*}
are equal. Thus $(f^{@})^{\wp}$ is a homomorphism from the $\Sigma$-algebra $\mathbf{Rec}_{\boldsymbol{\cdot}}(\mathbf{T}_{\Sigma}(X))$ to the $\Sigma$-algebra  $\mathbf{Rec}_{\boldsymbol{\cdot}}(\mathbf{T}_{\Sigma}(Y))$. Therefore $\mathrm{Rec}_{\Sigma}$ is a functor from $\mathbf{Set}^{S}$ to $\mathbf{Alg}(\Sigma)$ and a subfunctor of $(\cdot)^{\wp}\circ \mathbf{T}_{\Sigma}$.


Let $f$ be an $S$-sorted mapping from $X$ to $Y$, $\{\cdot\}_{X}^{\sharp}$ the canonical extension of the $S$-sorted mapping $\{\cdot\}_{X}$ from $X$ to $(\mathrm{Rec}_{s}(\mathbf{T}_{\Sigma}(X)))_{s\in S}$ that, for every $s\in S$, sends $x$ in $X_{s}$ to $\{x\}$ in $\mathrm{Rec}_{s}(\mathbf{T}_{\Sigma}(X))$, and $\{\cdot\}_{Y}^{\sharp}$ the canonical extension of the $S$-sorted  mapping $\{\cdot\}_{Y}$ from $Y$ to $(\mathrm{Rec}_{s}(\mathbf{T}_{\Sigma}(Y)))_{s\in S}$. Then the following diagram commutes
$$
\xymatrix@C=60pt{
\mathbf{T}_{\Sigma}(X)\ar[r]^-{\{\cdot\}_{X}^{\sharp}} \ar[d]_-{f^{@}}&
\mathbf{Rec}_{\boldsymbol{\cdot}}(\mathbf{T}_{\Sigma}(X))\ar[d]^-{(f^{@})^{\wp}}\\
\mathbf{T}_{\Sigma}(Y)\ar[r]_-{\{\cdot\}_{Y}^{\sharp}}&
\mathbf{Rec}_{\boldsymbol{\cdot}}(\mathbf{T}_{\Sigma}(Y))
}
$$
Therefore $(\{\cdot\}_{X}^{\sharp})_{X\in \boldsymbol{\mathcal{U}}}$ is a natural transformation from $\mathbf{T}_{\Sigma}$ to $\mathrm{Rec}_{\Sigma}$.

Let $X$ be an $S$-sorted set, $(w,s)\in S^{\star}\times S$, $\sigma\in \Sigma_{w,s}$, $i\in \bb{w}$, and $(L_{i})_{i\in \bb{w}}\in \prod_{i\in \bb{w}}\mathrm{Rec}_{w_{i}}(\mathbf{T}_{\Sigma}(X))$ such that $L_{i} = L'\cup L''$. Then
$$
\sigma^{\wp}((L_{i})_{i\in \bb{w}}) = \sigma^{\wp}(L_{0},\ldots,L',\ldots,L_{\bb{w}-1})\cup
\sigma^{\wp}(L_{0},\ldots,L'',\ldots,L_{\bb{w}-1}).
$$
Hence $\mathbf{Rec}_{\boldsymbol{\cdot}}(\mathbf{T}_{\Sigma}(X))$ is a many-sorted Boolean Algebra with operators, i.e., a $\Sigma$-algebra such that, for every $s\in S$, $\mathbf{Rec}_{s}(\mathbf{T}_{\Sigma}(X))$ is a Boolean algebra and the structures of $\Sigma$-algebra and of Boolean algebra are compatibles as stated above (for the concept of single-sorted Boolean algebra with operators see~\cite{{jt51}} and \cite{{jt52}}).

Finally, let $f\colon X\mor Y$ an $S$-sorted mapping. Then the homomorphism $(f^{@})^{\wp} =(f^{@}_{s}[\cdot])_{s\in S}$ is such that, for every $s\in S$, $f^{@}_{s}[\cdot]$ is a Boolean algebra homomorphism from $\mathbf{Rec}_{s}(\mathbf{T}_{\Sigma}(X))$ to $\mathbf{Rec}_{s}(\mathbf{T}_{\Sigma}(Y))$.
\end{proposition}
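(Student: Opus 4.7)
The plan is to chain together several routine verifications, each of which either unfolds a definition or invokes universal properties already established. The two non-trivial inputs are the corollary to Proposition~\ref{PRecLH} (direct image of a recognizable language along a homomorphism between free algebras is recognizable) and the explicit description of $\sigma^{\wp}$ on the subset algebra.

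First I would check well-definedness of $\mathrm{Rec}_{\Sigma}$ on morphisms. For an $S$-sorted mapping $f\colon X\mor Y$, the homomorphism $f^{@}\colon \mathbf{T}_{\Sigma}(X)\mor\mathbf{T}_{\Sigma}(Y)$ is, as observed just before the statement, the tree homomorphism determined by the linear hyperderivor $((\mathrm{id}_{S},c),f\circ \eta_{X})$, so the corollary to Proposition~\ref{PRecLH} gives $f^{@}_{s}[L]\in\mathrm{Rec}_{s}(\mathbf{T}_{\Sigma}(Y))$ whenever $L\in\mathrm{Rec}_{s}(\mathbf{T}_{\Sigma}(X))$. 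Hence $(f^{@})^{\wp}$ restricts to an $S$-sorted mapping between the recognizable subalgebras.

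Next, for the homomorphism identity $f^{@}_{s}[\cdot]\circ \sigma^{\wp} = \sigma^{\wp}\circ f^{@}_{w}[\cdot]$, I would unfold both sides on a family $(L_{i})_{i\in\bb{w}}\in\prod_{i\in\bb{w}}\mathrm{Rec}_{w_{i}}(\mathbf{T}_{\Sigma}(X))$. The left-hand side equals $\{f^{@}_{s}(\sigma((P_{i})_{i\in\bb{w}}))\mid (P_{i})\in\prod_{i\in\bb{w}}L_{i}\}$, which—since $f^{@}$ is a $\Sigma$-homomorphism—rewrites as $\{\sigma((f^{@}_{w_{i}}(P_{i}))_{i\in\bb{w}})\mid (P_{i})\in\prod_{i\in\bb{w}}L_{i}\}$. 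The right-hand side is $\{\sigma((Q_{i})_{i\in\bb{w}})\mid (Q_{i})\in\prod_{i\in\bb{w}}f^{@}_{w_{i}}[L_{i}]\}$, and the two sets coincide because any $(Q_{i})$ with $Q_{i}\in f^{@}_{w_{i}}[L_{i}]$ arises as $(f^{@}_{w_{i}}(P_{i}))$ for some $(P_{i})\in\prod L_{i}$ and conversely. Functoriality of $\mathrm{Rec}_{\Sigma}$ is then immediate from $(\mathrm{id}_{X})^{@}=\mathrm{id}_{\mathbf{T}_{\Sigma}(X)}$ and $(g\circ f)^{@}=g^{@}\circ f^{@}$, combined with functoriality of direct image, which also exhibits $\mathrm{Rec}_{\Sigma}$ as a subfunctor of $(\cdot)^{\wp}\circ\mathbf{T}_{\Sigma}$.

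For the naturality of $\{\cdot\}^{\sharp}$, I would use the universal property of $\mathbf{T}_{\Sigma}(X)$: both composites in the square are homomorphisms $\mathbf{T}_{\Sigma}(X)\mor\mathbf{Rec}_{\boldsymbol{\cdot}}(\mathbf{T}_{\Sigma}(Y))$, and they agree on the generators $\eta_{X}$ since, for $x\in X_{s}$, one has $((f^{@})^{\wp}\circ\{\cdot\}^{\sharp}_{X})_{s}(x)=f^{@}_{s}[\{x\}]=\{f_{s}(x)\}=(\{\cdot\}^{\sharp}_{Y}\circ f^{@})_{s}(x)$, using $f^{@}\circ\eta_{X}=\eta_{Y}\circ f$. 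Uniqueness of the homomorphic extension forces equality on all of $\mathbf{T}_{\Sigma}(X)$.

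Finally, the Boolean algebra with operators structure reduces to the distributivity $\sigma^{\wp}(L_{0},\dots,L'\cup L'',\dots,L_{\bb{w}-1}) = \sigma^{\wp}(\dots,L',\dots)\cup\sigma^{\wp}(\dots,L'',\dots)$, which is transparent from the set-builder definition of $\sigma^{\wp}$ once one observes that a tuple $(P_{j})$ with $P_{i}\in L'\cup L''$ lies in one of the two partial products; together with Proposition~\ref{Rec s is Bool} this realises each $\mathbf{Rec}_{s}(\mathbf{T}_{\Sigma}(X))$ as a Boolean algebra compatible with the operators $\sigma^{\wp}$. Preservation of finite joins by $f^{@}_{s}[\cdot]$ is a general property of direct image; I expect the main obstacle to lie in the last sentence, the assertion that $f^{@}_{s}[\cdot]$ is a full Boolean algebra homomorphism, since direct image under a non-injective mapping need not commute with intersections or complements. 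The proof strategy for this delicate point is to fix a finite-index congruence $\Phi$ on $\mathbf{T}_{\Sigma}(X)$ simultaneously saturating a given finite family of recognizable languages and to transport the Boolean calculation to the finite quotient $\mathbf{T}_{\Sigma}(X)/\Phi$, where saturation by $\mathrm{Ker}(f^{@})$ (which contains $\widetilde{\Omega}^{\mathbf{T}_{\Sigma}(X)}(\Phi)$ in the relevant cases) makes direct image behave as an honest Boolean homomorphism; this reduction is where the proof genuinely uses that we are restricted to recognizable subsets rather than arbitrary ones.
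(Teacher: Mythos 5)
The paper states this proposition without supplying a proof, so there is nothing to compare against line by line; judged on its own terms, your argument for everything up to and including the distributivity of $\sigma^{\wp}$ over binary unions is correct and is essentially the only reasonable route: well-definedness of $\mathrm{Rec}_{\Sigma}$ on morphisms via the corollary to Proposition~\ref{PRecLH}, the exchange identity $f^{@}_{s}[\cdot]\circ\sigma^{\wp}=\sigma^{\wp}\circ f^{@}_{w}[\cdot]$ by unfolding $\sigma^{\wp}$ and using that $f^{@}$ is a homomorphism (this is just the morphism part of Proposition~\ref{subsetalg} restricted to the recognizable subalgebras), naturality of $\{\cdot\}^{\sharp}$ by uniqueness of homomorphic extensions agreeing on generators, and the union identity by splitting a tuple according to whether its $i$-th entry lies in $L'$ or in $L''$.

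The genuine gap is exactly where you suspected it, and your proposed repair cannot close it, because the final assertion is false as literally stated. Take $S=\{s\}$, $X_{s}=\{x_{1},x_{2}\}$, $Y_{s}=\{y\}$, and $f_{s}$ the constant map. By Proposition~\ref{PRecVar} the languages $\{x_{1}\}$ and $\{x_{2}\}$ belong to $\mathrm{Rec}_{s}(\mathbf{T}_{\Sigma}(X))$, yet $f^{@}_{s}[\{x_{1}\}\cap\{x_{2}\}]=f^{@}_{s}[\varnothing]=\varnothing$ while $f^{@}_{s}[\{x_{1}\}]\cap f^{@}_{s}[\{x_{2}\}]=\{y\}$, so $f^{@}_{s}[\cdot]$ does not preserve meets; likewise $y\in f^{@}_{s}[\complement\{x_{1}\}]$ but $y\notin\complement f^{@}_{s}[\{x_{1}\}]$, so it does not preserve complements, and when $f^{@}_{s}$ is not surjective it does not even preserve the top element. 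Your strategy of transporting the computation to a finite quotient saturating the given languages cannot rescue this: for direct image to commute with intersections and complements on a family of languages one needs those languages to be $\mathrm{Ker}(f^{@})$-saturated, i.e.\ $\mathrm{Ker}(f^{@})\subseteq\Omega^{\mathbf{T}_{\Sigma}(X)}(L)$ for each of them, and recognizable languages such as $\{x_{1}\}$ above are not; the containment you invoke, namely that $\mathrm{Ker}(f^{@})$ \emph{contains} $\widetilde{\Omega}^{\mathbf{T}_{\Sigma}(X)}(\Phi)$, points in the unhelpful direction. What is true, and what your earlier steps already establish, is that $f^{@}_{s}[\cdot]$ preserves $\varnothing$ and all finite joins, i.e.\ it is an additive ($\equiv$ hemimorphic) map, which is the same kind of structure-preservation demanded of the operators $\sigma^{\wp}$ themselves in the J\'onsson--Tarski framework; it is a homomorphism for the full Boolean structure only when $f^{@}_{s}$ is injective, and even then only onto its image with complements taken relative to that image. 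The correct move is to weaken the last sentence of the proposition accordingly rather than to try to prove it.
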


From the just stated proposition and Proposition~\ref{Rec SubdirectProd sRec} we obtain the following corollary.

\begin{corollary}
Let $X$ be an $S$-sorted set. Then the Boolean algebra $\mathbf{Rec}(\mathbf{T}_{\Sigma}(X))$ is a subdirect product of the family of Boolean algebras $(\mathbf{Rec}_{s}(\mathbf{T}_{\Sigma}(X)))_{s\in S}$. Moreover, if $f\colon X\mor Y$ is an $S$-sorted mapping, then $f^{@}[\cdot]$ and $\prod(f^{@})^{\wp} = \prod_{s\in S}f^{@}_{s}[\cdot]$ are compatible with the subdirect embeddings of $\mathbf{Rec}(\mathbf{T}_{\Sigma}(X))$ and $\mathbf{Rec}(\mathbf{T}_{\Sigma}(Y))$ into $\prod_{s\in S}\mathbf{Rec}_{s}(\mathbf{T}_{\Sigma}(X))$ and $\prod_{s\in S}\mathbf{Rec}_{s}(\mathbf{T}_{\Sigma}(Y))$, respectively.
\end{corollary}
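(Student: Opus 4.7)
The plan is to assemble the corollary from three ingredients already available in the paper: Proposition~\ref{Rec SubdirectProd sRec}, which under the standing assumption that $S$ is finite provides the underlying set-level embedding $\mathrm{Rec}(\mathbf{T}_{\Sigma}(X))\hookrightarrow\prod_{s\in S}\mathrm{Rec}_{s}(\mathbf{T}_{\Sigma}(X))$ as a subdirect product; Proposition~\ref{Rec is Bool}, which guarantees closure of $\mathrm{Rec}(\mathbf{T}_{\Sigma}(X))$ under the Boolean operations; and the preceding proposition, which provides the Boolean algebra structure and the fact that, for $f\colon X\mor Y$, each component $f^{@}_{s}[\cdot]\colon \mathbf{Rec}_{s}(\mathbf{T}_{\Sigma}(X))\mor\mathbf{Rec}_{s}(\mathbf{T}_{\Sigma}(Y))$ is a Boolean algebra homomorphism.

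First I would upgrade the subdirect embedding of Proposition~\ref{Rec SubdirectProd sRec} from a mere injective mapping to a Boolean algebra embedding. For this it suffices to observe that the Boolean operations on $\mathrm{Rec}(\mathbf{T}_{\Sigma}(X))$ are defined componentwise (as in the general definition of $\cup$, $\cap$, $-$ on $S$-sorted sets), and the Boolean operations on the product $\prod_{s\in S}\mathbf{Rec}_{s}(\mathbf{T}_{\Sigma}(X))$ are by definition componentwise as well; hence the canonical map $L\mapsto (L_{s})_{s\in S}$ preserves $\vee$, $\wedge$, $\neg$, $\bot$ and $\top$ on the nose, and the required ``subdirectness'' (surjectivity of each projection) is exactly the content of the last paragraph of the proof of Proposition~\ref{Rec SubdirectProd sRec}, which exhibits, for every $K\in\mathrm{Rec}_{s}(\mathbf{T}_{\Sigma}(X))$, the witness $\delta^{s,K}\in\mathrm{Rec}(\mathbf{T}_{\Sigma}(X))$ projecting onto $K$ in the $s$-th coordinate.

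Next, for the compatibility assertion, I would spell out that, for an $S$-sorted mapping $f\colon X\mor Y$ and an $S$-sorted set $L\subseteq \mathrm{T}_{\Sigma}(X)$, the direct image $f^{@}[L]$ is, by the very definition of $f$-direct image formation, the $S$-sorted set $(f^{@}_{s}[L_{s}])_{s\in S}$. Therefore both compositions in the square
\begin{equation*}
\xymatrix@C=40pt{
\mathbf{Rec}(\mathbf{T}_{\Sigma}(X)) \ar[r] \ar[d]_{f^{@}[\cdot]} & \prod_{s\in S}\mathbf{Rec}_{s}(\mathbf{T}_{\Sigma}(X)) \ar[d]^{\prod_{s\in S}f^{@}_{s}[\cdot]} \\
\mathbf{Rec}(\mathbf{T}_{\Sigma}(Y)) \ar[r] & \prod_{s\in S}\mathbf{Rec}_{s}(\mathbf{T}_{\Sigma}(Y))
}
\end{equation*}
send $L = (L_{s})_{s\in S}$ to the same tuple $(f^{@}_{s}[L_{s}])_{s\in S}$, so the square commutes. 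The fact that the vertical arrows take values in the announced codomains is guaranteed, for the right-hand one, by the preceding proposition (each $f^{@}_{s}[\cdot]$ restricts to a Boolean algebra homomorphism between the recognizable subalgebras) and, for the left-hand one, by part~(4) of Proposition~\ref{Rec is Bool} applied to the homomorphism $f^{@}\colon \mathbf{T}_{\Sigma}(X)\mor\mathbf{T}_{\Sigma}(Y)$ combined with Proposition~\ref{PRecLH}, since $f^{@}$ is a linear tree homomorphism as observed right before the corollary.

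There is really no serious obstacle here: the statement is a functorial repackaging of results already proved, and the only mild care needed is to verify (as indicated in the first paragraph of the plan) that the set-theoretic subdirect embedding of Proposition~\ref{Rec SubdirectProd sRec} automatically respects the Boolean algebra structure, which is immediate from the componentwise definition of both the sorted Boolean operations and the product Boolean operations. Hence the proof will be short, essentially a diagram chase together with a one-line observation about componentwise operations.
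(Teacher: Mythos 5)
Your proposal is correct and follows essentially the same route as the paper, which offers no separate proof of this corollary beyond deriving it from the preceding proposition (Boolean structure on each $\mathbf{Rec}_{s}(\mathbf{T}_{\Sigma}(X))$ and the homomorphism property of $(f^{@})^{\wp}$) together with Proposition~\ref{Rec SubdirectProd sRec}; your explicit verification that the componentwise Boolean operations make the subdirect embedding a Boolean homomorphism, and that $f^{@}[L]=(f^{@}_{s}[L_{s}])_{s\in S}$ makes the square commute, is exactly the intended content. One small slip: part~(4) of Proposition~\ref{Rec is Bool} concerns \emph{inverse} images, so it is not the right citation for showing $f^{@}[\cdot]$ lands in $\mathrm{Rec}(\mathbf{T}_{\Sigma}(Y))$ --- but the correct justification (each $f^{@}_{s}[L_{s}]$ is $s$-recognizable by the preceding proposition, hence $f^{@}[L]$ is recognizable by Proposition~\ref{Rec SubdirectProd sRec} under the standing finiteness assumption on $S$) is already contained in what you wrote.
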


\begin{remark}
From the results stated above it seems natural to consider categories of the type $\mathbf{BoolOp}^{S}$, where $\mathbf{BoolOp}$ is a category of Boolean algebras with operators, and to investigate a duality theory for them. Toward this end it is perhaps worth pointing out that, for a finite set of sorts $S$, since the discrete category determined by $S$ is loopless (it has no non-identity endomorphisms), $\mathbf{Pro}(\mathbf{Set}_{\mathrm{f}}^{S})$ is equivalent to $(\mathbf{Pro}(\mathbf{Set}_{\mathrm{f}}))^{S}$. But $(\mathbf{Pro}(\mathbf{Set}_{\mathrm{f}}))^{S}$ and $\mathbf{Stone}^{S}$ are equivalent, because  $\mathbf{Pro}(\mathbf{Set}_{\mathrm{f}})$ and $\mathbf{Stone}$ are equivalent. Thus $\mathbf{Pro}(\mathbf{Set}_{\mathrm{f}}^{S})$ and $\mathbf{Stone}^{S}$ are equivalent. Therefore $(\mathbf{Bool}^{S})^{\mathrm{op}}$ and $\mathbf{Stone}^{S}$ are equivalent.
\end{remark}

\subsection{Derivors and recognizability}

Thatcher in~\cite{Tha69}, on p.~132, wrote: ``Generally, the term `transformation' will mean any map from $\mathrm{T}_{\Sigma}$ [the free algebra $\mathrm{T}_{\Sigma}(\varnothing)$, \emph{we add}] into $\mathrm{T}_{\Omega}$ [the free algebra $\mathrm{T}_{\Omega}(\varnothing)$, \emph{we add}] where $\Sigma = (\Sigma,r)$ and $\Omega = (\Omega,s)$ are ranked alphabets. We will, in the sequel, define several types of transformations which are of particular interest. In formulating these definitions, there were three principal considerations or objectives: (1) It was intended that the transformations be, in some sense, natural in that they would fit the algebraic framework within which we are working; (2) We should be able to generalize the conventional concept of finite state mapping \ldots to the case for trees; and (3) It was hoped that the end result would be a unified approach taking into account various formulations of `transformation,' `transduction,' and `translation' which have appeared in the literature.''


Following Thatcher's first dictum, in this subsection, after defining the variety of Hall algebras, we introduce the notion of derivor between many-sorted signatures. This will allow us, using the homomorphisms between Hall algebras, to obtain $\mathbf{Sig}_{\mathfrak{d}}$, the category of many-sorted signatures and derivors---we recall that we were unable to show that hyperderivors are the morphisms of a category with objects ordered pairs $(\mathbf{\Sigma},X)$, where $\mathbf{\Sigma} = (S,\Sigma)$ is a many-sorted signature and $X$ an $S$-sorted set. Next, after defining a suitable contravariant functor from $\mathbf{Sig}_{\mathfrak{d}}$ to $\mathbf{Cat}$, the category of $\boldsymbol{\mathcal{U}}$-locally small categories and functors, we obtain, by means of the Grothendieck construction, $\mathbf{Alg}_{\mathfrak{d}}$, the category with objects the ordered pairs $((S,\Sigma),(A,F))$, with $(S,\Sigma)$ a many-sorted signature and $(A,F)$ a $\Sigma$-algebra, and morphisms from $((S,\Sigma),(A,F))$ to  $((T,\Lambda),(B,G))$ the ordered pairs $((\varphi,d),f)$, with $(\varphi,d)$ a derivor from $(S,\Sigma)$ to $(T,\Lambda)$ and $f$ a homomorphism of $\Sigma$-algebras from $(A,F)$ to $(B_{\varphi},G^{(\varphi,d)})$, a canonically derived algebra of $(B,G)$. Finally, after showing that every derivor is a hyperderivor, we state the counterparts of Propositions~\ref{PRecH} and~\ref{PRecLH} for suitable morphisms of $\mathbf{Alg}_{\mathfrak{d}}$.

Before defining the notion of Hall algebra, we recall that (1) a finitary specification is an ordered triple $(S,\Sigma,\mathcal{E})$, where $S$ is a set of sorts, $\Sigma$ an $S$-sorted signature, and $\mathcal{E}\subseteq\mathrm{Eq}_{V^{S}}(\Sigma)=
(\mathrm{T}_{\Sigma}(X)_{s}^{2})_{(X,s)\in \mathrm{Sub}_{\mathrm{f}}(V^{S})\times S}$, i.e., a set of finitary  $\Sigma$-equations, where $V^{S}$, the $S$-sorted set of the variables, is a fixed $S$-countably infinite $S$-sorted set; and that (2) a $\Sigma$-algebra $\mathbf{A}$ is a $(S,\Sigma,\mathcal{E})$-algebra if $\mathbf{A}\models^{\Sigma} \mathcal{E}$, i.e., if, for every $(X,s)\in \mathrm{Sub}_{\mathrm{f}}(V^{S})\times S$ and every $(P,Q)\in \mathcal{E}_{X,s}$, $\mathbf{A}$ is a model of $(P,Q)$, in symbols $\mathbf{A}\models^{\Sigma}_{X,s} (P,Q)$, which in turn means that, for every $a\in \mathrm{Hom}(X,A)$, $a^{\sharp}_{s}(P) = a^{\sharp}_{s}(Q)$.

\begin{definition}
Let $S$ be a set of sorts and $V^{\mathrm{H}_{S}}$ the $S^{\star}\times S$-sorted set of variables $(V_{w,s})_{(w,s)\in
S^{\star}\times S}$ where $V_{w,s} = \{\,v^{w,s}_{n}\mid n\in \mathbb{N}\,\}$, for every $(w,s)\in S^{\star}\times S$.  A
\emph{Hall algebra for} $S$ is a $\mathrm{H}_{S} =
(S^{\star}\times S,\Sigma^{\mathrm{H}_{S}},\mathcal{E}^{\mathrm{H}_{S}})$-algebra, where $\Sigma^{\mathrm{H}_{S}}$ is the $S^{\star}\times S$-sorted signature, i.e., the $(S^{\star}\times S)^{\star}\times (S^{\star}\times S)$-sorted set, defined as follows:
\begin{enumerate}
\item[$\mathrm{HS}_{1}$.] For every $w\in S^{\star}$ and $i\in\bb{w}$,
      $$
      \pi^{w}_{i}\colon\lambda\mor(w,w_{i}),
      $$
      where $\bb{w}$ is the \emph{length} of the word $w$ and $\lambda$ the
      \emph{empty word} in the underlying set of the free monoid on
      $S^{\star}\times S$.

\item[$\mathrm{HS}_{2}$.] For every $u$, $w\in S^{\star}$ and $s\in S$,
      $$
      \xi_{u,w,s}\colon ((w,s),(u,w_{0}),\ldots,(u,w_{\bb{w}-1}))\mor(u,s);
      $$
\end{enumerate}
while $\mathcal{E}^{\mathrm{H}_{S}}$ is the sub-$(S^{\star}\times
S)^{\star}\times (S^{\star}\times S)$-sorted set of
$\mathrm{Eq}(\Sigma^{\mathrm{H}_{S}})$, where
$$
\mathrm{Eq}(\Sigma^{\mathrm{H}_{S}}) =
(\mathrm{T}_{\Sigma^{\mathrm{H}_{S}}}(\vs{\ol{w}})_{(u,s)}^{2})_{(\ol{w},(u,s))\in
(S^{\star}\times S)^{\star}\times (S^{\star}\times S)},
$$
defined as follows:
\begin{enumerate}
\item[$\mathrm{H}_{1}$.] \emph{Projection}.
      For every $u$, $w\in S^{\star}$ and $i\in\bb{w}$, the equation

      $$
      \xi_{u,w,w_{i}}(\pi^{w}_{i},v^{u,w_{0}}_{0},\ldots,
           v^{u,w_{\bb{w}-1}}_{\bb{w}-1})=v^{u,w_{i}}_{i}
      $$
      of type
      $(((u,w_{0}),\ldots,(u,w_{\bb{w}-1})),(u,w_{i})).$
\item[$\mathrm{H}_{2}$.] \emph{Identity}.
      For every $u\in S^{\star}$ and $j\in \bb{u}$, the equation
      $$
      \xi_{u,u,u_{j}}(v^{u,u_{j}}_{j},\pi^{u}_{0},\ldots,\pi^{u}_{\bb{u}-1})=
           v^{u,u_{j}}_{j}
      $$
      of type
      $(((u,u_{j})),(u,u_{j})).$
\item[$\mathrm{H}_{3}$.] \emph{Associativity}.
      For every $u$, $v$, $w\in S^{\star}$ and $s\in S$, the
      equation
     \begin{align*}
      \xi_{u,v,s}(
      \xi_{v,w,s}(v^{w,s}_{0},v^{v,w_{0}}_{1},\ldots,
          v^{v,w_{\bb{w}-1}}_{\bb{w}}),
          v^{u,v_{0}}_{\bb{w}+1},
          \ldots,v^{u,v_{\bb{v}-1}}_{\bb{w}+\bb{v}}) = \\
      \begin{aligned}
       \xi_{u,w,s}(v^{w,s}_{0},
          &\xi_{u,v,w_{0}}(v^{v,w_{0}}_{1},v^{u,v_{0}}_{\bb{w}+1},
                   \ldots,v^{u,v_{\bb{v}-1}}_{\bb{w}+\bb{v}}),
          \ldots, \\
       &\xi_{u,v,w_{\bb{w}-1}}(v^{v,w_{\bb{w}-1}}_{\bb{w}},
                   v^{u,v_{0}}_{\bb{w}+1},
       \ldots,v^{u,v_{\bb{v}-1}}_{\bb{w}+\bb{v}}))
      \end{aligned}
     \end{align*}
     of type
     $(((w,s),(v,w_{0}),\ldots,(v,w_{\bb{w}-1}),
            (u,v_{0}),\ldots,(u,v_{\bb{v}-1})),(u,s)).$
\end{enumerate}

We call the formal constants $\pi^{w}_{i}$ \emph{projections}, and the formal operations $\xi_{u,w,s}$ \emph{substitution operators}. Furthermore, we will denote by $\mathbf{Alg}(\mathrm{H}_{S})$ the category of Hall algebras for $S$ and homomorphisms between Hall algebras.  Since $\mathbf{Alg}(\mathrm{H}_{S})$ is a variety, the forgetful functor $\mathrm{G}_{\mathrm{H}_{S}}$ from $\mathbf{Alg}(\mathrm{H}_{S})$ to $\mathbf{Set}^{S^{\star}\times S}$ has a left adjoint $\mathbf{T}_{\mathrm{H}_{S}}$, situation denoted by $\mathbf{T}_{\mathrm{H}_{S}}\dashv \mathrm{G}_{\mathrm{H}_{S}}$, or diagrammatically by
$$
\xymatrix@=5pc{ \mathbf{Alg}(\mathrm{H}_{S})
\ar@<1.5ex>[r]^{\mathrm{G}_{\mathrm{H}_{S}}} \ar@{}[r]|{\uadj} &
\mathbf{Set}^{S^{\star}\times S}
\ar@<1.5ex>[l]^{\mathbf{T}_{\mathrm{H}_{S}}} }
$$
which assigns to an $S^{\star}\times S$-sorted set $\Sigma$ the corresponding free Hall algebra
$\mathbf{T}_{\mathrm{H}_{S}}(\Sigma)$.
\end{definition}

\begin{remark}
From $\mathrm{H}_{3}$, for $w=\lambda$, the empty word on $S$, we get the invariance of constant functions axiom in~\cite{gm85}: For every $u$, $v\in S^{\star}$ and $s\in S$, we have the equation
$$
\xi_{u,v,s}(\xi_{v,\lambda,s}(v^{\lambda,s}_{0}),v^{u,v_{0}}_{1},
\ldots,v^{u,v_{\bb{v}-1}}_{\bb{v}})=
\xi_{u,\lambda,s}(v^{\lambda,s}_{0})
$$
of type $(((\lambda,s),(u,v_{0}),\ldots,(u,v_{\bb{v}-1})),(u,s))$.
\end{remark}

For every $S$-sorted set $A$, $\mathrm{Op}_{\mathrm{H}_{S}}(A) = (\mathrm{Hom}(A_{w},A_{s}))_{(w,s)\in S^{\star}\times S}$, the $S^{\star}\times S$-sorted set of operation for $A$, is naturally equipped with a structure of Hall algebra, as stated in the following proposition, if we realize the projections as the true projections and the substitution operators as the generalized composition of mappings.

\begin{proposition}
Let $A$ be an $S$-sorted set and $\mathbf{Op}_{\mathrm{H}_{S}}(A)$ the $\Sigma^{\mathrm{H}_{S}}$-algebra with underlying many-sorted set $\mathrm{Op}_{\mathrm{H}_{S}}(A)$ and algebraic structure defined as follows:
\begin{enumerate}
\item For every $w\in S^{\star}$ and $i\in\bb{w}$,
      $(\pi^{w}_{i})^{\mathbf{Op}_{\mathrm{H}_{S}}(A)} =
      \mathrm{pr}^{A}_{w,i}\colon A_{w}\mor A_{w_{i}}$.

\item For every $u,w\in S^{\star}$ and $s\in S$,
      $\xi_{u,w,s}^{\mathbf{Op}_{\mathrm{H}_{S}}(A)}$ is defined, for every $f\in
      A _{s}^{ A_{w} }$ and $g\in A^{A_{u}}_{w}$, as
      $\xi_{u,w,s}^{\mathbf{Op}_{\mathrm{H}_{S}}(A)}(f,g_{0},\ldots,g_{\bb{w}-1})
      = f\circ\langle g_{i}\rangle_{i\in\bb{w} }$, where $\langle
      g_{i}\rangle_{i\in\bb{w}}$ is the unique mapping from $A_{u}$ to
      $A_{w}$ such that, for every $i\in \bb{w}$, we have that
      $$
      \mathrm{pr}^{A}_{w,i}\circ\langle g_{i}\rangle_{i\in\bb{w}} = g_{i}.
      $$
\end{enumerate}
Then $\mathbf{Op}_{\mathrm{H}_{S}}(A)$ is a Hall algebra, the \emph{Hall algebra for} $(S,A)$.
\end{proposition}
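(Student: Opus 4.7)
The proof amounts to a direct verification of the three Hall algebra axioms $\mathrm{H}_{1}$, $\mathrm{H}_{2}$, $\mathrm{H}_{3}$ in the concrete algebra $\mathbf{Op}_{\mathrm{H}_{S}}(A)$. Since the projections $\pi^{w}_{i}$ are interpreted as the genuine categorical projections $\mathrm{pr}^{A}_{w,i}\colon A_{w}\mor A_{w_{i}}$ and the substitution operators $\xi_{u,w,s}$ as the generalized composition $(f,(g_{i})_{i\in\bb{w}})\mapsto f\circ\langle g_{i}\rangle_{i\in\bb{w}}$, all three equations reduce to identities that follow from the defining universal property of the product $(A_{w},(\mathrm{pr}^{A}_{w,i})_{i\in\bb{w}})$ together with the associativity of composition in $\mathbf{Set}$.

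For $\mathrm{H}_{1}$ (Projection), I would evaluate both sides at an arbitrary interpretation of the variables. Given $(g_{j})_{j\in\bb{w}}$ with $g_{j}\in A^{A_{u}}_{w_{j}}$, the interpretation of the left hand side is $\mathrm{pr}^{A}_{w,i}\circ\langle g_{j}\rangle_{j\in\bb{w}}$, which by the defining equation of the tuple mapping equals $g_{i}$, the interpretation of the right hand side. For $\mathrm{H}_{2}$ (Identity), given $f\in A_{u_{j}}^{A_{u}}$, the interpretation of the left hand side is $f\circ\langle\mathrm{pr}^{A}_{u,i}\rangle_{i\in\bb{u}}$, and since $\langle\mathrm{pr}^{A}_{u,i}\rangle_{i\in\bb{u}}=\mathrm{id}_{A_{u}}$ by the uniqueness clause of the universal property of the product, this evaluates to $f$, the interpretation of the right hand side.

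The only axiom worth spelling out in detail is $\mathrm{H}_{3}$ (Associativity). Fixing interpretations $f\in A_{s}^{A_{w}}$, $(g_{i})_{i\in\bb{w}}$ with $g_{i}\in A_{w_{i}}^{A_{v}}$, and $(h_{j})_{j\in\bb{v}}$ with $h_{j}\in A_{v_{j}}^{A_{u}}$, the left hand side becomes $(f\circ\langle g_{i}\rangle_{i\in\bb{w}})\circ\langle h_{j}\rangle_{j\in\bb{v}}$, while the right hand side becomes $f\circ\langle g_{i}\circ\langle h_{j}\rangle_{j\in\bb{v}}\rangle_{i\in\bb{w}}$. The equality of both expressions is reduced, via associativity of composition in $\mathbf{Set}$, to the auxiliary identity $\langle g_{i}\circ\langle h_{j}\rangle_{j\in\bb{v}}\rangle_{i\in\bb{w}}=\langle g_{i}\rangle_{i\in\bb{w}}\circ\langle h_{j}\rangle_{j\in\bb{v}}$, which in turn follows from the uniqueness part of the universal property of the product: both sides are mappings from $A_{u}$ to $A_{w}$ whose composition with $\mathrm{pr}^{A}_{w,i}$ is $g_{i}\circ\langle h_{j}\rangle_{j\in\bb{v}}$, for every $i\in\bb{w}$.

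No genuine obstacle is expected; the verification is routine once one rewrites each axiom in the language of composition and tupling. Conceptually, the result expresses that $(A_{w})_{w\in S^{\star}}$ equipped with the projections $\mathrm{pr}^{A}_{w,i}$ and generalized composition is precisely the many-sorted clone of all finitary operations on $A$, and the three Hall axioms are nothing but the equational axiomatization of such a clone; so, in a certain sense, the proposition is just the soundness direction of that axiomatization.
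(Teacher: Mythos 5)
Your verification is correct: the paper states this proposition without proof, and your argument --- interpreting the axioms and reducing each of $\mathrm{H}_{1}$, $\mathrm{H}_{2}$, $\mathrm{H}_{3}$ to the universal property of the product $(A_{w},(\mathrm{pr}^{A}_{w,i})_{i\in\bb{w}})$ together with associativity of composition, via the auxiliary identity $\langle g_{i}\circ\langle h_{j}\rangle_{j\in\bb{v}}\rangle_{i\in\bb{w}}=\langle g_{i}\rangle_{i\in\bb{w}}\circ\langle h_{j}\rangle_{j\in\bb{v}}$ --- is exactly the routine check the authors leave to the reader. Your closing remark also matches the paper's own subsequent observation that the closed sets of $\mathbf{Op}_{\mathrm{H}_{S}}(A)$ are precisely the clones of many-sorted operations on $A$.
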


\begin{remark}
The closed sets of the Hall algebra $\mathbf{Op}_{\mathrm{H}_{S}}(A)$ for $(S,A)$ are precisely the clones of (many-sorted) operations on the $S$-sorted set $A$. On the other hand, every $\Sigma$-algebra $\mathbf{A}$ has associated a Hall algebra. In fact, it suffices to consider $\mathbf{Op}_{\mathrm{H}_{S}}(A)$, denoted by $\mathbf{Op}_{\mathrm{H}_{S}}(\mathbf{A})$. Moreover, the finitary term operations on $\mathbf{A}$ and the finitary algebraic operations on $\mathbf{A}$ are subalgebras of the Hall algebra $\mathbf{Op}_{\mathrm{H}_{S}}(\mathbf{A})$.
\end{remark}

For every $S$-sorted signature $\Sigma$, $\mathrm{Ter}_{\mathrm{H}_{S}}(\Sigma) = (\mathrm{T}_{\Sigma}(\vs{w})_{s})_{(w,s)\in S^{\star}\times S}$ is also e\-quipped with a structure of Hall algebra that formalizes the concept of substitution as stated in the following proposition.

\begin{proposition}
Let $\Sigma$ be an $S$-sorted signature and $\mathbf{Ter}_{\mathrm{H}_{S}}(\Sigma)$ the $\Sigma^{\mathrm{H}_{S}}$-algebra
with underlying many-sorted set $\mathrm{Ter}_{\mathrm{H}_{S}}(\Sigma)$ and algebraic structure defined as follows:
\begin{enumerate}
\item For every $w\in S^{\star}$ and $i\in\bb{w}$,
      $(\pi^{w}_{i})^{\mathbf{Ter}_{\mathrm{H}_{S}}(\Sigma)}$
      is the image under $\eta_{\vs{w}, w_{i}}$ of
      the variable $v_{i}^{w_{i}}$, where $\eta_{\vs{w}} =
      (\eta_{\vs{w},s})_{s\in S}$ is the canonical embedding
      of $\vs{w}$ into $\mathrm{T}_{\Sigma}(\vs{w})$. Sometimes,
      to abbreviate, we will write $\pi^{w}_{i}$ instead of
      $(\pi^{w}_{i})^{\mathbf{Ter}_{\mathrm{H}_{S}}(\Sigma)}$.

\item For every $u,w\in S^{\star}$ and $s\in S$,
      $\xi_{u,w,s}^{\mathbf{Ter}_{\mathrm{H}_{S}}(\Sigma)}$
      is the mapping
      $$\xi_{u,w,s}^{\mathbf{Ter}_{\mathrm{H}_{S}}(\Sigma)}\nfunction
      {\mathrm{T}_{\Sigma}(\vs{w})_{s} \times
      \mathrm{T}_{\Sigma}(\vs{u})_{w_{0}} \times \cdots \times
      \mathrm{T}_{\Sigma}(\vs{u})_{w_{\bb{w}-1}}}
      {\mathrm{T}_{\Sigma}(\vs{u})_{s}}
      {(P,(Q_{i})_{i\in\bb{w}})}
      {\left(\left(\!\begin{smallmatrix}v^{w_{i}}_{i}\\Q_{i}\end{smallmatrix}\!\right)_{i\in \bb{w}}\right)^{\sharp}_{s}(P)}
     $$
     where, for $\left(\!\begin{smallmatrix}v^{w_{i}}_{i}\\Q_{i}\end{smallmatrix}\!\right)_{i\in \bb{w}}$, the $S$-sorted mapping from $\vs{w}$
     to $\mathrm{T}_{\Sigma}(\vs{u})$ canonically associated to the family
     $(Q_{i})_{i\in\bb{w}}$, $\left(\left(\!\begin{smallmatrix}v^{w_{i}}_{i}\\Q_{i}\end{smallmatrix}\!\right)_{i\in \bb{w}}\right)^{\sharp}$, also denoted by $\mathcal{Q}^{\sharp}$, is the
     unique homomorphism from $\mathbf{T}_{\Sigma}(\vs{w})$ into
     $\mathbf{T}_{\Sigma}(\vs{u})$ such that
     $$
     \left(\left(\!\begin{smallmatrix}v^{w_{i}}_{i}\\Q_{i}\end{smallmatrix}\!\right)_{i\in \bb{w}}\right)^{\sharp}\circ \eta_{\vs{w}} = \left(\!\begin{smallmatrix}v^{w_{i}}_{i}\\Q_{i}\end{smallmatrix}\!\right)_{i\in \bb{w}}.
     $$
     Sometimes, to abbreviate, we will write $\xi_{u,w,s}$ instead of
     $\xi_{u,w,s}^{\mathbf{Ter}_{\mathrm{H}_{S}}(\Sigma)}$.
\end{enumerate}

Then $\mathbf{Ter}_{\mathrm{H}_{S}}(\Sigma)$ is a Hall algebra, the \emph{Hall algebra for} $(S,\Sigma)$.
\end{proposition}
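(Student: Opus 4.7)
The task is to verify that the operations defined on $\mathrm{Ter}_{\mathrm{H}_{S}}(\Sigma)$ satisfy the three axiom schemes $\mathrm{H}_{1}$ (Projection), $\mathrm{H}_{2}$ (Identity), and $\mathrm{H}_{3}$ (Associativity) defining the variety of Hall algebras for $S$. In each case the verification reduces, by unwinding the definition of $\xi_{u,w,s}^{\mathbf{Ter}_{\mathrm{H}_{S}}(\Sigma)}$, to a statement about the unique homomorphic extension $\mathcal{Q}^{\sharp}\colon \mathbf{T}_{\Sigma}(\vs{w})\mor \mathbf{T}_{\Sigma}(\vs{u})$ of an $S$-sorted mapping $\mathcal{Q}\colon \vs{w}\mor \mathrm{T}_{\Sigma}(\vs{u})$. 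The plan is therefore to translate each equation into a statement of the form ``two homomorphisms from a free $\Sigma$-algebra agree'' and to conclude by the universal property of $\mathbf{T}_{\Sigma}(\vs{w})$.

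For $\mathrm{H}_{1}$, given $u$, $w\in S^{\star}$ and $i\in\bb{w}$, I would evaluate both sides in $\mathbf{Ter}_{\mathrm{H}_{S}}(\Sigma)$ at an arbitrary family $(Q_{i})_{i\in\bb{w}}\in \prod_{i\in\bb{w}}\mathrm{T}_{\Sigma}(\vs{u})_{w_{i}}$. The left-hand side becomes $\mathcal{Q}^{\sharp}_{w_{i}}(\eta_{\vs{w},w_{i}}(v^{w_{i}}_{i}))$; since $\mathcal{Q}^{\sharp}\comp \eta_{\vs{w}}=\mathcal{Q}$, this equals $Q_{i}$, which is the right-hand side. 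For $\mathrm{H}_{2}$, the key observation is that the substitution associated to the family $(\pi^{u}_{j})_{j\in\bb{u}} = (\eta_{\vs{u},u_{j}}(v^{u_{j}}_{j}))_{j\in\bb{u}}$ is precisely $\eta_{\vs{u}}$, so its unique homomorphic extension $\eta_{\vs{u}}^{\sharp}$ coincides, by uniqueness, with $\mathrm{id}_{\mathbf{T}_{\Sigma}(\vs{u})}$; evaluating on any $P\in \mathrm{T}_{\Sigma}(\vs{u})_{u_{j}}$ therefore yields $P$ itself.

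For $\mathrm{H}_{3}$, let $P\in \mathrm{T}_{\Sigma}(\vs{w})_{s}$, $(R_{i})_{i\in\bb{w}}\in \prod_{i\in\bb{w}}\mathrm{T}_{\Sigma}(\vs{v})_{w_{i}}$, and $(T_{j})_{j\in\bb{v}}\in \prod_{j\in\bb{v}}\mathrm{T}_{\Sigma}(\vs{u})_{v_{j}}$ denote the values assigned to the variables $v^{w,s}_{0}$, $v^{v,w_{i}}_{i+1}$, and $v^{u,v_{j}}_{\bb{w}+1+j}$, respectively. Writing $\mathcal{R}\colon \vs{w}\mor \mathrm{T}_{\Sigma}(\vs{v})$ and $\mathcal{T}\colon \vs{v}\mor \mathrm{T}_{\Sigma}(\vs{u})$ for the associated $S$-sorted mappings, the left-hand side evaluates to $\mathcal{T}^{\sharp}_{s}(\mathcal{R}^{\sharp}_{s}(P))$, whereas the right-hand side evaluates to $(\mathcal{T}^{\sharp}\comp \mathcal{R})^{\sharp}_{s}(P)$. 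The core step is then to observe that $\mathcal{T}^{\sharp}\comp \mathcal{R}^{\sharp}\colon \mathbf{T}_{\Sigma}(\vs{w})\mor \mathbf{T}_{\Sigma}(\vs{u})$ is a homomorphism whose restriction along $\eta_{\vs{w}}$ equals $\mathcal{T}^{\sharp}\comp \mathcal{R}$; by uniqueness of the universal extension, $\mathcal{T}^{\sharp}\comp \mathcal{R}^{\sharp} = (\mathcal{T}^{\sharp}\comp \mathcal{R})^{\sharp}$.

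I expect the only genuine difficulty to be of a bookkeeping nature: matching the indices in the formal equation $\mathrm{H}_{3}$, with its variables $v^{w,s}_{0}$, $v^{v,w_{i}}_{1+i}$, $v^{u,v_{j}}_{\bb{w}+1+j}$, to the correct arguments of the iterated substitution operators, and making sure that both sides indeed output terms of type $(u,s)$. Once the dictionary between formal variables and actual terms is fixed, the arithmetic of indices is routine and all three axioms collapse to an application of the universal property of $(\eta_{\vs{w}},\mathbf{T}_{\Sigma}(\vs{w}))$, so no substantive algebraic obstruction remains.
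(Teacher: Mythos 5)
Your verification is correct. The paper states this proposition without any proof at all, so there is nothing to compare it against line by line; your argument supplies the missing verification and does so in the natural way. Each of the three checks is sound: $\mathrm{H}_{1}$ follows from $\mathcal{Q}^{\sharp}\circ\eta_{\vs{w}}=\mathcal{Q}$ evaluated at $v^{w_{i}}_{i}$; $\mathrm{H}_{2}$ follows from the observation that the family $((\pi^{u}_{j})^{\mathbf{Ter}_{\mathrm{H}_{S}}(\Sigma)})_{j\in\bb{u}}$ corresponds to $\eta_{\vs{u}}$ itself, whose extension is $\mathrm{id}_{\mathbf{T}_{\Sigma}(\vs{u})}$ by uniqueness; and $\mathrm{H}_{3}$ reduces to the identity $\mathcal{T}^{\sharp}\circ\mathcal{R}^{\sharp}=(\mathcal{T}^{\sharp}\circ\mathcal{R})^{\sharp}$, which is again an instance of the uniqueness clause of the universal property of $\mathbf{T}_{\Sigma}(\vs{w})$. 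It is worth noting that this last identity is precisely the one the authors themselves invoke (in the form $(p^{u})^{\sharp}\circ\mathcal{Q}^{\sharp}=((p^{u})^{\sharp}\circ\mathcal{Q})^{\sharp}$) in their proof that $\mathbf{Ter}_{\mathrm{H}_{S}}(\Sigma)$ is isomorphic to the free Hall algebra $\mathbf{T}_{\mathrm{H}_{S}}(\Sigma)$, so your argument is consistent with the machinery the paper does make explicit. Your index bookkeeping for $\mathrm{H}_{3}$ (assigning $P$, $(R_{i})_{i\in\bb{w}}$, $(T_{j})_{j\in\bb{v}}$ to $v^{w,s}_{0}$, $v^{v,w_{i}}_{1+i}$, $v^{u,v_{j}}_{\bb{w}+1+j}$) matches the types declared in the axiom, and both sides do land in $\mathrm{T}_{\Sigma}(\vs{u})_{s}$ as required. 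No gap.
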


\begin{remark}
For every $\Sigma$-algebra $\mathbf{A}$ there exists a homomorphism from the Hall algebra $\mathbf{Ter}_{\mathrm{H}_{S}}(\Sigma)$ to the Hall algebra $\mathbf{Op}_{\mathrm{H}_{S}}(\mathbf{A})$ and its image is the Hall subalgebra of the finitary term operations on $\mathbf{A}$.
\end{remark}

\begin{proposition}
Let $\Sigma$ be an $S$-sorted signature. Then $\mathbf{Ter}_{\mathrm{H}_{S}}(\Sigma)^{\wp}$ is a Hall algebra.
\end{proposition}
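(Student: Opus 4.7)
The plan is to verify, after making explicit the $\Sigma^{\mathrm{H}_S}$-algebra structure that $\mathbf{Ter}_{\mathrm{H}_S}(\Sigma)^{\wp}$ carries by Proposition~\ref{subsetalg}, that each of the three defining Hall axioms $\mathrm{H}_1$, $\mathrm{H}_2$, $\mathrm{H}_3$ is satisfied; this will exhibit $\mathbf{Ter}_{\mathrm{H}_S}(\Sigma)^{\wp}$ as an $\mathrm{H}_S$-algebra and so, by definition, as a Hall algebra.

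First I would make the structure explicit. By Proposition~\ref{subsetalg}, the constant $\pi^{w}_{i}$ is realized in $\mathbf{Ter}_{\mathrm{H}_{S}}(\Sigma)^{\wp}$ as the singleton $\{v^{w_{i}}_{i}\}\subseteq \mathrm{T}_{\Sigma}(\vs{w})_{w_{i}}$, and, for every $u,w\in S^{\star}$ and $s\in S$, the operation $\xi_{u,w,s}^{\wp}$ sends a tuple $(\Pi,\Lambda_{0},\dots,\Lambda_{\bb{w}-1})$ with $\Pi\subseteq \mathrm{T}_{\Sigma}(\vs{w})_{s}$ and $\Lambda_{j}\subseteq \mathrm{T}_{\Sigma}(\vs{u})_{w_{j}}$ to the subset $\{\mathcal{Q}^{\sharp}_{s}(P)\mid P\in \Pi,\ Q_{j}\in \Lambda_{j}\text{ for each }j\in\bb{w}\}$ of $\mathrm{T}_{\Sigma}(\vs{u})_{s}$, where $\mathcal{Q}^{\sharp}$ is the unique substitution homomorphism from $\mathbf{T}_{\Sigma}(\vs{w})$ to $\mathbf{T}_{\Sigma}(\vs{u})$ sending $v^{w_{j}}_{j}$ to $Q_{j}$.

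For $\mathrm{H}_{1}$ and $\mathrm{H}_{2}$ the verification is a direct computation: one unfolds both sides of each equation in terms of set-builder notation and then invokes the corresponding axiom pointwise in the base Hall algebra $\mathbf{Ter}_{\mathrm{H}_{S}}(\Sigma)$. In the case of $\mathrm{H}_{1}$, the identity $\xi_{u,w,w_{i}}(v^{w_{i}}_{i},Q_{0},\dots,Q_{\bb{w}-1})=Q_{i}$ reduces the left-hand set to the image of the $i$-th projection from $\prod_{j}\Lambda_{j}$; in the case of $\mathrm{H}_{2}$, the identity $\xi_{u,u,u_{j}}(P,v^{u_{0}}_{0},\dots,v^{u_{\bb{u}-1}}_{\bb{u}-1})=P$ reduces the left-hand set to $\Pi$. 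In both cases the result matches the interpretation of the distinguished variable on the right-hand side.

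The main obstacle is $\mathrm{H}_{3}$ (Associativity), where both sides involve iterated set-valued substitutions. The plan is to exhibit a common parameter set---a family of choices $(P,(Q_{k})_{k\in\bb{w}},(R_{l})_{l\in\bb{v}})$ with $P\in \Pi$, $Q_{k}\in \Lambda_{k}$, and $R_{l}\in \mathrm{M}_{l}$ (for the three input families of subsets)---and a single set-valued map sending such a choice to the doubly substituted term. One then shows that unfolding the LHS yields exactly the image of this map obtained by substituting the $R_{l}$ after the $Q_{k}$, while unfolding the RHS yields the same image with the substitutions performed in the opposite order; the equality of these images is exactly the pointwise content of $\mathrm{H}_{3}$ in $\mathbf{Ter}_{\mathrm{H}_{S}}(\Sigma)$, applied choice-by-choice. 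The delicate part of the argument is the bookkeeping required to match the apparently independent choices produced when the inner operators are first evaluated on the RHS with the single coherent choice that parametrizes the LHS; this matching rests on the fact that the substitutions $\mathcal{R}^{\sharp}_{w_{k}}(Q_{k})$ determine uniquely the corresponding composition in $\mathbf{Ter}_{\mathrm{H}_{S}}(\Sigma)$, so that the two set-valued expressions describe the same collection of terms in $\mathrm{T}_{\Sigma}(\vs{u})_{s}$.
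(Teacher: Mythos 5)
The paper states this proposition without proof, so there is nothing to compare line by line; but your outline cannot be completed as written, because with the subset-algebra structure of Proposition~\ref{subsetalg} the identities $\mathrm{H}_{1}$ and $\mathrm{H}_{3}$ actually fail in $\mathbf{Ter}_{\mathrm{H}_{S}}(\Sigma)^{\wp}$. The obstruction is precisely the ``bookkeeping'' you defer: the right-hand side of $\mathrm{H}_{3}$ is not linear, since each variable $v^{u,v_{l}}_{\lvert w\rvert+1+l}$ occurs $\lvert w\rvert$ times (once inside each inner $\xi_{u,v,w_{k}}$). Unfolding $\xi^{\wp}$, the right-hand side therefore ranges over \emph{independent} choices $R^{(k)}_{l}\in \mathrm{M}_{l}$, one for each $k\in\lvert w\rvert$, whereas the left-hand side only produces terms arising from a single uniform choice $R_{l}$. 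Pointwise $\mathrm{H}_{3}$ in $\mathbf{Ter}_{\mathrm{H}_{S}}(\Sigma)$ gives only the inclusion of the left-hand side in the right-hand side; the reverse inclusion is false. Concretely, take $S=\{s\}$, one binary symbol $f\in\Sigma_{(s,s),s}$, $w=(s,s)$, $u=v=(s)$, $\Pi=\{f(v^{s}_{0},v^{s}_{1})\}$, $\Lambda_{0}=\Lambda_{1}=\{v^{s}_{0}\}$, and $\mathrm{M}_{0}=\{v^{s}_{0},\,f(v^{s}_{0},v^{s}_{0})\}$: the left-hand side of $\mathrm{H}_{3}$ evaluates to $\{f(R,R)\mid R\in \mathrm{M}_{0}\}$, a two-element set, while the right-hand side evaluates to $\{f(T_{0},T_{1})\mid T_{0},T_{1}\in \mathrm{M}_{0}\}$, a four-element set containing $f(v^{s}_{0},f(v^{s}_{0},v^{s}_{0}))$. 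Your verification of $\mathrm{H}_{1}$ has a smaller but real gap of the same kind: if $\Lambda_{j}=\varnothing$ for some $j\neq i$ while $\Lambda_{i}\neq\varnothing$, the left-hand side is $\varnothing$ (the product $\prod_{j}\Lambda_{j}$ is empty) while the right-hand side is $\Lambda_{i}$. Only $\mathrm{H}_{2}$ survives, because there the non-first arguments are interpreted by the nonempty singletons $\{\pi^{u}_{k}\}$ and no choice is involved.

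The statement is therefore not provable for the operations inherited via the functor $(\cdot)^{\wp}$; any correct proof must first replace that structure. If one instead defines $\xi_{u,w,s}$ on subsets by occurrence-wise substitution of languages, i.e.\ $\xi_{u,w,s}(\Pi,\Lambda_{0},\ldots,\Lambda_{\lvert w\rvert-1})=\bigcup_{P\in\Pi}\left(\left(\!\begin{smallmatrix}v^{w_{k}}_{k}\\ \Lambda_{k}\end{smallmatrix}\!\right)_{k\in\lvert w\rvert}\right)^{\sharp}_{s}(P)$ in the sense of the substitution operators of the subsection on substitutions (so that distinct occurrences of the same variable may receive distinct terms), then both sides of $\mathrm{H}_{3}$ are parametrized by the same independent choices, the empty-set pathology in $\mathrm{H}_{1}$ disappears (variables not occurring in $\pi^{w}_{i}$ contribute a factor $\Lambda_{j}^{0}$, a singleton), and a verification along the lines you sketch does go through. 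But that algebra is not $\mathbf{Ter}_{\mathrm{H}_{S}}(\Sigma)^{\wp}$ as defined in Proposition~\ref{subsetalg}, and the discrepancy must be resolved before the proposition can be proved.
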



Our next goal is to prove that, for every $S^{\star}\times S$-sorted set $\Sigma$, $\mathbf{T}_{\mathrm{H}_{S}}(\Sigma)$, the free Hall algebra on $\Sigma$, is isomorphic to $\mathbf{Ter}_{\mathrm{H}_{S}}(\Sigma)$. We remark that the existence of this isomorphism is interesting because it enables us to get a more tractable description of the terms in $\mathbf{T}_{\mathrm{H}_{S}}(\Sigma)$.

To attain the goal just stated we begin by defining, for a Hall algebra $\mathbf{A}$, an $S$-sorted signature $\Sigma$, an
$S^{\star}\times S$-mapping $f\colon\Sigma\mor A$, and a word $u\in S^{\star}$, the concept of derived $\Sigma$-algebra of
$\mathbf{A}$ for $(f,u)$, since it will be used afterwards in the proof of the isomorphism between
$\mathbf{T}_{\mathrm{H}_{S}}(\Sigma)$ and $\mathbf{Ter}_{\mathrm{H}_{S}}(\Sigma)$.

\begin{definition}
Let $\mathbf{A}$ be a Hall algebra and $\Sigma$ an $S$-sorted signature. Then, for every $f\colon\Sigma\mor A$ and $u\in
S^{\star}$, $\mathbf{A}^{f,u}$, the \emph{derived} $\Sigma$-\emph{algebra} \emph{of} $\mathbf{A}$ \emph{for} $(f,u)$,
is the $\Sigma$-algebra with underlying $S$-sorted set $A^{f,u} = (A_{u,s})_{s\in S}$ and algebraic structure $F^{f,u}$, defined, for every $(w,s)\in S^{\star}\times S$, as
$$
F^{f,u}_{w,s} \nfunction
{\Sigma_{w,s}}{\mathrm{Op}_{w}(A^{f,u})_{s}} {\sigma}{\nfunction
    {\prod_{i\in \bb{w}}A_{u,w_{i}}}
    {A_{u,s}}
    {(a_{0},\ldots,a_{\bb{w}-1})}
    {\xi_{u,w,s}^{\mathbf{A}}(f_{(w,s)}(\sigma),a_{0},\ldots,a_{\bb{w}-1})}
}
$$
where $\mathrm{Op}_{w}(A^{f,u})_{s} = A_{u,s}^{\prod_{i\in \bb{w}}A_{u,w_{i}}}$.

Furthermore, we will denote by $p^{u}$ the $S$-sorted mapping from $\vs{u}$ to $A^{f,u}$ defined, for every $s\in S$ and $i\in\bb{u}$, as $p^{u}_{s}(v^{s}_{i}) = (\pi^{u}_{i})^{\mathbf{A}}$, and by $(p^{u})^{\sharp}$ the unique homomorphism from
$\mathbf{T}_{\Sigma}(\vs{u})$ to $\mathbf{A}^{f,u}$ such that $(p^{u})^{\sharp}\circ \eta_{\vs{u}} = p^{u}$.
\end{definition}

\begin{remark}
For a $\Sigma$-algebra $\mathbf{B}=(B,G)$, we have that $G\colon\Sigma\mor\mathrm{Op}_{\mathrm{H}_{S}}(B)$ and
$\mathbf{B}\iso\mathbf{Op}_{\mathrm{H}_{S}}(B)^{G,\lambda}$, where $\lambda$ is the empty word on $S$.  Besides, for every $u\in S^{\star}$, we have that $\mathbf{B}^{B_{u}}$, the direct $B_{u}$-power of $\mathbf{B}$, is isomorphic to $\mathbf{Op}_{\mathrm{H}_{S}}(B)^{G,u}$.
\end{remark}

\begin{lemma}\label{L:aux}
Let $\Sigma$ be an $S$-sorted signature, $\mathbf{A}$ a Hall algebra, $f\colon\Sigma\mor A$ and $u\in S^{\star}$.  Then, for
every $(w,s)\in S^{\star}\times S$, $P\in \mathrm{T}_{\Sigma}(\vs{w})_{s}$ and $a\in \prod_{i\in \bb{w}}A_{u,w_{i}}$, we have that
$$
P^{\mathbf{A}^{f,u}}(a_{0},\ldots,a_{\bb{w}-1}) =
\xi_{u,w,s}^{\mathbf{A}}((p^{w})^{\sharp}_{s}(P),a_{0},\ldots,a_{\bb{w}-1}).
$$
\end{lemma}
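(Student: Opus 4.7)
The plan is to argue by algebraic induction on the structure of $P\in\mathrm{T}_{\Sigma}(\vs{w})_{s}$, exploiting Proposition~\ref{PPAI}. Since $(p^{w})^{\sharp}$ is, by construction, a homomorphism from $\mathbf{T}_{\Sigma}(\vs{w})$ to the derived $\Sigma$-algebra $\mathbf{A}^{f,w}$, the induction will allow the Hall-algebra axioms $\mathrm{H}_{1}$, $\mathrm{H}_{3}$ (and the constant-invariance consequence noted in the remark after the definition) to convert the defining recursion of the term operation $P^{\mathbf{A}^{f,u}}$ into a single substitution operator $\xi_{u,w,s}^{\mathbf{A}}$ applied to $(p^{w})^{\sharp}_{s}(P)$ and the arguments.

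For the variable base case, when $P=v^{w_{i}}_{i}$ for some $i\in\bb{w}$, the left-hand side evaluates, by definition of the term operation of a variable, to $a_{i}$. On the right-hand side, $(p^{w})^{\sharp}_{w_{i}}(v^{w_{i}}_{i})=p^{w}_{w_{i}}(v^{w_{i}}_{i})=(\pi^{w}_{i})^{\mathbf{A}}$, so that one must check
$\xi_{u,w,w_{i}}^{\mathbf{A}}((\pi^{w}_{i})^{\mathbf{A}},a_{0},\ldots,a_{\bb{w}-1})=a_{i}$,
which is precisely the projection axiom $\mathrm{H}_{1}$ interpreted in $\mathbf{A}$. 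For the constant case, when $P=\sigma$ with $\sigma\in\Sigma_{\lambda,s}$, the left-hand side is $\sigma^{\mathbf{A}^{f,u}}=\xi_{u,\lambda,s}^{\mathbf{A}}(f_{\lambda,s}(\sigma))$ while the right-hand side, using that $(p^{w})^{\sharp}_{s}(\sigma)=\sigma^{\mathbf{A}^{f,w}}=\xi_{w,\lambda,s}^{\mathbf{A}}(f_{\lambda,s}(\sigma))$, equals $\xi_{u,w,s}^{\mathbf{A}}(\xi_{w,\lambda,s}^{\mathbf{A}}(f_{\lambda,s}(\sigma)),a_{0},\ldots,a_{\bb{w}-1})$; the constant-invariance identity derived from $\mathrm{H}_{3}$ (taking the inner word to be empty) yields the required equality.

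For the inductive step, assume $P=\sigma((P_{j})_{j\in\bb{w'}})$ with $\sigma\in\Sigma_{w',s}$ and, for every $j\in\bb{w'}$, $P_{j}\in\mathrm{T}_{\Sigma}(\vs{w})_{w'_{j}}$ satisfying the claim. Unfolding the definition of $F^{f,u}$ and applying the induction hypothesis coordinate-wise gives
$$
P^{\mathbf{A}^{f,u}}(a)=\xi_{u,w',s}^{\mathbf{A}}\!\bigl(f_{w',s}(\sigma),\xi_{u,w,w'_{0}}^{\mathbf{A}}((p^{w})^{\sharp}_{w'_{0}}(P_{0}),a),\ldots,\xi_{u,w,w'_{\bb{w'}-1}}^{\mathbf{A}}((p^{w})^{\sharp}_{w'_{\bb{w'}-1}}(P_{\bb{w'}-1}),a)\bigr).
$$
The associativity axiom $\mathrm{H}_{3}$ in $\mathbf{A}$, with the roles of $(u,v,w)$ played by $(u,w,w')$, collapses this nested expression to
$$
\xi_{u,w,s}^{\mathbf{A}}\!\bigl(\xi_{w,w',s}^{\mathbf{A}}(f_{w',s}(\sigma),(p^{w})^{\sharp}_{w'_{0}}(P_{0}),\ldots,(p^{w})^{\sharp}_{w'_{\bb{w'}-1}}(P_{\bb{w'}-1})),a_{0},\ldots,a_{\bb{w}-1}\bigr).
$$
Finally, the homomorphism property of $(p^{w})^{\sharp}$ from $\mathbf{T}_{\Sigma}(\vs{w})$ to $\mathbf{A}^{f,w}$ (unfolding the definition of the operation $\sigma$ in $\mathbf{A}^{f,w}$) rewrites the inner $\xi_{w,w',s}^{\mathbf{A}}(\ldots)$ as $(p^{w})^{\sharp}_{s}(\sigma((P_{j})_{j\in\bb{w'}}))=(p^{w})^{\sharp}_{s}(P)$, yielding the desired identity.

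The routine calculations are the three displayed unfoldings and the single appeal to $\mathrm{H}_{3}$; the main care required is with the bookkeeping of arities (keeping track that $a\in A_{u,\cdot}$ while $(p^{w})^{\sharp}_{\cdot}(P_{j})\in A_{w,w'_{j}}$, so that the axiom instance used is indeed $\mathrm{H}_{3}$ with the correct choice of the three words $(u,w,w')$). Once this matching is done, the axioms do the work, so no obstacle beyond careful indexing is anticipated.
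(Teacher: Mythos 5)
Your proof is correct and follows essentially the same route as the paper's: algebraic induction on $P$, with the variable case settled by $\mathrm{H}_{1}$ and the composite case by unfolding $F^{f,u}$, applying the induction hypothesis, invoking $\mathrm{H}_{3}$ with the words $(u,w,w')$, and recognizing the inner $\xi_{w,w',s}^{\mathbf{A}}$-expression via the homomorphism property of $(p^{w})^{\sharp}$. The only cosmetic difference is that you treat the constant case separately via the constant-invariance consequence of $\mathrm{H}_{3}$, whereas the paper absorbs it into the inductive step as the instance with empty arity.
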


\begin{proof}
By algebraic induction on the complexity of $P$.  If $P$ is a variable $v_{i}^{s}$, with $i\in\bb{w}$, then
\begin{align*}
    v_{i}^{s,\mathbf{A}^{f,u}}(a_{0},\ldots,a_{\bb{w}-1})\ &=
    a^{\sharp}_{w_{i}}(v^{s}_{i})\\
    &=
    a_{i} \\
    &=
    \xi_{u,w,s}^{\mathbf{A}}((\pi^{w}_{i})^{\mathbf{A}},a_{0},\ldots,a_{\bb{w}-1} )
    \quad\text{(by $\mathrm{H}_{1}$)} \\
    &=
    \xi_{u,w,s}^{\mathbf{A}}((p^{w})^{\sharp}_{s}(v^{s}_{i}),a_{0},\ldots,a_{\bb{w}-1} ).
\end{align*}
Let us assume that $P = \sigma(Q_{0},\ldots,Q_{\bb{x}-1})$, with $\sigma\colon x\mor s$ and that, for every $j\in\bb{x}$, $Q_{j}\in \mathrm{T}_{\Sigma}(\vs{w})_{x_{j}}$ fulfills the induction hypothesis. Then we have that
\begin{align*}
    &(\sigma(Q_{0},\ldots,Q_{\bb{x}-1}))^{\mathbf{A}^{f,u}}
      (a_{0},\ldots,a_{\bb{w}-1}) \\
    &=
    \sigma^{\mathbf{A}^{f,u}}
       (
       Q_{0}^{\mathbf{A}^{f,u}}(a_{0},\ldots,a_{\bb{w}-1}),
       \ldots,
       Q_{\bb{x}-1}^{\mathbf{A}^{f,u}}(a_{0},\ldots,a_{\bb{w}-1})
       )\\
    &=
    \xi_{u,x,s}^{\mathbf{A}}
       (
       f(\sigma),
       Q_{0}^{\mathbf{A}^{f,u}}(a_{0},\ldots,a_{\bb{w}-1}),
       \ldots,
       Q_{\bb{x}-1}^{\mathbf{A}^{f,u}}(a_{0},\ldots,a_{\bb{w}-1})
       )\\
   &=
   \begin{aligned}[t]
    \xi_{u,x,s}^{\mathbf{A}}
       (
       f(\sigma),
      &\xi_{u,w,x_{0}}^{\mathbf{A}}
            ((p^{w})^{\sharp}_{x_{0}}(Q_{0}),a_{0},\ldots,a_{\bb{w}-1}),
       \ldots, \\
      &\xi_{u,w,x_{\bb{x}-1}}^{\mathbf{A}}
            ((p^{w})^{\sharp}_{x_{\bb{x}-1}}(Q_{\bb{x}-1}),a_{0},\ldots,a_{\bb{w}-1})
       ) \quad\text{(by Ind. Hypothesis)}
  \end{aligned}  \\
    &=
    \xi_{u,w,s}^{\mathbf{A}}
       (
         \xi_{w,x,s}^{\mathbf{A}}
           (f(\sigma),
            (p^{w})^{\sharp}_{x_{0}}(Q_{0}),
            \ldots,
            (p^{w})^{\sharp}_{x_{\bb{x}-1}}(Q_{\bb{x}-1})
           ),
         a_{0},
         \ldots,
         a_{\bb{w}-1}
       ) \text{(by $\mathrm{H}_{3}$)}   \\
    &=
    \xi_{u,w,s}^{\mathbf{A}}
       (
       \sigma^{\mathbf{A}_{w}}
          (
          (p^{w})^{\sharp}_{x_{0}}(Q_{0}),
          \ldots,
          (p^{w})^{\sharp}_{x_{\bb{x}-1}}(Q_{\bb{x}-1})
          ),
       a_{0},
       \ldots,
       a_{\bb{w}-1}
       ) \\
    &=
    \xi_{u,w,s}^{\mathbf{A}}
       (
       (p^{w})^{\sharp}_{s}(\sigma,Q_{0},\ldots,Q_{\bb{x}-1}),
       a_{0},
       \ldots,
       a_{\bb{w}-1}
       ) \\
    &=
    \xi_{u,w,s}^{\mathbf{A}}
       (
       (p^{w})^{\sharp}_{s}(P),
       a_{0},
       \ldots,
       a_{\bb{w}-1}
       ).
       \qedhere
\end{align*}
\end{proof}

Next we prove that, for every $S^{\star}\times S$-sorted set $\Sigma$, the Hall algebra for $(S,\Sigma)$ is isomorphic to the free Hall algebra on $\Sigma$.

\begin{proposition}\label{iso:FrH-TerH}
Let $\Sigma$ be an $S$-sorted signature, i.e., an $S^{\star}\times S$-sorted set.  Then the Hall algebra $\mathbf{Ter}_{\mathrm{H}_{S}}(\Sigma)$ is isomorphic to $\mathbf{T}_{\mathrm{H}_{S}}(\Sigma)$.
\end{proposition}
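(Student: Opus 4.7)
The plan is to exhibit mutually inverse Hall algebra homomorphisms between $\mathbf{T}_{\mathrm{H}_{S}}(\Sigma)$ and $\mathbf{Ter}_{\mathrm{H}_{S}}(\Sigma)$. First I would note that $\mathbf{Ter}_{\mathrm{H}_{S}}(\Sigma)$ contains a canonical copy of $\Sigma$: namely, the $S^{\star}\times S$-sorted mapping $\iota\colon\Sigma\to\mathrm{Ter}_{\mathrm{H}_{S}}(\Sigma)$ that sends each $\sigma\in\Sigma_{w,s}$ to the \emph{basic} term $\iota_{w,s}(\sigma)=\sigma(v^{w_0}_0,\dots,v^{w_{\bb{w}-1}}_{\bb{w}-1})\in\mathrm{T}_{\Sigma}(\vs{w})_s$. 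By the universal property of the free Hall algebra, there is then a unique Hall algebra homomorphism $\iota^{\sharp}\colon\mathbf{T}_{\mathrm{H}_{S}}(\Sigma)\to\mathbf{Ter}_{\mathrm{H}_{S}}(\Sigma)$ with $\iota^{\sharp}\circ\eta_{\Sigma}=\iota$.

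For the reverse direction I would apply the machinery of derived $\Sigma$-algebras to $\mathbf{A}=\mathbf{T}_{\mathrm{H}_{S}}(\Sigma)$ with $f=\eta_{\Sigma}\colon\Sigma\to\mathrm{T}_{\mathrm{H}_{S}}(\Sigma)$ (the unit of the adjunction $\mathbf{T}_{\mathrm{H}_{S}}\dashv\mathrm{G}_{\mathrm{H}_{S}}$). For each $w\in S^{\star}$ this yields the derived $\Sigma$-algebra $\mathbf{T}_{\mathrm{H}_{S}}(\Sigma)^{f,w}$ whose underlying $S$-sorted set is $(\mathrm{T}_{\mathrm{H}_{S}}(\Sigma)_{w,s})_{s\in S}$, and consequently a unique homomorphism of $\Sigma$-algebras $(p^{w})^{\sharp}\colon\mathbf{T}_{\Sigma}(\vs{w})\to\mathbf{T}_{\mathrm{H}_{S}}(\Sigma)^{f,w}$. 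Assembling these componentwise, I define $\Psi\colon\mathrm{Ter}_{\mathrm{H}_{S}}(\Sigma)\to\mathrm{T}_{\mathrm{H}_{S}}(\Sigma)$ by $\Psi_{w,s}=(p^{w})^{\sharp}_s$.

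The main technical step is verifying that $\Psi$ is a morphism of Hall algebras. Preservation of the projections $\pi^{w}_{i}$ is immediate from the definition of $p^{w}$: $\Psi_{w,w_i}(\pi^{w}_{i})=p^{w}_{w_i}(v^{w_i}_{i})=(\pi^{w}_{i})^{\mathbf{T}_{\mathrm{H}_{S}}(\Sigma)}$. Preservation of the substitution operators $\xi_{u,w,s}$ is exactly where Lemma~\ref{L:aux} pays off: given $P\in\mathrm{T}_{\Sigma}(\vs{w})_s$ and $(Q_i)_{i\in\bb{w}}\in\prod_{i\in\bb{w}}\mathrm{T}_{\Sigma}(\vs{u})_{w_i}$, the value $\Psi_{u,s}(\xi_{u,w,s}^{\mathbf{Ter}_{\mathrm{H}_{S}}(\Sigma)}(P,Q_0,\dots,Q_{\bb{w}-1}))$ can be computed as $(p^{u})^{\sharp}_s$ applied to the term operation induced by $P$ on the $(Q_i)$; Lemma~\ref{L:aux}, applied in $\mathbf{A}=\mathbf{T}_{\mathrm{H}_{S}}(\Sigma)$ with parameters $\Psi_{u,w_i}(Q_i)$, identifies this with $\xi_{u,w,s}^{\mathbf{T}_{\mathrm{H}_{S}}(\Sigma)}(\Psi_{w,s}(P),\Psi_{u,w_0}(Q_0),\dots,\Psi_{u,w_{\bb{w}-1}}(Q_{\bb{w}-1}))$, as required.

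Finally, I would check that $\Psi$ and $\iota^{\sharp}$ are mutually inverse. For $\Psi\circ\iota^{\sharp}=\mathrm{id}_{\mathbf{T}_{\mathrm{H}_{S}}(\Sigma)}$, both sides are Hall algebra endomorphisms of $\mathbf{T}_{\mathrm{H}_{S}}(\Sigma)$, so by the universal property it suffices to check agreement on the generators $\sigma\in\Sigma_{w,s}$, where a direct computation using the definition of $\iota$ and Lemma~\ref{L:aux} gives $\Psi_{w,s}(\iota_{w,s}(\sigma))=(p^{w})^{\sharp}_s(\sigma(v^{w_0}_0,\dots,v^{w_{\bb{w}-1}}_{\bb{w}-1}))=\xi_{w,w,s}^{\mathbf{T}_{\mathrm{H}_{S}}(\Sigma)}(\eta_{\Sigma}(\sigma),\pi^{w}_0,\dots,\pi^{w}_{\bb{w}-1})$, which collapses to $\eta_{\Sigma}(\sigma)$ by the identity axiom $\mathrm{H}_2$. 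For $\iota^{\sharp}\circ\Psi=\mathrm{id}_{\mathbf{Ter}_{\mathrm{H}_{S}}(\Sigma)}$, I would argue componentwise: fix $w$ and proceed by algebraic induction on $P\in\mathrm{T}_{\Sigma}(\vs{w})_s$; the base cases (variables) are handled by the projection axiom $\mathrm{H}_1$, and the inductive step uses that $\iota^{\sharp}$ is a Hall homomorphism together with Lemma~\ref{L:aux} to reassemble $\sigma(P_0,\dots,P_{\bb{x}-1})$ from its substitution decomposition. The hard part will be juggling the indices and ensuring that Lemma~\ref{L:aux} is applied with the correct $u$-parameter in each of these verifications, particularly when checking preservation of $\xi_{u,w,s}$, since the lemma must be re-instantiated with the derived algebra structure depending on $u$.
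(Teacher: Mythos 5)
Your proof is correct, and it rests on the same technical machinery as the paper's: your $\iota$ is the paper's $h$, your $\Psi$ is the paper's $\widehat{f}$ specialized to $\mathbf{A}=\mathbf{T}_{\mathrm{H}_{S}}(\Sigma)$ and $f=\eta_{\Sigma}$, and the verification that $\Psi$ preserves the substitution operators is exactly the paper's computation via Lemma~\ref{L:aux} together with the identity $(p^{u})^{\sharp}\circ\mathcal{Q}^{\sharp}=((p^{u})^{\sharp}\circ\mathcal{Q})^{\sharp}$. The difference is in the packaging. The paper proves the stronger statement that $(\mathbf{Ter}_{\mathrm{H}_{S}}(\Sigma),h)$ has the universal property of the free Hall algebra on $\Sigma$: it constructs $\widehat{f}$ for an \emph{arbitrary} Hall algebra $\mathbf{A}$ and arbitrary $f\colon\Sigma\mor A$, checks $\widehat{f}\circ h=f$, and appeals to uniqueness, so the isomorphism follows from uniqueness of free objects. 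You instead build the two maps for the single instance $\mathbf{A}=\mathbf{T}_{\mathrm{H}_{S}}(\Sigma)$ and check mutual inverseness; what you save by not proving uniqueness of $\widehat{f}$ you pay back in the algebraic induction establishing $\iota^{\sharp}\circ\Psi=\mathrm{id}_{\mathbf{Ter}_{\mathrm{H}_{S}}(\Sigma)}$ (the two verifications are essentially the same induction on $\Sigma$-terms, with the base case settled by $\mathrm{H}_{1}$ and the generator computation by $\mathrm{H}_{2}$, exactly as you describe). The paper's formulation has the practical advantage of recording the universal property itself, which is what is actually invoked later (e.g., to extend $d\colon\Sigma\mor\mathrm{Ter}_{\mathrm{H}_{T}}(\Lambda)_{\varphi^{\star}\times\varphi}$ to the Hall homomorphism $d^{\sharp}$), whereas your argument yields only the isomorphism. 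Your closing concern about the $u$-parameter is well placed but resolvable exactly as in the paper: in the $\xi_{u,w,s}$ computation, Lemma~\ref{L:aux} must be instantiated in the derived algebra $\mathbf{A}^{f,u}$ with arguments $a_{i}=(p^{u})^{\sharp}_{w_{i}}(Q_{i})\in A_{u,w_{i}}$ while $P$ ranges over $\mathrm{T}_{\Sigma}(\vs{w})_{s}$.
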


\begin{proof}
It is enough to prove that $\mathbf{Ter}_{\mathrm{H}_{S}}(\Sigma)$ has the universal property of the free Hall algebra on $\Sigma$. Therefore we have to specify an $S^{\star}\times S$-sorted mapping $h$ from $\Sigma$ to $\mathrm{Ter}_{\mathrm{H}_{S}}(\Sigma)$ such that, for every Hall algebra $\mathbf{A}$ and $S^{\star}\times S$-sorted mapping $f$ from $\Sigma$ to $A$, there is a unique homomorphism $\widehat{f}$ from $\mathbf{Ter}_{\mathrm{H}_{S}}(\Sigma)$ to $\mathbf{A}$ such that
$\widehat{f}\circ h = f$.  Let $h$ be the $S^{\star}\times S$-sorted mapping defined, for every
$(w,s)\in S^{\star}\times S$, as%
$$
h_{w,s}\nfunction
{\Sigma_{w,s}}
{\mathrm{T}_{\Sigma}(\vs{w})_{s}}
{\sigma}
{\sigma\left(v^{w_{0}}_{0},\ldots,v^{w_{\bb{w}-1}}_{\bb{w}-1}\right)}
$$
Let $\mathbf{A}$ be a Hall algebra, $f\colon\Sigma\mor A$ an $S^{\star}\times S$-sorted mapping and $\widehat{f}$ the
$S^{\star}\times S$-sorted mapping from $\mathrm{Ter}_{\mathrm{H}_{S}}(\Sigma)$ to $A$ defined, for every $(w,s)\in
S^{\star}\times S$, as $\widehat{f}_{w,s} = (p^{w})^{\sharp}_{s}$, where, we recall, $(p^{w})^{\sharp}$ is the unique homomorphism from $\mathbf{T}_{\Sigma}(\vs{w})$ to $\mathbf{A}^{f,w}$ such that $(p^{w})^{\sharp}\circ \eta_{\vs{w}} = p^{w}$. Then $\widehat{f}$ is a homomorphism of Hall algebras, because, on the one hand, for $w\in S^{\star}$ and $i\in\bb{w}$ we have that%
\begin{align*}
    \widehat{f}_{w,w_{i}}((\pi^{w}_{i})^{\mathbf{Ter}_{\mathrm{H}_{S}}(\Sigma)})
    &=
    \widehat{f}_{w,w_{i}}(v^{w_{i}}_{i}) \\
    &=
    p^{w}_{w_{i}}(v^{w_{i}}_{i}) \\
    &=
    (\pi^{w}_{i})^{\mathbf{A}},
\end{align*}
and, on the other hand, for $P\in \mathrm{T}_{\Sigma}(\vs{w})_{s}$ and
$(Q_{i})_{i\in \bb{w}}\in \mathrm{T}_{\Sigma}(\vs{u})_{w}$ we have that %
\begin{align*}
    &\widehat{f}_{u,s}(
      \xi_{u,w,s}^{\mathbf{Ter}_{\mathrm{H}_{S}}(\Sigma)}
        (P,Q_{0},\ldots,Q_{\bb{w}-1})
                   ) \\
    &=
    (p^{u})^{\sharp}_{s}(
      \mathcal{Q}^{\sharp}_{s}(P)
                     ) \\
    &=
    ((p^{u})^{\sharp}\circ \mathcal{Q})^{\sharp}_{s} (P)
    \qquad(\text{because}\, (p^{u})^{\sharp}\circ\mathcal{Q}^{\sharp} =
    ((p^{u})^{\sharp}\circ \mathcal{Q})^{\sharp})
    \\
    &=
    P^{\mathbf{A}^{f,u}}
      ((p^{u})^{\sharp}_{w_{0}}(Q_{0}),\ldots,
        (p^{u})^{\sharp}_{w_{\bb{w}-1}}(Q_{\bb{w}-1})
      ) \\
    &=
    \xi_{u,w,s}^{\mathbf{A}}
      ((p^{w})^{\sharp}_{s}(P),
       (p^{u})^{\sharp}_{w_{0}}(Q_{0}),\ldots,
       (p^{u})^{\sharp}_{w_{\bb{w}-1}}(Q_{\bb{w}-1})
      )
    \qquad(\text{by}\, \mathrm{Lemma}~\ref{L:aux})
      \\
    &=
    \xi_{u,w,s}^{\mathbf{A}}
      ( \widehat{f}_{w,s}(P),
        \widehat{f}_{u,w_{0}}(Q_{0}),\ldots,
        \widehat{f}_{u,w_{\bb{w}-1}}(Q_{\bb{w}-1})
      ).
\end{align*}
Therefore the $S^{\star}\times S$-sorted mapping $\widehat{f}$ is a homomorphism. Furthermore, $\widehat{f}\circ h = f$, because, for every $w\in S^{\star}$, $s\in S$, and $\sigma\in \Sigma_{w,s}$, we have that%
\begin{align*}
    \widehat{f}_{w,s}(h_{w,s}(\sigma)) &=
    (p^{w})^{\sharp}_{s}(\sigma(v^{w_{0}}_{0},\ldots,v^{w_{\bb{w}-1}}_{\bb{w}-1})) \\
    &=
    \sigma^{\mathbf{A}_{w}}
      (p^{w}_{{w}_{0}}(v^{w_{0}}_{0}),
        \ldots,
       p^{w}_{{w}_{\bb{w}-1}}(v^{w_{\bb{w}-1}}_{\bb{w}-1})
      ) \\
    &=
    \xi_{w,w,s}^{\mathbf{A}}
      ( f_{(w,s)}(\sigma),
        (\pi^{w}_{0})^{\mathbf{A}},
        \ldots,
        (\pi^{w}_{\bb{w}-1})^{\mathbf{A}}
      ) \\
    &=
    f_{w,s}(\sigma) \quad\text{(by $\mathrm{H}_{2}$)}.
\end{align*}
It is obvious that $\widehat{f}$ is the unique homomorphism such that $\widehat{f}\circ h = f$.  Henceforth
$\mathbf{Ter}_{\mathrm{H}_{S}}(\Sigma)$ is isomorphic to $\mathbf{T}_{\mathrm{H}_{S}}(\Sigma)$.
\end{proof}

This isomorphism together with the adjunction $\mathbf{T}_{\mathrm{H}_{S}}\dashv\mathrm{G}_{\mathrm{H}_{S}}$ has as an immediate consequence that, for every $S$-sorted set $A$ and every $S$-sorted signature $\Sigma$, the sets $\mathrm{Hom}({\Sigma},\mathrm{Op}_{\mathrm{H}_{S}}(A))$, in the category
$\mathbf{Set}^{S^{\star}\times S}$, and $\mathrm{Hom}(\mathbf{Ter}_{\mathrm{H}_{S}}(\Sigma),\mathbf{Op}_{\mathrm{H}_{S}}(A))$, in the category $\mathbf{Alg}(\mathrm{H}_{S})$, are naturally isomorphic. Actually, (1) the mapping that assigns, for an $S$-sorted set $A$, to a structure of $\Sigma$-algebra $F$ on $A$ (i.e., an $S^{\star}\times S$-sorted mapping $F$ from $\Sigma$ to $\mathrm{Op}_{\mathrm{H}_{S}}(A)$) the homomorphism of Hall algebras $\mathrm{Tr}^{(A,F)} = (\mathrm{Tr}^{\vs{w},(A,F)}_{s})_{(w,s)\in S^{\star}\times S}$ from $\mathbf{Ter}_{\mathrm{H}_{S}}(\Sigma)$ to
$\mathbf{Op}_{\mathrm{H}_{S}}(A)$, where, for every $w \in S^{\star}$, the subfamily $\mathrm{Tr}^{\vs{w},(A,F)} =
(\mathrm{Tr}^{\vs{w},(A,F)}_{s})_{s\in S}$ of $\mathrm{Tr}^{(A,F)}$ is the unique homomorphism from
$\mathbf{T}_{\Sigma}(\vs{w})$ to $(A,F)^{A_{w}}$, the direct $A_{w}$-power of $(A,F)$, such that
$\mathrm{Tr}^{\vs{w},(A,F)}\circ \eta_{\vs{w}} = \mathrm{p}^{A}_{\vs{w}}$, where $\mathrm{p}^{A}_{\vs{w}}$ is the
$S$-sorted mapping from $\vs{w}$ to $A^{A_{w}}$ defined, for every $s\in S$ and $v^{s}_{i}\in(\vs{w})_{s}$, as
$\mathrm{p}^{A}_{\vs{w},s}(v^{s}_{i}) = \mathrm{pr}^{A}_{w,i}$; together with (2) the mapping that sends an homomorphism $h$ from $\mathbf{Ter}_{\mathrm{H}_{S}}(\Sigma)$ to $\mathbf{Op}_{\mathrm{H}_{S}}(A)$ to, essentially, the algebraic structure $\mathrm{G}_{\mathrm{H}_{S}}(h)\circ \eta_{\Sigma}$ on $A$, where $\eta_{\Sigma}$ is the canonical embedding of $\Sigma$ into $\mathrm{T}_{\mathrm{H}_{S}}(\Sigma)$, which are mutually inverse bijections, provide the mentioned natural isomorphism.

The derived operation symbols of an signature can be considered as the operation symbols of a new signature. The mappings that assign to operation symbols of a signature terms relative to another signature, together with mappings between the corresponding sets of sorts, form a new class of morphisms denominated \emph{derivors}.

\begin{definition}
Let $\mathbf{\Sigma} = (S,\Sigma)$ and $\mathbf{\Lambda} = (T,\Lambda)$ be many-sorted signatures. A \emph{derivor} from $\mathbf{\Sigma}$ to $\mathbf{\Lambda}$ is an ordered pair $\mathbf{d} = (\varphi,d)$, where $\varphi\colon S\mor T$ and  $d\colon\Sigma\mor\mathrm{Ter}_{\mathrm{H}_{T}}(\Lambda)_{\varphi^{\star}\times\varphi}$. Thus, for every $(w,s)\in S^{\star}\times S$, %
$$
d_{w,s}\colon \Sigma_{w,s}\mor\mathrm{Ter}_{\mathrm{H}_{T}}(\Lambda)_{\varphi^{\star}(w),\varphi(s)}=
    \mathrm{T}_{\Lambda}(\vs{\varphi^{\star}(w)})_{\varphi(s)}.
$$
\end{definition}

\begin{remark}
While derivors are not the most general type of morphism that might be considered between many-sorted signatures---for instance, one could consider, in addition to hyperderivors, polyderivors, see~\cite{CVST10}---, they are an important class of such morphisms. One reason for its relevance is its formal properties (see below), another that there are many mathematical examples of them which are of interest (see~\cite{CVST10}).
\end{remark}


We next introduce the linear derivors, which, as we will see later on, are a re\-mark\-able subclass of the class of the derivors, especially in regards to recognizability, the same as linear hyperderivors are with respect to hyperderivors.

\begin{definition}
Let $\mathbf{d}\colon\mathbf{\Sigma}\mor\mathbf{\Lambda}$ be a derivor from $\mathbf{\Sigma}$ to $\mathbf{\Lambda}$. We will say that $\mathbf{d}$ is \emph{linear} if, for every $(w,s)\in S^{\star}\times S$, every $\sigma\in\Sigma_{w,s}$, and every $i\in \bb{w}$, no variable $v^{\varphi(w_{i})}_{i}$ appears more than once in $d_{w,s}(\sigma)$, i.e., $\bb{d_{w,s}(\sigma)}_{v^{\varphi(w_{i})}_{i}}\leq 1$.

\end{definition}

For every many-sorted signature $\mathbf{\Lambda}$, $\mathrm{Ter}_{\mathrm{H}_{T}}(\Lambda)$ is the underlying many-sorted set of $\mathbf{Ter}_{\mathrm{H}_{T}}(\Lambda)$, the Hall algebra for $T$. Since by,
Proposition~\ref{iso:FrH-TerH}, $\mathbf{Ter}_{\mathrm{H}_{T}}(\Lambda)$ is isomorphic to  $\mathbf{T}_{\mathrm{H}_{T}}(\Lambda)$, the derivors can be defined, alternative, but equivalently, as ordered pairs  $\mathbf{d} = (\varphi,d)$ with $d\colon\Sigma\mor \mathrm{T}_{\mathrm{H}_{T}}(\Lambda)$.

On the other hand, every mapping $\varphi\colon S\mor T$ determines a functor from the category  $\mathbf{Alg}(\mathrm{H}_{T})$ to the category $\mathbf{Alg}(\mathrm{H}_{S})$, so $\mathrm{Ter}_{\mathrm{H}_{T}}(\Lambda)_{\varphi^{\star}\times\varphi}$ is in its turn e\-quipped with a structure of Hall algebra for $S$, which will allow us, in particular, to define the composition of derivors. We next show the existence of such a functor by defining a morphism of algebraic presentations from  $(\Sigma^{\mathrm{H}_{S}},\mathcal{E}^{\mathrm{H}_{S}})$ to $(\Sigma^{\mathrm{H}_{T}},\mathcal{E}^{\mathrm{H}_{T}})$.

\begin{proposition}
Let $\varphi\colon S\mor T$ be a mapping between sets of sorts and $h^{\varphi}$ the $S^{\star}\times S$-sorted mapping from $\Sigma^{\mathrm{H}_{S}}$ to $\Sigma^{\mathrm{H}_{T}}_{\varphi^{\star}\times \varphi}$ defined as follows: %
\begin{enumerate}
\item For every $w\in S^{\star}$ and every $i\in\bb{w}$, $h^{\varphi}(\pi^{w}_{i}) = \pi^{\varphi^{\star}(w)}_{i}$.

\item  For every $u,\,w\in S^{\star}$ and every $s\in S$,
  $h^{\varphi}(\xi_{u,w,s}) = \xi_{\varphi^{\star}(u),\varphi^{\star}(w),\varphi(s)}$.
\end{enumerate}
Then $(\varphi^{\star}\times \varphi, h^{\varphi}) \colon(S^{\star}\times S,\Sigma^{\mathrm{H}_{S}},\mathcal{E}^{\mathrm{H}_{S}})\mor
(T^{\star}\times T,\Sigma^{\mathrm{H}_{T}},\mathcal{E}^{\mathrm{H}_{T}})$ is a morphism of algebraic presentations.
\end{proposition}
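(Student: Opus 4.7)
My plan is to verify the two requirements that constitute a morphism of algebraic presentations: (i) that $h^{\varphi}$ is a well-defined morphism of $S^{\star}\times S$-sorted signatures from $\Sigma^{\mathrm{H}_{S}}$ to $\Sigma^{\mathrm{H}_{T}}_{\varphi^{\star}\times\varphi}$, i.e.\ that the biarity of each formal operation is respected, and (ii) that, once extended to terms, $h^{\varphi}$ carries every equation in $\mathcal{E}^{\mathrm{H}_{S}}$ to an equation that is (up to obvious renaming of variables) an instance of a member of $\mathcal{E}^{\mathrm{H}_{T}}$. The key combinatorial identity underlying everything is that, since $\varphi^{\star}$ is the canonical homomorphic extension of $\varphi$ to $\mathbf{S}^{\star}$, one has $\bb{\varphi^{\star}(w)}=\bb{w}$ and $(\varphi^{\star}(w))_{i}=\varphi(w_{i})$ for every $w\in S^{\star}$ and every $i\in\bb{w}$. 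I would state and use this identity at the outset.

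For step (i), the formal constant $\pi^{w}_{i}$ has biarity $(\lambda,(w,w_{i}))$ in $\Sigma^{\mathrm{H}_{S}}$, so in $\Sigma^{\mathrm{H}_{T}}_{\varphi^{\star}\times\varphi}$ I need its image to have biarity $(\lambda,(\varphi^{\star}(w),\varphi(w_{i})))$. Using the identity above, $\pi^{\varphi^{\star}(w)}_{i}$ indeed has biarity $(\lambda,(\varphi^{\star}(w),(\varphi^{\star}(w))_{i}))=(\lambda,(\varphi^{\star}(w),\varphi(w_{i})))$ in $\Sigma^{\mathrm{H}_{T}}$, as required. Similarly, $\xi_{u,w,s}$ has biarity $(((w,s),(u,w_{0}),\ldots,(u,w_{\bb{w}-1})),(u,s))$, and the same identity shows that $\xi_{\varphi^{\star}(u),\varphi^{\star}(w),\varphi(s)}$ has exactly the translated biarity in $\Sigma^{\mathrm{H}_{T}}_{\varphi^{\star}\times\varphi}$. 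Thus $h^{\varphi}$ is a legitimate morphism of $S^{\star}\times S$-sorted signatures.

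For step (ii), I would extend $h^{\varphi}$ to terms in the obvious way, also applying $\varphi^{\star}\times\varphi$ to the sorts of variables---so the variable $v^{(u,s)}_{n}$ in $V^{\mathrm{H}_{S}}$ is taken to $v^{(\varphi^{\star}(u),\varphi(s))}_{n}$. Then I would process the three axiom schemes one by one. For H${}_{1}$, the image of the projection equation
\[
\xi_{u,w,w_{i}}(\pi^{w}_{i},v^{u,w_{0}}_{0},\ldots,v^{u,w_{\bb{w}-1}}_{\bb{w}-1})=v^{u,w_{i}}_{i}
\]
under $h^{\varphi}$ is
\[
\xi_{\varphi^{\star}(u),\varphi^{\star}(w),\varphi(w_{i})}(\pi^{\varphi^{\star}(w)}_{i},v^{\varphi^{\star}(u),\varphi(w_{0})}_{0},\ldots,v^{\varphi^{\star}(u),\varphi(w_{\bb{w}-1})}_{\bb{w}-1})=v^{\varphi^{\star}(u),\varphi(w_{i})}_{i},
\]
which, invoking $(\varphi^{\star}(w))_{j}=\varphi(w_{j})$ for all $j\in\bb{w}$, is verbatim the H${}_{1}$-instance for the words $\varphi^{\star}(u),\varphi^{\star}(w)\in T^{\star}$ and index $i\in\bb{\varphi^{\star}(w)}$, hence lies in $\mathcal{E}^{\mathrm{H}_{T}}$. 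The argument for H${}_{2}$ is identical, noting that $\varphi^{\star}(u)_{j}=\varphi(u_{j})$ turns the image into the H${}_{2}$-instance for $\varphi^{\star}(u)$ and $j\in\bb{\varphi^{\star}(u)}$.

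For H${}_{3}$ the computation is longer but entirely mechanical: the image of the associativity equation involves $\xi$-symbols with parameters $(\varphi^{\star}(u),\varphi^{\star}(v),\varphi(s))$, $(\varphi^{\star}(v),\varphi^{\star}(w),\varphi(s))$, $(\varphi^{\star}(u),\varphi^{\star}(v),\varphi(w_{i}))$ for each $i\in\bb{w}$, together with variables $v^{(\varphi^{\star}(w),\varphi(s))}_{0}$, $v^{(\varphi^{\star}(v),\varphi(w_{i}))}_{1+i}$, $v^{(\varphi^{\star}(u),\varphi(v_{j}))}_{\bb{w}+1+j}$; using once more $(\varphi^{\star}(w))_{i}=\varphi(w_{i})$ and $(\varphi^{\star}(v))_{j}=\varphi(v_{j})$, this is precisely the H${}_{3}$-instance for the triple $(\varphi^{\star}(u),\varphi^{\star}(v),\varphi^{\star}(w))\in (T^{\star})^{3}$ and sort $\varphi(s)\in T$. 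I do not anticipate any real obstacle here; the proof is essentially a bookkeeping exercise, and the only subtle point, which I would highlight once at the beginning and then use tacitly, is the interaction of $\varphi^{\star}$ with the indexing of letters in a word, i.e.\ the equality $(\varphi^{\star}(w))_{i}=\varphi(w_{i})$.
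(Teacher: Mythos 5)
Your proof is correct; the paper itself states this proposition without proof, and your argument is exactly the routine verification that the authors leave implicit. The two-step check (biarities respected, then each axiom scheme $\mathrm{H}_{1}$--$\mathrm{H}_{3}$ carried to the corresponding instance over $T$), hinging on the identities $\bb{\varphi^{\star}(w)}=\bb{w}$ and $(\varphi^{\star}(w))_{i}=\varphi(w_{i})$, is the intended bookkeeping and contains no gaps.
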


The morphisms of algebraic presentations determine functors in the opposite direction between the associated categories of algebras. Therefore, each mapping $\varphi\colon S\mor T$ between sets of sorts, determines a functor $(\varphi^{\star}\times \varphi, h^{\varphi})^{\ast}$ from $\mathbf{Alg}(\mathrm{H}_{T})$ to $\mathbf{Alg}(\mathrm{H}_{S})$, which transforms Hall algebras for $T$ into Hall algebras for $S$. The action of the functor on the free Hall algebra on a $T$-sorted signature $\Lambda$ is a Hall algebra for $S$, whose underlying $S^{\star}\times S$-sorted set is
$\mathrm{Ter}_{\mathrm{H}_{T}}(\Lambda)_{\varphi^{\star}\times \varphi}$.

If $\mathbf{d}\colon\mathbf{\Sigma}\mor\mathbf{\Lambda}$ is a derivor, then
$d\colon \Sigma\mor \mathrm{Ter}_{\mathrm{H}_{T}}(\Lambda)_{\varphi^{\star}\times\varphi}$ determines a homomorphism of Hall algebras $d^{\sharp}\colon\mathbf{Ter}_{\mathrm{H}_{S}}(\Sigma)\mor
\mathbf{Ter}_{\mathrm{H}_{T}}(\Lambda)_{\varphi^{\star}\times\varphi}$. Thus, for every $(w,s)\in S^{\star}\times S$, $d^{\sharp}_{w,s}$ sends terms in $\mathrm{T}_{\Sigma}(\vs{w})_{s}$ to terms in $\mathrm{T}_{\Lambda}(\vs{\varphi^{\star}(w)})_{\varphi(s)}$.

\begin{definition}
Let $\mathbf{d}\colon\mathbf{\Sigma}\mor\mathbf{\Lambda}$ and $\mathbf{d}\colon\mathbf{\Lambda}\mor\mathbf{\Omega}$ be  derivors. Then $\mathbf{e}\circ \mathbf{d} = (\psi,e)\circ(\varphi,d)$, the \emph{composition} of $\mathbf{d}$ and  $\mathbf{e}$, is the derivor $(\psi\circ\varphi,e^{\sharp}_{\varphi^{\star}\times
\varphi}\circ d)$, where $e^{\sharp}_{\varphi^{\star}\times\varphi}\circ d$ is obtained from
$$
\begin{aligned}
\xymatrix@C=40pt@R=30pt{
\Lambda
  \ar[r]^-{\eta_{\Lambda}}
  \ar[rd]_-{e} &
\mathrm{Ter}_{\mathrm{H}_{T}}(\Lambda) \ar[d]^{e^{\sharp}}
  \\
& \mathrm{Ter}_{\mathrm{H}_{U}}(\Omega)_{\psi^{\star}\times \psi}
}
\end{aligned}
\qquad \text{as} \qquad\;\;
\begin{aligned}
\xymatrix@C=40pt@R=30pt{
\mathrm{Ter}_{\mathrm{H}_{T}}(\Lambda)_{\varphi^{\star}\times \varphi}
\ar[d]^{e^{\sharp}_{\varphi^{\star}\times \varphi}} &
\Sigma \ar[l]_-{d} \\
{\mathrm{Ter}_{\mathrm{H}_{U}}(\Omega)_{\psi^{\star}\times\psi}}_{\varphi^{\star}\times\varphi}
}
\end{aligned}
$$
being $e^{\sharp}$ the canonical extension of $e$ to the free Hall algebra on $\Lambda$.

For every many-sorted signature $\mathbf{\Sigma} = (S,\Sigma)$, the \emph{identity} at $(S,\Sigma)$
is $(\id_{S},\eta_{\Sigma})$.
\end{definition}

The just stated definition allows us to form a category of many-sorted signatures whose morphisms are the derivors.

\begin{proposition}
The many-sorted signatures together with the derivors constitute a category, denoted by $\mathbf{Sig}_{\mathfrak{d}}$.
\end{proposition}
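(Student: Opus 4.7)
The plan is to verify the three defining axioms of a category: that composition of derivors is well-defined, that $(\mathrm{id}_{S},\eta_{\Sigma})$ is a two-sided identity, and that composition is associative. Throughout, I would rely on three ingredients: the functoriality of the free-monoid construction, which gives $(\psi\circ\varphi)^{\star} = \psi^{\star}\circ\varphi^{\star}$; the universal property of $\eta_{\Sigma}$ coming from the adjunction $\mathbf{T}_{\mathrm{H}_{S}}\dashv\mathrm{G}_{\mathrm{H}_{S}}$ (transferred to $\mathbf{Ter}_{\mathrm{H}_{S}}(\Sigma)$ via Proposition~\ref{iso:FrH-TerH}); and the functoriality of the change-of-base operation $(-)_{\varphi^{\star}\times\varphi}$ induced by the morphism of algebraic presentations $(\varphi^{\star}\times\varphi,h^{\varphi})$ introduced earlier.

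Well-definedness of composition is essentially type-checking. Given $\mathbf{d} = (\varphi,d)\colon\mathbf{\Sigma}\mor\mathbf{\Lambda}$ and $\mathbf{e} = (\psi,e)\colon\mathbf{\Lambda}\mor\mathbf{\Omega}$, the canonical extension $e^{\sharp}$ is a homomorphism of Hall algebras for $T$ from $\mathbf{Ter}_{\mathrm{H}_{T}}(\Lambda)$ to $\mathbf{Ter}_{\mathrm{H}_{U}}(\Omega)_{\psi^{\star}\times\psi}$; applying change-of-base along $\varphi^{\star}\times\varphi$ yields $e^{\sharp}_{\varphi^{\star}\times\varphi}$ with codomain $\mathbf{Ter}_{\mathrm{H}_{U}}(\Omega)_{(\psi\circ\varphi)^{\star}\times(\psi\circ\varphi)}$, so precomposing with $d$ produces an $S^{\star}\times S$-sorted mapping of the required type. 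For the identity laws, the right identity $\mathbf{d}\circ(\mathrm{id}_{S},\eta_{\Sigma}) = (\varphi, d^{\sharp}\circ\eta_{\Sigma}) = (\varphi, d) = \mathbf{d}$ is precisely the defining universal property of $\eta_{\Sigma}$; the left identity reduces, after unpacking, to the claim that the change-of-base of $\eta_{\Lambda}^{\sharp}$ along $\varphi^{\star}\times\varphi$ is the identity, which follows because $\eta_{\Lambda}^{\sharp} = \mathrm{id}_{\mathbf{Ter}_{\mathrm{H}_{T}}(\Lambda)}$ by the uniqueness clause of the same universal property.

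For associativity, let $\mathbf{f} = (\chi,f)\colon\mathbf{\Omega}\mor\mathbf{\Theta}$ be a third derivor. The equality of first components is immediate. For the second components, the identity to prove reduces, after cancelling $d$, to
\[
(f^{\sharp}_{\psi^{\star}\times\psi}\circ e)^{\sharp}_{\varphi^{\star}\times\varphi} = f^{\sharp}_{(\psi\circ\varphi)^{\star}\times(\psi\circ\varphi)}\circ e^{\sharp}_{\varphi^{\star}\times\varphi}.
\]
Two observations suffice. First, by the uniqueness of the canonical extension, $f^{\sharp}_{\psi^{\star}\times\psi}\circ e^{\sharp}$ is a homomorphism of Hall algebras for $T$ extending $f^{\sharp}_{\psi^{\star}\times\psi}\circ e$, so it must coincide with $(f^{\sharp}_{\psi^{\star}\times\psi}\circ e)^{\sharp}$. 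Second, functoriality of change-of-base yields $f^{\sharp}_{(\psi\circ\varphi)^{\star}\times(\psi\circ\varphi)} = (f^{\sharp}_{\psi^{\star}\times\psi})_{\varphi^{\star}\times\varphi}$. Applying $(-)_{\varphi^{\star}\times\varphi}$ to the first identity and combining with the second gives the required equality.

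The main obstacle is bookkeeping: the subscripts coming from iterated change-of-base threaten to proliferate, and one must verify at each step that the relabellings render the types compatible. Conceptually, derivors are essentially the morphisms of the Kleisli category attached to the pseudomonad $(S,\Sigma)\mapsto(S,\mathrm{Ter}_{\mathrm{H}_{S}}(\Sigma))$, and the category axioms then become formally inevitable; but the direct verification above is short once the types are pinned down carefully.
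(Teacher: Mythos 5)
Your proposal is correct, but note that the paper itself states this proposition without proof: the only justification it offers is the subsequent remark identifying $\mathbf{Sig}_{\mathfrak{d}}$ with the Kleisli category $\mathbf{Kl}(\boldsymbol{\mathfrak{d}})$ of the monad $(S,\Sigma)\mapsto(S,\mathrm{T}_{\mathrm{H}_{S}}(\Sigma))$ on $\mathbf{Sig}$, from which the category axioms are automatic. Your direct verification is the unwound version of that argument, and it is sound: the two facts you isolate --- that $(f^{\sharp}_{\psi^{\star}\times\psi}\circ e)^{\sharp}=f^{\sharp}_{\psi^{\star}\times\psi}\circ e^{\sharp}$ by uniqueness of canonical extensions, and that reindexing is compositional, so $f^{\sharp}_{(\psi\circ\varphi)^{\star}\times(\psi\circ\varphi)}=(f^{\sharp}_{\psi^{\star}\times\psi})_{\varphi^{\star}\times\varphi}$ --- are exactly the two steps the authors do write out later, in the chain of equalities proving that $\mathrm{Alg}_{\mathfrak{d}}$ is a contravariant functor (where ${F^{(\psi,e)}}^{(\varphi,d)}=F^{(\psi,e)\circ(\varphi,d)}$ is established by the same manipulation). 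What your route buys is self-containedness: one need not first check the monad axioms for $\boldsymbol{\mathfrak{d}}$, which would amount to essentially the same bookkeeping. What the Kleisli route buys is that the identity and associativity laws come for free once the adjunction $\mathbf{T}_{\mathrm{H}_{S}}\dashv\mathrm{G}_{\mathrm{H}_{S}}$ and the compositionality of $(-)_{\varphi^{\star}\times\varphi}$ are in place. One minor point: you call the underlying structure a pseudomonad, whereas the paper asserts a genuine (strict) monad on $\mathbf{Sig}$; since reindexing along $\varphi^{\star}\times\varphi$ is literal precomposition of functions, the coherence identities hold on the nose, so the stricter claim is the accurate one and your hedge is unnecessary.
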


\begin{remark}
Let $\mathbf{Sig}$ be the category whose objects are the many-sorted signatures and whose morphisms from $\mathbf{\Sigma}$ to $\mathbf{\Lambda}$ are the ordered pairs $(\varphi,d)$, where $\varphi$ is a mapping from $S$ to $T$ and $d$ an $S^{\star}\times S$-sorted mapping from $\Sigma$ to $\Lambda_{\varphi^{\star}\times \varphi}$ (thus a mapping in $\mathbf{Sig}(S)$). Note that $\mathbf{Sig}$ is the category obtained by means of the Grothendieck construction for an  appropriate contravariant functor from $\mathbf{Set}$ to $\mathbf{Cat}$.
We next show that the category $\mathbf{Sig}_{\mathfrak{d}}$ can be obtained as the Kleisli category for a suitable monad. In fact, for every  set of sorts $S$, we have the adjoint situation $\mathbf{T}_{\mathrm{H}_{S}}\dashv\mathrm{G}_{\mathrm{H}_{S}}$
and, thus, a monad on $\mathbf{Sig}(S)$ which we will denote by $\mathbb{T}_{\mathrm{H}_{S}} = (\mathrm{T}_{\mathrm{H}_{S}},\eta^{\mathrm{H}_{S}},\mu^{\mathrm{H}_{S}})$. Then the ordered triple $\boldsymbol{\mathfrak{d}} = (\mathfrak{d},\eta,\mu)$ in which (1) $\mathfrak{d}$ is the endofunctor at $\mathbf{Sig}$ that sends $(S,\Sigma)$ to $(S,\mathrm{T}_{\mathrm{H}_{S}}(\Sigma))$ and a morphism $(\varphi,d)$ from $(S,\Sigma)$ to  $(T,\Lambda)$ to the morphism $(\varphi,d^{\sharp})$ from $(S,\mathrm{T}_{\mathrm{H}_{S}}(\Sigma))$ to  $(T,\mathrm{T}_{\mathrm{H}_{T}}(\Lambda))$, (2) $\eta$ the natural transformation from $\mathrm{Id}_{\mathbf{Sig}}$ to  $\mathfrak{d}$ that sends $(S,\Sigma)$ to $\eta_{(S,\Sigma)} = (\mathrm{id},\eta^{\mathrm{H}_{S}}_{\Sigma})$, and (3) $\mu$ the natural transformation from $\mathfrak{d}\circ \mathfrak{d}$ to $\mathfrak{d}$ that sends $(S,\Sigma)$ to $\mu_{(S,\Sigma)} = (\mathrm{id},\mu^{\mathrm{H}_{S}}_{\Sigma})$, is a monad on $\mathbf{Sig}$ and $\mathbf{Kl}(\boldsymbol{\mathfrak{d}})$, the Kleisli category for $\boldsymbol{\mathfrak{d}}$, is isomorphic to $\mathbf{Sig}_{\mathfrak{d}}$. This shows, in our opinion, the mathematical naturalness of the notion of derivor. Moreover, by defining a suitable notion of transformation between derivors one can equip the category $\mathbf{Sig}_{\mathfrak{d}}$ with a structure of $2$-category (this was, in fact, already done in~\cite{CVST10} for polyderivors).

\end{remark}

\begin{remark}
Since $\mathbf{Sig}$ has coproducts, $\mathbf{Kl}(\boldsymbol{\mathfrak{d}})\cong \mathbf{Sig}_{\mathfrak{d}}$ has coproducts.
\end{remark}



We next associate to every derivor $\mathbf{d}\colon\mathbf{\Sigma}\mor\mathbf{\Lambda}$ a functor from  $\mathbf{Alg}(\mathbf{\Lambda}) (= \mathbf{Alg}(\Lambda))$ to $\mathbf{Alg}(\mathbf{\Sigma}) (= \mathbf{Alg}(\Sigma))$.

\begin{proposition}
Let $\mathbf{d}\colon\mathbf{\Sigma}\mor\mathbf{\Lambda}$ be a morphism in
$\mathbf{Sig}_{\mathfrak{d}}$. Then $\mathrm{Alg}_{\mathfrak{d}}(\mathbf{d})$ is the functor from  $\mathbf{Alg}(\mathbf{\Lambda})$ to $\mathbf{Alg}(\mathbf{\Sigma})$ that sends $(B,G)$ to $(B_{\varphi},G^{\mathbf{d}})$ and a homomorphism $f$ from $(B,G)$ to $(B',G')$ to the homomorphism $f_{\varphi}$ from $(B_{\varphi},G^{\mathbf{d}})$ to  $(B'_{\varphi},{G'}^{\mathbf{d}})$,
where, for every $\mathbf{\Lambda}$-algebra $(B,G)$,
$G^{\mathbf{d}} = G^{\sharp}_{\varphi^{\star}\times \varphi}\circ d$
is obtained from
$$
\begin{aligned}
\xymatrix@C=40pt@R=33pt{
\Lambda
  \ar[r]^-{\eta_{\Lambda}}
  \ar[rd]_-{G} &
\mathrm{Ter}_{\mathrm{H}_{T}}(\Lambda) \ar[d]^{G^{\sharp}}
  \\
& \mathrm{Op}_{\mathrm{H}_{T}}(B)
}
\end{aligned}
\qquad\quad \text{as} \quad
\begin{aligned}
\xymatrix@C=40pt@R=33pt{
\mathrm{Ter}_{\mathrm{H}_{T}}(\Lambda)_{\varphi^{\star}\times \varphi}
\ar[d]^{G^{\sharp}_{\varphi^{\star}\times \varphi}} &
\Sigma \ar[l]_-{d}\\
{\mathrm{Op}_{\mathrm{H}_{T}}(B)_{\varphi^{\star}\times\varphi}}=
\mathrm{Op}_{\mathrm{H}_{S}}(B_{\varphi})
}
\end{aligned}
$$
\end{proposition}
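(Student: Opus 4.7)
The statement has three parts: (i) for each $\mathbf{\Lambda}$-algebra $(B,G)$, the datum $(B_{\varphi}, G^{\mathbf{d}})$ is a well-defined $\mathbf{\Sigma}$-algebra; (ii) for each $\mathbf{\Lambda}$-homomorphism $f\colon (B,G)\mor (B',G')$, the $S$-sorted mapping $f_{\varphi}$ is a $\mathbf{\Sigma}$-homomorphism from $(B_{\varphi},G^{\mathbf{d}})$ to $(B'_{\varphi},(G')^{\mathbf{d}})$; and (iii) the assignment preserves identities and composition. Part (i) is essentially type-tracking: starting from $G\colon\Lambda\mor\mathrm{Op}_{\mathrm{H}_{T}}(B)$, the canonical Hall homomorphism $G^{\sharp}\colon\mathbf{Ter}_{\mathrm{H}_{T}}(\Lambda)\mor\mathbf{Op}_{\mathrm{H}_{T}}(B)$ exists by the universal property stated in Proposition~\ref{iso:FrH-TerH}, and one checks that $\mathrm{Op}_{\mathrm{H}_{T}}(B)_{\varphi^{\star}\times\varphi}$ coincides with $\mathrm{Op}_{\mathrm{H}_{S}}(B_{\varphi})$ sort by sort, since $(B_{\varphi})_{w}=\prod_{i\in\bb{w}}B_{\varphi(w_{i})}=B_{\varphi^{\star}(w)}$ for every $w\in S^{\star}$. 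Hence $G^{\mathbf{d}}=G^{\sharp}_{\varphi^{\star}\times\varphi}\circ d$ is an $S^{\star}\times S$-sorted mapping from $\Sigma$ to $\mathrm{Op}_{\mathrm{H}_{S}}(B_{\varphi})$, which is precisely a structure of $\mathbf{\Sigma}$-algebra on $B_{\varphi}$.

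The central step is (ii), and the key tool is an auxiliary lemma that I would prove first by algebraic induction (it is a routine extension of Lemma~\ref{L:aux}): for every $\mathbf{\Lambda}$-homomorphism $f\colon (B,G)\mor (B',G')$, every $w'\in T^{\star}$, every $t\in T$, every $P\in\mathrm{T}_{\Lambda}(\vs{w'})_{t}$, and every $(b_{i})_{i\in\bb{w'}}\in B_{w'}$, one has
\[
f_{t}\left(G^{\sharp}_{w',t}(P)(b_{0},\ldots,b_{\bb{w'}-1})\right)=(G')^{\sharp}_{w',t}(P)\left(f_{w'_{0}}(b_{0}),\ldots,f_{w'_{\bb{w'}-1}}(b_{\bb{w'}-1})\right),
\]
i.e., $\mathbf{\Lambda}$-homomorphisms commute with all derived term operations. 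Granted this lemma, fix $(w,s)\in S^{\star}\times S$, $\sigma\in\Sigma_{w,s}$ and $(a_{i})_{i\in\bb{w}}\in (B_{\varphi})_{w}=B_{\varphi^{\star}(w)}$. Setting $P=d_{w,s}(\sigma)\in\mathrm{T}_{\Lambda}(\vs{\varphi^{\star}(w)})_{\varphi(s)}$, the definitions give
\[
G^{\mathbf{d}}_{\sigma}((a_{i})_{i\in\bb{w}})=G^{\sharp}_{\varphi^{\star}(w),\varphi(s)}(P)((a_{i})_{i\in\bb{w}}),
\]
and analogously for $(G')^{\mathbf{d}}_{\sigma}$. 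Applying the lemma with $w'=\varphi^{\star}(w)$ and $t=\varphi(s)$ immediately yields
\[
f_{\varphi(s)}\bigl(G^{\mathbf{d}}_{\sigma}((a_{i})_{i\in\bb{w}})\bigr)=(G')^{\mathbf{d}}_{\sigma}\bigl((f_{\varphi(w_{i})}(a_{i}))_{i\in\bb{w}}\bigr),
\]
which is exactly the $\mathbf{\Sigma}$-homomorphism condition for $f_{\varphi}$, noting that $(f_{\varphi})_{s}=f_{\varphi(s)}$ and $(f_{\varphi})_{w_{i}}=f_{\varphi(w_{i})}$.

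Finally, (iii) is purely formal. On morphisms, $\mathrm{Alg}_{\mathfrak{d}}(\mathbf{d})$ sends $f$ to $f_{\varphi}=(f_{\varphi(s)})_{s\in S}$, and identities and compositions of $T$-sorted mappings are computed componentwise; therefore $(\mathrm{id}_{B})_{\varphi}=\mathrm{id}_{B_{\varphi}}$ and $(g\circ f)_{\varphi}=g_{\varphi}\circ f_{\varphi}$, so $\mathrm{Alg}_{\mathfrak{d}}(\mathbf{d})$ is a functor. The only real obstacle is the auxiliary lemma underlying step (ii); but its proof is the standard algebraic induction on $P$, using the definition of $G^{\sharp}$ and the fact that $f$ commutes with every basic operation symbol of $\Lambda$ (the variable case handles projections, the induction step handles the substitution operators of the Hall algebra, and the constant-operation case, if one wishes to distinguish it, is subsumed by the induction step with empty arity).
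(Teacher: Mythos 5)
Your proposal is correct and follows essentially the same route as the paper: identify $\mathrm{Op}_{\mathrm{H}_{T}}(B)_{\varphi^{\star}\times\varphi}$ with $\mathrm{Op}_{\mathrm{H}_{S}}(B_{\varphi})$ sort by sort so that $G^{\mathbf{d}}$ is a bona fide $\Sigma$-structure on $B_{\varphi}$, observe that each $G^{\mathbf{d}}(\sigma)$ is a term operation so that $\Lambda$-homomorphisms commute with it, and note that $(\cdot)_{\varphi}$ respects identities and composition componentwise. The only difference is presentational: the paper invokes the commutation of homomorphisms with term operations as a known fact, while you make it explicit as an auxiliary lemma proved by algebraic induction (in the spirit of Lemma~\ref{L:aux}), which is a legitimate and indeed more self-contained way to justify the same step.
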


\begin{proof}
For every $\mathbf{\Lambda}$-algebra $(B,G)$, $G\colon\Lambda\mor\mathrm{Op}_{\mathrm{H}_{T}}(B)$, and since  $\mathbf{Op}_{\mathrm{H}_{T}}(B)$ is a Hall algebra, $G$  can be extended up to the free Hall algebra on $\Lambda$. Moreover, we have that $\mathrm{Op}_{\mathrm{H}_{T}}(B)_{\varphi^{\star}\times \varphi} = \mathrm{Op}_{\mathrm{H}_{S}}(B_{\varphi})$ since, for every $(w,s)\in S^{\star}\times S$ it holds that
\begin{align*}
    (\mathrm{Op}_{\mathrm{H}_{T}}(B)_{\varphi^{\star}\times\varphi})_{w,s} &=
    \mathrm{Op}_{\mathrm{H}_{T}}(B)_{\varphi^{\star}(w),\varphi(s)} \\
    &= B_{\varphi^{\star}(w)}\mor B_{\varphi(s)} \\
    &= B_{(\varphi(w_{0}),\ldots,\varphi(w_{\bb{w}-1}))}\mor B_{\varphi(s)} \\
    &= \textstyle\prod(B_{\varphi(w_{i})}\mid i\in\bb{w}) \mor B_{\varphi(s)} \\
    &= \textstyle\prod((B_{\varphi})_{w_{i}}\mid i\in\bb{w}) \mor (B_{\varphi})_{s} \\
    &= (B_{\varphi})_{w} \mor (B_{\varphi})_{s} \\
    &= \mathrm{Op}_{\mathrm{H}_{S}}(B_{\varphi})_{w,s},
\end{align*}
thus, so defined, $G^{(\varphi,d)}$ is  an algebraic structure on $B_{\varphi}$.

Let $f\colon(B,G)\mor(B',G')$ be a homomorphism of $\mathbf{\Lambda}$-algebras, $(w,s)\in S^{\star}\times S$, and $\sigma\in\Sigma_{w,s}$. Then  $f_{\varphi}\colon(B_{\varphi},G^{\mathbf{d}}) \mor (B'_{\varphi},{G'}^{\mathbf{d}})$ is a homomorphism of $\mathbf{\Sigma}$-algebras, because $G^{\mathbf{d}}(\sigma)$ is a term operation and, consequently, the following diagram
$$
\xymatrix@C=70pt{
{B_{\varphi}}_{w}
  \ar[r]^{G^{(\varphi,d)}(\sigma)}
  \ar[d]_{{f_{\varphi}}_{w}} &
{B_{\varphi}}_{s}
  \ar[d]^{{f_{\varphi}}_{s}}  \\
{B'_{\varphi}}_{w}
  \ar[r]_{{G'}^{(\varphi,d)}(\sigma)} &
{B'_{\varphi}}_{s}
}$$
commutes. Moreover, $(g\circ f)_{\varphi}=g_{\varphi}\circ f_{\varphi}$, so that  $\mathrm{Alg}_{\mathfrak{d}}(\mathbf{d})$ is a functor.
\end{proof}

From the definition of the functor $\mathrm{Alg}_{\mathfrak{d}}$, for every derivor $\mathbf{d}\colon\mathbf{\Sigma}\mor\mathbf{\Lambda}$, it is obvious that the following diagram %
$$
\xymatrix@C=10ex@R=8ex{
\mathbf{Alg}(\mathbf{\Sigma})
  \ar[r]^{\G_{\mathbf{\Sigma}}}&
\mathbf{Set}^{S}  \\
\mathbf{Alg}(\mathbf{\Lambda})
  \ar[u]^{\mathrm{Alg}_{\mathfrak{d}}(\mathbf{d})}
  \ar[r]_{\G_{\mathbf{\Lambda}}} &
\mathbf{Set}^{T}
  \ar[u]_{\Delta_{\varphi}}
}
$$
commutes.

The previous construction can be extended up to a contravariant functor from the category $\mathbf{Sig}_{\mathfrak{d}}$ to the category $\mathbf{Cat}$.

\begin{proposition}
From $\mathbf{Sig}_{\mathfrak{d}}$ to $\mathbf{Cat}$ there exists a contravariant functor,
denoted by $\mathrm{Alg}_{\mathfrak{d}}$, that sends $(S,\Sigma)$ to $\mathbf{Alg}(S,\Sigma)$ and a morphism  $(\varphi,d)$ from $(S,\Sigma)$ to $(T,\Lambda)$ to the functor $\mathrm{Alg}_{\mathfrak{d}}(\varphi,d)$ from  $\mathbf{Alg}(T,\Lambda)$ to $\mathbf{Alg}(S,\Sigma)$.
\end{proposition}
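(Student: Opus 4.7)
The plan is to verify the two axioms of a contravariant functor---preservation of identities and contravariant preservation of composition---at both the object and the morphism level, since the previous proposition already takes care of the action of $\mathrm{Alg}_{\mathfrak{d}}(\varphi,d)$ on a fixed derivor. On morphisms the verification is immediate: given $f\colon (B,G)\mor (B',G')$ in $\mathbf{Alg}(\mathbf{\Lambda})$, we have $\mathrm{Alg}_{\mathfrak{d}}(\mathrm{id}_{(S,\Sigma)})(f) = f_{\mathrm{id}_{S}} = f$, and, since $(\psi\circ\varphi)^{\star} = \psi^{\star}\circ\varphi^{\star}$, also $f_{\psi\circ\varphi} = (f_{\psi})_{\varphi}$, so $\mathrm{Alg}_{\mathfrak{d}}(\mathbf{e}\circ\mathbf{d})(f) = \mathrm{Alg}_{\mathfrak{d}}(\mathbf{d})(\mathrm{Alg}_{\mathfrak{d}}(\mathbf{e})(f))$. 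The content of the proposition therefore resides in matching the derived algebraic structures on objects.

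For the identity, I would compute $\mathrm{Alg}_{\mathfrak{d}}(\mathrm{id}_{S},\eta_{\Sigma})(A,F) = (A_{\mathrm{id}_{S}},F^{\sharp}_{\mathrm{id}_{S}^{\star}\times \mathrm{id}_{S}}\circ \eta_{\Sigma}) = (A,F^{\sharp}\circ\eta_{\Sigma})$, and then invoke the universal property of the free Hall algebra $\mathbf{T}_{\mathrm{H}_{S}}(\Sigma)\cong \mathbf{Ter}_{\mathrm{H}_{S}}(\Sigma)$ which, via the adjunction $\mathbf{T}_{\mathrm{H}_{S}}\dashv \mathrm{G}_{\mathrm{H}_{S}}$, gives the factorisation $F^{\sharp}\circ \eta_{\Sigma} = F$. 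This yields $(A,F)$ on the nose.

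For composition, let $\mathbf{d} = (\varphi,d)\colon \mathbf{\Sigma}\mor \mathbf{\Lambda}$, $\mathbf{e} = (\psi,e)\colon \mathbf{\Lambda}\mor \mathbf{\Omega}$, and $(C,H)\in \mathbf{Alg}(\mathbf{\Omega})$. Unfolding the definitions, $\mathrm{Alg}_{\mathfrak{d}}(\mathbf{e}\circ \mathbf{d})(C,H)$ has underlying $S$-sorted set $C_{\psi\circ\varphi}$ and structure $H^{\sharp}_{(\psi\circ\varphi)^{\star}\times(\psi\circ\varphi)}\circ (e^{\sharp}_{\varphi^{\star}\times\varphi}\circ d)$, while $\mathrm{Alg}_{\mathfrak{d}}(\mathbf{d})(\mathrm{Alg}_{\mathfrak{d}}(\mathbf{e})(C,H))$ has underlying $S$-sorted set $(C_{\psi})_{\varphi} = C_{\psi\circ\varphi}$ and structure $(H^{\sharp}_{\psi^{\star}\times\psi}\circ e)^{\sharp}_{\varphi^{\star}\times\varphi}\circ d$. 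The supports thus agree, and it suffices to prove
\[
H^{\sharp}_{(\psi\circ\varphi)^{\star}\times(\psi\circ\varphi)}\circ e^{\sharp}_{\varphi^{\star}\times\varphi} \;=\; (H^{\sharp}_{\psi^{\star}\times\psi}\circ e)^{\sharp}_{\varphi^{\star}\times\varphi}.
\]
Using $(\psi\circ\varphi)^{\star}\times(\psi\circ\varphi) = (\psi^{\star}\times\psi)\circ(\varphi^{\star}\times\varphi)$ and the functoriality of the restriction $\Delta_{\varphi^{\star}\times\varphi}$, the left-hand side becomes $(H^{\sharp}_{\psi^{\star}\times\psi}\circ e^{\sharp})_{\varphi^{\star}\times\varphi}$. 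It therefore reduces to showing the identity $H^{\sharp}_{\psi^{\star}\times\psi}\circ e^{\sharp} = (H^{\sharp}_{\psi^{\star}\times\psi}\circ e)^{\sharp}$ of Hall-algebra homomorphisms out of $\mathbf{Ter}_{\mathrm{H}_{T}}(\Lambda)$. This is precisely the universal property of the free Hall algebra on $\Lambda$: $H^{\sharp}_{\psi^{\star}\times\psi}$ is a homomorphism of Hall algebras for $T$ (since the change-of-sort functor $(\varphi^{\star}\times\varphi,h^{\varphi})^{\ast}$ acts functorially on Hall homomorphisms), its composite with $e^{\sharp}$ is hence a Hall homomorphism extending $H^{\sharp}_{\psi^{\star}\times\psi}\circ e$, and the extension is unique.

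The main obstacle I foresee is purely notational: carefully tracking the interaction between the two different change-of-sort operations (the subscripts $(-)_{\varphi^{\star}\times\varphi}$ acting on Hall algebras and the subscripts $(-)_{\varphi}$ acting on many-sorted sets) and the free-Hall-algebra extension $(-)^{\sharp}$. Once one observes that $\Delta_{\varphi^{\star}\times\varphi}$ is a functor (so that it commutes with composition of Hall-algebra homomorphisms) and that the sort-word monoid functor sends $\psi\circ\varphi$ to $\psi^{\star}\circ\varphi^{\star}$, both axioms drop out by a diagram chase followed by a single appeal to the universal property of $\mathbf{T}_{\mathrm{H}_{T}}(\Lambda)$.
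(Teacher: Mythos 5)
Your proposal is correct and follows essentially the same route as the paper: both reduce functoriality on objects to the identity $(H^{\sharp}_{\psi^{\star}\times\psi}\circ e)^{\sharp}_{\varphi^{\star}\times\varphi} = (H^{\sharp}_{\psi^{\star}\times\psi})_{\varphi^{\star}\times\varphi}\circ e^{\sharp}_{\varphi^{\star}\times\varphi}$, justified by the uniqueness clause of the universal property of the free Hall algebra together with the functoriality of the change-of-sorts restriction, and both dispose of morphisms via $(f_{\psi})_{\varphi}=f_{\psi\circ\varphi}$. Your explicit verification of the identity axiom (which the paper leaves tacit) is a welcome but inessential addition.
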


\begin{proof}
Given $(\varphi,d)\colon(S,\Sigma) \mor (T,\Lambda)$ and $(\psi,e)\colon (T,\Lambda)\mor (U,\Omega)$, we show that %
$\mathrm{Alg}_{\mathfrak{d}}(\varphi,d)\circ\mathrm{Alg}_{\mathfrak{d}}(\psi,e)= \mathrm{Alg}_{\mathfrak{d}}((\psi,e)\circ(\varphi,d))$.

Let $(A,F)$ be a $(U,\Omega)$-algebra. Then ${A_{\psi}}_{\varphi}=A_{\psi\circ\varphi}$. Moreover, we have that %
\begin{align*}
{F^{(\psi,e)}}^{(\varphi,d)}
  &=
(F^{\sharp}_{\psi^{\star}\times\psi}\circ e)^{(\varphi,d)} \\
  &=
(F^{\sharp}_{\psi^{\star}\times\psi}\circ e)^{\sharp}_{\varphi^{\star}\times\varphi}\circ d \\
  &=
({F^{\sharp}_{\psi^{\star}\times\psi}}_{\varphi^{\star}\times\varphi}\circ
    e^{\sharp}_{\varphi^{\star}\times\varphi})\circ d \\
  &=
{F^{\sharp}_{\psi^{\star}\times\psi}}_{\varphi^{\star}\times\varphi}\circ
    (e^{\sharp}_{\varphi^{\star}\times\varphi}\circ d )\\
  &=
F^{\sharp}_{(\psi\circ\varphi)^{\sharp}\times(\psi\circ\varphi)}
    \circ (e^{\sharp}_{\varphi^{\star}\times\varphi}\circ d )\\
  &=
F^{((\psi\circ\varphi),e^{\sharp}_{\varphi^{\star}\times\varphi}\circ d)} \\
  &=
F^{(\psi,e)\circ(\varphi,d)}
\end{align*}
thus $({A_{\psi}}_{\varphi},{F^{(\psi,e)}}^{(\varphi,d)})= (A_{\psi\circ\varphi},F^{(\psi,e)\circ (\varphi,d)})$.  Finally, if $f$ is a homomorphism of $(U,\Omega)$-algebras, then ${f_{\psi}}_{\varphi} = f_{\psi\circ\varphi}$.
\end{proof}

\begin{definition}
The category $\mathbf{Alg}_{\mathfrak{d}}$ is $\int^{\mathbf{Sig}_{\mathfrak{d}}}\mathrm{Alg}_{\mathfrak{d}}$, i.e., the category obtained by means of the Grothendieck construction applied to the contravariant functor $\mathrm{Alg}_{\mathfrak{d}}$.
\end{definition}

The category $\mathbf{Alg}_{\mathfrak{d}}$ has as objects the ordered pairs $(\mathbf{\Sigma},\mathbf{A})$, with $\mathbf{\Sigma}$ a many-sorted signature and $\mathbf{A}$ a $\Sigma$-algebra, and as morphisms from $(\mathbf{\Sigma},\mathbf{A})$ to $(\mathbf{\Lambda},\mathbf{B})$ the ordered pairs $(\mathbf{d},f)$, with $\mathbf{d}$ a derivor from $\mathbf{\Sigma}$ to $\mathbf{\Lambda}$ and $f$ a homomorphism of $\Sigma$-algebras from $(A,F)$ to $(B_{\varphi},G^{(\varphi,d)})$.

\begin{remark}
There exists a pseudo-functor $\mathrm{Alg}_{\boldsymbol{\mathfrak{d}}}$ from the 2-category $\mathbf{Sig}_{\boldsymbol{\mathfrak{d}}}$ to the 2-category $\mathbf{Cat}$, contravariant in the 1-cells, i.e., the derivors, and covariant in the 2-cells, i.e., the transformations between derivors (this was already done in~\cite{CVST10} for polyderivors).
\end{remark}

\begin{proposition}
Let $\mathbf{d} = (\varphi,d)$ be a derivor (resp., a linear derivor) from $\mathbf{\Sigma}$ to $\mathbf{\Lambda}$, $X$ an $S$-sorted set, $Y$ a $T$-sorted set, and $f$ an $S$-sorted mapping from $X$ to $\mathrm{T}_{\Lambda}(Y)_{\varphi}$. Then $(\mathbf{d}^{Y},f) = ((\varphi,d^{Y}),f)$, where $d^{Y}$ is, for every $(w,s)\in S^{\star}\times S$, the composition of $d_{w,s}$, which is a mapping from $\Sigma_{w,s}$ to $\mathrm{T}_{\Lambda}(\downarrow\!\varphi^{\star}(w))_{\varphi(s)}$, and $(\mathrm{in}^{@}_{\downarrow\varphi^{\star}(w), Y\cup\downarrow\varphi^{\star}(w)})_{\varphi(s)}$, the underlying mapping of the component at $\varphi(s)$ of the canonical $\Lambda$-homomorphism $\mathrm{in}^{@}_{\downarrow\varphi^{\star}(w), Y\cup\downarrow\varphi^{\star}(w)}$ from $\mathbf{T}_{\Lambda}(\downarrow\!\varphi^{\star}(w))$ to $\mathbf{T}_{\Lambda}(Y\cup \downarrow\!\varphi^{\star}(w))$, is a hyperderivor (resp., a linear hyperderivor) from $(\mathbf{\Sigma},X)$ to $(\mathbf{\Lambda},Y)$.
\end{proposition}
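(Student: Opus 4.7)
The plan is to verify directly that $(\mathbf{d}^{Y},f)$ satisfies the syntactic requirements of a hyperderivor from $(\mathbf{\Sigma},X)$ to $(\mathbf{\Lambda},Y)$, and then to check that linearity of $\mathbf{d}$ is transported to linearity of $(\mathbf{d}^{Y},f)$. The verification is essentially routine bookkeeping, so the work consists in making explicit the types of the mappings involved.

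First I would check the typing. By definition of derivor, for each $(w,s)\in S^{\star}\times S$ the component $d_{w,s}$ is a mapping
\[
d_{w,s}\colon \Sigma_{w,s}\mor \mathrm{T}_{\Lambda}(\downarrow\!\varphi^{\star}(w))_{\varphi(s)}.
\]
The universal property of the free $\Lambda$-algebra applied to the inclusion $\mathrm{in}_{\downarrow\varphi^{\star}(w),\, Y\cup\downarrow\varphi^{\star}(w)}\colon \downarrow\!\varphi^{\star}(w)\mor Y\cup\downarrow\!\varphi^{\star}(w)$ yields the $\Lambda$-homomorphism $\mathrm{in}^{@}_{\downarrow\varphi^{\star}(w),\,Y\cup\downarrow\varphi^{\star}(w)}\colon \mathbf{T}_{\Lambda}(\downarrow\!\varphi^{\star}(w))\mor \mathbf{T}_{\Lambda}(Y\cup\downarrow\!\varphi^{\star}(w))$, whose component at $\varphi(s)$ is a mapping into $\mathrm{T}_{\Lambda}(Y\cup\downarrow\!\varphi^{\star}(w))_{\varphi(s)}$. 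Therefore the composite $d^{Y}_{w,s}=(\mathrm{in}^{@}_{\downarrow\varphi^{\star}(w),\,Y\cup\downarrow\varphi^{\star}(w)})_{\varphi(s)}\circ d_{w,s}$ has the signature required by the definition of hyperderivor. Since $f$ is by hypothesis an $S$-sorted mapping from $X$ to $\mathrm{T}_{\Lambda}(Y)_{\varphi}$, the pair $((\varphi,d^{Y}),f)$ is a hyperderivor from $(\mathbf{\Sigma},X)$ to $(\mathbf{\Lambda},Y)$.

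For the linear case, I would need to show that if $\mathbf{d}$ is linear then $(\mathbf{d}^{Y},f)$ is a linear hyperderivor, i.e., that for every $(w,s)\in S^{\star}\times S$, every $\sigma\in\Sigma_{w,s}$, and every $i\in\bb{w}$,
\[
\bb{d^{Y}_{w,s}(\sigma)}_{v^{\varphi(w_{i})}_{i}}\leq 1.
\]
The key observation is that the homomorphism $\mathrm{in}^{@}_{\downarrow\varphi^{\star}(w),\,Y\cup\downarrow\varphi^{\star}(w)}$ sends each variable in $\downarrow\!\varphi^{\star}(w)$ to itself regarded as a variable of $\mathbf{T}_{\Lambda}(Y\cup\downarrow\!\varphi^{\star}(w))$, so the embedding neither deletes, duplicates, nor introduces occurrences of the variables $v^{\varphi(w_{i})}_{i}$. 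A short algebraic induction on the complexity of terms $P\in\mathrm{T}_{\Lambda}(\downarrow\!\varphi^{\star}(w))_{\varphi(s)}$ shows that for every such variable $v^{\varphi(w_{i})}_{i}$ the number of its occurrences in $\mathrm{in}^{@}_{\downarrow\varphi^{\star}(w),\,Y\cup\downarrow\varphi^{\star}(w)}(P)$ equals the number of its occurrences in $P$. Applying this to $P=d_{w,s}(\sigma)$ and using the linearity hypothesis $\bb{d_{w,s}(\sigma)}_{v^{\varphi(w_{i})}_{i}}\leq 1$ delivers the required bound for $d^{Y}_{w,s}(\sigma)$.

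There is no real obstacle in this proof; the main care needed is in the explicit manipulation of the variable sets $\downarrow\!\varphi^{\star}(w)$ and $Y\cup\downarrow\!\varphi^{\star}(w)$ (kept disjoint by the convention already adopted in the paper) and in checking that the canonical inclusion-homomorphism preserves, rather than alters, the occurrence-counts of the variables from $\downarrow\!\varphi^{\star}(w)$. Once this bookkeeping is carried out, both assertions follow immediately from the definitions of hyperderivor and of linear hyperderivor.
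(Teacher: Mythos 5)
Your proof is correct: the paper states this proposition without any proof, and your verification supplies exactly the routine argument that is implicit — checking that $d^{Y}_{w,s}=(\mathrm{in}^{@}_{\downarrow\varphi^{\star}(w),\,Y\cup\downarrow\varphi^{\star}(w)})_{\varphi(s)}\circ d_{w,s}$ has the codomain $\mathrm{T}_{\Lambda}(Y\cup\downarrow\!\varphi^{\star}(w))_{\varphi(s)}$ required by the definition of hyperderivor, that $f$ already has the right type, and that the inclusion-induced homomorphism preserves the occurrence counts $\bb{\cdot}_{v^{\varphi(w_{i})}_{i}}$ so that linearity is inherited. No gaps; this is the natural (essentially forced) argument.
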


In other words, some hyperderivors, but not all of them, are factorizable through a derivor.

\begin{proposition}
Let $\mathbf{d}$ be a derivor from $\mathbf{\Sigma}$ to $\mathbf{\Lambda}$, $X$ an $S$-sorted set, $Y$ a $T$-sorted set, $f$ an $S$-sorted mapping from $X$ to $\mathrm{T}_{\Lambda}(Y)_{\varphi}$, $f^{\sharp}$ the canonical homomorphism of $\Sigma$-algebras from $\mathbf{T}_{\Sigma}(X)$ to $\mathrm{Alg}_{\mathfrak{d}}(\mathbf{d})(\mathbf{T}_{\Lambda}(Y))$, $s\in S$, and $L\in\mathrm{Rec}_{\varphi(s)}(\mathbf{T}_{\Lambda}(Y))$. Then $\left(f^{\sharp}_{\varphi(s)}\right)^{-1}[L]\in\mathrm{Rec}_{s}(\mathbf{T}_{\Sigma}(X))$.
\end{proposition}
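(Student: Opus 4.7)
The plan is to reduce this statement to Proposition~\ref{PRecH} about inverse images of recognizable languages under tree homomorphisms, by showing that the homomorphism $f^{\sharp}$ is, up to identification, the tree homomorphism determined by a canonically associated hyperderivor.

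First, I would invoke the immediately preceding proposition which asserts that a derivor $\mathbf{d} = (\varphi,d)$, together with the data $X$, $Y$, and $f\colon X\mor \mathrm{T}_{\Lambda}(Y)_{\varphi}$, gives rise to a hyperderivor $(\mathbf{d}^{Y},f) = ((\varphi,d^{Y}),f)$ from $(\mathbf{\Sigma},X)$ to $(\mathbf{\Lambda},Y)$, where $d^{Y}_{w,s}$ is obtained from $d_{w,s}$ by transporting terms in $\mathrm{T}_{\Lambda}(\downarrow\!\varphi^{\star}(w))_{\varphi(s)}$ into $\mathrm{T}_{\Lambda}(Y\cup \downarrow\!\varphi^{\star}(w))_{\varphi(s)}$ along the canonical $\Lambda$-homomorphism $\mathrm{in}^{@}_{\downarrow\varphi^{\star}(w), Y\cup \downarrow\varphi^{\star}(w)}$.

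Next I would verify the key compatibility: the $\Sigma$-algebra $\mathrm{Alg}_{\mathfrak{d}}(\mathbf{d})(\mathbf{T}_{\Lambda}(Y))$ and the $\Sigma$-algebra $\mathbf{d}^{Y}(\mathbf{T}_{\Lambda}(Y))$ (in the hyperderivor sense) have the same underlying $S$-sorted set $\mathrm{T}_{\Lambda}(Y)_{\varphi}$ and the same structural operations. Indeed, for $(w,s)\in S^{\star}\times S$ and $\sigma\in \Sigma_{w,s}$, the operation in $\mathrm{Alg}_{\mathfrak{d}}(\mathbf{d})(\mathbf{T}_{\Lambda}(Y))$ is $G^{\sharp}_{\varphi^{\star}\times \varphi}(d_{w,s}(\sigma))$, where $G$ denotes the structure of $\mathbf{T}_{\Lambda}(Y)$; by Lemma~\ref{L:aux} this is the term operation on $\mathbf{T}_{\Lambda}(Y)$ induced by the term $d_{w,s}(\sigma)\in \mathrm{T}_{\Lambda}(\downarrow\!\varphi^{\star}(w))_{\varphi(s)}$, namely the mapping that, given $(P_{i})_{i\in \bb{w}}\in \mathrm{T}_{\Lambda}(Y)_{\varphi^{\star}(w)}$, returns the result of substituting $P_{i}$ for $v^{\varphi(w_{i})}_{i}$ in $d_{w,s}(\sigma)$. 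This is exactly the operation $\sigma^{\mathbf{d}^{Y}(\mathbf{T}_{\Lambda}(Y))}((P_{i})_{i\in \bb{w}}) = \mathrm{S}^{w}_{(P_{i})_{i\in\bb{w}}}(d^{Y}_{w,s}(\sigma))$ defined in the hyperderivor section, since the extra variables from $Y$ do not occur in $d_{w,s}(\sigma)$.

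With this identification, $f^{\sharp}\colon \mathbf{T}_{\Sigma}(X)\mor \mathrm{Alg}_{\mathfrak{d}}(\mathbf{d})(\mathbf{T}_{\Lambda}(Y))$ is precisely the tree homomorphism determined by the hyperderivor $(\mathbf{d}^{Y},f)$, since both are the unique $\Sigma$-homomorphism from $\mathbf{T}_{\Sigma}(X)$ extending $f$. Applying Proposition~\ref{PRecH} to $(\mathbf{d}^{Y},f)$, $s\in S$, and the language $L\in\mathrm{Rec}_{\varphi(s)}(\mathbf{T}_{\Lambda}(Y))$ yields $(f^{\sharp}_{\varphi(s)})^{-1}[L]\in \mathrm{Rec}_{s}(\mathbf{T}_{\Sigma}(X))$, as desired. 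The main—though entirely routine—obstacle is the verification that the two presentations of the derived $\Sigma$-algebra structure on $\mathrm{T}_{\Lambda}(Y)_{\varphi}$ genuinely coincide; this amounts to an algebraic induction on the complexity of $d_{w,s}(\sigma)$, which is a direct consequence of Lemma~\ref{L:aux} and the very definitions of $\mathbf{d}^{Y}(\mathbf{T}_{\Lambda}(Y))$ and $\mathrm{Alg}_{\mathfrak{d}}(\mathbf{d})$.
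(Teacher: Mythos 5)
Your proposal is correct and follows exactly the route the paper takes: the paper's own proof is the one-line citation ``It follows from Proposition~\ref{PRecH}'', relying on the immediately preceding proposition that $(\mathbf{d}^{Y},f)$ is a hyperderivor. You merely make explicit the routine identification of $\mathrm{Alg}_{\mathfrak{d}}(\mathbf{d})(\mathbf{T}_{\Lambda}(Y))$ with $\mathbf{d}^{Y}(\mathbf{T}_{\Lambda}(Y))$ that the paper leaves implicit.
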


\begin{proof}
It follows from Proposition~\ref{PRecH}.
\end{proof}

\begin{assumption}
For the following proposition, as was the case with Proposition~\ref{PRecLH}, we will assume that $S$, $T$, $\Sigma$ and $X$ are finite.
\end{assumption}

\begin{proposition}
Let $\mathbf{d}$ be a linear derivor from $\mathbf{\Sigma}$ to $\mathbf{\Lambda}$, $X$ an $S$-sorted set, $Y$ a $T$-sorted set, $f$ an $S$-sorted mapping from $X$ to $\mathrm{T}_{\Lambda}(Y)_{\varphi}$, $f^{\sharp}$ the canonical homomorphism of $\Sigma$-algebras from $\mathbf{T}_{\Sigma}(X)$ to $\mathrm{Alg}_{\mathfrak{d}}(\mathbf{d})(\mathbf{T}_{\Lambda}(Y))$, $s\in S$, and $L\in \mathrm{T}_{\Sigma}(X)^{\wp}_{s}$. If $L\in\mathrm{Rec}_{s}(\mathbf{T}_{\Sigma}(X))$, then $f^{\sharp}_{s}[L]\in\mathrm{Rec}_{\varphi(s)}(\mathbf{T}_{\Lambda}(Y))$.
\end{proposition}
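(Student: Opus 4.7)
The plan is to reduce the statement to Proposition~\ref{PRecLH}, which asserts that the direct image of a recognizable language under a linear tree homomorphism (i.e., one determined by a linear hyperderivor) is itself recognizable. The bridge between the two settings is provided by the preceding proposition: given the linear derivor $\mathbf{d} = (\varphi, d)$ and the $S$-sorted mapping $f\colon X\mor \mathrm{T}_{\Lambda}(Y)_{\varphi}$, the ordered pair $(\mathbf{d}^{Y}, f) = ((\varphi, d^{Y}), f)$, where $d^{Y}_{w,s} = (\mathrm{in}^{@}_{\downarrow\varphi^{\star}(w),\, Y\cup\downarrow\varphi^{\star}(w)})_{\varphi(s)}\circ d_{w,s}$, is a linear hyperderivor from $(\mathbf{\Sigma}, X)$ to $(\mathbf{\Lambda}, Y)$.

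The key step is to verify that the $\Sigma$-algebra $\mathbf{d}^{Y}(\mathbf{T}_{\Lambda}(Y))$ arising from the hyperderivor construction coincides with the derived $\Sigma$-algebra $\mathrm{Alg}_{\mathfrak{d}}(\mathbf{d})(\mathbf{T}_{\Lambda}(Y))$ arising from the derivor construction. Both algebras have the same underlying $S$-sorted set $\mathrm{T}_{\Lambda}(Y)_{\varphi}$. For the operations, fix $(w,s)\in S^{\star}\times S$ and $\sigma\in \Sigma_{w,s}$. On the hyperderivor side, $\sigma^{\mathbf{d}^{Y}(\mathbf{T}_{\Lambda}(Y))}((P_{i})_{i\in\bb{w}})$ is the term obtained by substituting, for every $i\in\bb{w}$, $P_{i}$ for $v^{\varphi(w_{i})}_{i}$ in $d^{Y}_{w,s}(\sigma)$, which amounts to applying the $T$-sorted mapping $\left(\!\begin{smallmatrix}v^{\varphi(w_{i})}_{i}\\ P_{i}\end{smallmatrix}\!\right)_{i\in\bb{w}}^{\sharp}$ to $d_{w,s}(\sigma)$. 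On the derivor side, since $G = F^{\mathbf{T}_{\Lambda}(Y)}$ and $G^{\mathbf{d}} = G^{\sharp}_{\varphi^{\star}\times\varphi}\circ d$, the operation $G^{\mathbf{d}}(\sigma)$ is precisely the $(\varphi^{\star}(w), \varphi(s))$-term operation on $\mathrm{T}_{\Lambda}(Y)$ induced by the term $d_{w,s}(\sigma)\in \mathrm{T}_{\Lambda}(\downarrow\!\varphi^{\star}(w))_{\varphi(s)}$, evaluated at $(P_{i})_{i\in \bb{w}}$. By Lemma~\ref{L:aux} (applied to the Hall algebra $\mathbf{Op}_{\mathrm{H}_{T}}(\mathrm{T}_{\Lambda}(Y))$ with $f = G$ and $u = \varphi^{\star}(w)$), this is also given by the same substitution $\left(\!\begin{smallmatrix}v^{\varphi(w_{i})}_{i}\\ P_{i}\end{smallmatrix}\!\right)_{i\in\bb{w}}^{\sharp}(d_{w,s}(\sigma))$. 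Hence the two algebras are equal, and by the universal property of $\mathbf{T}_{\Sigma}(X)$, the canonical homomorphism $f^{\sharp}$ from $\mathbf{T}_{\Sigma}(X)$ to $\mathrm{Alg}_{\mathfrak{d}}(\mathbf{d})(\mathbf{T}_{\Lambda}(Y))$ agrees with the tree homomorphism determined by the linear hyperderivor $(\mathbf{d}^{Y}, f)$.

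Once this identification is in place, the conclusion is immediate: the finiteness assumptions on $S$, $T$, $\Sigma$ and $X$ carry over from the derivor setting to the hyperderivor setting, so Proposition~\ref{PRecLH} applies and yields $f^{\sharp}_{s}[L]\in\mathrm{Rec}_{\varphi(s)}(\mathbf{T}_{\Lambda}(Y))$ whenever $L\in\mathrm{Rec}_{s}(\mathbf{T}_{\Sigma}(X))$. The only potentially delicate point is the algebraic identification of the two $\Sigma$-algebra structures on $\mathrm{T}_{\Lambda}(Y)_{\varphi}$; this is purely a bookkeeping argument on the definitions of the Hall algebra operations and the substitution operators $\mathrm{S}^{w}_{(P_{i})_{i\in\bb{w}}}$, but it is the only nontrivial step, since everything else is a direct invocation of the linear tree homomorphism theorem.
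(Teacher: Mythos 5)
Your proposal is correct and follows essentially the same route as the paper, which reduces the statement to Proposition~\ref{PRecLH} via the preceding proposition converting the linear derivor $\mathbf{d}$ (together with $X$, $Y$, $f$) into a linear hyperderivor $(\mathbf{d}^{Y},f)$. The paper's proof is literally the one-line reduction; your verification that the derived $\Sigma$-algebra $\mathrm{Alg}_{\mathfrak{d}}(\mathbf{d})(\mathbf{T}_{\Lambda}(Y))$ coincides with $\mathbf{d}^{Y}(\mathbf{T}_{\Lambda}(Y))$, so that $f^{\sharp}$ really is the tree homomorphism of the hyperderivor, is the identification the paper leaves implicit, and it is a welcome addition.
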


\begin{proof}
It follows from Proposition~\ref{PRecLH}.
\end{proof}


\vspace{0.2cm}
\noindent\textbf{Acknowledgements.}
The work of the second author was supported by the Mi\-nis\-te\-rio de Econom\'\i a y Competitividad, Spain, and the European Regional De\-vel\-op\-ment Fund, European Union (MTM2014-54707-C3-1-P).



\end{document}